\documentclass[reqno,12pt,letterpaper]{amsart}
\usepackage{amsmath,amssymb,amsthm,graphicx,mathrsfs,url}
\usepackage[usenames,dvipsnames]{color}
\usepackage[colorlinks=true,linkcolor=Black,citecolor=Black]{hyperref}
\usepackage{amsxtra}
\usepackage{enumitem}
\usepackage{epstopdf}
\usepackage{graphicx}
\usepackage[percent]{overpic}

\usepackage{lmodern}
\makeatletter
\def\Ddots{\mathinner{\mkern1mu\raise\p@
\vbox{\kern7\p@\hbox{.}}\mkern2mu
\raise4\p@\hbox{.}\mkern2mu\raise7\p@\hbox{.}\mkern1mu}}
\makeatother
\usepackage{graphicx}

\usepackage{array}
\usepackage[utf8]{inputenc}
\usepackage{tikz, tikz-3dplot}
\usepackage{pgfplots}
\def\?[#1]{\textbf{[#1]}\marginpar{\Large{\textbf{??}}}}

\let\varepsilon=\varepsilon 
\newcommand{\dd}{\, {\rm d}}

\newtheorem{theorem}{Theorem}
\newtheorem{prop}{Proposition}[section]

\newtheorem{lemma}[prop]{Lemma}
\newtheorem{corrolary}[prop]{Corollary}

\numberwithin{equation}{section}

\title[A review on the kinetic theory of oscillator chains]{A review on the kinetic theory \\ of oscillator chains}

\author{Pierre Germain} 
\address{Fakult\"at f\"ur Mathematik, Universit\"at Wien, Oskar-Morgenstern-Platz 1, 1090 Vienna, Austria}
\email{pierre.germain@univie.ac.at}

\author{Joonhyun La} 
\address{KIAS School of Mathematics, 85 Hoegi-ro, Dongdaemun-gu, Seoul 02455, Republic of Korea.}
\email{joonhyun@kias.re.kr}

\author{Angeliki Menegaki} 
\address{Department of Mathematics, Huxley building, South Kensington campus, Imperial College London, London SW7 2AZ, United Kingdom}
\email{a.menegaki@imperial.ac.uk}

\begin{document}
\maketitle

\begin{abstract}
We review the kinetic theory of one-dimensional nonlinear oscillator chains, of which the most famous example is the Fermi-Pasta-Ulam-Tsingou equation. We provide detailed, though not rigorous, accounts of the microscopic to mesoscopic, and mesoscopic to macroscopic limits: derivation of the kinetic wave equation and hydrodynamic limit. We also present the state of the art of the mathematical theory, including  proofs. We discuss the connection to two famous problems of Mathematical Physics: the Fermi-Pasta-Ulam-Tsingou paradox, and the derivation of Fourier's law. Finally, many open problems and possible directions for future research are proposed.
\end{abstract}

\setcounter{tocdepth}1
\tableofcontents

\section{Introduction}

\subsection{The aim of this review} We aim at presenting an overview of the kinetic theory of nonlinear oscillator chains, from the viewpoints of Physics and Mathematics. To be more precise and more modest, we will present the current state of the art, since this theory is far from complete, in particular lacking a rigorous mathematical formalization of many key aspects. 

The first difficulty one encounters in this subject is the number of names given to the same equation, namely the kinetic equation obtained in the mesoscopic limit from an oscillator chain. It is equivalently known as Kinetic Wave Equation, Phonon Boltzmann, Boltzmann-Peierls, or even kinetic Fermi-Pasta-Ulam-Tsingou equation, in the case of this particular nonlinear oscillator chain. In the present text, the name we adopt is \textit{Kinetic Wave Equation}.

The most important points of this review will be the following.

\begin{itemize}
\item We will describe in detail, though not rigorously, the singular limit from the microscopic (Hamiltonian) dynamics to the mesoscopic equation (kinetic wave equation), as well as the singular limit from the mesoscopic model to the macroscopic (hydrodynamic) equation. .

\medskip

\item Most of the discussion deals with general chains, but, for the sake of concreteness, we will specialize our results to the case of the Fermi-Pasta-Ulam-Tsingou and Discrete Nonlinear Klein-Gordon equations, which are the two most popular models.

\medskip

\item We will give an account of the mathematical theory of the kinetic wave equation, drawing from the recent mathematical literature, and providing essentially complete proofs.

\medskip

\item We will explain how the kinetic wave equation is related to two fundamental problems from Mathematical Physics: the Fermi-Pasta-Ulam-Tsingou paradox and the derivation of Fourier's law. Indeed, we will argue that the kinetic wave equation provides solutions to these two problems - at least if mathematical rigor is disregarded.
\medskip

\item Finally, we give many related questions and open problems for the most motivated readers!
\end{itemize}

The kinetic theory of nonlinear oscillator chains and its applications is a rapidly growing subject, to which great review articles were recently dedicated. We are thinking of Lukkarinen \cite{Luk2016} and Onorato-L'vov-DeMatteis-Chibaro \cite{OnoratoLvovDematteisChibaro}, on which we relied extensively.

\subsection{Phonons, oscillator chains, and wave turbulence} The vibration modes of atoms in a crystal are known as phonons; their Physics is very different from that of the electrons, which we disregard here. In normal conditions in a crystal, atoms cannot deviate much from their equilibrium positions, justifying a weakly nonlinear assumption. In a foundational article, Peierls \cite{Peierls} proposed the application of kinetic theory to the description of phonon dynamics; this is by now a classical part of the theory of phonons, see \cite{Callaway,Prigogine,Srivastava}. These references give general formulas without aiming at describing specific crystals; this is probably due to the very intricate formulas arising in realistic descriptions of three-dimensional crystals, which seem to defy an analytic treatment at the nonlinear level.

Spohn \cite{Spohn2005} proposed as a simplified model a higher-dimensional version of \eqref{NOC}, and developed its theory. Then he turned to one-dimensional models with Aoki and Lukkarinen \cite{AokiLukkSpohn} and Lukkarinen \cite{LukkarinenSpohn2008}, working on the nonlinear oscillator chain \eqref{NOC}. We followed the impetus of these articles by studying the associated kinetic problem from the viewpoint of nonlinear PDE analysis \cite{GermainLaMenegaki,EGLM25}. The aim of the present review is to give a comprehensive account of the state of the art of the kinetic theory associated to \eqref{NOC}.

Finally, the kinetic problem for phonons is part of the broader theory of weak turbulence, which we shall now present succinctly. Besides the kinetic theory of phonons, the work \cite{Peierls} is also laying the foundation for the theory of weak (or wave) turbulence. This theory aims at describing (weakly) nonlinear wave or dispersive equations in a turbulent regime through a kinetic equation. It was rediscovered independently by Hasselmann \cite{Hasselmann1962,Hasselmann1963} who was working on surface gravity waves. Finally, Zakharov and his school \cite{Zakharov} played a critical role in developing the theory; they showed the breadth of its applications and introduced key ideas such as the power law solutions known as Kolmogorov-Zakharov spectra. For modern accounts of wave turbulence theory, we refer to \cite{Nazarenko,Galtier}.

In its classical applications - see for instance the last few references - wave turbulence theory aims at describing the turbulent spectrum of nonlinear wave systems submitted to forcing and dissipation at difference scales, which causes the formation of an energy cascade. This is meaningful since these equations are set on Euclidean space, and the frequency parameter can be identified to the real line (disregarding the direction). This must be contrasted with equations such as \eqref{NOC} and its kinetic version \eqref{KWE}, for which the frequency space can be identified with the torus. As a consequence, frequency cascades and turbulent spectra are not (or less) relevant, but stationary solutions become the main characters: these are the Rayleigh-Jeans solutions, which will be defined in the following. They are universal attractors and will play a key role in the theory.

\subsection{From microscopic to macroscopic through mesoscopic} 

At the heart of the present review are two singular limits: from microscopic to mesoscopic, and from mescoscopic to macroscopic. Here, we summarize in a few paragraphs the different models involved.

\subsubsection{The microscopic (Hamiltonian) model}
Our starting point is the \textit{nonlinear oscillator chain}
\begin{equation}
\tag{NOC} \label{NOC}
\boxed{\ddot{q_j} = -  \partial_{q_j} F( (q_j) ), \qquad q_j \in \mathbb{R}, \quad j \in \mathbb{Z};}
\end{equation}
here and in the following, we denote $\dot f$ for the time derivative of $f$ and omit time-dependence in the notation; furthermore, the function $F((q_j))$ is assumed to be translation invariant in the $j$ variable. The two most famous examples are the \textit{Fermi-Pasta-Ulam-Tsingou equation}
\begin{equation} \tag{FPUT}
\boxed{\ddot{q_j} = V'(q_{j+1} - q_j) -  V'(q_{j} - q_{j-1}), \qquad V(x) = \frac 12 x^2 + \frac \alpha 3 x^3 + \frac \beta 4 x^4} \\
\end{equation}
(where $\alpha,\beta \in \mathbb{R}$ are parameters) and the \textit{Discrete Nonlinear Klein-Gordon equation}
\begin{equation} \tag{DNLKG}
\boxed{\ddot{q_j} = -\delta (2 q_{j} - q_{j+1} - q_{j-1}) - (1-2\delta) q_j - \lambda q_j^2 - \mu q_j^3},
\end{equation}
(where $\delta \in (0,\frac 12)$ and $\lambda,\mu \in \mathbb{R}$ are parameters)

Interpreting $q_j$ as the deviation of a particle from its rest position, such oscillator chains are naive and one-dimensional models for crystals.

\subsubsection{The mesoscopic (kinetic) model}
In the weakly turbulent regime characterized by weak nonlinearity, large volume and chaos (random phase approximation), this Hamiltonian system can be described by a kinetic equation. In this microscopic to mesoscopic limit, only some statistical quantities are retained; the resulting equation is easier to approach and statistical physics effects (such as the H-theorem) are transparent.
As we will argue, cubic nonlinearities are the most relevant, in which case the \textit{kinetic wave equation} takes the following form
\begin{equation}
\tag{KWE} \label{KWE}
\boxed{\partial_t f + \omega'(\xi) \partial_x f = \mathcal{C}(f), \qquad f = f(t,x,\xi) \geq 0, \qquad (x,\xi) \in \mathbb{T} \times \mathbb{T}.}
\end{equation}
Here, we denote $\mathbb{T} = \mathbb{R} / (2\pi \mathbb{Z})$, the \textit{dispersion relation} $\omega(\xi)$ is a function on $\mathbb{T}$ and the \textit{collision operator} is given by the formula
\begin{equation}
\label{formulacollision}
\boxed{
\mathcal{C}(f)(\xi) = \int K_{0,1,2,3} \left[ f_1 f_2 f_3 + f_0 f_2 f_3 - f_0 f_1 f_2 - f_0 f_1 f_3 \right] \delta(\Omega_{0,1,2,3}) \delta (\Sigma_{0,1,2,3}) \dd \xi_{1,2,3},}
\end{equation}
where we denoted
$$
\Omega_{0,1,2,3} = \omega_0 + \omega_1 - \omega_2 - \omega_3, \qquad \Sigma_{0,1,2,3} = \xi_0 + \xi_1 - \xi_2 - \xi_3
$$
and we used the customary notation 
$$
\xi_0 = \xi, \quad \omega_i = \omega(\xi_i), \quad K_{0,1,2,3} = K(\xi,\xi_1,\xi_2,\xi_3), \quad \mbox{etc...}
$$
and where the kernel $K$ enjoys the properties
\begin{equation}
\label{symmetryK}
K(\xi,\xi_1,\xi_2,\xi_3) \geq 0 
\qquad \mbox{and} \qquad
\begin{cases}
K(\xi,\xi_1,\xi_2,\xi_3) = K(\xi_2,\xi_3,\xi,\xi_1) \\
K(\xi,\xi_1,\xi_2,\xi_3) = K(\xi_1,\xi,\xi_2,\xi_3).
\end{cases}
\end{equation}

We considered above the simplest inhomogeneous setup, where $(x,\xi) \in \mathbb{T} \times \mathbb{T}$, but bounded domains in $x$ with appropriate boundary conditions are also of interest. Finally, dropping the $x$ variable yields the simpler homogeneous problem $\partial_t f = \mathcal{C}(f)$, which is already quite interesting.

\subsubsection{Macroscopic (fluid) equations} 
Consider the ansatz
\begin{equation}
\label{ansatzfg}
f(t,x,\xi) = \mathfrak{f}_{\beta,\gamma}(\xi)^2 \left[ 1 + \varepsilon g(\varepsilon^A t,\varepsilon^B x,\xi) \right],
\end{equation}
where $\mathfrak{f}_{\beta,\gamma}(\xi)$ is a stationary solution of \eqref{KWE} called Rayleigh-Jeans depending on two parameters $\beta,\gamma$, and $A,B>0$ are constants to be determined. Several cases have to be distinguished here: first, the problem is called \textit{degenerate} if the linearized operator $L$ of \eqref{KWE} around the Rayleigh-Jeans solution is not invertible; second, $(\beta,\gamma)$ in the above ansatz can be chosen to be constant, or to depend on the space and time variables.

\medskip

\noindent \underline{Non-degenerate case, non constant Rayleigh-Jeans.} Then the asymptotic dynamics of $(\beta,\gamma)$ is given by
$$
\boxed{A(\beta,\gamma) \partial_t \begin{pmatrix} \beta \\ \gamma \end{pmatrix} = \partial_x \left( D(\beta,\gamma) \partial_x \begin{pmatrix} \beta \\ \gamma \end{pmatrix} \right),}
$$
where $A$ and $D$ are positive symmetric matrices (see Section \ref{sec:Hydro_limit} for formulas), making the above a nonlinear diffusion equation. Restricting to the case $\gamma = 0$, the formula simplifies to
$$
\boxed{\partial_t \beta = c_0 \beta^2 \partial_x (\beta \partial_x \beta),}
$$
for a constant $c_0>0$.

\medskip
\noindent \underline{Non-degenerate case, constant Rayleigh-Jeans} In this case, $(\beta,\gamma)$ in \eqref{ansatzfg} are given a fixed value $(\beta_0,\gamma_0)$, and $g$ can be decomposed as
$$
g(t,x,\xi) = b(t,x) \omega(\xi) \mathfrak{f}_{\beta_0,\gamma_0} (\xi) + c(t,x) \mathfrak{f}_{\beta_0,\gamma_0} (\xi) + \{ \mbox{error} \}.
$$
The limiting behavior of $(b,c)$ can be given by one of the three following equations
\begin{itemize}
\item Linear diffusion equation: for symmetric non-negative $2 \times 2$ matrices $A$ and $D$,
$$
\boxed{A \partial_t \begin{pmatrix} b\\ c \end{pmatrix} + D \partial_x^2 \begin{pmatrix} b\\ c \end{pmatrix} = 0.}
$$
\item Conservation law: for an $\mathbb{R}^2$-valued function $P$ whose entries are cubic homogeneous polynomials in $v$,
$$
\boxed{A \partial_t \begin{pmatrix} b\\ c \end{pmatrix} + \partial_x P \begin{pmatrix} b\\ c \end{pmatrix} = 0.}
$$
\item Nonlinear diffusion equation: it combines the two above cases
$$
\boxed{A \partial_t \begin{pmatrix} b\\ c \end{pmatrix} + D \partial_x^2 \begin{pmatrix} b\\ c \end{pmatrix} + \partial_x P \begin{pmatrix} b\\ c \end{pmatrix}= 0.}
$$
\end{itemize}
(details and formulas for $A,D,P$ can be found in Section \ref{sec:Hydro_limit}).

\medskip

\noindent \underline{Degenerate case, constant Rayleigh-Jeans} We only mention here the case of perturbations of the RJ solution $\frac{1}{\omega(\xi)}$ in the \eqref{KWE} derived from (FPUT), and we further restrict to the linearized problem. Then the limiting equation is an anomalous diffusion
\begin{equation*}
\boxed{\partial_t b + c_0 (-\Delta)^{8/5} b = 0,}
\end{equation*}
where $c_0,c_1 >0$
(see Section \ref{sec:Hydro_limit_degenerate} for a more detailed discussion).

\noindent \subsection{Main results and organization of the article} The present review combines heuristic discussions and rigorous mathematical arguments, and we shall of course distinguish carefully between these two cases! We now want to summarize some of the most important results which can be found in this text.

\subsubsection{Heuristic results} \underline{Weakly nonlinear theory of \eqref{NOC}.} Section \ref{section_microscopic} is devoted to the \eqref{NOC} in the weakly nonlinear regime. We establish the formulation most adapted to the weakly nonlinear regime, and use it to discuss nonlinear resonances and their dynamical relevance.

\medskip

\noindent \underline{The weakly turbulent regime.} Section \ref{section_weaklyturbulent} is dedicated to the description of this limiting regime, which allows to transition from a microscopic to a mesoscopic description. We show how key objects (energy, entropy, and stationary solutions) converge from their microscopic to their mesoscopic avatars.

\medskip

\noindent \underline{Kinetic theory of general oscillator chains.} In Section \ref{Sec: Derivation_WKE}, we provide a full derivation of the kinetic wave equation \eqref{KWE} for nonlinear oscillator chains \eqref{NOC} for general potential functions $F$ in the absence of quadratic resonances. In particular, we provide detailed time scales and formulas, which were often implicit in the literature.

\medskip

\noindent \underline{Kinetic theory of (FPUT) and (DNLKG)} In Section \ref{section_examples}, we give complete formulas for the kinetic theory associated to the standard models of Fermi-Pasta-Ulam-Tsingou and Discrete-Nonlinear-Klein-Gordon equation. In the most involved cases of (FPUT$\alpha$) and quadratic (DNLKG), we were not able to find these derivations or formulas in print.

\medskip

\noindent \underline{First properties of \eqref{KWE}} Section \ref{section_first_look} delves into the structure of the homogeneous \eqref{KWE}. The first difficulty is to make sense of the collision integral, which is far from obvious, as we shall see. Then we discuss monotonic quantities and stationary solutions to understand the global dynamics.

\medskip

\noindent \underline{Hydrodynamic limit of \eqref{KWE} in the non-degenerate case} Section \ref{sec:Hydro_limit} is dedicated to the hydrodynamic limits of \eqref{KWE} when the linearized operator of \eqref{KWE} around RJ solution enjoys a spectral gap - which we call the nondegenerate case. We give a full description of the hydrodynamic limits, with fluid equations that seem to be new. 

\medskip

\noindent \underline{Hydrodynamic limit of \eqref{KWE} in the degenerate case} Section \ref{sec:Hydro_limit_degenerate} is dedicated to this question. Here, we give a sketch of the derivation of an anomalous diffusion equation due to \cite{mellet2015anomalous}.

\subsubsection{Rigorous results} \underline{Stationary solutions.} Section \ref{section_rigorous_stationary} focuses on the stationary solutions of \eqref{KWE}. Explicit stationary solutions are known: they form a 2-parameter family and are called Rayleigh-Jeans (RJ for short). It is expected that all stationary solutions are RJ, and this can be established in the case of (FPUT). We also discuss the minimization properties of RJ solutions. We present these results following \cite{LukkarinenSpohn2008,EGLM25}.

\medskip

\noindent \underline{Asymptotic stability of RJ solutions.} Section \ref{section_rigorous_FPU} develops the rigorous theory of the \eqref{KWE} associated to (FPUT). We show local well-posedness and give the main steps of the proof of asymptotic stability of RJ solutions, following \cite{GermainLaMenegaki}.

\subsubsection{Application to two Mathematical Physics riddles} Part of the interest of  \eqref{NOC} is that a large body of work in Mathematical Physics crystallized around the question of its out-of-equilibrium statistical physics. This is without a doubt related to the discrete nature of the equation, which makes computations much simpler than for field equations.

\medskip

\noindent \underline{The (FPUT) paradox} In Section \ref{sec:FPUT_paradox}, we recount the numerical experiments performed by Fermi-Pasta-Ulam-Tsingou in the 50's and their unexpected outcome: rather than converging to a statistical equilibrium, the solutions of the (FPUT) equation displayed periodic-in-time behavior. This observation aroused great interest amongst physicists and mathematicians; we will discuss the explanations which were proposed and the relevance of \eqref{KWE} in this context.

\medskip

\noindent \underline{Derivation of Fourier's law} Section \ref{section_Fourier} asks how to derive Fourier's law of heat conductivity from microscopic dynamics in a chain of oscillators. This is a famously difficult question from Mathematical Physics. It is natural to expect the \eqref{KWE} to play an intermediary role between the microscopic and macroscopic scales, just like the Boltzmann equation mediates between Newtonian particle dynamics and fluid mechanics. {Formal calculations show that the conductivity is determined by the decay rate of the semigroup of the linearized operator of (KWE) around
RJ solution. This gives indications of whether the system exhibits regular or anomalous energy transport.}   

\subsubsection{Perspectives} \underline{Related equations} are described in Section \ref{section_related}. Indeed, \eqref{NOC} can be seen as a prototype of Hamiltonian problems set on lattices, which invites many generalizations: one and higher-dimensional problems, classical and quantized systems, scalar and vector-valued equations. From a mathematically rigorous viewpoint, this is uncharted territory for the most part, but these models appear to be classical in the theory of phonons.

\medskip

\noindent \underline{A list of open mathematical problems} on the kinetic theory of \eqref{NOC} is given in Section \ref{section_open}. Indeed, most of the theory remains to be rigorously justified! Some problems are approachable, and some much more challenging!

\subsection{Acknowledgements} The authors are grateful to Thierry Bodineau, Emeric Bouin, Zaher Hani, Clément Mouhot, Stefano Olla, Miguel Onorato, Nathan Paratre and Herbert Spohn for comments on an earlier version of this review. 
This review is a contribution to the \emph{Festum Pi} 2025 proceedings. We are especially grateful to Cédric Villani for the invitation to contribute to this volume. PG was supported by a Wolfson fellowship from the Royal Society. AM acknowledges support by the Engineering and Physical Sciences Research Council with reference UKRI2025. JL acknowledges support by the Korea Institute for Advanced Study with reference MG094301.

\section{The microscopic model}
\label{section_microscopic}

In this section, we discuss the first properties of general nonlinear oscillator chain \eqref{NOC} without attempting to prove theorems. Taking advantage of translation invariance, the system is best understood after taking the Fourier transform, which gives in particular the dispersion relation of the linearized problem around zero. For the weakly nonlinear problem, resonances become the key to understand the dynamics; we define and discuss them. 

\subsection{The Hamiltonian and the equation}
We consider a Hamiltonian system of particles, labeled by their index $j \in \mathbb{Z}$ and described by their displacement from equilibrium $q_j \in \mathbb{R}$ and momentum $p_j \in \mathbb{R}$.
Denoting $\mathbf{p} = (p_j)_{j \in\mathbb{Z}}$ and $\mathbf{q} = (q_j)_{j \in \mathbb{Z}}$, we will consider Hamiltonians is of the form
$$
\mathscr{H}( \mathbf{p},\mathbf{q} ) = \frac{1}{2} \sum_j p_j^2 + F (\mathbf{q}).
$$
What do we want to assume on the potential function $F$? It should be smooth, such that $F(0) = F'(0) = 0$, and translation invariant: $F((q_j)_{j \in \mathbb{Z}}) = F((q_{j+1})_{j \in \mathbb{Z}})$. 

Denoting time differentiation with a dot ($\frac \dd {\dd t }f = \dot f$), the dynamics are deduced from the Hamiltonian through
$$
\begin{cases}
\dot{q_j} = p_ j \\
\dot{p_j} = - \partial_{q_j} F( \mathbf{q})
\end{cases}
\qquad \mbox{or} \qquad
\ddot{q_j} = -  \partial_{q_j} F( \mathbf{q}).
$$

A large body of research, both in Physics and Mathematics, is dedicated to the study of this equation in the regime which one might call deterministic, where the ideas and tools of Hamiltonian mechanics apply: solitons (traveling waves) \cite{FrieseckeWattis,Pankov}, breathers (time periodic, non propagating wave) \cite{Aubry,Bambusi1996,FlachWillis} and the continuous (long wave) limit, which yield nonlinear dispersive fields equations \cite{HongChulkwang,SchneiderWayne}. However, our focus in the present review will be the weakly turbulent regime, which will be introduced progressively.

We now look more closely at the structure of the potential function $F$; 
since it is smooth, we can expand it in powers of $q$, and the two first terms of the expansion vanish since $F(0) = F'(0) = 0$. Thus
$$
F (\mathbf{q}) = \sum_{n=2}^N F_n (\mathbf{q}) + O(|q|^{N+1}), \qquad F_n \;\mbox{homogeneous of order $n$}
$$
Translation invariance means that there exist (real) coefficients $(\alpha^{(n)}_{k_1,\dots,k_{n-1}})_{ k_1,\dots,k_{n-1} \in \mathbb{Z}}$ such that
$$
F_n( \mathbf{q}) = \frac 1 n \sum_{j \in \mathbb{Z}} \sum_{k_1, \dots, k_{n-1} \in \mathbb{Z}}  \alpha^{(n)}_{k_1, \dots, k_{n-1}} q_{j} q_{j-k_1} \dots q_{j-k_{n-1}}.
$$
These coefficients can be symmetrized without loss of generality to ensure that
\begin{equation}
\label{symmetry_alpha}
\begin{split}
& \alpha^{(2)}_k = \alpha^{(2)}_{-k} \qquad \forall k \\
& \alpha^{(3)}_{k_1,k_2} = \alpha^{(3)}_{k_2,k_1} = \alpha^{(3)}_{-k_1,k_2-k_1} \qquad \forall k_1,k_2 \\
& \alpha^{(4)}_{k_1,k_2,k_3} = \alpha^{(4)}_{k_2,k_1,k_3} = \alpha^{(4)}_{k_3,k_1,k_2} = \alpha^{(4)}_{-k_1,k_2-k_1,k_3-k_1} \qquad \forall k_1,k_2,k_3 \;\;\;\mbox{etc...}
\end{split}
\end{equation}
To explain this symmetrization procedure, we consider quadratic terms first, which can be written
$$
\sum_{j,k} \alpha_k^{(2)} q_j q_{j-k} = \alpha_0^{(2)} \sum_j q_j^2 + \sum_{j_1 > j_2} \left[ \alpha^{(2)}_{j_1-j_2} + \alpha^{(2)}_{j_2-j_1} \right] q_{j_1} q_{j_2}.
$$
Therefore we can assume that $\alpha^{(2)}_{j_1-j_2} = \alpha^{(2)}_{j_2 - j_1}$ for all $j_1,j_2$, which means $\alpha^{(2)}_k = \alpha^{(2)}_{-k}$. Turning to cubic terms, they can be written
\begin{align*}
& \sum_{j,k_1,k_2} \alpha^{(3)}_{k_1,k_2} q_{j} q_{j-k_1} q_{j-k_2} \\
& \qquad = \sum_{j_1 > j_2 > j_3} q_{j_1} q_{j_2} q_{j_3} \left[ \alpha^{(3)}_{j_1-j_2,j_1-j_3} +  \alpha^{(3)}_{j_1-j_3,j_1-j_2} + \alpha^{(3)}_{j_2-j_1,j_2-j_3} \right. \\
& \qquad \qquad \qquad \qquad \qquad \left. + \alpha^{(3)}_{j_2-j_3,j_2-j_1} + \alpha^{(3)}_{j_3-j_1,j_3-j_2} + \alpha^{(3)}_{j_3-j_2,j_3-j_1}\right] + \dots
\end{align*}
(where the $\dots$ stand for terms for which $j_1,j_2,j_3$ are not distinct, which are omitted for simplicity). Upon symmetrization, we can thus assume that, for any $j_1,j_2,j_3$,
$$
\alpha^{(3)}_{j_1-j_2,j_1-j_3} = \alpha^{(3)}_{j_1-j_3,j_1-j_2} = \alpha^{(3)}_{j_2-j_1,j_2-j_3} = \alpha^{(3)}_{j_2-j_3,j_2-j_1} = \alpha^{(3)}_{j_3-j_1,j_3-j_2} = \alpha^{(3)}_{j_3-j_2,j_3-j_1},
$$
which implies the condition in \eqref{symmetry_alpha} for cubic coefficients. The discussion for quartic terms is similar and will be omitted.

\medskip

Writing down the first terms in the expansion of $F$, the evolution problem becomes
$$
\ddot{q_j}  = - \sum_k \alpha^{(2)}_{k} q_{j-k} - \sum_{k_1,k_2} \alpha^{(3)}_{k_1,k_2} q_{j-k_1} q_{j-k_2} + \sum_{k_1,k_2,k_3} \alpha^{(4)}_{k_1,k_2,k_3} q_{j-k_1} q_{j-k_2} q_{j-k_3} + \dots$$

What are the conserved quantities through this evolution?
\begin{itemize}
\item For a general choice of $F$, the only conserved quantity is the Hamiltonian itself.
\item If $F$ is invariant by uniform translation of all particles, $F((q_j)_{j \in \mathbb{Z}}) = F((q_j+a)_{j \in \mathbb{Z}})$ for all $a \in \mathbb{R}$, the system is called \textit{unpinned}. In that case, Noether's theorem gives a further conservation law, namely the total momentum 
$$\mathscr{P}(\mathbf{p}) = \sum_j p_j.$$
\item Finally, we will discuss later the case of the Toda lattice (given by a specific choice of $F$) which is an integrable system and as such enjoys an infinite number of conservation laws.
\end{itemize}

\subsection{The dispersion relation}
We adopt the following normalization for the Fourier transform $\widehat{u}(\xi)$, $\xi \in \mathbb{T} = \mathbb{R} / (2\pi \mathbb{Z})$ of a function on the lattice $(u_k)$, $k\in \mathbb{Z}$:
$$
\widehat{u}(\xi) = \sum_{k \in \mathbb{Z}} u_k e^{-ik\xi} \quad \Longleftrightarrow \quad  u_k = \frac 1 {2\pi} \int_{\mathbb{T}} \widehat{u}(\xi) e^{i k \xi} \dd \xi.
$$
With this normalization, we record the formulas
\begin{equation}
\label{formulafourier}
\widehat{a * b}(\xi) = \widehat{a}(\xi)  \widehat{n}(\xi), \qquad  
\widehat{a \cdot b}(\xi) = \frac{1}{2\pi} \int \widehat{a}(\eta) \widehat{b}(\xi-\eta) \dd \eta, \qquad \widehat{q_{k\pm 1}}(\xi) = e^{\pm i \xi} \widehat{q}(\xi).
\end{equation}

The Fourier transform will play a central role in the following since it diagonalizes the linear part of the equation (which is in turn a consequence of the translation invariance assumption made initially):
$$
\ddot{q_j}  = - \sum_k \alpha_k q_{j-k} \quad \Longleftrightarrow \quad \ddot{\widehat{q}}(\xi) = -\widehat{\alpha}(\xi) \widehat{q}(\xi)
$$
(here and in the following, we are simply denoting $\alpha$ instead of $\alpha^{(2)}$ for simplicity). Since $\alpha$ is real-valued and even, so is its Fourier transform. The linear stability condition for the above ODE is
$$
\widehat{\alpha}(\xi) \geq 0, \qquad \xi \in \mathbb{T},
$$
and we will henceforth assume that it is satisfied. This leads to the definition of the \textit{dispersion relation}
$$
{\omega}(\xi) = \sqrt{ \widehat{\alpha}(\xi)}
$$
(notice that $\omega$ is even). We will now review some classical examples of dispersion relations. 

\medskip \noindent
\underline{Born-von Karman chain.} In this standard model, we think of particles as  connected by linear (Hookean) springs. In other words, the quadratic part of $F$ is a positive linear combination of $\sum_j |q_j-q_{j+k}|^2$, or equivalently  $\alpha_k \leq 0$ for $k\neq 0$ and $\alpha_0 = - 2 \sum_{k \neq 0} \alpha_k$, or equivalently $\widehat{\alpha}$ can be represented as series in $\sin \left(\frac{k x}{2}\right)^2$ with positive coefficients.

The most basic case is when only neighboring particles interact through the potential energy, which implies (up to a constant factor) $\alpha_k = - \delta_{1,k} + 2 \delta_{0,k} - \delta_{-1,k}$ and
$$
\omega(\xi) = 2 \left| \sin \left( \frac{\xi}{2} \right) \right|.
$$

\medskip \noindent
\underline{Nearest neighbor interaction.} Here, we assume that particles only interact with themselves or with their direct neighbors through the potential energy. In other words $\alpha_j = 0$ if $|j| \geq 2$. Up to a multiplicative constant, the stability condition gives that
$$
\omega(\xi) = \sqrt{1 - 2\delta \cos \xi}, \qquad 0 \leq \delta \leq \frac 12.
$$
If $\delta = 0$, the above degenerates to $1$, while if $\delta = \frac{1}{2}$,
$$
\omega(\xi) = \sqrt 2 \left| \sin \left( \frac \xi 2 \right) \right|,
$$
which already appeared above. To understand the physical meaning of $\delta$, it is helpful to write the quadratic part of the potential (which gives the linear part of the equation)
$$
F_{\operatorname{quad}}(\mathbf{q}) = \sum_{j \in \mathbb{Z}} \frac{q_j^2}{2} - \delta q_j q_{j+1} = \sum_{j \in \mathbb{Z}} \left[ \frac{1}{2} - \delta \right] q_j^2 + \frac{\delta}{2} ( q_j - q_{j+1})^2.
$$
Here, the term with the coefficient $\frac{1}{2} - \delta$ corresponds to a pinning onsite potential while the term with the coefficient $\delta$ is of Born-von Karman type.

\begin{figure}
    \begin{tikzpicture}
\begin{axis}[
    axis lines = left,
    xlabel = \(\xi\),
    ylabel = {\(\omega(\xi)\)},
]

\addplot [
    domain=-3.14:3.14, 
    samples=100, 
    color=blue,
]
{sqrt(1-0.35*cos(57.3*x))};

\addplot [
    domain=-3.14:3.14, 
    samples=100, 
    color=blue,
    ]
{sqrt(1-0.6*cos(57.3*x))};

\addplot [
    domain=-3.14:3.14, 
    samples=100, 
    color=blue,
    ]
{sqrt(1-cos(57.3*x))};
\end{axis}
\end{tikzpicture}
\caption{$\omega(\xi) = \sqrt{1-2 \delta \cos \xi}$, the nearest-neighbor dispersion relation, for $\delta = \frac{1}{10} , \frac{3}{10} , \frac 12$}
\end{figure}

\medskip \noindent
\underline{Next-to-nearest neighbor interaction} This corresponds to requiring that $\alpha_j = 0$ for $|j| \geq 3$. Then
$$
\omega(\xi) = \sqrt{A + B \cos (\xi) + C \cos(2\xi)},
$$
where the constants $A,B,C$ within the square root are such that this quantity is non-negative. This gives a more general class than nearest-neighbor interaction, used in \cite{Spohn2005,LukkarinenSpohn2008} to illustrate more general behaviors (e.g. quadratic resonances which, as we shall see, are absent in the nearest neighbor case). Moreover, it is used to give a more realistic description of some crystals, see Chapter 6 in \cite{Maugin}.

\medskip \noindent
\underline{Smooth dispersion relations} are equivalent to rapidly decaying coupling  $\alpha_j$; thus any smooth, positive and even function for $\omega$ is associated to an interaction coefficient $(\alpha_j)$.

\medskip \noindent
\underline{Long range interactions} are relevant in a number of physical situations \cite{AvilaPereiraTeixeira,IubiniDiCintioLepriLiviCasetti,MNDKR}. They
typically correspond to polynomially decaying coupling: 
$$
\alpha_j \sim |j|^{-\gamma} \qquad \mbox{as $j \to \infty$, with $\gamma>0$}.
$$
For the dispersion relation, this corresponds to a power singularity at $0$:
$$
\omega(\xi) - \omega(0) \sim |\xi|^{\frac{\gamma -1}{2}}.
$$

\subsection{Weakly nonlinear form of the equation} 

When a weak nonlinearity is added to the linear equation, it is most natural to adopt a set of coordinates in which the dynamics would be trivial (constant) in the linear case, and therefore slow in the weakly nonlinear framework. This will be our aim in this section.

For simplicity in the notations, we will henceforth neglect nonlinear terms of order $\geq 4$. This is because equations of interest are of order $2$ or $3$; and because higher order terms can be analyzed by following the exact same pattern. 

\medskip

\noindent \underline{Step 1: viewing the nonlinear problem in the Fourier variable}. Our equation becomes
\begin{equation}
\begin{split}
\label{2d_order_eq_q}
\ddot{\widehat{q}}(\xi) & = - \omega(\xi)^2 \widehat{q}(\xi) -  \frac{1}{2\pi} \int \widehat{\alpha^{(3)}}(\xi_1,\xi_2) \widehat{q}(\xi_1) \widehat{q}(\xi_2) \delta(\xi - \xi_1 - \xi_2) \dd \xi_1 \dd \xi_2 \\
& \qquad \qquad - \frac{1}{(2\pi)^2} \int \widehat{\alpha^{(4)}}(\xi_1,\xi_2,\xi_3) \widehat{q}(\xi_1) \widehat{q}(\xi_2)  \widehat{q}(\xi_3) \delta(\xi - \xi_1 - \xi_2 - \xi_3) \dd \xi_1 \dd \xi_2 \dd \xi_3 + \dots
\end{split}
\end{equation}
We pause to record the symmetry properties of $\widehat{\alpha}(\xi_1,\xi_2)$ and $\widehat{\alpha}(\xi_1,\xi_2,\xi_3)$: they satisfy 
\begin{equation}
\label{symmetry_alpha_hat}
\begin{split}
& \widehat{\alpha^{(3)}}(\xi_1,\xi_2) = \overline{\widehat{\alpha^{(3)}}(-\xi_1,-\xi_2)} = \widehat{\alpha^{(3)}}(\xi_2,\xi_1) = \widehat{\alpha^{(3)}}(\xi_1,-\xi_1-\xi_2),\\
& \widehat{\alpha^{(4)}}(\xi_1,\xi_2,\xi_3) =  \overline{\widehat{\alpha^{(4)}}(-\xi_1,-\xi_2,-\xi_3)} = \widehat{\alpha^{(4)}}(\xi_1,\xi_3,\xi_2) = \widehat{\alpha^{(4)}}(\xi_3,\xi_2,\xi_1) \\
& \qquad \qquad \quad \;\;\;= \widehat{\alpha^{(4)}}(\xi_1,\xi_2,-\xi_1-\xi_2-\xi_3).
\end{split}
\end{equation}
All equalities but the last one in each line follows immediately by reality of $\alpha$ and symmetry of $\alpha$ by permutation of its indices. To check the last equality, we change the sommation variables and use \eqref{symmetry_alpha} as follows
\begin{align*}
& \widehat{\alpha^{(3)}}(\xi_1,-\xi_1-\xi_2) = \sum_{k_1,k_2} \alpha^{(3)}_{k_1,k_2} e^{-ik_1 \xi_1 + i k_2(\xi_1+\xi_2)} = \sum_{m,n} \alpha^{(3)}_{-m,n-m} e^{-im \xi_1 - i n \xi_2} = \widehat{\alpha^{(3)}}(\xi_1,\xi_2) \\
& \widehat{\alpha^{(4)}}(\xi_1,\xi_2,-\xi_1-\xi_2-\xi_3)  = \sum_{k_1,k_2,k_3} \alpha^{(4)}_{k_1,k_2,k_3} e^{-ik_1 \xi_1 - i k_2 \xi_2 + ik_3(\xi_1 + \xi_2 +\xi_3)} \\
& \qquad \qquad= \sum_{k_1,k_2} \alpha^{(4)}_{n-\ell,m-\ell,-\ell} e^{-im \xi_1 - i n \xi_2 -i \ell \xi_3}  = \widehat{\alpha^{(4)}}(\xi_1,\xi_2,\xi_3).
\end{align*}

\medskip

\noindent \underline{Step 2: first order system.} To rewrite \eqref{2d_order_eq_q} as a first order system, we set
$$
b(\xi) = \frac{1}{\sqrt{ \omega(\xi)}} \left[ i  \widehat{p}(\xi) + \omega(\xi) \widehat{q}(\xi) \right].
$$
By reality of $q$ and $p$, the unknown $q$ can be reconstructed through 
\begin{equation}\label{eq:q hat reconstr}
    \widehat{q}(\xi) = \frac{1}{2\sqrt{\omega(\xi)}} \left[ b(\xi) + \overline{b(-\xi)} \right]
\end{equation}
and we have furthermore
$$
\ddot{\widehat{q}}(\xi) + \omega(\xi)^2 \widehat{q}(\xi) =\sqrt{\omega(\xi)} (-i \partial_t + \omega(\xi)) b(\xi).
$$
Therefore, the evolution problem on $b$ becomes
\begin{align*}
& (-i \partial_t + \omega(\xi)) b(\xi) \\
&= -\frac{1}{8\pi} \int \frac{\widehat{\alpha^{(3)}}(\xi_1,\xi_2) }{\sqrt{\omega(\xi) \omega(\xi_1) \omega(\xi_2)}} \left[ b(\xi_1) + \overline{b(-\xi_1)}\right] \left[ b(\xi_2) + \overline{b(-\xi_2)}\right] \delta(\xi -  \xi_1 -  \xi_2 ) \dd \xi_{1,2} + \dots
\end{align*}
which can be rearranged as
\begin{equation}
\label{eqb}\boxed{
\begin{aligned}
 - i \partial_t b(\xi)  + \omega(\xi) b(\xi) & = \sum_{\sigma_i = \pm 1}  \int  Q (\sigma_1 \xi_1, \sigma_2 \xi_2) b^{\sigma_1}(\xi_1)  b^{\sigma_2}(\xi_2) \delta(\xi - \sigma_1 \xi_1 - \sigma_2 \xi_2 ) \dd \xi_{1,2} \\
& \qquad + \sum_{\sigma_i = \pm 1}  \int  C(\sigma_1 \xi_1,\sigma_2 \xi_2,\sigma_3 \xi_3) b^{\sigma_1}(\xi_1)  b^{\sigma_2}(\xi_2) b^{\sigma_3}(\xi_3) \\
& \qquad\qquad \qquad\qquad \qquad\qquad \delta(\xi - \sigma_1 \xi_1 - \sigma_2 \xi_2 - \sigma_3 \xi_3) \dd \xi_{1,2,3}
\end{aligned}}
\end{equation}
where we used the shorthands
$$
\dd \xi_{1,2,3} = \dd \xi_1 \dd \xi_2 \dd \xi_3 \qquad \mbox{and} \qquad b^\sigma(\xi) = \begin{cases} y  & \mbox{if $\sigma = + 1$} \\  \overline{y}  & \mbox{if $\sigma = - 1$} \end{cases}.
$$
and the notations $Q$ and $C$ for the quadratic and cubic symbols given by
\begin{equation}
\label{harlebievre}
\begin{split}
& Q(\xi_1, \xi_2) = \frac{-1}{8 \pi} \frac{ \widehat{\alpha^{(3)}}( \xi_1, \xi_2)}{\sqrt {\omega(\xi_1+\xi_2) \omega(\xi_1) \omega(\xi_2) }} \\
& C( \xi_1, \xi_2, \xi_3) = \frac{-1}{32 \pi^2} \frac{ \widehat{\alpha^{(4)}}( \xi_1,\xi_2,  \xi_3)}{\sqrt {\omega(\xi_1+\xi_2+\xi_3) \omega(\xi_1) \omega(\xi_2) \omega(\xi_3)}}.
\end{split}
\end{equation}
These symbols inherit the symmetry properties \eqref{symmetry_alpha_hat} of $\widehat{\alpha^{(3)}}$ and $\widehat{\alpha^{(4)}}$:
\begin{equation}
\label{symmetry_QC}
\begin{split}
& Q(\xi_1,\xi_2) = \overline{Q(-\xi_1,-\xi_2)} = Q(\xi_2,\xi_1) = Q(\xi_1,-\xi_1-\xi_2),\\
& C(\xi_1,\xi_2,\xi_3) = \overline{C(-\xi_1,-\xi_2,-\xi_3)} = C(\xi_1,\xi_3,\xi_2) = C(\xi_3,\xi_2,\xi_1) \\
& \qquad \qquad \quad = C(\xi_1,\xi_2,-\xi_1-\xi_2-\xi_3).
\end{split}
\end{equation}

\medskip

\noindent \underline{Step 3: Filtering by the linear flow.} There remains to filter $b(\xi)$ through the linear flow by letting
$$
a(\xi) = e^{it \omega(\xi)} b(\xi),
$$
the equation becomes
\begin{equation*}
\boxed{
\begin{aligned}
 - i \partial_t a(\xi) & = \sum_{\sigma_i = \pm 1} \int Q(\sigma_1 \xi_1,\sigma_2 \xi_2) e^{it \Omega_{0,1,2}^{\sigma_1,\sigma_2}} a^{\sigma_1}(\xi_1)  a^{\sigma_2}(\xi_2) \delta(\xi - \sigma_1 \xi_1 - \sigma_2 \xi_2 )  \dd \xi_{1,2} \\
& \qquad + \sum_{\sigma_i = \pm 1} \int C(\sigma_1 \xi_1,\sigma_2 \xi_2,\sigma_3 \xi_3) e^{it \Omega_{0,1,2,3}^{\sigma_1,\sigma_2,\sigma_3}} a^{\sigma_1}(\xi_1)  a^{\sigma_2}(\xi_2) a^{\sigma_3}(\xi_3) \\
& \qquad \qquad \qquad \qquad \qquad \qquad \qquad \qquad \delta(\xi - \sigma_1 \xi_1 - \sigma_2 \xi_2 - \sigma_3 \xi_3)  \dd \xi_{1,2,3}
\end{aligned}}
\end{equation*}
where we define the \textit{resonant moduli}
\begin{align*}
& \Omega^{\sigma_1,\sigma_2}_{0,1,2} = \omega(\xi) - \sigma_1 \omega(\xi_1) - \sigma_2 \omega(\xi_2), \\
& \Omega^{\sigma_1,\sigma_2,\sigma_3}_{0,1,2,3} = \omega(\xi) - \sigma_1 \omega(\xi_1) - \sigma_2 \omega(\xi_2) - \sigma_3 \omega(\xi_3).
\end{align*}

The equation on the unknown $a$ above is known as the \textit{interaction representation} or \textit{Zakharov representation}; it appears for instance in \cite{ZakharovLvovFalkovich}. This formulation is very natural in the case of a weak nonlinearity, since $a$ would be constant if the nonlinearity was zero, and thus slowly varying if the nonlinearity is weak. Indeed, it is so natural that it already appeared in the papers of Peierls \cite{Peierls} and Hasselmann \cite{Hasselmann1962}. 

\subsection{Nonlinear resonances}
\label{section_resonances}

Nonlinear resonance is the key concept to understand how the weak nonlinearity interacts with the linear part of the equation. This section is devoted to the definition of the resonant set, its characterization in the case of nearest-neighbor interactions, and the exploration of its dynamical consequences.

\subsubsection{Definition of the resonant set}
"Resonances" may refer to a variety of mathematical concepts. We are concerned here with nonlinear resonances, as they appear in the theory of Hamiltonian systems, or dynamical systems in general. In the weakly nonlinear regime which is our focus in this review, nonlinear resonances correspond to these nonlinear interactions between frequencies which are determinant for large time dynamics.

In the integral equation for $a$ above, we can view $e^{it \Omega_{0,1,2}^{\sigma_1,\sigma_2}}$ as a rapidly oscillatory factor, the other terms varying slowly in time. These rapid oscillations lead to cancellations (often called 'averaging') in the nonlinear term... unless $\Omega^{\sigma_1,\sigma_2}_{0,1,2}$ or $\Omega^{\sigma_1,\sigma_2,\sigma_3}_{0,1,2,3}$ (in the quadratic and cubic terms, respectively) vanish! The zero set of the resonance modulus is known as the \textit{resonant set}
\begin{align*}
& \mathscr{R}^{\sigma_1,\sigma_2}(\omega) = \{ (\xi,\xi_1,\xi_2) \in \mathbb{T}^3 \;\mbox{s.t.}\; \xi = \sigma_1 \xi_1 + \sigma_2 \xi_2 \;\mbox{and} \; \Omega^{\sigma_1,\sigma_2}_{0,1,2}=0 \} \\
& \mathscr{R}^{\sigma_1,\sigma_2,\sigma_3}(\omega) = \{ (\xi,\xi_1,\xi_2,\xi_3) \in \mathbb{T}^4 \;\mbox{s.t.}\; \xi = \sigma_1 \xi_1 + \sigma_2 \xi_2 + \sigma_3 \xi_3 \;\mbox{and} \; \Omega^{\sigma_1,\sigma_2,\sigma_3}_{0,1,2,3}=0 \}.
\end{align*}

\subsubsection{Resonant sets for nearest-neighbor interactions}
As we saw, the resonant sets gather interactions which do not undergo averaging and thus dominate the long-term Hamiltonian dynamics\footnote{The reader might have noticed that the resonant modulus typically vanishes on a zero-measure set, and that it actually suffices that $\Omega \ll \frac 1t$ for averaging to be operant. We gloss over these more technical considerations but they will actually be crucial for the derivation of the kinetic wave equation.}. As we shall see, the resonant sets are also at the heart of the kinetic wave equation. Since nearest-neighbor interactions are the most common model, we now turn our attention to their resonant sets. Recall that nearest-neighbor interactions correspond to the dispersion relation
$$
\omega(\xi) = \sqrt{1 - 2 \delta \cos(\xi)}, \qquad 0 \leq \delta \leq \frac 12.
$$

\medskip

\noindent \underline{Quadratic resonances.} There holds if $0 < \delta < \frac 12$
$$
\mathscr{R}^{\sigma_1,\sigma_2}(\omega) = \emptyset \qquad \mbox{for any $\sigma_1,\sigma_2=\pm$},
$$
while if $\delta = \frac 12$, $\omega(\xi) = \sqrt{2} \left| \sin \left( \frac \xi 2 \right) \right|$
\begin{align*}
& \mathscr{R}^{--}(\omega) = \{(0,0,0) \} \\
& \mathscr{R}^{+,+}(\omega) = \{ (\xi,\xi,0), \; \xi \in \mathbb{T} \} \cup \{ (\xi,0,\xi), \; \xi \in \mathbb{T} \} \\
& \mathscr{R}^{+,-}(\omega) = \{ (\xi,\xi,0), \; \xi \in \mathbb{T} \} \cup \{ (0,\xi,\xi), \; \xi \in \mathbb{T} \}.
\end{align*}
An analytic proof of the above description of quadratic resonant sets can be found in \cite{Lukkarinen2016}.

\medskip

\noindent \underline{Cubic resonances.} If $0 < \delta < \frac 1 2$
$$
\mathscr{R}^{-,-,-}(\omega) = \mathcal{R}^{+,+,+}(\omega) = \emptyset \qquad \mbox{if $0<\delta<\frac{1}{2}$}
$$
If $\delta = \frac{1}{2}$,
\begin{align*}
& \mathscr{R}^{-,-,-}(\omega) = \{ (0,0,0,0) \}\\
&  \mathscr{R}^{+,+,+}(\omega) = \{ (\xi,\xi,0,0),\; \xi \in \mathbb{T} \} \cup \{ (\xi,0,\xi,0),\; \xi \in \mathbb{T} \} \cup \{ (\xi,0,0,\xi),\; \xi \in \mathbb{T} \}
\end{align*}
Once again, an analytic proof of the descriptions of $\mathscr{R}^{-,-,-}$ and $\mathscr{R}^{+,+,+}$ given above can be found in \cite{Luk2016}.
For reasons which will become clear later on, the resonant sets $\mathscr{R}^{-,-,-}$ and $\mathscr{R}^{+,+,+}$ will not matter in the dynamics at the kinetic level, the important case being $\mathscr{R}^{-,+,+}$, to which we now turn. 

It will be natural to split $\mathscr{R}^{-,+,+}$ into \textit{trivial} and \textit{effective} resonances
\begin{align*}
\mathscr{R}^{-,+,+}(\omega) = \mathscr{R}^{-,+,+}_{\operatorname{triv}}(\omega) \cup \mathscr{R}^{-,+,+}_{\operatorname{eff}}(\omega),
\end{align*}
where
\begin{equation}
\label{ReffRtriv}
\begin{split}
& \mathscr{R}^{-,+,+}_{\operatorname{triv}}(\omega) = \{ (\xi_0,\xi_1,\xi_0,\xi_1),\,(\xi_0,\xi_1) \in \mathbb{T}^2 \}\, \cup \, \{ (\xi_0,\xi_1,\xi_1,\xi_0),\,(\xi_0,\xi_1) \in \mathbb{T}^2 \}\\
& \mathscr{R}^{-,+,+}_{\operatorname{eff}}(\omega) = \mathscr{R}^{-,+,+}(\omega) \setminus \mathscr{R}^{-,+,+}_{\operatorname{triv}}(\omega).
\end{split}
\end{equation}
Trivial resonances deserve this qualification for two reasons: first, they do not have any impact on the kinetic dynamics, as we shall see, and second, they are independent of the dispersion relation $\omega$. Effective resonances are the exact opposite: they are the key to the kinetic dynamics, and they depend strongly on the dispersion relation $\omega$.

In the case of nearest-neighbor interactions, effective resonances can be parameterized through a function $h(\xi_0,\xi_2)$ as
$$
\mathscr{R}^{-,+,+}_{\operatorname{eff}}(\omega) = \{ (\xi_0,h(\xi_0,\xi_2),\xi_2,\xi_0 + h(\xi_0,\xi_2) - \xi_2 ), \,(\xi_0,\xi_2) \in \mathbb{T}^2 \}.
$$
If $\delta = \frac{1}{2}$, an explicit formula for the function $h$ was derived in \cite{LukkarinenSpohn2008}
\begin{equation}
\label{formula_h}
h(\xi_0,\xi_2) = \frac{\xi_2 - \xi_0}{2} + 2 \arcsin \left(\tan \left| \frac{\xi_2-\xi_0}{4} \right| \cos \left( \frac{\xi_0+\xi_2} 4 \right) \right)   
\end{equation}
(to be precise: this formula is valid if $\xi_0$ and $\xi_2$ are chosen to be the representatives in $[0,2\pi)$ of $\xi_0,\xi_2 \in \mathbb{T}$).

\begin{figure}
  \centering

  \begin{minipage}{.4\linewidth}
    \begin{overpic}[width=\linewidth]{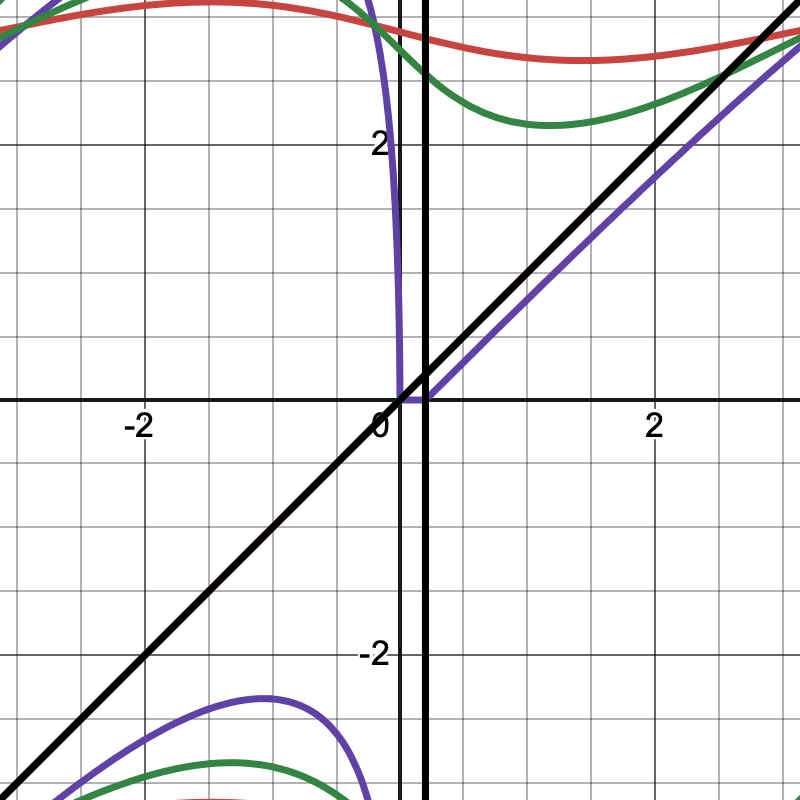} 
      \put(-10,95){(\textit{a})}
    \end{overpic}
  \end{minipage}\hspace{0,05\textwidth}
  \begin{minipage}{0.40\linewidth}
    \begin{overpic}[width=\linewidth]{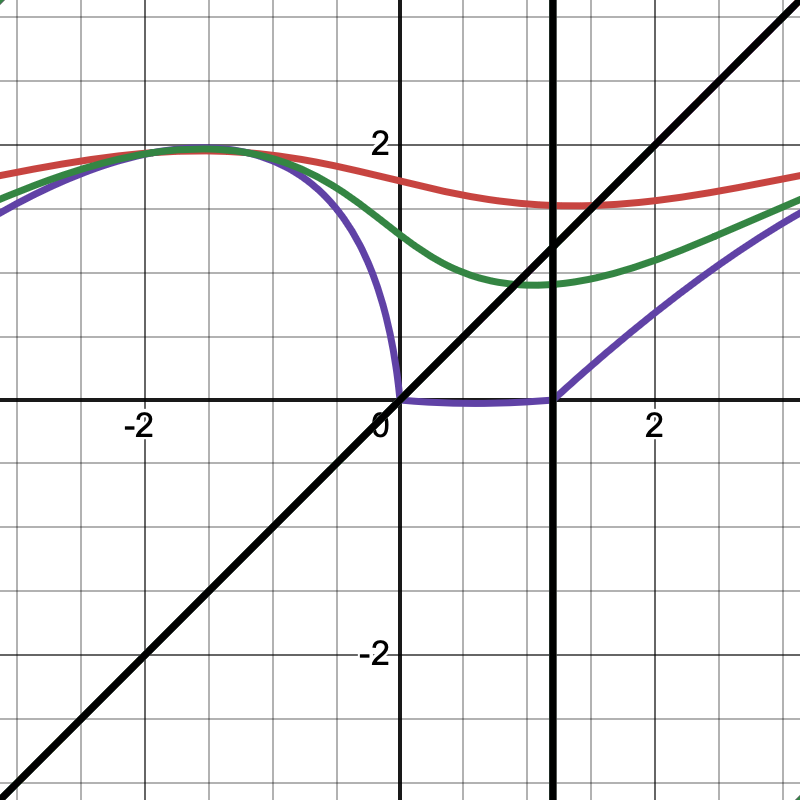} 
      \put(-10,95){(\textit{b})}
    \end{overpic}
  \end{minipage}

\vspace{0.05 \textwidth}
  \begin{minipage}{0.40\linewidth}
    \begin{overpic}[width=\linewidth]{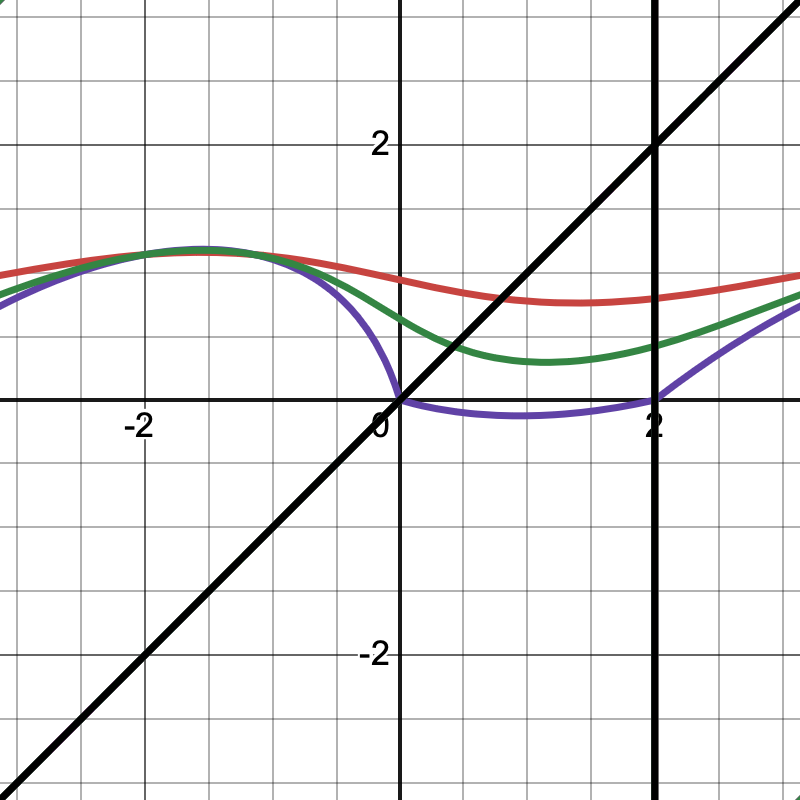} 
      \put(-10,95){(\textit{c})}
    \end{overpic}
  \end{minipage}\hspace{0.05\textwidth}
  \begin{minipage}{0.40\linewidth}
    \begin{overpic}[width=\linewidth]{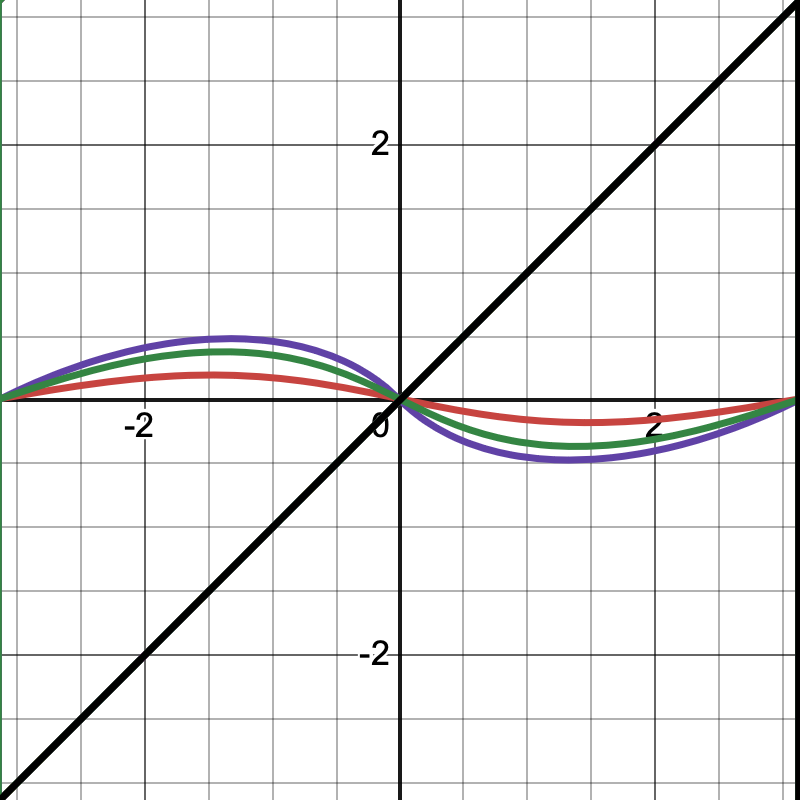} 
      \put(-10,95){(\textit{d})}
    \end{overpic}
  \end{minipage}

\caption{The resonant set $\mathscr{R}^{-,+,+}(\omega)$ for fixed $\xi_0$. The figures $(a)$, $(b)$, $(c)$, $(d)$ represent the resonant set for $\xi_0 = 0.2,1.2,2,\pi$, with the horizontal coordinate being $\xi_2$, the vertical $\xi_1$, and as always $\xi_3 = \xi_0+\xi_1 -\xi_2$. The black lines are the trivial resonances $\xi_1=\xi_2$ and $\xi_2=\xi_0$, while the colored curves are the effective resonances for nearest-neighbor interactions $\omega(\xi) = \sqrt{1-2\delta \cos \xi}$: the red, green and violet curves correspond to the values $\delta = 0.2, 0.4, 0.5$ respectively.}
\end{figure}

\subsubsection{Normal form} We saw that quadratic resonances do not exist in cases of interest. Assuming that the nonlinear term is quadratic, the fact that the resonant sets are empty allows to transform the equation into a cubic one, a procedure often called \textit{normal form}. Our starting point is the general quadratic equation seen earlier
$$
a(t,\xi) = a_0(\xi) + i \sum_{\sigma_i} \int_0^t \int Q(\sigma_1 \xi_1,\sigma_2 \xi_2) e^{it \Omega_{0,1,2}^{\sigma_1,\sigma_2}} a^{\sigma_1}(\xi_1) a^{\sigma_2}(\xi_2) \delta(\xi - \sigma_1 \xi_1 - \sigma_2 \xi_2) \dd \xi_{1,2}.
$$
Integrating by parts gives
\begin{align*}
a(t,\xi) & = a_0(\xi) - 2 \sum_{\sigma_i} \int_0^t \int \frac{Q(\sigma_1 \xi_1,\sigma_2 \xi_2)}{\Omega_{0,1,2}^{\sigma_1,\sigma_2}} e^{it \Omega_{0,1,2}^{\sigma_1,\sigma_2}} a^{\sigma_1}(\xi_1) \partial_t a^{\sigma_2}(\xi_2) \delta(\xi - \sigma_1 \xi_1 - \sigma_2 \xi_2) \dd \xi_{1,2} \\ 
& \qquad \qquad  + \{ \mbox{boundary term} \}.
\end{align*}
We now argue that the boundary term can be ignored: indeed, the expectation is that the dominant effects will result from cubic resonances. Therefore, we drop the boundary term and replace $\partial_t a$ by the quadratic expression in $a$; this gives the equation
\begin{align*}
& a(t,\xi) = a_0(\xi) - 2i \sum_{\sigma_i} \sigma_2 \int_0^t \int \frac{Q(\sigma_1 \xi_1,\sigma_2 \xi_2)Q^{\sigma_2}(\sigma_3 \xi_3,\sigma_4 \xi_4)}{\Omega_{0,1,2}^{\sigma_1,\sigma_2}} e^{it \Omega_{0,1,3,4}^{\sigma_1,\sigma_2 \sigma_3,\sigma_2 \sigma_4}} \times \\
& \qquad \qquad \quad \quad  a^{\sigma_1}(\xi_1) a^{\sigma_2 \sigma_3}(\xi_3) a^{\sigma_2 \sigma_4}(\xi_4) \delta (\xi - \sigma_1 \xi_1 - \sigma_2 \xi_2) \delta(\xi_2 - \sigma_3 \xi_3 - \sigma_4 \xi_4)  \dd \xi_{1,2,3,4}.
\end{align*}
The integral expression can be written in a more friendly way by eliminating $\xi_2$,  and relabeling $\tau$, $\sigma_2$, $\sigma_3$, $\xi_2$, $\xi_3$ instead of $\sigma_2$, $\sigma_2 \sigma_3$, $\sigma_2 \sigma_4$, $\xi_3$, $\xi_4$ respectively. Using furthermore that $Q(\tau \xi,\tau \eta)^\tau = Q(\xi,\eta)$ by \eqref{symmetry_QC}, the above becomes
\begin{align*}
&a(t,\xi) = a_0(\xi) + i \sum_{\sigma_1,\sigma_2,\sigma_3} \int_0^t \int D^{\sigma_1,\sigma_2,\sigma_3}(\xi_1,\xi_2,\xi_3) e^{it \Omega_{0,1,2,3}^{\sigma_1,\sigma_2, \sigma_3}} a^{\sigma_1}(\xi_1) a^{\sigma_2}(\xi_2) a^{\sigma_3}(\xi_3) \\
& \qquad \qquad\qquad \qquad\qquad \qquad\qquad \qquad \qquad \qquad \qquad \delta(\xi - \sigma_1 \xi_1 - \sigma_2 \xi_2 - \sigma_3 \xi_3)  \dd \xi_{1,2,3}.
\end{align*}
where
\begin{equation}
\label{formulaD}
D^{\sigma_1,\sigma_2,\sigma_3}(\xi_1,\xi_2,\xi_3) = -2 \sum_{\tau} \tau \frac{Q(\sigma_1 \xi_1, \sigma_2 \xi_2 + \sigma_3 \xi_3) Q(\sigma_2 \xi_2,\sigma_3 \xi_3)}{\omega(\sigma_1 \xi_1 + \sigma_2 \xi_2 +\sigma_3 \xi_3) - \sigma_1 \omega(\xi_1) - \tau \omega(\sigma_2 \xi_2 + \sigma_3 \xi_3)}.
\end{equation}
Finally, we can symmetrize the variables $2,3,4$ and obtain
\begin{align*}
&a(t,\xi) = a_0(\xi) + i \sum_{\sigma_1,\sigma_2,\sigma_3} \int_0^t \int C^{\sigma_1,\sigma_2,\sigma_3}( \xi_1, \xi_2, \xi_3) e^{it \Omega_{0,1,2,3}^{\sigma_1,\sigma_2, \sigma_3}} a^{\sigma_1}(\xi_1) a^{\sigma_2}(\xi_2) a^{\sigma_3}(\xi_3) \\
& \qquad \qquad\qquad \qquad\qquad \qquad\qquad \qquad \qquad \qquad \qquad  \delta(\xi - \sigma_1 \xi_1 - \sigma_2 \xi_2 - \sigma_3 \xi_3)  \dd \xi_{1,2,3}.
\end{align*}
after setting
\begin{equation}
\label{defC}
\begin{split}
& C^{\sigma_1,\sigma_2,\sigma_3}( \xi_1, \xi_2, \xi_3) \\
& \qquad \qquad = \frac 13 \left[D^{\sigma_1,\sigma_2,\sigma_3}(\xi_1,\xi_2,\xi_3) + D^{\sigma_2,\sigma_1,\sigma_3}(\xi_2,\xi_1,\xi_3) + D^{\sigma_3,\sigma_2,\sigma_1}(\xi_3,\xi_2,\xi_1) \right].
\end{split}
\end{equation}
We recognize a cubic nonlinear term, with a new cubic symbol $C$. It now depends on $\sigma_1,\sigma_2,\sigma_3$ in a more involved way than in \eqref{harlebievre}.

\section{The weakly turbulent regime}

\label{section_weaklyturbulent}

Our aim in this section is to present the regime in which the microscopic to mesoscopic limit can be performed, namely the \textit{weakly turbulent regime}.  In the mesoscopic description of the system, only some statistical information is retained from the original microscopic (Hamiltonian) system.

As we shall see, the weakly turbulent regime is characterized by weak nonlinearity, large volume and random phase approximation. In this regime, we will also examine the limits of the energy, of the Gibbs entropy and of Gibbs measures, which yield objects which will be important in the following developments.

\subsection{Definition of the weakly turbulent regime} 
The \textit{weakly turbulent regime} requires three conditions: weak nonlinearity, large volume, and random phase approximation. Though it seems to be the simpler of the three conditions, we will discuss the weakly nonlinear condition last, for a reason that will become clear.

Our starting point is the general form of the Hamiltonian system in the case of a cubic nonlinearity - perhaps after a normal form in the absence of quadratic resonances. We recall it here and introduce some convenient notation
$$
- i \partial_t a(\xi) = \sum_{\sigma_i = \pm 1} \int C_{1,2,3}^\sigma e^{it \Omega_{0,1,2,3}^{\mathbf{\sigma}}} a^{\sigma_1}_1 a^{\sigma_2}_2 a^{\sigma_3}_3  \delta_{0,1,2,3}^{\mathbf{\sigma}}  \dd \xi_{1,2,3}.
$$
where we denote as usual $\dd \xi_{1,2,3} = \dd \xi_1 \dd \xi_2 \dd \xi_3$ and furthermore
\begin{align*}
& a_i = a(t,\xi_i)\\
& \delta^{\mathbf{\sigma}}_{0,1,2,3} = \delta(\xi_0 - \sigma_1 \xi_1 - \sigma_2 \xi_2 - \sigma_3 \xi_3) \\
& \Omega^{\mathbf{\sigma}}_{0,1,2,3} = \omega_0 - \sigma_1 \omega_1 - \sigma_2 \omega_2 - \sigma_3 \omega_3.
\end{align*}
Finally, two cases need to be distinguished for the cubic symbol $C$: if $Q=0$, then
$$
C_{1,2,3}^\sigma = C(\sigma_1 \xi_1, \sigma_2 \xi_2, \sigma_3\xi_3)
$$
with $C$ given in \eqref{harlebievre}; and if $C=0$ and quadratic resonances are absent, then $C_{1,2,3}^\sigma$ is defined in \eqref{defC} (a combination of these two cases is of course also possible).

\medskip

\noindent \underline{Infinite volume limit.} To give a meaning to this limit, we shall set the problem on the compact domain $\{1,\dots,N\}$ and let the size of the domain go to infinity. It is convenient to think of the solutions as being periodic solutions: we will assume that $q_{j+N} = q_j$ for all $j$.

The Fourier transform can be defined similarly to the above, but the Fourier space becomes discrete: while $j$ ranges over $\{ 0,1,\dots,N-1 \}$, the dual variable $\underline{\xi}$ ranges over $\{ 0, \frac{2\pi}{N} , \dots , \frac{(N-1)2 \pi}{N} \}$; note that we are denoting $\underline{\xi}$ instead of $\xi$ in order to emphasize that the Fourier variable is now discrete. The formulas are as follows
$$
\widehat{q}(\underline{\xi}) = \sum_k q_k e^{-ik \underline{\xi}}, \qquad q_k = \frac{1}{N} \sum_{\underline{\xi}} \widehat{q}(\xi) e^{i k \underline{\xi}}. 
$$
With this normalization, we have
$$
\widehat{q_{j+1}}(\underline{\xi}) = e^{i \underline{\xi}} \widehat{q}(\underline{\xi}), \qquad \widehat{a \cdot b}(\underline \xi) = \frac{1}{N} \sum_{ \underline{\xi_1}, \underline{\xi_2}}\widehat{a} (\underline{\xi_1}) \widehat{b} (\underline{\xi_2}) \delta (\underline{\xi} - \underline{\xi_1} - \underline{\xi_2}).  
$$
Here, the $\delta$ notation mimicks the notation we saw for a continuous Fourier variable, but it should be understood as a Kronecker instead of a Dirac $\delta$. In other words,
$$
\sum_{ \underline{\xi_1}, \underline{\xi_2}} \widehat{a} (\underline{\xi_1}) \widehat{b} (\underline{\xi_2}) \delta (\underline{\xi} - \underline{\xi_1} - \underline{\xi_2})= \sum_{\substack{ \underline{\xi_1},\underline{\xi_2} \in \frac{2\pi}{N}\mathbb{Z} \operatorname{mod} 1 \\ \underline{\xi_1} + \underline{\xi_2} = \underline{\xi}}} \widehat{a} (\underline{\xi_1}) \widehat{b} (\underline{\xi_2}).
$$
In the limit $N \to \infty$, the discrete variable $\underline{\xi}$ becomes the continuous variable $\xi$ and sums turn into integrals: we will often use the formula
$$
\frac{1}{N} \sum_{\underline{\xi}} \varphi(\underline{\xi}) \overset{N \to \infty}{\longrightarrow} \frac{1}{2\pi} \int \varphi( \xi) \dd \xi, \qquad \varphi \;\;\mbox{smooth}.
$$

Finally, with $N$-periodicity in place, the equation becomes in Fourier
\begin{equation}
\label{eq_epsilon_N0}
- i \partial_t a(\underline \xi) = \frac{(2\pi)^2}{N^2} \sum_{\sigma_i = \pm 1} \sum_{ \underline{\xi_1}, \underline{\xi_2}, \underline{\xi_3}} C_{1,2,3}^\sigma e^{it \Omega_{0,1,2,3}^{\mathbf{\sigma}}} a^{\sigma_1}_1 a^{\sigma_2}_2 a^{\sigma_3}_3  \delta_{0,1,2,3}^{\mathbf{\sigma}}
\end{equation}
(with obvious notations).

\medskip

\noindent \underline{Random phase approximation.} We now consider an ensemble of solutions and consider the phases in the polar decomposition of the Fourier modes. Strictly speaking, the random phase approximation is requiring that the phases, viewed as random variables, have uniform laws on $[0,2\pi]$ and are independent. 

We will adopt the stronger assumption that the Fourier modes are given by independent Gaussians. Not only does this simplify some details, but it also corresponds to the expected behavior in the weakly turbulent regime. To put it into a formula, we are assuming that, for each $t$, the law of $a(t)$ is given by
$$
a(t,\underline{\xi}) = \sqrt{f (t,\underline{\xi})} \mathcal{G}_{\underline{\xi}}, \qquad \mbox{$(\mathcal{G}_{\underline{\xi}})_{\underline{\xi}}$ independent normalized complex Gaussians}.
$$
This should be taken with a grain of salt! This ansatz cannot be literally true since correlations between Fourier modes will build up through the nonlinearity. Proving that this ansatz indeed holds up to a small error is a very hard problem known as \textit{propagation of chaos}.

At the mesoscopic level, the system will be simply described by $f (t,\underline{\xi})$, which obviously carries much less information than the full microscopic description.

In accordance with the random phase approximation, we choose as data
\begin{equation}
\label{data_Gaussian}
a(t=0,\underline{\xi}) = A(\underline{\xi}) = \sqrt{F (\underline{\xi})} \mathcal{G}_{\underline{\xi}}, \qquad \mbox{$(\mathcal{G}_{\underline{\xi}})_{\underline{\xi}}$ independent normalized complex Gaussians}.
\end{equation}

\medskip

\noindent \underline{Weakly nonlinear limit.} Weak nonlinearity can be achieved in two equivalent ways: either by choosing small data, or by adding a small factor in front the nonlinearity. Physically, we are trying to ensure that linear effects are dominant over nonlinear effects. We consider the initial form of the equation (in physical space), which can be written schematically
$$
\ddot q = L q + C(q,q,q),
$$
where $L$ stands for the linear operator and $C$ for the cubic non-linearity. 

We now want to compare the sizes of the linear and nonlinear terms on the right-hand side. Recall that $q_k = \frac{1}{N} \sum \widehat{q}(\underline{\xi}) e^{ik \underline{\xi}}$, and that we think of $\widehat{q}(\underline{\xi})$ as independent Gaussian variables of variance $\sim 1$. By the central limit theorem (square root cancellation), we expect $q_k$ to have size $\sim \frac{1}{\sqrt{N}}$. 

This means that the linear term has size $\sim q \sim \frac{1}{\sqrt{N}}$ while the nonlinear term has size $\sim q^3 \sim N^{-\frac 32}$. Thus we are already in the weakly nonlinear regime! In order to be more precise, we can ask what is the optimal factor that can multiply the nonlinear term, still ensuring that the equation is weakly nonlinear. Comparing the sizes of $q$ and  $q^3$, we see that a factor $N\varepsilon^2$, with $\varepsilon \ll 1$, is the most we can do; the schematic equation turns into
$$
\ddot q = L q + N \varepsilon^2 C(q,q,q).
$$
and \eqref{eq_epsilon_N0} becomes
\begin{equation}
\label{eq_epsilon_N}
- i \partial_t a(\underline \xi) = \frac{(2\pi)^2 \varepsilon^2}{N} \sum_{\sigma_i = \pm 1} \sum_{ \underline{\xi_1}, \underline{\xi_2}, \underline{\xi_3}} C_{1,2,3}^\sigma e^{it \Omega_{0,1,2,3}^{\mathbf{\sigma}}} a^{\sigma_1}_1 a^{\sigma_2}_2 a^{\sigma_3}_3  \delta_{0,1,2,3}^{\mathbf{\sigma}}.
\end{equation}

\medskip

\noindent \underline{Summary: the weakly turbulent regime} Over the last three paragraphs, we identified the weakly turbulent regime. It can be summarized through the initial value problem
$$
\boxed{
\begin{cases} \displaystyle - i \partial_t a(\underline \xi) = \frac{(2\pi)^2 \varepsilon^2}{N} \sum_{\sigma_i = \pm 1} \sum_{ \underline{\xi_1}, \underline{\xi_2}, \underline{\xi_3}} C_{1,2,3}^\sigma e^{it \Omega_{0,1,2,3}^{\mathbf{\sigma}}} a^{\sigma_1}_1 a^{\sigma_2}_2 a^{\sigma_3}_3  \delta_{0,1,2,3}^{\mathbf{\sigma}}, \quad \underline{\xi} \in \frac{2\pi}N \mathbb{Z} \mod 1 \\
a(t=0,\underline{\xi}) = A(\underline{\xi}) = \sqrt{F (\underline{\xi})} \mathcal{G}_{\underline{\xi}}, \qquad \mbox{$(\mathcal{G}_{\underline{\xi}})_{\underline{\xi}}$ i.i.d complex Gaussians},
\end{cases}
}
$$
where $\varepsilon \to 0$ and $N \to \infty$.

It is instructive to write the Hamiltonian from which this evolution equation derives; we shall only do so in the case $Q=0$ for simplicity. We saw above that the right scaling is obtained by multiplying the cubic term by $N\varepsilon^2$; at the level of the Hamiltonian, this corresponds to the quartic term, thus
$$
\mathscr{H}(\mathbf{p},\mathbf{q}) = \frac{1}2 \sum_j |p_j|^2 + \frac{1}{2} \sum_{j,k} \alpha^{(2)}_k q_j q_{j-k} + \frac{N \varepsilon^2}{4} \sum_{j,k_1,k_2,k_3} \alpha^{(4)}_{k_1,k_2,k_3} q_j q_{j-k_1} q_{j-k_2} q_{j-k_4}, 
$$
where $j,k_i \in \mathbb{Z} \mod N$.

\subsection{From microscopic to mesoscopic conserved quantities}
We saw the formula for the Hamiltonian in the weakly turbulent regime, which we split  into quadratic and quartic part:
$$
\mathscr{H}(\mathbf{p},\mathbf{q}) = \underbrace{\frac{1}2 \sum_j |p_j|^2 + \frac{1}{2} \sum_{j,k} \alpha^{(2)}_k q_j q_{j-k}}_{\displaystyle \mathscr{H_{\operatorname{quad}}}(\mathbf{p},\mathbf{q})} + \underbrace{\frac{N \varepsilon^2}{4} \sum_{j,k_1,k_2,k_3} \alpha^{(4)}_{k_1,k_2,k_3} q_j q_{j-k_1} q_{j-k_2} q_{j-k_3}}_{\displaystyle \mathscr{H}_{\operatorname{quart}}(\mathbf{q})}.
$$
The quadratic part is best understood after taking the Fourier transform and turning to the variable $b(\underline{\xi}) = \frac{1}{\sqrt{\omega(\underline{\xi})}} \left[i \widehat{p}(\underline{ \xi}) + \omega(\underline{\xi}) \widehat{q}(\underline{\xi})\right]$:
$$
\mathscr{H}_{\operatorname{quad}}(\mathbf{p},\mathbf{q})  = \frac{1}{2N} \sum_{\underline{\xi}} \left[  | \widehat{p}(\underline{\xi}) |^2 + \omega(\underline{\xi}) | \widehat{q}(\xi)|^2 \right]  = \frac{1}{2N} \sum_{\underline{\xi}} \omega(\underline{\xi}) | b (\underline{\xi})|^2 .
$$
Taking the expectation, we find
$$
\mathbb{E}\, \mathscr{H}_{\operatorname{quad}}(\mathbf{p},\mathbf{q}) = \frac{1}{2N} \sum_{\underline{\xi}} \omega(\underline{\xi}) f(\underline{\xi}) \overset{N\to \infty}{\longrightarrow} \frac{1}{4\pi} \int \omega(\xi) f(\xi) \dd \xi.
$$
Turning to the quartic part of the energy, the sum over $k$ contributes $O(1)$ since we assume that $\alpha$ decays rapidly, the sum over $j$ contributes $O(N)$ since there are $N$ terms, and $q_j$ has typical size $\sim \frac{1}{\sqrt N}$. Therefore, we can bound
$$
\left| \mathscr{H}_{\operatorname{quart}}(\mathbf{q}) \right| \lesssim N \varepsilon^2 \cdot N \cdot  \left( \frac{1}{\sqrt{N}} \right)^4 = \varepsilon^2 \longrightarrow 0.
$$
Overall, we find that
$$
\mathscr{H}(\mathbf{p},\mathbf{q}) \overset{\substack{N \to \infty \\ \varepsilon \to 0}}{\longrightarrow}  \frac{1}{4\pi} \int \omega(\xi) f(\xi) \dd \xi.
$$
The quantity of the right-hand side is the mesoscopic version of the energy, and we will see that it is indeed conserved through the evolution by the kinetic wave equation.

\smallskip

In the unpinned case, we saw that another conserved quantity emerges, namely the total momentum
$$
\mathscr{P}(\mathbf{p}) = \sum_j p_j.
$$
In the weakly turbulent regime, this quantity is to leading order a Gaussian random variable. In particular, it does not have a limit, even after rescaling, and thus it does not yield a conserved quantity at the mesoscopic level.

\subsection{From microscopic to mesoscopic entropy} 
\label{micro_to_meso_entropy} As explained above, we consider the equation set on $\{ 1,\dots , N \}$ with periodic boundary conditions, in which case the system is finite-dimensional, and we denote $\mathbf{p} = (p_j)_{1\leq j \leq N}$, $\mathbf{q} = (q_j)_{1\leq j \leq N}$ as well as $\dd \mathbf{p} = \prod \dd p_j$ and $\dd \mathbf{q} = \prod \dd q_j$. 

Letting $f(\mathbf{p},\mathbf{q})$ denote the density of probability of finding the system in the state $(\mathbf{p},\mathbf{q})$, the Gibbs entropy with respect to the Liouville measure is given by
$$
S = - \iint f(\mathbf{p},\mathbf{q}) \log f (\mathbf{p},\mathbf{q}) \dd \mathbf{p} \dd \mathbf{q}.
$$
The question we want to answer is: what does this entropy becomes in the weakly turbulent regime which was described above?

We will now change the integration variable to $b(\underline{\xi}) = \frac{1}{\sqrt{\omega(\underline{\xi})}} \left[i \widehat{p}(\underline{ \xi}) + \omega(\underline{\xi}) \widehat{q}(\underline{\xi})\right]$, define
$$
\dd \mathbf{b} = \dd \operatorname{Re} \mathbf{b} \dd \operatorname{Im} \mathbf{b} = \prod_{\underline{\xi}} \dd \operatorname{Re} b(\underline{\xi}) \dd \operatorname{Im} b(\underline{\xi})
$$
and use that $\dd \mathbf{b} = \dd \mathbf{p} \dd \mathbf{q}$. Then the formula for the entropy becomes
$$
S = - \frac 1 N \iint f(\mathbf{b}) \log f (\mathbf{b}) \dd \mathbf{b},
$$
where we abused notations by writing $f(\mathbf{b}) = f(\mathbf{p},\mathbf{q})$.

In the weakly turbulent limit, the random variables $b(\underline{\xi})$ are asymptotically independent complex Gaussians with variance $f(\underline{\xi})$. We use two facts: on the one hand, the entropy of a complex Gaussian with variance $\sigma^2$ is $1+\log 2\pi + \log \sigma^2$, and on the other hand, the entropies of independent variables add up. Therefore, we find
$$
S \sim \sum_{\underline{\xi}} \left[ 1 + \log 2\pi + 2 \log f(\underline{\xi}) \right] \overset{N \to \infty}\sim N \left[1+\log 2\pi  + \frac{1}{\pi} \int \log f(\xi) \dd \xi \right].
$$
Disregarding additive and multiplicative constants, we obtain $\int \log f$ as the continuous limit of the entropy. And indeed, we will see in Section \ref{section_monotonic} that the quantity $\int \log f$ is increasing for the kinetic wave equation, as predicted by the second law of thermodynamics.

\subsection{From Gibbs measure to Rayleigh-Jeans solutions} 
\label{Gibbs_to_RJ} The Gibbs measure 
$$
\frac{1}{Z_{N,T}} e^{-\frac{1}{T} \mathscr{H}(\mathbf{p},\mathbf{q})} \dd \mathbf{p} \dd \mathbf{q}
$$
($Z_T$: normalizing constant) is an invariant measure which plays a key role for the Hamiltonian dynamics. We would like to understand what it becomes in the weakly turbulent regime.

Under the scaling we adopted, we saw that
$$
\mathscr{H}(\mathbf{p},\mathbf{q}) \sim \mathscr{H}_{\operatorname{quad}}(\mathbf{p},\mathbf{q}) = \frac{1}{2N} \sum_{\underline{\xi}} \omega(\underline{\xi}) | b (\underline{\xi})|^2 .
$$
Since $\dd \mathbf{p} \dd \mathbf{q} = \dd \mathbf{b}$, the Gibbs measure becomes in the limit we consider
\begin{align*}
\frac{1}{Z_{N,T}} e^{-\frac{1}{T} \mathscr{H}(\mathbf{p},\mathbf{q})} \dd \mathbf{p} \dd \mathbf{q} & \sim \frac{1}{Z_{N,T}} e^{-\frac 1 T \mathscr{H}_{\operatorname{quad}}(\mathbf{p},\mathbf{q})} \dd \mathbf{p} \dd \mathbf{q} \\
& \sim \frac{1}{Z_{N,T}} e^{-\frac 1 {2TN} \sum_{\underline{\xi}} \omega(\underline{\xi}) | b (\underline{\xi})|^2}  \dd \mathbf{b} = \prod_{\underline{\xi}} \frac{\omega(\underline{\xi})}{2\pi TN} e^{-\frac 1 {2TN} \omega(\underline{\xi}) | b (\underline{\xi})|^2} \dd b(\underline{\xi})
\end{align*}

In other words, the Gibbs measure is now a product measure over all the Fourier modes, and the $b(\underline{\xi})$ are independent complex Gaussians with variance $\frac{TN}{\omega(\underline{\xi})}$:
$$
f(\underline{\xi}) = \mathbb{E} |b(\underline{\xi})|^2 = \frac{TN}{\omega(\underline{\xi})}.
$$
Such functions of $\xi$ are known as Rayleigh-Jeans states and we will see in Section \ref{subsection_stationary} that they are indeed stationary solutions of the kinetic wave equation.

\section{Derivation of the kinetic wave equation}

\label{Sec: Derivation_WKE}

This section is dedicated to the (formal) derivation of the kinetic wave equation \eqref{KWE} which arises from the nonlinear oscillator chain \eqref{NOC} in the weakly turbulent regime. For simplicity, we will only consider the space-homogeneous setting, and thus derive the space-homogeneous version of \eqref{KWE}. We refer to \cite{Spohn2005} for a treatment of the inhomogeneous case.

A rigorous derivation of the kinetic wave equation is still missing, and even partial results are very demanding technically. Nevertheless, we will describe the state of the art in the mathematical treatment of the kinetic limit.

\subsection{From \eqref{NOC} to \eqref{KWE}} \label{section_from_NOC_to_KWE} We will summarize here the conditions under which the derivation of the kinetic wave equation is - heuristically - valid. These conditions are needed in the actual derivation, which will occupy Section \ref{section_heuristic_proof}.

Our starting point is the nonlinear oscillator chain \eqref{NOC}. In general the nonlinearity contains terms of order two and higher, but we will assume that terms of order two can be eliminated.

\begin{itemize}
\item[(C1)] \underline{Elimination of terms of order $2$.} Either the nonlinearity of equation \eqref{NOC} does not have terms of order two, or there are no quadratic resonances, in which case terms of order two are eliminated by a normal form transform.
\end{itemize}

Under this assumption, the nonlinearity is of order $3$. In the homogeneous expansion, terms of order $4$ and higher can be dropped since they will not matter on the time-scale of interest for us. Finally, the equation can be written on the unknown $a$, the nonlinearity scaled as explained above, and the data chosen according to the random phase approximation. This gives the initial data problem (set on $\underline{\xi} \in \{\frac{2\pi}{N},\dots, 1\}$)
\begin{equation}
\label{IDP}
\begin{cases}
& \displaystyle - i \partial_t a(\underline \xi) = \frac{(2\pi)^2 \varepsilon^2}{N} \sum_{\sigma_i = \pm 1} \sum_{ \underline{\xi_1}, \underline{\xi_2}, \underline{\xi_3}} C_{1,2,3}^\sigma e^{it \Omega_{0,1,2,3}^{\mathbf{\sigma}}} a^{\sigma_1}_1 a^{\sigma_2}_2 a^{\sigma_3}_3  \delta_{0,1,2,3}^{\mathbf{\sigma}}   \\
& a(t=0,\underline{\xi}) = A(\underline{\xi}) = \sqrt{F(\underline{\xi})} \mathcal{G}_{\underline{\xi}}, \qquad (\mathcal{G}_{\underline{\xi}}) \; \mbox{i.i.d. Gaussians.}
\end{cases}
\end{equation}
Furthermore, we make the following additional assumptions
\begin{itemize}
\item[(C2)] \underline{Cubic resonances of type $-++$ only} We assume that other resonances (up to permutation of the indices) do not occur: this means that $\mathscr{R}^{+++}(\omega)=\emptyset$.

\medskip
\item[(C3)] \underline{Scaling the parameters $\varepsilon$ and $N$.} We assume that $N^{-\frac 12} \ll \varepsilon \ll 1$.

\medskip

\item[(C4)] \underline{Nonlinear cancellation.} We assume that $C_{1,2,3}=0$ if $\omega_0=0$.

\medskip

\item[(C5)] \underline{Reality of the symbols.} We assume that $C$ and $Q^2$ are real-valued.

\end{itemize}

We now set
$$
T_{\operatorname{kin}} = 36 \pi^3 \varepsilon^{-4}.
$$
Under assumptions (C1)-(C5), we have
$$
\mathbb{E} \left| a \left( \frac{t}{T_{\operatorname{kin}}},\underline{\xi} \right) \right|^2 \overset{\varepsilon \to 0}{\longrightarrow} f(t,\xi), \qquad t \lesssim T_{\operatorname{kin}}
$$
where $f$ solves the kinetic wave equation \eqref{KWE} with data $F$:
$$
\begin{cases}
& \partial_t f = \mathcal{C}(f) \\
& f(t=0,\xi) = F(\xi).
\end{cases}
$$

We have not stated yet how the collision kernel can be computed from the microscopic model. Two cases have to be distinguished
\begin{itemize} 
\item If \eqref{NOC} is cubic (in other words, the symbol $Q$ vanishes in the formulation \eqref{eqb}), then
$$
\boxed{
K(\xi,\xi_1,\xi_2,\xi_3) = |C(-\xi_1,\xi_2,\xi_3)|^2, \qquad \xi = - \xi_1 + \xi_2 + \xi_3.}
$$
This will be proved in Section \ref{caseQ0}.

\medskip
\item If \eqref{NOC} is quadratic ($C=0$ in \eqref{eqb}) and furthermore there are no quadratic resonances, then the system can be made cubic through a normal form transform and
$$
\boxed{K(\xi,\xi_1,\xi_2,\xi_3) = \frac 19 |D^{-,+,+}(\xi_1,\xi_2,\xi_3)+D^{+,-,+}(\xi_2,\xi_1,\xi_3) + D^{+,-,+}(\xi_3,\xi_1,\xi_2)|^2}
$$
with
$$
\boxed{
D^{\sigma_1,\sigma_2,\sigma_3}(\xi_1,\xi_2,\xi_3) = -2 \sum_{\tau} \tau \frac{Q(\sigma_1 \xi_1,\sigma_2 \xi_2 + \sigma_3 \xi_3) Q(\sigma_2 \xi_2,\sigma_3 \xi_3)}{\omega(\sigma_1 \xi_1 + \sigma_2 \xi_2 +\sigma_3 \xi_3) - \sigma_1 \omega(\xi_1) - \tau \omega(\sigma_2 \xi_2 + \sigma_3 \xi_3)}.}
$$
This will be proved in Section \ref{caseC0}.
\end{itemize}

In both cases, $K$ enjoys the natural symmetry \eqref{symmetryK}.

\subsection{Treating \eqref{NOC} in full generality}
The result which was stated in the previous section holds if conditions (C1) to (C5) are satisfied, and we will not aim here for a more general statement. This is because these conditions hold on models of interest, and to avoid further developments which would be similar in spirit and technical. However, we indicate briefly how the correct kinetic equation can be derived if these conditions are not satisfied. 
\begin{itemize}
\item If (C1) does not hold, then the correct model is a quadratic kinetic equation instead of a cubic one, and it can be computed by the same means.

\medskip

\item If (C2) is violated, then another term has to be added to the cubic collision kernel, which corresponds to resonances of the type $+++$ or $--+$. Its formula can be read off from the computations below.

\medskip

\item If (C3) does not hold, there seems to be little hope that a kinetic equation involving quasi-resonances can be derived. For $\varepsilon \gg 1$, there is no reason why resonances should play any role. For $\varepsilon \ll N^{-\frac 12}$, exact resonances become key but they depend on the exact model under consideration, and no general theory is expected.

\medskip

\item If (C4) does not hold, an additional term should be added to the kinetic equation; however it seems that this condition is always satisfied for systems which are natural from a physical viewpoint (e.g. pinned systems, for which the dispersion relation does vanish at the origin).

\medskip

\item If (C5) is not satisfied, there is no doubt that a kinetic equation still holds, but its derivation would require further developments from which we refrain.
\end{itemize}

\subsection{Heuristic proof of the kinetic limit: the case $Q=0$} \label{caseQ0}
\label{section_heuristic_proof}
Considering the nonlinear oscillator chain in the weakly turbulent regime which was defined in the previous section, we will now present the derivation of the wave kinetic equation in the case $Q=0$. The cubic symbol is given by
$$
C^\sigma(\xi_1,\xi_2,\xi_3) = C(\sigma_1 \xi_1, \sigma_2 \xi_2, \sigma_3 \xi_3)=C^\sigma_{123}.
$$

The essential ideas behind this derivation are well-known and can be found in the classical references \cite{Peierls,Hasselmann1962,ZakharovLvovFalkovich}, albeit on different examples and with varying degrees of rigor. In the particular case of \eqref{NOC}, we refer to \cite{OnoratoLvovDematteisChibaro,Lukkarinen2016}. Nevertheless, we hope that the version of the derivation which follows will be useful to the reader, in that it is self-contained, keeps track of all constants, and specifies the regime of validity of \eqref{KWE}.

\noindent
\underline{Step 1: weakly nonlinear expansion}. We consider the initial data problem \eqref{IDP}, and we expand $a(t,\xi)$
in powers of $\varepsilon$. To order zero, $a(t,\xi) = A(\xi)$. There are no terms of order one; to order two, we find
\begin{align*}
a(t,\xi) & = A(\xi) + i \frac{(2\pi)^2 \varepsilon^2}{N} \sum_{\sigma_i} \sum_{\underline{\xi}_{1,2,3}}  \int_0^t  C_{1,2,3}^\sigma e^{is \Omega_{0,1,2,3}^{\mathbf{\sigma}}} A^{\sigma_1}_1 A^{\sigma_2}_2 A^{\sigma_3}_3  \delta_{0,1,2,3}^{\mathbf{\sigma}} \dd s + O(\varepsilon^5) \\
& = A(\xi) + \frac{(2\pi)^4 \varepsilon^2}{N} \sum_{\sigma_i} \sum_{\underline{\xi}_{1,2,3}} \frac{C_{1,2,3}^\sigma}{\Omega_{0,1,2,3}^{\mathbf{\sigma}}} \left[ e^{it \Omega_{0,1,2,3}^{\mathbf{\sigma}} }- 1 \right] A^{\sigma_1}_1 A^{\sigma_2}_2 A^{\sigma_3}_3  \delta_{0,1,2,3}^{\mathbf{\sigma}}  + O(\varepsilon^5) .
\end{align*}
We can then compute the terms of order five, which are (by symmetry in the entries of the cubic term)
\begin{align*}
& 3i \frac{(2\pi)^4 \varepsilon^4}{N^2} \sum_{\sigma_i} \int_0^t \sum_{\underline{\xi}_{1,2,3,4,5,6}}  C_{1,2,3}^\sigma e^{is \Omega_{0,1,2,3}^{\mathbf{\sigma}}} \frac{C_{4,5,6}^\sigma}{\Omega_{3,4,5,6}^{\mathbf{\sigma}}} \left[ e^{is \sigma_3 \Omega_{3,4,5,6}^{\mathbf{\sigma}} }- 1 \right] \times \\
& \qquad \qquad\qquad\qquad \qquad\qquad\qquad\qquad A^{\sigma_1}_1 A^{\sigma_2}_2 A^{\sigma_3 \sigma_4}_4 A^{\sigma_3 \sigma_5}_5 A^{\sigma_3 \sigma_6}_6 \delta_{0,1,2,3}^{\mathbf{\sigma}} \delta_{3,4,5,6}^\sigma    \\
& = 3 \frac{(2\pi)^4 \varepsilon^4}{N^2} \sum_{\sigma_i} \sum_{\underline{\xi}_{1,2,3,4,5,6}} C_{1,2,3}^\sigma \frac{C_{4,5,6}^\sigma}{\Omega_{3,4,5,6}^{\mathbf{\sigma}}}
\left[ \frac{e^{it \Omega_{0,1,2,4,5,6}^\sigma} - 1}{ \Omega_{0,1,2,4,5,6}^\sigma } -  \frac{e^{it \Omega_{0,1,2,3}^{\mathbf{\sigma}}} - 1}{\Omega_{0,1,2,3}^{\mathbf{\sigma}}} \right] \times \\
& \qquad\qquad\qquad\qquad \qquad\qquad\qquad\qquad A^{\sigma_1}_1 A^{\sigma_2}_2 A^{\sigma_3 \sigma_4}_4 A^{\sigma_3 \sigma_5}_5 A^{\sigma_3 \sigma_6}_6 \delta_{0,1,2,3}^{\mathbf{\sigma}} \delta_{3,4,5,6}^\sigma  
\end{align*}
where we denoted
$$
 \Omega_{0,1,2,4,5,6}^\sigma = \omega_0 - \sigma_1 \omega_1 - \sigma_2 \omega_2 - \sigma_3 \sigma_4 \omega_4 - \sigma_3 \sigma_5 \omega_5 - \sigma_3 \sigma_6 \omega_6. 
$$
Overall, we found the expansion
\begin{align*}
a(t,\underline{\xi}) & = A(\xi) + \frac{(2\pi)^4 \varepsilon^2}{N}  \sum_{\sigma_i}  \sum_{\underline{\xi}_{1,2,3} } \frac{C_{1,2,3}^\sigma}{\Omega_{0,1,2,3}^{\mathbf{\sigma}}} \left[ e^{it \Omega_{0,1,2,3}^{\mathbf{\sigma}} }- 1 \right] A^{\sigma_1}_1 A^{\sigma_2}_2 A^{\sigma_3}_3  \delta_{0,1,2,3}^{\mathbf{\sigma}}  \\
& \qquad + \frac{3 (2\pi)^4 \varepsilon^4}{N^2} \sum_{\sigma_i} \sum_{\underline{\xi}_{1,2,3,4,5,6}} C_{1,2,3}^\sigma \frac{C_{4,5,6}^\sigma}{\Omega_{3,4,5,6}^{\mathbf{\sigma}}}
\left[ \frac{e^{it \Omega_{0,1,2,4,5,6}^\sigma} - 1}{ \Omega_{0,1,2,4,5,6}^\sigma } -  \frac{e^{it \Omega_{0,1,2,3}^{\mathbf{\sigma}}} - 1}{\Omega_{0,1,2,3}^{\mathbf{\sigma}}} \right] \times \\
& \qquad \qquad \qquad 
A^{\sigma_1}_1 A^{\sigma_2}_2 A^{\sigma_3 \sigma_4}_4 A^{\sigma_3 \sigma_5}_5 A^{\sigma_3 \sigma_6}_6 \delta_{0,1,2,3}^{\mathbf{\sigma}} \delta_{3,4,5,6}^\sigma  \\
& \qquad + O(\varepsilon^6) \\
& = \{ \mbox{linear} \} + \{ \mbox{cubic} \} + \{ \mbox{quintic} \} + \{ \mbox{higher-order terms} \}
\end{align*}
We are using the convention that, if $a=b=0$, then
\begin{equation}
\label{canardsiffleur1}
\frac{e^{ita}-1}{a} = it, \qquad \frac{1}{b}\left( \frac{e^{it(a+b)}-1}{a+b} - \frac{e^{ita}-1}{a}\right) = - \frac{1}{2} t^2.
\end{equation}

\medskip
\noindent
\underline{Step 2: statistical cancellations through pairing} 
The next step is to compute the expectation (with respect to the random data) of $|a|^2$ with the help of Isserlis' formula
\begin{align*}
\mathbb{E}(\mathcal{G}_{\underline{\xi}_1} \dots \mathcal{G}_{\underline{\xi}_{n_+}} \overline{\mathcal{G}_{\underline{\eta}_1}} \dots \overline{\mathcal{G}_{\underline{\eta}_{n_-}}}) = 
\begin{cases} 
0 & \mbox{if $n_+ \neq n_-$} \\
\# \; \mbox{pairings} & \mbox{if $n_+ = n_-=n$} \},
\end{cases}
\end{align*}
where a pairing is a map $\varphi : \{ 1,\dots,n \} \to \{ 1,\dots,n \}$ such that  $\underline{\xi}_{k} = \underline{\eta}_{\varphi(k)}$. 

Writing $a$ as a sum of a linear, trilinear, and five-linear term, expand $\mathbb{E} |a(\xi)|^2$ and examine the various contributions, examining the pairings between linear, cubic and quinitic terms.
\begin{itemize}
\item Linear-linear interactions obviously yield $|A(\xi)|^2$, which is order $0$ in $\varepsilon$.
\item Linear-cubic interactions would yield a term of order $\varepsilon^2$, but it vanishes. Indeed, it equals
$$
2 \frac{(2\pi)^2 \varepsilon^2}{N} \operatorname{Re} \sum_{\sigma_i}  \sum_{\underline{\xi}_{1,2,3} } \frac{C_{1,2,3}^\sigma}{\Omega_{0,1,2,3}^{\mathbf{\sigma}}} \left[ e^{it \Omega_{0,1,2,3}^{\mathbf{\sigma}} }- 1 \right] \mathbb{E} \left[\overline{A_0} A^{\sigma_1}_1 A^{\sigma_2}_2 A^{\sigma_3}_3  \right]  \delta_{0,1,2,3}^{\mathbf{\sigma}}.
$$
Applying Isserlis formula, we consider possible pairings between the indices $0,1,2,3$. By symmetry, it suffices to consider the case where $0$ is paired to $1$ and $2$ to $3$. In that case, $\Omega^\sigma_{0,1,2,3} =0$ and therefore the above term vanishes since it equals
\begin{equation}
\label{canardsiffleur2}
\operatorname{Re} \left[ \frac{e^{it\cdot0 }- 1 }{0}\right] = 0.
\end{equation}
\item Cubic-cubic interactions give the following term of order $\varepsilon^4$ 
\begin{align*}
&\frac{(2\pi)^4 \varepsilon^4}{N^2}  \sum_{\sigma_i}  \sum_{\underline{\xi}_{1,2,3,4,5,6}} \frac{C_{1,2,3}^\sigma}{\Omega_{0,1,2,3}^{\mathbf{\sigma}}} \left[ e^{it \Omega_{0,1,2,3}^{\mathbf{\sigma}} }- 1 \right]\frac{C_{4,5,6}^\sigma}{\Omega^\sigma_{0,4,5,6}} \overline{ \left[ e^{it \Omega_{0,4,5,6}^\sigma }- 1 \right]} \times \\
& \qquad\qquad\qquad\qquad\qquad\qquad\qquad \mathbb{E} \left[ A^{\sigma_1}_1 A^{\sigma_2}_2 A^{\sigma_3}_3 \overline{A^{\sigma_4}_4 A^{\sigma_5}_5 A^{\sigma_6}_6} \right] \delta_{0,1,2,3}^{\mathbf{\sigma}} \delta_{0,4,5,6}^\sigma
\end{align*}
There remains to examine all pairing possibilities in the expectation term. If $\{ 1,2,3 \}$ is paired with $\{ 4,5,6 \}$, then up to permutations $\underline{\xi_i} = \underline{\xi_{3+i}}$ and $\Omega_{0,1,2,3}^{\mathbf{\sigma}} = \Omega^\sigma_{0,4,5,6}$. There are 6 such permutations, which gives a factor 6 and thus the contribution
\begin{equation}
\label{harlebievre1}
\frac{6 (2\pi)^4 \varepsilon^4}{N^2} \sum_{\sigma_1,\sigma_2,\sigma_3} \sum_{\underline{\xi}_{1,2,3}} |C_{1,2,3}^\sigma|^2 \left| \frac{e^{it \Omega_{0,1,2,3}^{\mathbf{\sigma}} }- 1}{\Omega_{0,1,2,3}^{\mathbf{\sigma}}} \right|^2 f_1 f_2 f_3 \delta_{0,1,2,3}^{\mathbf{\sigma}}.
\end{equation}
If there is a pairing within $\{1,2,3\}$, then up to permutation ($9$ possibilities), we can assume that $1$ is paired with $2$, $4$ is paired with $5$, and $3$ with $6$; this implies $\xi_3 = \sigma_3 \xi_0$ and $\Omega_{0,1,2,3}^{\mathbf{\sigma}} = \Omega_{3,4,5,6}^{\mathbf{\sigma}} = \omega_0(1-\sigma_3)$. Separating the cases $\sigma_3=1$ and $\sigma_3 = -1$ gives the contribution
\begin{align}
& \frac{9 (2\pi)^4 \varepsilon^4}{N^2} t^2 \sum_{\sigma_1,\sigma_4} \sum_{\underline{\xi}_{1,4} } C_{1,1,0}^{\sigma_1,-\sigma_1,+} C_{4,4,0}^{\sigma_4,-\sigma_4,+} f_0 f_1 f_4 \label{harlebievre2} \\
& \qquad \qquad + \frac{9 (2\pi)^4 \varepsilon^4}{N^2} \sum_{\sigma_1,\sigma_4} \sum_{\underline{\xi}_{1,4} } C_{1,1,0}^{\sigma_1,-\sigma_1,-} C_{4,4,0}^{\sigma_4,-\sigma_4,-} \left| \frac{e^{2it \omega_0 }- 1}{2\omega_0} \right|^2 f_0 f_1 f_4 \label{harlebievre3}
\end{align}

\item Quintic-linear interactions give the term of order $\varepsilon^4$
\begin{align*}
& \frac{6 (2\pi)^4 \varepsilon^4}{N^2} \operatorname{Re} \sum_{\sigma_i} \sum_{\underline{\xi}_{1,2,3,4,5,6}} C_{1,2,3}^\sigma \frac{C^\sigma_{4,5,6}}{\Omega_{3,4,5,6}^{\mathbf{\sigma}}}
\left[ \frac{e^{it \Omega_{0,1,2,4,5,6}^\sigma} - 1}{ \Omega_{0,1,2,4,5,6}^\sigma } - \frac{e^{it \Omega_{0,1,2,3}^{\mathbf{\sigma}}} - 1}{\Omega_{0,1,2,3}^{\mathbf{\sigma}}} \right] \times \\
& \qquad\qquad\qquad\qquad\qquad\qquad\qquad \mathbb{E} \left[\overline{A_0} A^{\sigma_1}_1 A^{\sigma_2}_2 A^{\sigma_3 \sigma_4}_4 A^{\sigma_3 \sigma_5}_5 A^{\sigma_3 \sigma_6}_6 \right] \delta_{0,1,2,3}^{\mathbf{\sigma}} \delta_{3,4,5,6}^\sigma 
\end{align*}
Once again, we examine the possible pairings in the expectation term. If $\{ 0,1,2 \}$ is paired with $\{ 4,5,6 \}$, then we can assume up to permutation of the indices that $0$ is mapped to $4$, $1$ to $5$ and $2$ to $6$ - the number of permutations being $6$. We first notice that $\Omega_{0,1,2,4,5,6}^\sigma = 0$ and as a consequence, the corresponding term cancels because of the $\operatorname{Re}$ prefactor. Second, $\sigma_3 \sigma_4 =1$, $\sigma_1 =- \sigma_3 \sigma_5$ and $\sigma_2 = - \sigma_3 \sigma_6$, which implies $\Omega_{0,1,2,3} = -\sigma_3 \Omega_{3,4,5,6}$. Third, there holds $C^\sigma_{1,2,3} = C^\sigma_{4,5,6}$; indeed we can use the symmetries \eqref{symmetry_QC} of $C$ to compute that
\begin{align*}
C^\sigma_{1,2,3} & = C(\sigma_1 \xi_1,\sigma_2 \xi_2, \sigma_3 \xi_3) =  C(\sigma_1 \xi_1,\sigma_2 \xi_2, -\xi_0) = C(-\sigma_3 \sigma_5 \xi_5,-\sigma_3 \sigma_6 \xi_6,-\sigma_3 \sigma_4 \xi_4) \\
& = C( \sigma_5 \xi_5,\sigma_6 \xi_6, \sigma_4 \xi_4) = C^\sigma_{4,5,6}.
\end{align*}
Finally, we use the identity 
\begin{equation}
\label{canardsiffleur3}
\operatorname{Re}(e^{i \theta} - 1) = -\frac 12 |e^{i\theta}-1|^2
\end{equation}
to obtain
\begin{equation}
\label{harlebievre4}
- \frac{18 (2\pi)^4 \varepsilon^4}{N^2} \sum_{\sigma_1,\sigma_2,\sigma_3} \sigma_3 \sum_{\underline{\xi}_{1,2,3}} |C_{1,2,3}^\sigma|^2 \left| \frac{e^{it \Omega_{0,1,2,3}^{\mathbf{\sigma}}} - 1}{\Omega_{0,1,2,3}^{\mathbf{\sigma}}} \right|^2 f_0 f_1 f_2 \delta_{0,1,2,3}^\sigma
\end{equation}
If $0$ is paired with $1$ or $2$, then we can assume up to permutation that $0$ is paired to $1$, $4$ to $5$ and $2$ to $6$ (6 permutations). This implies that $\Omega_{0,1,2,4,5,6}^\sigma = 0$ and furthermore $\xi_2 = -\sigma_2 \sigma_3 \xi_3$ hence $-\sigma_3 \Omega_{0,1,2,3} = \Omega_{3,4,5,6}^{\mathbf{\sigma}} = \omega_2 (1 + \sigma_2 \sigma_3)$. We now distinguish between the cases $\sigma_3=-\sigma_2$ and $\sigma_3 = \sigma_2$, with the help of \eqref{canardsiffleur1}, \eqref{canardsiffleur2} and \eqref{canardsiffleur3}, leading to
\begin{align}
& -\frac{18 (2\pi)^4 \varepsilon^4}{N^2} t^2 \sum_{\sigma_3,\sigma_4} \sigma_3 \sum_{\underline{\xi}_{2,4}} C^{+,-\sigma_3,\sigma_3}_{0,2,2} C_{4,4,2}^{\sigma_4,-\sigma_4,+} f_0 f_2 f_4 \label{harlebievre5} \\
& \qquad -\frac{18 (2\pi)^4 \varepsilon^4}{N^2} \sum_{\sigma_3,\sigma_4} \sigma_3 \sum_{\underline{\xi}_{2,4}} C^{+,\sigma_3,\sigma_3}_{0,2,-2} C_{4,4,2}^{\sigma_4,-\sigma_4,-}  \left| \frac{e^{2it \omega_3}-1}{2\omega_3}\right|^2 f_0 f_2 f_4 \label{harlebievre6}
\end{align}
If $1$ and $2$ are paired together, then up to permutation of the indices ($3$ such permutations), we can assume that $0$ is paired with $4$ and $5$ with $6$. Then $\Omega_{012456}^\sigma=0$, and we also have $\xi_3 = \sigma_3 \xi_0$, hence $\Omega_{0,1,2,3}^\sigma = \Omega^\sigma_{3,4,5,6} = \omega_0(1-\sigma_3)$. Distinguishing between the cases $\sigma_3=1$ and $\sigma_3 = -1$, this gives
\begin{align}
&- \frac{9 (2\pi)^4 \varepsilon^4}{N^2} t^2 \sum_{\sigma_1,\sigma_5} \sum_{\underline{\xi}_{1,5}} C_{1,1,0}^{\sigma_1,-\sigma_1,+} C_{0,5,5}^{+,\sigma_5,-\sigma_5} f_0 f_1 f_5 \label{harlebievre7} \\
& \qquad + \frac{9 (2\pi)^4 \varepsilon^4}{N^2} \sum_{\sigma_1,\sigma_5} \sum_{\underline{\xi}_{1,5}} C_{1,1,-0}^{\sigma_1,-\sigma_1,-} C_{0,5,5}^{-,\sigma_5,-\sigma_5} \left| \frac{e^{2it \omega_0}-1}{2 \omega_0} \right|^2 f_0 f_1 f_5 \label{harlebievre8}
\end{align}
\item Cubic-Quintic and higher-order interactions are of order equal to or higher than $\varepsilon^6$, and will be disregarded.
\end{itemize}
We now record a number of cancellations between the various terms that were computed. First, \eqref{harlebievre5} and \eqref{harlebievre6} vanish because of the sum over $\sigma_3$. Second, \eqref{harlebievre2} and \eqref{harlebievre7} add up to zero. Third, \eqref{harlebievre3} and \eqref{harlebievre8} become large if $\omega_0$ vanishes; but this is canceled by the coefficient $C$, which we assume to vanish there (assumption (C4)). There only remains \eqref{harlebievre1} and \eqref{harlebievre4}, namely
\begin{equation}
\label{colvert}
\begin{split}
& \frac{6 (2\pi)^4 \varepsilon^4}{N^2} \sum_{\sigma_1,\sigma_2,\sigma_3} \sum_{\underline{\xi}_{1,2,3}} |C_{1,2,3}^\sigma|^2 \left| \frac{e^{it \Omega_{0,1,2,3}^{\mathbf{\sigma}} }- 1}{\Omega_{0,1,2,3}^{\mathbf{\sigma}}} \right|^2 f_1 f_2 f_3 \delta_{0,1,2,3}^{\mathbf{\sigma}} \\
& \qquad \qquad - \frac{18 (2\pi)^4 \varepsilon^4}{N^2} \sum_{\sigma_1,\sigma_2,\sigma_3} \sigma_3 \sum_{\underline{\xi}_{1,2,3}} |C_{1,2,3}^\sigma|^2 \left| \frac{e^{it \Omega_{0,1,2,3}^{\mathbf{\sigma}}} - 1}{\Omega_{0,1,2,3}^{\mathbf{\sigma}}} \right|^2 f_0 f_1 f_2 \delta_{0,1,2,3}^\sigma
\end{split}
\end{equation}

\medskip
\noindent
\underline{Step 3: large volume limit.} We want to take the limit $N \to \infty$ in the above expression. Formally, this turns the discrete variable $\underline{\xi}$ into the continuous variable $\xi$, and costs a factor $\frac{N^2}{(2\pi)^2}$. Denoting furthermore 
$$
K_{0,1,2,3}^\sigma = |C_{1,2,3}^\sigma|^2 \qquad \mbox{(with $\xi_0 = \sigma_1 \xi_1 + \sigma_2 \xi_2 + \sigma_3 \xi_3$)},
$$
we find that \eqref{colvert} is equivalent to
\begin{equation}
\label{fuligulemorillon0}
\begin{split}
& 6 (2\pi)^2 \varepsilon^4 \sum_{\sigma_1,\sigma_2,\sigma_3}  K_{0,1,2,3}^\sigma \left| \frac{e^{it \Omega_{0,1,2,3}^{\mathbf{\sigma}} }- 1}{\Omega_{0,1,2,3}^{\mathbf{\sigma}}} \right|^2 f_1 f_2 f_3 \delta_{0,1,2,3}^{\mathbf{\sigma}} \\
& \qquad \qquad - 18 (2\pi)^2 \varepsilon^4 \sum_{\sigma_1,\sigma_2,\sigma_3} \sigma_3  K_{0,1,2,3}^\sigma \left| \frac{e^{it \Omega_{0,1,2,3}^{\mathbf{\sigma}}} - 1}{\Omega_{0,1,2,3}^{\mathbf{\sigma}}} \right|^2 f_0 f_1 f_2 \delta_{0,1,2,3}^\sigma
\end{split}
\end{equation}

When is this limit justified? The greatest difficulty is posed by the rapidly varying factor $\left| \frac{e^{it \Omega_{0,1,2,3}^{\mathbf{\sigma}} }- 1}{\Omega_{0,1,2,3}^{\mathbf{\sigma}}} \right|^2$. We can think of this expression as a function of $\xi_1$ and $\xi_2$ which varies locally on a scale $\sim \frac 1 t$ in the direction of the gradient. Thus its level sets have area $\sim \frac 1t$, and they should be well-sampled by the points of the two-dimensional lattice $(\underline{\xi_1},\underline{\xi_2})$. Since there are $\sim N^2$ lattice points, we need, assuming that these points are equidistributed,
\begin{equation}
\label{equidistribution}
t \ll N^2.
\end{equation}

\medskip
\noindent
\underline{Step 4: large time limit.} In the sense of distributions,
$$
\left| \frac{\sin(t \Omega)}{\Omega} \right|^2 \sim \pi t \delta(\Omega) \qquad \mbox{as $t \to \infty$},
$$
so that, as $t\to \infty$, \eqref{fuligulemorillon0} becomes equivalent to
\begin{equation} 
\label{sittelle0}
\begin{split}
& 12 \pi^3 t \varepsilon^4 \left[ \sum_{\sigma_1,\sigma_2,\sigma_3} \int K_{0,1,2,3}^\sigma f_1 f_2 f_3 \delta(\Omega_{0,1,2,3}^{\mathbf{\sigma}}) \delta_{0,1,2,3}^{\mathbf{\sigma}} \dd \underline{\xi}_{1,2,3} \right. \\
& \left. \qquad \qquad - 3 \sum_{\sigma_1,\sigma_2,\sigma_3} \sigma_3 \int K_{0,1,2,3}^\sigma f_0 f_1 f_2 \delta(\Omega_{0,1,2,3}^{\mathbf{\sigma}})  \delta_{0,1,2,3}^\sigma \dd \underline{\xi}_{1,2,3} \right].
\end{split}
\end{equation}

Finally, we are assuming that $\Omega_{0,1,2,3}^\sigma$ only vanishes if $\sigma_1 + \sigma_2 + \sigma_3 = 1$, so that we need to restrict the summation in \eqref{sittelle0} to these configurations. For the first term of \eqref{sittelle0}, it suffices to consider the term $(-,+,+)$ by symmetry, and to multiply the corresponding contribution by $3$ to account for the permutations. Considering the second term in $\eqref{sittelle0}$, the choice $\sigma_3= -$ implies $\sigma_1 = \sigma_2 = +$ and comes with a positive sign, while the choice $\sigma_3 = +$ gives the possibilities $(+,-)$ and $(-,+)$ for $(\sigma_1,\sigma_2)$ and comes with a negative sign. Overall, we find after relabeling that \eqref{sittelle0} equals
$$
36 \pi^3 t \varepsilon^4 \int K_{0,1,2,3}^{-,+,+} \left[f_1 f_2 f_3 + f_0 f_2 f_3 - f_0 f_1 f_2 - f_0 f_2 f_3 \right] \delta(\Omega_{0,1,2,3}) \delta(\Omega_{0,1,2,3}) \dd \xi_{1,2,3}.
$$
We recognize the collision operator, and it is now natural to define the kinetic time scale
$$
T_{\operatorname{kin}} = \varepsilon^{-4}.
$$
Since we are requiring $t \to \infty$, we need $T_{\operatorname{kin}} \to \infty$, which is consistent with the weakly nonlinear limit $\varepsilon \to 0$. Finally, we need to compare $T_{\operatorname{kin}}$ with the equidistribution condition \eqref{equidistribution}, which results into
$$
\varepsilon \gg N^{-\frac{1}{2}}.
$$

\medskip

\noindent
\underline{Symmetry of $K$}. We simply write $K(\xi,\xi_1,\xi_2,\xi_3)$ instead of $K^{-,+,+}(\xi,\xi_1,\xi_2,\xi_3)$, in other words
$$
K(\xi,\xi_1,\xi_2,\xi_3) = |C(-\xi_1,\xi_2,\xi_3)|^2 \qquad \mbox{with $\xi+\xi_1= \xi_2 + \xi_3$}.
$$
We want to check that $K$ enjoys the symmetry identities \eqref{symmetryK}. Since $C$ is invariant by permutation of its entries, it is clear that
$$
K(\xi,\xi_1,\xi_2,\xi_3) = K(\xi,\xi_1,\xi_3,\xi_2).
$$
To prove invariance by exchanging the first two with the last two variables, we use the symmetry properties of $C$ recalled in \eqref{symmetry_QC} and the identity $\xi+\xi_1 = \xi_2 + \xi_3$ as follows:
\begin{align*}
K(\xi,\xi_1,\xi_2,\xi_3) & = |C(-\xi_1,\xi_2,\xi_3)|^2 
= |C(-\xi_1,\xi_3,\xi_1-\xi_2-\xi_3)|^2 \\
& = |C(-\xi_1,\xi_3,-\xi)|^2 
= |C(-\xi_3,\xi,\xi_1)|^2 \\
& = K(\xi_2,\xi_3,\xi,\xi_1).
\end{align*}

\subsection{Heuristic derivation of the kinetic limit: the case $C=0$} \label{caseC0}
We are now working with the formula 
\begin{align*}
& C^{\sigma_1,\sigma_2,\sigma_3}(\xi_1, \xi_2, \xi_3) = \frac 13 \left[D^{\sigma_1,\sigma_2,\sigma_3}(\xi_1,\xi_2,\xi_3) + D^{\sigma_2,\sigma_1,\sigma_3}(\xi_2,\xi_1,\xi_3) + D^{\sigma_3,\sigma_2,\sigma_1}(\xi_3,\xi_2,\xi_1) \right]
\end{align*}
with
$$
D^{\sigma_1,\sigma_2,\sigma_3}(\xi_1,\xi_2,\xi_3) = -2 \sum_{\tau} \tau \frac{Q(\sigma_1 \xi_1,\sigma_2 \xi_2 + \sigma_3 \xi_3) Q(\sigma_2 \xi_2,\sigma_3 \xi_3)}{\omega(\sigma_1 \xi_1 + \sigma_2 \xi_2 +\sigma_3 \xi_3) - \sigma_1 \omega(\xi_1) - \tau \omega(\sigma_2 \xi_2 + \sigma_3 \xi_3)}.
$$
(assuming $Q$ real-valued for simplicity). Furthermore, we abbreviate
$$
C_{1,2,3}^\sigma = C^{\sigma_1,\sigma_2,\sigma_3}_{1,2,3}
$$
and we will use the identities
\begin{equation}
\label{symmetryC}
\begin{split}
& C_{1,2,3}^{\sigma_1, \sigma_2,\sigma_3} = C_{2,1,3}^{\sigma_2, \sigma_1,\sigma_3}
= C_{1,3,2}^{\sigma_1, \sigma_3,\sigma_2} \\
& C_{0,1,2}^{\sigma_3,-\sigma_3 \sigma_1,-\sigma_3 \sigma_2} = C_{1,2,3}^{\sigma_1, \sigma_2,\sigma_3}
\end{split}
\end{equation}
provided $\xi = \sigma_1 \xi_1 + \sigma_2 \xi_2 + \sigma_3 \xi_3$. The first line above is obvious, and the second requires a lengthy but elementary computation. We will only show that $D_{0,1,2}^{\sigma_3,-\sigma_3 \sigma_1,-\sigma_3 \sigma_2} = D_{3,2,1}^{\sigma_3, \sigma_2,\sigma_1}$, since the full proof follows a similar pattern. By the formula for $D$, the symmetries of Q in \eqref{symmetry_QC}, the reality of $Q^2$, and finally recalling that $\xi_0 = \sigma_1 \xi_1 + \sigma_2 \xi_2 + \sigma_3 \xi_3$,
\begin{align*}
D_{0,1,2}^{\sigma_3,-\sigma_3 \sigma_1,-\sigma_3 \sigma_2} & = -2 \sum_{\tau} \tau \frac{Q(\sigma_3 \xi_0,-\sigma_3 \sigma_1 \xi_1 - \sigma_3 \sigma_2 \xi_2) Q(-\sigma_3\sigma_1 \xi_1,-\sigma_3 \sigma_2 \xi_2)}{\omega(\xi_3) - \sigma_3 \omega(\xi_0) - \tau \omega(\sigma_1 \xi_1 + \sigma_2 \xi_2)} \\
& = -2 \sum_{\tau} \frac{Q(-\xi_0,\sigma_1 \xi_1 + \sigma_2 \xi_2) Q(\sigma_1 \xi_1, \sigma_2 \xi_2)}{\tau(\omega(\xi_3) - \sigma_3 \omega(\xi_0)) -  \omega(\sigma_1 \xi_1 + \sigma_2 \xi_2)} \\
& = -2 \sum_{\tau} \frac{Q(\sigma_1 \xi_1 + \sigma_2 \xi_2,\xi_0 - \sigma_1 \xi_1 - \sigma_2 \xi_2) Q(\sigma_1 \xi_1, \sigma_2 \xi_2)}{\tau(\omega(\xi_0) - \sigma_3 \omega(\xi_3)) -  \omega(\sigma_1 \xi_1 + \sigma_2 \xi_2)} \\
& = -2 \sum_{\tau} \frac{Q(\sigma_3 \xi_3,\sigma_1 \xi_1 + \sigma_2 \xi_2) Q(\sigma_1 \xi_1, \sigma_2 \xi_2)}{\tau(\omega(\xi_0) - \sigma_3 \omega(\xi_3)) -  \omega(\sigma_1 \xi_1 + \sigma_2 \xi_2)} = D_{3,2,1}^{\sigma_3, \sigma_2,\sigma_1}.
\end{align*}
After this preliminary discussion, we can now turn to the derivation itself! It follows the same pattern as in the case $Q=0$, but there are some subtle differences in the formulas. For this reason, we will omit some of the steps but write down all relevant identities.

\medskip

\noindent
\underline{Step 1: weakly nonlinear expansion}. 
We find the expansion
\begin{align*}
a(t,\underline{\xi}) & = A(\xi) + \frac{\varepsilon^2}{N}  \sum_{\sigma_i}  \sum_{\underline{\xi}_{1,2,3} } \frac{C^{\sigma}_{1,2,3}}{\Omega_{0,1,2,3}^{\mathbf{\sigma}}} \left[ e^{it \Omega_{0,1,2,3}^{\mathbf{\sigma}} }- 1 \right] A^{\sigma_1}_1 A^{\sigma_2}_2 A^{\sigma_3}_3  \delta_{0,1,2,3}^{\mathbf{\sigma}}  \\
& \qquad + \frac{3\varepsilon^4}{N^2} \sum_{\sigma_i} \sum{\underline{\xi}_{1,2,3,4,5,6}} C_{1,2,3}^\sigma \frac{C_{4,5,6}^{\sigma}}{\Omega_{3,4,5,6}^{\mathbf{\sigma}}}
\left[ \frac{e^{it \Omega_{0,1,2,4,5,6}^\sigma} - 1}{ \Omega_{0,1,2,4,5,6}^\sigma } -  \frac{e^{it \Omega_{0,1,2,3}^{\mathbf{\sigma}}} - 1}{\Omega_{0,1,2,3}^{\mathbf{\sigma}}} \right] \times \\
& \qquad \qquad \qquad 
A^{\sigma_1}_1 A^{\sigma_2}_2 A^{\sigma_3 \sigma_4}_4 A^{\sigma_3 \sigma_5}_5 A^{\sigma_3 \sigma_6}_6 \delta_{0,1,2,3}^{\mathbf{\sigma}} \delta_{3,4,5,6}^\sigma  \\
& \qquad + O(\varepsilon^7) \\
& = \{ \mbox{linear} \} + \{ \mbox{cubic} \} + \{ \mbox{quintic} \} + \{ \mbox{higher-order terms} \},
\end{align*}
still using the convention that, if $a=b=0$, then
\begin{equation}
\label{canardsiffleur1bis}
\frac{e^{ita}-1}{a} = it, \qquad \frac{1}{b}\left( \frac{e^{it(a+b)}-1}{a+b} - \frac{e^{ita}-1}{a}\right) = - \frac{1}{2} t^2.
\end{equation}

\medskip
\noindent
\underline{Step 2: statistical cancellations through pairing} 
We expand $\mathbb{E} |a(\xi)|^2$ and examine the various contributions of the pairings between linear, cubic and quinitic terms.
\begin{itemize}
\item Linear-linear interactions obviously yield $|A(\xi)|^2$, which is order $0$ in $\varepsilon$.
\item Linear-cubic interactions would yield a term of order $\varepsilon^2$, but it vanishes. Indeed, it equals
$$
2 \frac{\varepsilon^2}{N} \operatorname{Re} \sum_{\sigma_i}  \sum_{\underline{\xi}_{1,2,3} } \frac{C_{1,2,3}^\sigma}{\Omega_{0,1,2,3}^{\mathbf{\sigma}}} \left[ e^{it \Omega_{0,1,2,3}^{\mathbf{\sigma}} }- 1 \right] \mathbb{E} \left[\overline{A_0} A^{\sigma_1}_1 A^{\sigma_2}_2 A^{\sigma_3}_3  \right]  \delta_{0,1,2,3}^{\mathbf{\sigma}}.
$$
Applying Isserlis formula, we consider possible pairings between the indices $0,1,2,3$. By symmetry, it suffices to consider the case where $0$ is paired to $1$ and $2$ to $3$. In that case, $\Omega^\Omega_{0,1,2,3} =0$ and therefore the above term vanishes.
\item Cubic-cubic interactions give the following term of order $\varepsilon^4$
\begin{align*}
&\frac{(2\pi)^4 \varepsilon^4}{N^2}  \sum_{\sigma_i}  \sum_{\underline{\xi}_{1,2,3,4,5,6}} \frac{C_{1,2,3}^\sigma}{\Omega_{0,1,2,3}^{\mathbf{\sigma}}} \left[ e^{it \Omega_{0,1,2,3}^{\mathbf{\sigma}} }- 1 \right]\frac{C_{4,5,6}^\sigma}{\Omega^\sigma_{0,4,5,6}} \overline{ \left[ e^{it \Omega_{0,4,5,6}^\sigma }- 1 \right]} \times \\
& \qquad\qquad\qquad\qquad\qquad\qquad\qquad \mathbb{E} \left[ A^{\sigma_1}_1 A^{\sigma_2}_2 A^{\sigma_3}_3 \overline{A^{\sigma_4}_4 A^{\sigma_5}_5 A^{\sigma_6}_6} \right] \delta_{0,1,2,3}^{\mathbf{\sigma}} \delta_{0,4,5,6}^\sigma
\end{align*}
There remains to examine all pairing possibilities in the expectation term. If $\{ 1,2,3 \}$ is paired with $\{ 4,5,6 \}$, then up to permutations $\underline{\xi_i} = \underline{\xi_{3+i}}$, $\sigma_i = \sigma_{i+3}$ and $\Omega_{0,1,2,3}^{\mathbf{\sigma}} = \Omega^\sigma_{0,4,5,6}$. There are 6 such permutations, which gives a factor 6 and thus the contribution
\begin{equation}
\label{harlebievre1bis}
\frac{6 (2\pi)^4 \varepsilon^4}{N^2} \sum_{\sigma_1,\sigma_2,\sigma_3} \sum_{\underline{\xi}_{1,2,3}} |C_{1,2,3}^\sigma|^2 \left| \frac{e^{it \Omega_{0,1,2,3}^{\mathbf{\sigma}} }- 1}{\Omega_{0,1,2,3}^{\mathbf{\sigma}}} \right|^2 f_1 f_2 f_3 \delta_{0,1,2,3}^{\mathbf{\sigma}}.
\end{equation}
If there is a pairing within $\{1,2,3\}$, then up to permutation ($9$ possibilities), we can assume that $1$ is paired with $2$, $4$ is paired with $5$, and $3$ with $6$. This implies $\sigma_1 = -\sigma_2$, $\sigma_3 = \sigma_6$, $\sigma_4 = -\sigma_5$,  $\xi_3 = \sigma_3 \xi_0$ and $\Omega_{0,1,2,3}^{\mathbf{\sigma}} = \Omega_{3,4,5,6}^{\mathbf{\sigma}} = \omega_0(1-\sigma_3)$. Separating the cases $\sigma_3=1$ and $\sigma_3 = -1$ gives the contribution
\begin{align}
& \frac{9 (2\pi)^4 \varepsilon^4}{N^2} t^2 \sum_{\sigma_1,\sigma_4} \sum_{\underline{\xi}_{1,4} } C_{1,1,0}^{\sigma_1,-\sigma_1,+} C_{4,4,0}^{\sigma_4,-\sigma_4,+} f_0 f_1 f_4 \label{harlebievre2bis} \\
& \qquad \qquad + \frac{9 (2\pi)^4 \varepsilon^4}{N^2} \sum_{\sigma_1,\sigma_4} \sum_{\underline{\xi}_{1,4} } C_{1,1,-0}^{\sigma_1,-\sigma_1,-} C_{4,4,-0}^{\sigma_4,-\sigma_4,-} \left| \frac{e^{2it \omega_0 }- 1}{2\omega_0} \right|^2 f_0 f_1 f_4 \label{harlebievre3bis}
\end{align}
(where we denote $C_{1,1,-0} = C(\xi_0,\xi_1,\xi_1,-\xi_0)$).

\item Quintic-linear interactions give the term of order $(2\pi)^4 \varepsilon^4$
\begin{align*}
& \frac{6(2\pi)^4 \varepsilon^4}{N^2} \operatorname{Re} \sum_{\sigma_i} \sum_{\underline{\xi}_{1,2,3,4,5,6}} C_{1,2,3}^\sigma \frac{C_{4,5,6}^\sigma}{\Omega_{3,4,5,6}^{\mathbf{\sigma}}}
\left[ \frac{e^{it \Omega_{0,1,2,4,5,6}^\sigma} - 1}{ \Omega_{0,1,2,4,5,6}^\sigma } - \frac{e^{it \Omega_{0,1,2,3}^{\mathbf{\sigma}}} - 1}{\Omega_{0,1,2,3}^{\mathbf{\sigma}}} \right] \times \\
& \qquad\qquad\qquad\qquad\qquad\qquad\qquad \mathbb{E} \left[\overline{A_0} A^{\sigma_1}_1 A^{\sigma_2}_2 A^{\sigma_3 \sigma_4}_4 A^{\sigma_3 \sigma_5}_5 A^{\sigma_3 \sigma_6}_6 \right] \delta_{0,1,2,3}^{\mathbf{\sigma}} \delta_{3,4,5,6}^\sigma 
\end{align*}
Once again, we examine the possible pairings in the expectation term. If $\{ 0,1,2 \}$ is paired with $\{ 4,5,6 \}$, then we can assume up to permutation of the indices that $0$ is mapped to $4$, $1$ to $3$ and $2$ to $6$ - the number of permutations being $6$. We first notice that $\Omega_{0,1,2,4,5,6}^\sigma = 0$ and as a consequence, the corresponding term cancels because of the $\operatorname{Re}$ prefactor. Second, $\sigma_3 \sigma_4 =1$, $\sigma_1 = -\sigma_3 \sigma_5$ and $\sigma_2 = -\sigma_3 \sigma_6$, which implies $\Omega_{0,1,2,3} = -\sigma_3 \Omega_{3,4,5,6}$ and, by \eqref{symmetryC}, that $C^{\sigma}_{1,2,3} = C^{\sigma}_{4,5,6}$. Finally, we use the identity \eqref{canardsiffleur3} to obtain
\begin{equation}
\label{harlebievre4bis}
- \frac{18 (2\pi)^4 \varepsilon^4}{N^2} \sum_{\sigma_1,\sigma_2,\sigma_3} \sigma_3 \sum_{\underline{\xi}_{1,2,3}} |C_{1,2,3}^{\sigma}|^2 \left| \frac{e^{it \Omega_{0,1,2,3}^{\mathbf{\sigma}}} - 1}{\Omega_{0,1,2,3}^{\mathbf{\sigma}}} \right|^2 f_0 f_1 f_2 \delta_{0,1,2,3}^\sigma
\end{equation}
If $0$ is paired with $1$ or $2$, then we can assume up to permutation that $0$ is paired to $1$, $4$ to $5$ and $2$ to $6$ (6 permutations), and as a result $\sigma_1 =+$, $\sigma_2 = -\sigma_3 \sigma_6$ and $\sigma_4 = -\sigma_5$. This implies that $\Omega_{0,1,2,4,5,6}^\sigma = 0$ and furthermore $\xi_2 = -\sigma_2 \sigma_3 \xi_3$ hence $-\sigma_3 \Omega_{0,1,2,3} = \Omega_{3,4,5,6}^{\mathbf{\sigma}} = \omega_2 (1+ \sigma_2 \sigma_3)$. We now distinguish between the cases $\sigma_3= - \sigma_2$ and $\sigma_3 = \sigma_2$, leading to
\begin{align}
& -\frac{18 (2\pi)^4 \varepsilon^4}{N^2} t^2 \sum_{\sigma_3,\sigma_4} \sigma_3 \sum_{\underline{\xi}_{2,4}} C_{0,2,2}^{+,-\sigma_3,\sigma_3} C_{4,4,2}^{\sigma_4,-\sigma_4,+} f_0 f_2 f_4 \label{harlebievre5bis} \\
& \qquad -\frac{18 (2\pi)^4 \varepsilon^4}{N^2} \sum_{\sigma_3,\sigma_4} \sigma_3 \sum_{\underline{\xi}_{2,4}} C_{0,2,-2}^{+,\sigma_3,\sigma_3} C_{4,4,2}^{\sigma_4,-\sigma_4,-} \left| \frac{e^{2it \omega_3}-1}{2\omega_3}\right|^2 f_0 f_2 f_4 \label{harlebievre6bis}
\end{align}
If $1$ and $2$ are paired together, then up to permutation of the indices ($3$ such permutations), we can assume that $0$ is paired with $4$ and $5$ with $6$. Then $\Omega_{012456}^\sigma=0$, and we also have $\xi_3 = \sigma_3 \xi_0$, hence $\Omega_{0,1,2,3}^\sigma = \Omega^\sigma_{3456} = \omega_0(1-\sigma_3)$. Distinguishing between the cases $\sigma_3=1$ and $\sigma_3 = -1$, this gives
\begin{align}
&- \frac{9 (2\pi)^4 \varepsilon^4}{N^2} t^2 \sum_{\sigma_1,\sigma_5} \sum_{\underline{\xi}_{1,5}} C_{1,1,0}^{\sigma_1,-\sigma_1,+} C_{0,0,5,5}^{+,\sigma_5,-\sigma_5} f_0 f_1 f_5 \label{harlebievre7bis} \\
& \qquad + \frac{9 (2\pi)^4 \varepsilon^4}{N^2} \sum_{\sigma_1,\sigma_5} \sum_{\underline{\xi}_{1,5}} C_{1,1,-0}^{\sigma_1,-\sigma_1,-} C_{-0,0,5,5}^{-,\sigma_5,-\sigma_5} \left| \frac{e^{2it \omega_0}-1}{2 \omega_0} \right|^2 f_0 f_1 f_5 \label{harlebievre8bis}
\end{align}
\item Cubic-Quintic and higher-order interactions are of order equal to or higher than $\varepsilon^6$, and will be disregarded.
\end{itemize}
We now record a number of cancellations between the various terms that were computed. First, \eqref{harlebievre5bis} and \eqref{harlebievre6bis} vanish because of the sum over $\sigma_3$. Second, \eqref{harlebievre2bis} and \eqref{harlebievre7bis} add up to zero. Third, \eqref{harlebievre3bis} and \eqref{harlebievre8bis} become large if $\omega_0$; but this is canceled by the coefficient $C$, which we assume to vanish there.

There only remains \eqref{harlebievre1bis} and \eqref{harlebievre4bis}, namely
\begin{equation}
\label{colvert}
\begin{split}
& \frac{6 (2\pi)^4 \varepsilon^4}{N^2} \sum_{\sigma_1,\sigma_2,\sigma_3} \sum_{\underline{\xi}_{1,2,3}} |C_{1,2,3}^\sigma|^2 \left| \frac{e^{it \Omega_{0,1,2,3}^{\mathbf{\sigma}} }- 1}{\Omega_{0,1,2,3}^{\mathbf{\sigma}}} \right|^2 f_1 f_2 f_3 \delta_{0,1,2,3}^{\mathbf{\sigma}} \\
& \qquad \qquad - \frac{18 (2\pi)^4 \varepsilon^4}{N^2} \sum_{\sigma_1,\sigma_2,\sigma_3} \sigma_3 \sum_{\underline{\xi}_{1,2,3}} |C_{1,2,3}^\sigma|^2 \left| \frac{e^{it \Omega_{0,1,2,3}^{\mathbf{\sigma}}} - 1}{\Omega_{0,1,2,3}^{\mathbf{\sigma}}} \right|^2 f_0 f_1 f_2 \delta_{0,1,2,3}^\sigma
\end{split}
\end{equation}

\medskip
\noindent
\underline{Steps 3 and 4: large volume limit and large time limit.} 
Setting
$$
K_{0,1,2,3} = |C^{-,+,+}_{1,2,3}|^2,
$$
we find as $N,t\to \infty$ that \eqref{colvert} becomes equivalent to
$$
36 \pi^3 t \varepsilon^4 \int K_{0,1,2,3} \left[f_1 f_2 f_3 + f_0 f_2 f_3 - f_0 f_1 f_2 - f_0 f_2 f_3 \right] \delta(\Omega_{0,1,2,3}) \delta(\Omega_{0,1,2,3}) \dd \xi_{1,2,3},
$$
in which we recognize the collision operator.

\medskip

\noindent \underline{Symmetry of $K$} By the above definition,
$$
K_{0,1,2,3} = \frac 19 \left| D^{-,+,+}_{1,2,3} + D^{+,-,+}_{2,1,3} + D^{+,-,+}_{3,1,2} \right|^2.
$$
As a result, the symmetry $K_{0,1,2,3} = K_{0,1,3,2}$ is obvious. Proving that, on the resonant manifold $\Omega=\Sigma=0$,
$$
K_{0,1,2,3} = K_{2,3,0,1}
$$
is a lengthy and tedious computation. Rather than giving it in full, we will show that
$$
D_{1,2,3}^{-,+,+} = D^{-,+,+}_{3,0,1};
$$
the terms $D^{+,-,+}_{2,1,3}$ and $D^{+,-,+}_{3,1,2}$ can be treated similarly. First, on the resonant manifold there holds
$$
\xi+\xi_1 = \xi_2 + \xi_3 \quad \mbox{and} \quad \omega(\xi) + \omega(\xi_1) = \omega(\xi_2) + \omega(\xi_3).
$$
Second, by the symmetries \eqref{symmetry_QC} of $Q$, we have
$$
Q(-\xi_1,\xi+\xi_1)Q(\xi_2,\xi_3) = Q(-\xi_1,-\xi) Q(\xi_3,-\xi_2-\xi_3) = Q(\xi_1,\xi) Q(-\xi_3,\xi_2+\xi_3).
$$
Using these identities, we have
\begin{align*}
D_{1,2,3}^{-,+,+} & =\frac{Q(-\xi_1,\xi_2+\xi_3)Q(\xi_2,\xi_3)}{\omega(\xi) + \omega(\xi_1) - \tau \omega(\xi+\xi_1)} = \frac{Q(-\xi_3,\xi_1+\xi_2)Q(\xi_1,\xi)}{\omega(\xi_2) + \omega(\xi_3) - \tau \omega(\xi_2+\xi_3)} = D_{3,1,2}^{-,+,+}.
\end{align*}

\subsection{Rigorous results on the derivation of the kinetic equation} The derivation of \eqref{KWE} from \eqref{NOC} is one instance of the more general question of deriving mesoscopic kinetic equations from microscopic Hamiltonian systems. 

Over the last decade, there was remarkable progress in the case where the Hamiltonian system is the Nonlinear Schr\"odinger equation (NLS), and the kinetic model the associated kinetic wave equation. Just like we saw above in the framework of \eqref{NOC}, there is a natural kinetic time scale $T_{kin}$ over which the kinetic equation operates. In first attempts \cite{BuckmasterGermainHaniShatah,DengHani0,CollotGermain}, it was possible to prove the convergence from (NLS) to the kinetic wave equation on times $t \in [0,T_{kin}^{1-\delta}]$, for any $\delta >0$. The breakthrough came from Deng and Hani \cite{DengHani, DengHani2}, who reached successively a small multiple $cT_{kin}$ ($c$ small constant), and then arbitrary multiples $C T_{kin}$ ($C$ fixed constant) of the kinetic time scale.

Does the success we described for (NLS) have a counterpart for the system which occupies us here, namely \eqref{NOC}? It turns out that one-dimensional problems are particularly challenging. The first results in dimension one \cite{Vassilev} were obtained for the MMT model. Very recently, \cite{BoyangWu} improved by  \cite{VassilevWu} were able to prove convergence in the case of the (FPUT) over a time scale $T_{kin}^{\frac 23}$. We have not presented the (FPUT) model yet; this will be done in Section \ref{section_examples}. For now, suffice it to say it is the most standard example of a nonlinear oscillator chain; as to the reduced form, it restricts the system to the interactions which are most important dynamically.

All the results we have mentioned so far deal with the homogeneous problem. In the inhomogeneous case, the only rigorous result is \cite{AmpatzoglouCollotGermain}, which deals with a quadratic nonlinear wave equation and is valid up to $T_{kin}^{1-\delta}$, with $\delta>0$ arbitrary.

\section{Examples of nonlinear oscillatory chains and their kinetic wave equations} \label{section_examples}
Up to this point, we worked with very general nonlinear oscillator systems, only imposing translation invariance and nonlinear stability. To further our understanding of these systems, it is instructive to consider particular cases of physical, mathematical, and historical significance. To these we now turn.

\subsection{Unpinned systems} In such systems, the potential $F$ is invariant if all particles are uniformly shifted: $F((q_j)_j) = F((q_j + r)_j)$, with $r \in \mathbb{R}$, hence the name; as a consequence, $\omega(0) = 0$. We can further restrict this class of systems by asking that 
$$
F(\mathbf{q}) = \sum_j V(q_{j+1} - q_j)
$$
(nearest neighbor interactions). If $V''(0) \neq 0$, the linearized system is not trivial, and, up to a multiplicative constant, it is given by the discrete Laplacian
$$
\ddot{q_j} = \Delta_d q_j, \qquad \Delta_d q_j = -2 q_j + q_{j-1} + q_{j+1},
$$
which corresponds to the dispersion relation
$$
\boxed{\omega(\xi) =  2 \left| \sin \left( \frac \xi 2 \right) \right|.}
$$
The full equation is given by
$$
\ddot{q_j} = V'(q_{j+1} - q_j) - V'(q_j - q_{j-1}).
$$

\medskip
\noindent
\underline{FPUT$\alpha$ model}. The class of (FPUT) systems goes back to the article (or rather, Los Alamos report) by Fermi, Pasta and Ulam \cite{FermiPastaUlam} in 1950, which had a lasting influence on the field of nonlinear dynamics. To their initials was added that of Tsingou, whose contribution has been recently highlighted. We shall come back in Section \ref{sec:FPUT_paradox} to the unexpected behavior observed numerically by these authors and the theory behind it, but for now we will just present the model itself and the resulting kinetic equation.

The FPUT$\alpha$ model is the unpinned system with nearest neighbor interaction and
$$
V(x) = \frac 12 x^2 + \frac \alpha 3 x^3
$$
(with $\alpha \in \mathbb{R}$). The equation becomes in physical space
\begin{equation}
\label{FPUTalpha} \tag{FPUT$\alpha$}
\ddot{q_j} = \Delta_d q_j + \alpha (q_{j+1}-q_{j-1})\Delta_d q_j.
\end{equation}
To translate this formula to Fourier space, a small computation is needed. It is  simpler if the nonlinearity is written $(q_{j+1} - q_j)^2 - (q_{j} -q_{j-1})^2$. By the formulas \eqref{formulafourier}, we compute successively
\begin{align*}
& \mathcal{F} (q_{j \pm 1} - q_j)^2 = \frac{1}{2\pi} \int (e^{\pm i\xi_1} - 1) \widehat{q}(\xi_1 )  (e^{\pm i\xi_2} - 1) \widehat{q}(\xi_2 ) \delta(\xi - \xi_1 - \xi_2) \dd \xi_1 \dd \xi_2 \\
& \mathcal{F} [(q_{j +1} - q_j)^2 - (q_{j -1} - q_j)^2] = \frac{1}{2\pi} \int 
\left[ (e^{i\xi_1} - 1)  (e^{ i\xi_2} - 1) - (e^{-i\xi_1} - 1)(e^{- i\xi_2} - 1) \right]\\
& \qquad \qquad \qquad \qquad \qquad\qquad \qquad  \qquad \qquad \qquad \widehat{q}(\xi_1 ) \widehat{q}(\xi_2 )  \delta(\xi - \xi_1 - \xi_2) \dd \xi_1 \dd \xi_2
\end{align*}
If $\xi_1,\xi_2 \in \mathbb{R}$, we can write with the help of Euler's formula
\begin{align*}
& (e^{i\xi_1} - 1)  (e^{ i\xi_2} - 1) - (e^{-i\xi_1} - 1)(e^{- i\xi_2} - 1)\\
& \qquad = e^{i \frac{\xi_1 + \xi_2 }{2}} (e^{i\frac{\xi_1}{2}} - e^{-i\frac{\xi_1}{2}})(e^{i\frac{\xi_2}{2}} - e^{-i\frac{\xi_2}{2}}) - e^{-i \frac{\xi_1 + \xi_2 }{2}} (e^{-i\frac{\xi_1}{2}} - e^{+i\frac{\xi_1}{2}})(e^{-i\frac{\xi_2}{2}} - e^{+i\frac{\xi_2}{2}}) \\
& \qquad = -8 i \sin \left( \frac{\xi_1 + \xi_2}{2} \right) \sin \left( \frac{\xi_1}{2} \right) \sin \left( \frac{\xi_2}{2} \right).
\end{align*}
At this point, a word of caution is needed: the expressions $\sin \left( \frac{\xi_1}{2} \right)$, $\sin \left( \frac{\xi_2}{2} \right)$ or $\sin \left( \frac{\xi_1+\xi_2}{2} \right)$ are not $2\pi$ periodic, hence they do not make sense if $\xi_1,\xi_2 \in \mathbb{T}$ - \textit{but} their product $\sin \left( \frac{\xi_1 + \xi_2}{2} \right) \sin \left( \frac{\xi_1}{2} \right) \sin \left( \frac{\xi_2}{2} \right)$ does! Indeed, as we saw above, given $\Xi_1,\Xi_2 \in \mathbb{T}$, this product is independent of the choice of $\xi_1,\xi_2 \in \mathbb{R}$ such that $\xi_i = \Xi_i \mod 2\pi$. Therefore, we write $\sin \left( \frac{\xi_1 + \xi_2}{2} \right) \sin \left( \frac{\xi_1}{2} \right) \sin \left( \frac{\xi_2}{2} \right)$ with this understanding.

Overall, the equation \eqref{FPUTalpha} becomes in Fourier space
\begin{align*}
\ddot{\widehat{q}}(\xi) & = - 4\sin \left( \frac \xi 2 \right)^2 \widehat{q}(\xi)  \\
& \qquad\qquad - \frac{4 \alpha i}{\pi} \int \sin \left( \frac {\xi_1+\xi_2} 2 \right) \sin \left( \frac {\xi_1} 2 \right) \sin \left( \frac{\xi_2} 2 \right) \widehat{q}(\xi_1) \widehat{q}(\xi_2) \delta(\xi - \xi_1 - \xi_2) \dd \xi_1 \dd \xi_2.
\end{align*}
This means that $\widehat{\alpha^{(3)}}(\xi_1,\xi_2) = 8 \alpha i \sin \left( \frac {\xi_1+\xi_2} 2 \right) \sin \left( \frac {\xi_1} 2 \right) \sin \left( \frac{\xi_2} 2 \right)$ and thus the associated symbol (see equation \eqref{eqb}) for the unknown $b$ is
$$
\displaystyle
\boxed{Q(\xi_1,\xi_2) = - \frac{i \alpha}{2\sqrt 2} \frac{\sin \left( \frac {\xi_1+\xi_2} 2 \right) \sin \left( \frac {\xi_1} 2 \right) \sin \left( \frac{\xi_2} 2 \right)}{\sqrt{\left|\sin \left( \frac {\xi_1+\xi_2} 2 \right) \sin \left( \frac {\xi_1} 2 \right) \sin \left( \frac{\xi_2} 2 \right) \right|}}}
$$
The next step is to perform a normal form transformation. We learned from Section \ref{section_resonances} that the only quadratic resonances are the trivial ones for the dispersion relation $\left| \sin \left( \frac \xi 2 \right) \right|$. These might nevertheless lead to singularities, but we will see that this is not the case.
From formula \eqref{formulaD}, we have (denoting $\xi = \sigma_1 \xi_1 + \sigma_2 \xi_2 + \sigma_3 \xi_3$)

\begin{align*}
D^{\sigma_1,\sigma_2,\sigma_3}& (\xi_1,\xi_2,\xi_3) = \frac{\alpha^2}{4 \pi^2} \sum_{\tau}  \frac{\tau}{\omega(\xi) - \sigma_1 \omega(\xi_1) - \tau \omega(\xi - \sigma_1 \xi_1)} \times \\
& \;\;\;\;\;\;\;\;\;\;\;\;\;\;\;\; \frac{\sin \left( \frac {\xi} 2 \right) \sin \left( \frac {\sigma_1 \xi_1} 2 \right) \sin \left( \frac{\xi-\sigma_1\xi_1} 2 \right)}{\sqrt{\left|\sin \left( \frac {\xi} 2 \right) \sin \left( \frac {\xi_1} 2 \right) \sin \left( \frac{\xi-\sigma_1 \xi_1} 2 \right) \right|}} \frac{\sin \left( \frac {\xi-\sigma_1 \xi_1} 2 \right) \sin \left( \frac { \sigma_2 \xi_2} 2 \right) \sin \left( \frac{\sigma_3 \xi_3} 2 \right)}{\sqrt{\left|\sin \left( \frac {\xi-\sigma_1 \xi_1} 2 \right) \sin \left( \frac {\xi_2} 2 \right) \sin \left( \frac{\xi_3} 2 \right) \right|}} \\
& = \frac {\alpha^2} {4 \pi^2} \frac{\sin \left( \frac {\xi} 2 \right) \sin \left( \frac {\sigma_1 \xi_1} 2 \right)\sin \left( \frac { \sigma_2 \xi_2} 2 \right) \sin \left( \frac{ \sigma_3 \xi_3} 2 \right) }{\sqrt{\left|\sin \left( \frac {\xi} 2 \right) \sin \left( \frac {\xi_1} 2  \right) \sin \left( \frac {\xi_2} 2 \right) \sin \left( \frac{\xi_3} 2 \right) \right|}} \frac{\omega(\xi-\sigma_1 \xi_1)^2}{\left[ \omega(\xi) - \sigma_1 \omega(\xi_1)\right]^2 - \omega(\xi-\sigma_1 \xi_1)^2}.
\end{align*}
By the result in Section \ref{Sec: Derivation_WKE}, the kernel of the kinetic equation is given by
$$
K_{0,1,2,3} = \frac{1}{9} | D^{-,+,+}_{1,2,3} + D^{+,-,+}_{2,1,3} + D^{+,-,+}_{3,1,2}|^2.
$$
We claim that, if $\xi + \xi_1 = \xi_2 + \xi_3$ and $\omega_0 + \omega_1 = \omega_2 + \omega_3$,
$$
\boxed{|K_{0,1,2,3}|^2 = \frac{\alpha^4}{144 \pi^4} \left|\sin \left( \frac {\xi} 2 \right) \sin \left( \frac { \xi_1} 2 \right)\sin \left( \frac { \xi_2} 2 \right) \sin \left( \frac{ \xi_3} 2 \right) \right|.}
$$
The computation is very delicate and will be postponed till Section \ref{section_computation}
\medskip

\noindent
\underline{FPUT$\beta$ model}. This is the unpinned system with nearest neighbor interaction and
$$
V(x) = \frac 12 x^2 + \frac \beta 4 x^4
$$
The equation becomes in physical space
\begin{equation}
\label{FPUTbeta} \tag{FPUT$\beta$}
\ddot{q_j} = \Delta_d q_j + \beta \left[ (q_{j+1} - q_j)^3 - (q_{j} - q_{j-1})^3 \right].
\end{equation}
To obtain the formula in Fourier, we write
\begin{align*}
& \mathcal{F} [(q_{j +1} - q_j)^3 - (q_{j} - q_{j-1})^3] \\
& \qquad \qquad \qquad \qquad = \frac{1}{4\pi^2} \int 
\left[ (e^{i\xi_1} - 1)  (e^{ i\xi_2} - 1) (e^{ i\xi_3} - 1) + (e^{-i\xi_1} - 1)(e^{- i\xi_2} - 1) (e^{-i\xi_3} - 1) \right]\\
& \qquad \qquad \qquad \qquad \qquad\qquad \qquad  \qquad \qquad \qquad \widehat{q}(\xi_1 ) \widehat{q}(\xi_2 )\widehat{q}(\xi_3 )  \delta(\xi - \xi_1 - \xi_2 -\xi_3) \dd \xi_1 \dd \xi_2 \dd \xi_3
\end{align*}
If $\xi_1,\xi_2,\xi_3 \in \mathbb{R}$, we can then write
\begin{align*}
& (e^{i\xi_1} - 1)  (e^{ i\xi_2} - 1) (e^{i\xi_3} - 1)  + (e^{-i\xi_1} - 1)(e^{- i\xi_2} - 1) (e^{-i\xi_3} - 1)\\
& \qquad = e^{i \frac{\xi_1 + \xi_2 +\xi_3}{2}} (e^{i\frac{\xi_1}{2}} - e^{-i\frac{\xi_1}{2}})(e^{i\frac{\xi_2}{2}} - e^{-i\frac{\xi_2}{2}})(e^{i\frac{\xi_3}{2}} - e^{-i\frac{\xi_3}{2}}) \\
& \qquad \qquad \qquad \qquad + e^{-i \frac{\xi_1 + \xi_2 +\xi_3}{2}} (e^{-i\frac{\xi_1}{2}} - e^{+i\frac{\xi_1}{2}})(e^{-i\frac{\xi_2}{2}} - e^{+i\frac{\xi_2}{2}})(e^{-i\frac{\xi_3}{2}} - e^{+i\frac{\xi_3}{2}}) \\
& \qquad = 16 \sin \left( \frac{\xi_1 + \xi_2 +\xi_3}{2} \right) \sin \left( \frac{\xi_1}{2} \right) \sin \left( \frac{\xi_2}{2} \right) \sin \left( \frac{\xi_2}{2} \right).
\end{align*}
With the same word of caution as for \eqref{FPUTalpha}, we obtain the expression in Fourier space
\begin{align*}
& \ddot{\widehat{q}}(\xi) = - 4 \sin \left( \frac \xi 2 \right)^2 \widehat{q}(\xi) + \frac{4 \beta}{\pi^2} \int \sin \left( \frac {\xi_1+\xi_2+\xi_3} 2 \right) \sin \left( \frac {\xi_1} 2 \right) \sin \left( \frac {\xi_2} 2 \right) \ \sin \left( \frac {\xi_3} 2 \right)  \\
& \qquad \qquad \qquad \qquad \qquad \qquad \qquad \qquad \qquad \widehat{q}(\xi_1)  \widehat{q}(\xi_2) \widehat{q}(\xi_3) \delta(\xi - \xi_1 - \xi_2 - \xi_3) \dd \xi_1 \dd \xi_2 \dd \xi_3.
\end{align*}
For the unknown $b$, the symbol becomes
\begin{equation}
\label{symbolC}
\boxed{C(\xi_1,\xi_2,\xi_3) = \frac \beta {8 \pi^2} \frac{\sin \left( \frac {\xi_1+\xi_2+\xi_3} 2 \right) \sin \left( \frac {\xi_1} 2 \right) \sin \left( \frac {\xi_2} 2 \right) \ \sin \left( \frac {\xi_3} 2 \right) }{\sqrt{ \left| \sin \left( \frac {\xi_1+\xi_2+\xi_3} 2 \right) \sin \left( \frac {\xi_1} 2 \right) \sin \left( \frac{\xi_2} 2 \right)  \sin \left( \frac {\xi_3} 2 \right)\right|}}.}
\end{equation}
By the derivation result in Section \ref{Sec: Derivation_WKE}, the associated \eqref{KWE} is characterized by the kernel
$$
\boxed{K(\xi_0,\xi_1,\xi_2,\xi_3) = \frac{\beta^2}{64\pi^4} \left| \sin \left( \frac {\xi} 2 \right) \sin \left( \frac {\xi_1} 2 \right) \sin \left( \frac{\xi_2} 2 \right)  \sin \left( \frac {\xi_3} 2 \right)\right|.}
$$

\medskip

\noindent
\underline{FPUT $\alpha + \beta$ model}. It combines both previous models with the choice
$$
V(x) = \frac 12 x^2 + \frac \alpha 3 x^3 + \frac \beta 4 x^4.
$$
The cubic symbol resulting from the quartic term in $V$ has been computed in \eqref{symbolC}. Turning to the cubic term in $V$, it yields a quadratic nonlinearity, which gives, after normal form and restricting to the $-,+,+$ signature, the symbol
\begin{align*}
&\frac{1}3\left[D^{-,+,+}(\xi_1,\xi_2,\xi_3) + D^{+,-,+}(\xi_2,\xi_1,\xi_3) + D^{+,+,-}(\xi_3,\xi_2,\xi_1) \right] \\
& \qquad\qquad\qquad\qquad\qquad = -\frac{\alpha^2}{12 \pi^2}  \frac{\sin \left( \frac {\xi_1+\xi_2+\xi_3} 2 \right) \sin \left( \frac {\xi_1} 2 \right) \sin \left( \frac {\xi_2} 2 \right) \ \sin \left( \frac {\xi_3} 2 \right) }{\sqrt{ \left| \sin \left( \frac {\xi_1+\xi_2+\xi_3} 2 \right) \sin \left( \frac {\xi_1} 2 \right) \sin \left( \frac{\xi_2} 2 \right)  \sin \left( \frac {\xi_3} 2 \right)\right|}}
\end{align*}
by the computation in Section \ref{section_computation}.

This leads to the kinetic kernel
\begin{equation}
\label{formulaK}
\boxed{K(\xi_0,\xi_1,\xi_2,\xi_3) = \frac{1}{16 \pi^4} \left[ -\frac{\alpha^2}{3} + \frac{\beta}{2} \right]^2 \left| \sin \left( \frac {\xi} 2 \right) \sin \left( \frac {\xi_1} 2 \right) \sin \left( \frac{\xi_2} 2 \right)  \sin \left( \frac {\xi_3} 2 \right)\right|.}
\end{equation}

\medskip

\noindent
\underline{Toda lattice.} It is the equation corresponding to the potential
$$
V(x) = e^{-x} + x - 1,
$$
and it was introduced by M. Toda in 1967 \cite{Toda}; it has the remarkable property of being completely integrable, see for instance the recent textbook presentation in \cite{DeiftDubachTomeiTrogdon}. As a result, it should be expected that all resonant interactions cancel and that the kinetic description in the weakly turbulent regime is trivial. 

To check this prediction, we expand $V$ as
$$
V(x) = \frac{1}{6} x^3 + \frac{1}{24} x^4 + \dots,
$$
which means that $V$ agrees to order $4$ with the $\alpha+\beta$ (FPUT) system, with $\alpha = -\frac 12$ and $\beta = \frac 1 6$. With these values of $\alpha$ and $\beta$, the formula \eqref{formulaK} gives $K=0$, confirming the intuition that integrability precludes turbulent behavior.

\subsection{Pinned systems} \label{subsec:pinnedsystems} In such systems, the potential function is not invariant by a uniform shift of the positions of the particles. The most natural choice would be $F(\textbf{q}) = \sum_n V(q_n)$, but it does not allow any interaction between particles, resulting in trivial dynamics! For this reason, we will let
$$
F(\textbf{q}) = \sum_{k,\ell} \alpha_k q_{\ell} q_{k+\ell} + \sum_n V(q_n).
$$
Further restricting $\alpha$ to nearest neighbor interactions, $\alpha_k = 0$ if $|k| \geq 2$, we obtain up to a multiplicative constant
$$
\boxed{\omega(\xi) = \sqrt{1 - 2 \delta \cos \xi}, \qquad 0 \leq \delta \leq \frac 12.}
$$
For $\delta = \frac{1}{2}$, the system is unpinned at the linear level, $\omega(\xi) = \sqrt 2 \left| \sin \left( \frac \xi 2 \right) \right|$, and we are brought back to the previous paragraph. If $\delta < \frac 12$, then $\omega'(0) = 0$ but $\omega (0),\omega''(0) \neq 0$. This brings to mind the dispersion relation of the Klein-Gordon equation, earning the system its name Discrete Nonlinear Klein-Gordon, DNLKG for short.

\medskip

\noindent 
\underline{Quadratic DNLKG} This corresponds to the choice
$$
F(\mathbf{q}) = \sum_j  \frac 12 \left[\delta (q_{j+1}-q_j)^2 + (1-2\delta) q_j^2 \right] + \frac{1}{4} \sum_j q_j^4.
$$
The evolution problem is
$$
\ddot q_j = (\delta \Delta_d - (1-2\delta) ) q_j - q_j^3
$$
or in the Fourier variable
$$
\ddot{\widehat{q}}(\xi) = - (1 - 2 \delta \cos \xi) \widehat{q}(\xi) - \frac{1}{2\pi} \int \widehat{q}(\xi_1) \widehat{q}(\xi_2) \delta(\xi - \xi_1 - \xi_2)  \dd \xi_1 \dd \xi_2.
$$
Changing to the unknown function $b$, the symbol is
$$
\boxed{Q(\xi,\xi_1,\xi_2) = \frac{-1}{8\pi} \frac{1}{\sqrt{\omega(\xi) \omega(\xi_1) \omega(\xi_2)}}.}
$$
We saw that there are no resonances if $\delta<\frac 12$, in which case we can perform a normal form transformation. The formula for $D$ given in \eqref{formulaD} becomes
\begin{align*}
D^{-,+,+}(\xi_1,\xi_2,\xi_3) & = - \frac 2 {(8 \pi)^2} \frac{1}{\sqrt{\omega(\xi)\omega(\xi_1)\omega(\xi_2) \omega(\xi_3)}} \frac{1}{\omega(\xi+\xi_1)} \sum_\tau \tau \frac{1}{\omega(\xi)+\omega(\xi_1)-\tau \omega(\xi + \xi_1)} \\
& = \frac{1}{16 \pi^2} \frac{1}{\sqrt{\omega(\xi)\omega(\xi_1)\omega(\xi_2) \omega(\xi_3)}} \frac{1}{\omega(\xi+\xi_1)^2-[\omega(\xi) + \omega(\xi_1)]^2} 
\end{align*}
provided $\xi + \xi_1 = \xi_2 + \xi_3$.

We can then follow the derivation in Section \ref{Sec: Derivation_WKE} to obtain
\begin{equation*}
\boxed{
\begin{aligned}
K(\xi,\xi_1,\xi_2,\xi_3) & = \frac{1}{9\cdot 16^2 \pi^4} \frac{1}{\omega(\xi) \omega(\xi_1) \omega(\xi_2) \omega(\xi_3)} \left[ \frac{1}{\omega(\xi+\xi_1)^2-[\omega(\xi) + \omega(\xi_1)]^2} \right. \\
& \left. + \frac{1}{\omega(\xi-\xi_2)^2-[\omega(\xi) - \omega(\xi_2)]^2} + \frac{1}{\omega(\xi-\xi_3)^2-[\omega(\xi) - \omega(\xi_3)]^2} \right]^2
\end{aligned}
}
\end{equation*}

\medskip
\noindent 
\underline{Cubic DNLKG} (a.k.a. discrete $\varphi^4$ model). It corresponds to the choice
$$
F(\mathbf{q}) = \sum_j  \frac 12 \left[2\delta (q_{j+1}-q_j)^2 + (1-2\delta) q_j^2 \right] + \frac{1}{3} \sum_j q_j^3.
$$
The evolution problem is
$$
\ddot q_j = (2\delta \Delta_d - (1-2\delta) ) q_j - q_j^2
$$
or in the Fourier variable
$$
\ddot{\widehat{q}}(\xi) = - (1 - 2 \delta \cos \xi) \widehat{q}(\xi) + \frac{1}{4\pi^2} \int \widehat{q}(\xi_1) \widehat{q}(\xi_2) \widehat{q}(\xi_3) \delta(\xi - \xi_1 - \xi_2 - \xi_3)  \dd \xi_1 \dd \xi_2 \dd \xi_3.
$$
Changing to the unknown function $b$, the symbol is
$$
\boxed{C(\xi,\xi_1,\xi_2) = \frac{-1}{32\pi^2} \frac{1}{\sqrt{\omega(\xi_1+\xi_2+\xi_3) \omega(\xi_1) \omega(\xi_2) \omega(\xi_3)}}.}
$$
The associated \eqref{KWE} is characterized by the kernel
$$
\boxed{K(\xi_0,\xi_1,\xi_2,\xi_3) = \frac{1}{(32 \pi^2)^2} \frac{1}{\omega(\xi) \omega(\xi_2) \omega(\xi_3) \omega(\xi_4)}.}
$$

\medskip
\noindent \underline{Frenkel-Kontorova model.} It is given by the Hamiltonian
$$
F(\mathbf{q}) = \sum_j  \frac 12 \left[2\delta (q_{j+1}-q_j)^2 + (1-2\delta) q_j^2 \right] +  \sum_j \cos(q_j),
$$
so that it can be thought of as a discrete version of the Sine-Gordon equation. It is a fundamental model in solid state physics, see \cite{BraunKivshar0,BraunKivshar1} for an introduction to its theory. As far as its kinetic theory is concerned, this model is indistinguishable from the Cubic DNLKG (upon adjusting parameters) since the nonlinearities of these models agree to third order.

\subsection{The computation of the kinetic kernel for \eqref{FPUTalpha}}
\label{section_computation}

We aim at proving that
\begin{align*}
& \frac{\omega(\xi+ \xi_1)^2}{\left[ \omega(\xi) + \omega(\xi_1)\right]^2 - \omega(\xi + \xi_1)^2} \\
& \qquad \qquad + \frac{\omega(\xi - \xi_2)^2}{\left[ \omega(\xi) - \omega(\xi_2)\right]^2 - \omega(\xi - \xi_2)^2} + \frac{\omega(\xi- \xi_3)^2}{\left[ \omega(\xi) - \omega(\xi_3)\right]^2 - \omega(\xi- \xi_3)^2} = -1
\end{align*}
on the resonant manifold
$$
\Sigma = \xi + \xi_1 - \xi_2 - \xi_3 = 0, \qquad\qquad \Omega = \omega_0 + \omega_1 -\omega_2 - \omega_3 = 0.
$$
In order to prove this identity, we use the trigonometric identities 
\begin{align*} 
&(\sin(x)+ \sin(y))^2 - \sin^2\left(x+y\right) = 4\sin(x)\sin(y) \sin^2\left( \frac{x+y}{2}\right),\\  &\sin^2\left(x+y\right) = 4 \sin^2 \left( \frac{x+y}{2}\right) \cos^2 \left( \frac{x+y}{2}\right),\\
&(\sin(x)- \sin(y))^2 - \sin^2\left(x-y\right) = -4\sin(x)\sin(y) \sin^2\left( \frac{x-y}{2}\right).
\end{align*} 
to rewrite the main term as 
\begin{equation} \label{eq: main term_FPUalpha kernel}
\begin{split}
    &\frac{\cos^2\left( \frac{\xi_0+\xi_1}{4} \right) }{\sin\left( \frac{\xi_0}{2}\right) \sin\left( \frac{\xi_1}{2}\right)} 1_{s_0s_1\geq 0} - 
    \frac{\cos^2\left( \frac{\xi_0-\xi_2}{4} \right) }{\sin\left( \frac{\xi_0}{2}\right) \sin\left( \frac{\xi_2}{2}\right)} 1_{s_0s_2\geq 0} - 
    \frac{\cos^2\left( \frac{\xi_0-\xi_3}{4} \right) }{\sin\left( \frac{\xi_0}{2}\right) \sin\left( \frac{\xi_3}{2}\right)}1_{s_0s_3\geq 0} \\
    & + \frac{\sin^2\left( \frac{\xi_0+\xi_1}{4} \right) }{|\sin\left( \frac{\xi_0}{2}\right) \sin\left( \frac{\xi_1}{2}\right)|} 1_{s_0s_1< 0}  - 
    \frac{\sin^2\left( \frac{\xi_0-\xi_2}{4} \right) }{|\sin\left( \frac{\xi_0}{2}\right) \sin\left( \frac{\xi_2}{2}\right)|} 1_{s_0s_2< 0} - 
    \frac{\sin^2\left( \frac{\xi_0-\xi_3}{4} \right) }{|\sin\left( \frac{\xi_0}{2}\right) \sin\left( \frac{\xi_3}{2}\right)|}1_{s_0s_3 < 0}
    \end{split}
\end{equation}

We use the identities 
\begin{align*} 
&\sin\left( \frac{x}{2}  \right) = \frac{2 \tan\left(\frac{x}{4} \right)}{1+\tan^2 \left(\frac{x}{4} \right) } \  \text{ and }\ \cos^2 \left( \frac{x\pm y}{4}  \right) = \frac{\left[ 1\mp \tan \left( \frac{x}{4} \right) \tan \left( \frac{y}{4} \right)\right]^2 }{(1+ \tan^2 \left(\frac{x}{4} \right) )(1+ \tan^2 \left(\frac{y}{4} \right))}, \\ 
& \hspace{2cm} \sin^2 \left( \frac{x\pm y}{4}  \right) = \frac{\left[ \tan \left( \frac{x}{4} \right) \pm \tan \left( \frac{y}{4} \right) \right]^2 }{(1+ \tan^2 \left(\frac{x}{4} \right) )(1+ \tan^2 \left(\frac{y}{4} \right))}
\end{align*}
in particular we have that $s_x = \operatorname{sign}(\sin(x/2) ) =\operatorname{sign}(\tan(x/4) ))$. 
With the help of these identities we may rewrite our main term in terms of $t_j$, with obvious notation $t_j = \tan(\xi_j/4)$. 

Also the resonant constraints become: \begin{equation}\label{eq:reson contrant for t_i}
\begin{split} &\frac{|t_0|}{1+t_0^2} + \frac{|t_1|}{1+t_1^2} - \frac{|t_2|}{1+t_2^2} = \frac{|t_3|}{1+t_3^2}, \quad \text{ when } t_3 \quad \text{ is }\\
& \mbox{either:}\; \tan \left(\frac{\xi_0+\xi_1-\xi_2}{4} \right) = \frac{t_0+t_1-t_2+t_0t_1t_2}{1-t_0t_1+t_0t_2+t_1t_2} =: \frac{a}{b}\\
& \mbox{or:} \; \tan \left(  \frac{\xi_0+\xi_1-\xi_2}{4} \pm \pi/2 \right) = -\frac{1}{\tan(\frac{\xi_0+\xi_1-\xi_2}{4} )} = -\frac{b}{a}.
\end{split}
\end{equation}
To arrive to the above relations we used the addition formula for tangent: $\tan(x+y) = \frac{\tan(x) + \tan(y) }{1-\tan(x) \tan(y)}$ applied first to $x=\xi_0/4, y =(\xi_1-\xi_2 )/4$, and then that $\tan(x-y) = \frac{\tan(x) - \tan(y) }{1+\tan(x) \tan(y)}$. 

We want to show on the resonant manifold that
the main term \eqref{eq: main term_FPUalpha kernel} equals to $-1$. 
Depending on the signs we need to split into the different cases. We will assume that $t_0> 0$ and $t_i \neq 0$. \\
\underline{Case 1: $t_0t_1 \geq 0,  t_0t_2\geq 0, t_0t_3\geq 0$}:  In this case \eqref{eq: main term_FPUalpha kernel}$=-1$ means:  
$$ \frac{(1-t_0t_1)^2}{4t_0t_1} - \frac{(1+t_0t_2)^2}{4t_0t_2}  - \frac{(1+t_0t_3)^2}{4t_0t_3} = -1. $$
 This is the case if and only if
\begin{align} \label{eq: aim_FPUalpha}
t_2t_3-t_1t_3-t_1t_2 = (2-t_0t_1+t_0t_2+t_0t_3)\prod_{\ell=0}^3 t_\ell 
\end{align} 
We insert $t_3 = -b/a$ into the energy constraint \eqref{eq:reson contrant for t_i}: 
$$ \frac{t_0}{1+t_0^2} + \frac{t_1}{1+t_1^2} - \frac{t_2}{1+t_2^2} = -\frac{ab}{a^2+b^2}  = -\frac{ab}{(1+t_0^2) (1+t_1^2)(1+t_2^2)}.$$
After expanding everything, the above is equivalent to
$$ F_{reson}^1(t_0,t_1,t_2) := t_0+t_1-t_2 + t_0t_1^2t_2^2 + t_1t_0^2t_2^2 - t_2t_1^2t_0^2 + 2t_0t_1t_2 =0. $$
Given this, we aim to verify \eqref{eq: aim_FPUalpha}. Compute that \eqref{eq: aim_FPUalpha} factorises as $$\frac{ (t_1-t_2)(t_0t_2+1)(t_0t_1-1) F_{reson}^1(t_0,t_1,t_2)   }{a^2} =0 $$
which is true due to the resonant constraint! 
When $t_3 = a/b$, then we compute the energy constraint and find 
$$ \frac{t_0}{1+t_0^2} + \frac{t_1}{1+t_1^2} - \frac{t_2}{1+t_2^2} = \frac{ab}{a^2+b^2}  = \frac{ab}{(1+t_0^2) (1+t_1^2)(1+t_2^2)}$$
which rewrites as $$\frac{ (t_2-t_1)(t_0+t_1)(t_2-t_0)}{(1+t_0^2) (1+t_1^2)(1+t_2^2)} =0 $$ and so either $t_2=t_1$ or $t_2=t_0$ or $t_0=-t_1$. These all correspond to the trivial resonances.
\\
\underline{Case 2: $t_0t_1\leq 0, t_0t_2\geq 0, t_0t_3\geq 0$}:  In this case \eqref{eq: main term_FPUalpha kernel}$=-1$ means:  
\begin{align} \label{eq:MT_case2}
- \frac{(t_0+t_1)^2}{4t_0t_1} - \frac{(1+t_0t_2)^2}{4t_0t_2}  - \frac{(1+t_0t_3)^2}{4t_0t_3} = -1, 
\end{align} 
and the resonant constraint \eqref{eq:reson contrant for t_i} becomes
$$ \frac{t_0}{1+t_0^2} - \frac{t_1}{1+t_1^2} - \frac{t_2}{1+t_2^2} = \frac{t_3}{1+t_3^2} $$ which for $t_3=a/b$, it is equivalent to 
$$ F_{reson}^2(t_0,t_1,t_2) := t_0^2t_1t_2^2 + t_0^2t_2-t_0t_1^2+2t_0t_1t_2 - t_0t_2^2+t_1^2t_2+t_1 =0.$$
We then factorise as in Case 1 the main term and we find: 
$$ (\text{LHS of } \eqref{eq:MT_case2}) + 1 = \frac{(t_0+t_1)(1+t_0t_2)(1+t_1t_2) F_{reson}^2 (t_0,t_1,t_2) }{4t_0t_1t_2 ba}, $$
which is indeed $0$. 
For $t_3=-b/a$, the numerator of the energy constraint factorises as  
$-2 (1+t_0^2)(1+t_1^2)(1+t_2^2)(t_0-t_2) (t_0t_1-1)(t_1t_2+1)=0 $ and so the resonant condition reduces to either $t_0=t_2$, or $t_0t_1=1$ or $t_2t_1=-1$, which all again correspond to the trivial resonances (for example  $t_0t_1=1$ means that $\xi_0 + \xi_1 =0\ (\operatorname{mod} 2\pi)$, and $t_1t_2=-1$ means $\xi_1 - \xi_2 =0  \ (\operatorname{mod} 2\pi)$).
\\
\underline{Case 3: $t_0t_1\geq 0,\ t_0t_2 \leq 0, t_0t_3\leq 0$}:
In this case \eqref{eq: main term_FPUalpha kernel}$=-1$ means:  
\begin{align} \label{eq:MT_case3}
\frac{(1-t_0t_1)^2}{4t_0t_1} + \frac{(t_0-t_2)^2}{4t_0t_2}  + \frac{(t_0-t_3)^2}{4t_0t_3} = -1, 
\end{align} 
and the resonant constraint \eqref{eq:reson contrant for t_i}:
$$ \frac{t_0}{1+t_0^2} + \frac{t_1}{1+t_1^2} + \frac{t_2}{1+t_2^2} = -\frac{t_3}{1+t_3^2} $$ which for $t_3=a/b$, factorising the numerator and taking it equal to $0$ we get  
$$2(t_0+t_1)(t_0t_2+1)(t_1t_2+1)=0$$
where all the possible cases here correspond to trivial resonances.  
While for $t_3=-b/a$ we get:
$$ F_{reson}^3(t_0,t_1,t_2) := 
t_0^2t_1^2t_2 + t_0^2t_1 + t_0t_1^2 - 2t_0t_1t_2 + t_0t_2^2 + t_1t_2^2 + t_2 = 0.$$
Then looking at our main term: 
$$ (\text{LHS of } \eqref{eq:MT_case3}) + 1 = \frac{-(t_0-t_2)(1+t_1t_2)(t_1t_0-1) F_{reson}^3 (t_0,t_1,t_2) }{4t_0t_1t_2ba}, $$
which is indeed $0$. 
\\
\underline{Case 4: $t_0t_1\geq 0,\ t_0t_2 \geq 0, t_0t_3\leq 0$ or $t_0t_1\geq 0,\ t_0t_2 \leq 0, t_0t_3\geq 0$ }: We are only looking at the first subcase (the other is symmetric by exchanging $t_2$, $t_3$).  
Here \eqref{eq: main term_FPUalpha kernel}$=-1$ means:  
\begin{align} \label{eq:MT_case4}
\frac{(1-t_0t_1)^2}{4t_0t_1} - \frac{(1+t_0t_2)^2}{4t_0t_2} + \frac{(t_0-t_3)^2}{4t_0t_3} = -1, 
\end{align} 
and the resonant constraint 
\eqref{eq:reson contrant for t_i}:
$$ \frac{t_0}{1+t_0^2} + \frac{t_1}{1+t_1^2} - \frac{t_2}{1+t_2^2} = -\frac{t_3}{1+t_3^2} $$ which for $t_3=a/b$, it is equivalent to 
$$ F_{reson}^4(t_0,t_1,t_2) := t_0^2t_1^2t_2 - t_0^2t_1t_2^2 - t_0t_1^2t_2^2 - 2t_0t_1t_2 - t_0 - t_1 + t_2 = 0.$$
Then looking at our main term: 
$$ (\text{LHS of } \eqref{eq:MT_case4}) + 1 = \frac{-(t_0-t_2)(1+t_1t_2)(t_1t_0-1) F_{reson}^4 (t_0,t_1,t_2) }{4t_0t_1t_2ba}, $$
which is indeed $0$. 
\\
Finally for $t_3=-b/a$ reduces to: $(t_0 + t_1)(t_0 - t_2)(t_1 - t_2) =0$, where all possibilities correspond to trivial resonances. 
\\
\underline{Case 5: $t_0t_1\leq 0,\ t_0t_2\leq 0, t_0t_3\leq 0$}:
Here \eqref{eq: main term_FPUalpha kernel}$=-1$ means:  
\begin{align} \label{eq:MT_case5}
-\frac{(t_0+t_1)^2}{4t_0t_1} + \frac{(t_0-t_2)^2}{4t_0t_2} + \frac{(t_0-t_3)^2}{4t_0t_3} = -1, 
\end{align} 
and the resonant constraint 
\eqref{eq:reson contrant for t_i}:
$$ \frac{t_0}{1+t_0^2} - \frac{t_1}{1+t_1^2} + \frac{t_2}{1+t_2^2} = -\frac{t_3}{1+t_3^2}.$$
For $t_3=a/b$, the energy constraint factorises exactly as in Case 6 with $t_3=-b/a$. For $t_3=-b/a$ it factorises as in Case 6 with $t_3=a/b$.  Indeed term $\frac{t_3}{1+t_3^2}$ changes only sign between $t_3=-b/a$ and $t_3=a/ab$ and Case 6 differs from Case 5 only in $t_3$-sign. The case that does not fall into trivial resonances is now $t_3=a/b$.  We look at our main term: 
$$ (\text{LHS of } \eqref{eq:MT_case5}) + 1 = \frac{-(t_0 + t_1)(t_0 - t_2)(t_1-t_2)F_{reson}^6(t_0,t_1,t_2)}{4t_0t_1t_2 a b } $$
which is $0$ and $F_{reson}^6$ is defined in Case 6 below. 
\\
\underline{Case 6: $t_0t_1\leq 0,t_0t_2\leq 0, t_0t_3\geq 0$ or $t_0t_1\leq 0,t_0t_2\geq 0, t_0t_3\leq 0$ }: We are only looking at the first subcase (the other is symmetric).
Here \eqref{eq: main term_FPUalpha kernel}$=-1$ means:  
\begin{align} \label{eq:MT_case6}
-\frac{(t_0+t_1)^2}{4t_0t_1} + \frac{(t_0-t_2)^2}{4t_0t_2} - \frac{(1+t_0t_3)^2}{4t_0t_3} = -1, 
\end{align} 
and the resonant constraint 
\eqref{eq:reson contrant for t_i}:
$$ \frac{t_0}{1+t_0^2} - \frac{t_1}{1+t_1^2} + \frac{t_2}{1+t_2^2} = \frac{t_3}{1+t_3^2} $$ which for $t_3=a/b$, it is equivalent to 
$ (t_1-t_2)(t_0t_2+1)(t_0t_1-1)=0$ where again all possibilities fall into the trivial resonances. 
Finally for $t_3=-b/a$ reduces to:
$$ F_{reson}^6(t_0,t_1,t_2) := t_0^2t_1 - t_0^2t_2 - t_0t_1^2t_2^2 - 2t_0t_1t_2 - t_0 - t_1^2t_2 + t_1t_2^2 =0. $$
Then looking at our main term: 
$$ (\text{LHS of } \eqref{eq:MT_case6}) + 1 = \frac{-(t_0+t_1)(t_0-t_2)(t_1-t_2)F_{reson}^6(t_0,t_1,t_2)}{4t_0t_1t_2ba} $$
which vanishes either due to the trivial resonances or to due to the energy constraint. 
\\

\section{A first look at the kinetic wave equation} \label{section_first_look}
In this section, we shall present the kinetic wave equation and its main properties. The aim of this section is to present the structure of the equation while remaining at a formal or semi-rigorous level; hard analysis and theorems will be postponed till the next section.

Recall that the kinetic wave equation is given by
\begin{equation}
\tag{KWE} \label{KWE}
\partial_t f + \omega'(\xi) \partial_x f = \mathcal{C}(f), \qquad f = f(t,x,\xi), \qquad (x,\xi) \in \mathbb{T} \times \mathbb{T}.
\end{equation}
with the collision operator
$$
\mathcal{C}(f)(\xi) = \int K_{0,1,2,3} \left[f_1 f_2 f_3 + f_0 f_2 f_3 - f_0 f_1 f_2 - f_0 f_1 f_3 \right] \delta(\Omega_{0,1,2,3}) \delta (\Sigma_{0,1,2,3}) \dd \xi_{1,2,3},
$$
where we denote

\begin{align*}
& \Sigma_{0,1,2,3} = \xi_0 + \xi - \xi_2 - \xi_3, \qquad \Omega_{0,1,2,3} = \omega_0 + \omega_1 - \omega_2 - \omega_3,\\
& f_i = f(t,x,\xi_i), \qquad \dd \xi_{123} = \dd \xi_1 \dd \xi_2 \dd \xi_3.
\end{align*}
The homogeneous problem corresponds to the particular case where $f$ does not depend on the $x$ variable, but only on the frequency $\xi$.

\subsection{Making sense of the collision operator} 
\label{section_makingsense} This is obviously the first thing to check: does the integral defining the collision operator make sense? It is not obvious since this definition involves the product of two Dirac $\delta$ functions, which can fail to make sense if the supports of these $\delta$ functions are tangent.

\subsubsection{A more explicit formula}
Let us consider for simplicity the case where the resonant manifold $\mathscr{R}_{\operatorname{eff}}$ can be parameterized with the help of a function $h$ as
$$
\mathscr{R}_{\operatorname{eff}} = \{ (\xi_0, \, h(\xi_0,\xi_2), \,\xi_2,\,\xi_0+ h(\xi_0,\xi_2)), \, (\xi_0,\xi_2) \in \mathbb{T}^2 \}
$$
In that case, we claim that the collision operator can be written as 
\begin{equation}\label{collisionformula_h}
\mathcal{C}(f) (\xi) = \int K_{0,1,2,3} \left[f_1 f_2 f_3 + f_0 f_2 f_3 - f_0 f_1 f_2 - f_0 f_1 f_3 \right] \frac{1}{|\omega'_1 - \omega_3'|} \dd \xi_2,
\end{equation}
where it is understood in the above expression that
\begin{equation}
\xi_1 = h(\xi_0,\xi_2), \qquad \xi_3 = \xi_0 + h(\xi_0,\xi_2) - \xi_2
\end{equation}
and we used the notation
$$
\omega'_i = \omega'(\xi_i).
$$
Recall that the set $\mathscr{R}^{+,--}$ where $\Omega_{0,1,2,3}$ and $\Omega_{0,1,2,3} =0$ vanish can be written as the union of trivial $\mathscr{R}_{\operatorname{eff}}$ and effective $\mathscr{R}_{\operatorname{eff}}$ resonances. Therefore, two things are needed to justify \eqref{collisionformula_h}: first, that the contribution of trivial resonances is zero, and second, that the contribution of effective resonances gives this formula.

\begin{itemize}
\item \underline{Vanishing of trivial resonances.} Let us consider for instance the trivial resonances given by $\xi_2 = \xi_0$, $\xi_1 = \xi_3$. First of all, we eliminate $\delta(\Omega_{0,1,2,3})$ by setting $\xi_3 = \xi_0+\xi_1 - \xi_2$. Then we set $\xi_2 = \xi_0 +y$ so that
$$
\Omega_{0,1,2,3} = \omega(\xi_0) + \omega(\xi_1) - \omega(\xi_0 + y) - \omega(\xi_1 -y) = G_{\xi_0,\xi_1}(y).
$$
The contribution of the trivial resonances to the collision operator is
$$
\mathcal{C}_{\operatorname{triv}}[f](\xi) = \int K_{0,1,2,3}[f_1 f_2 f_3 + f_0 f_2 f_3 - f_0 f_1 f_2 - f_0 f_1 f_3] \delta(G_{\xi_0,\xi_1}(y)) \dd y \dd \xi_2.
$$
Since $G_{\xi_0,\xi_1}(0) =0$ and $G_{\xi_0,\xi_1}'(0) = \omega'(\xi_1) - \omega'(\xi_0)$, this becomes
$$
\int K_{0,1,2,3}[f_1 f_2 f_3 + f_0 f_2 f_3 - f_0 f_1 f_2 - f_0 f_1 f_3] \frac{1}{|G_{\xi_0,\xi_1}'(0)|} \delta(y) \dd y \dd \xi_2.
$$
Finally, since $f_1 f_2 f_3 + f_0 f_2 f_3 - f_0 f_1 f_2 - f_0 f_1 f_3$ vanishes for $y=0$, the above is zero! (we glossed over the possible zeros of $G_{\xi_0,\xi_1}'(0)$).

\item \underline{Parameterization of effective resonances.} Here too, we eliminate $\xi_3$ and view $\Omega_{0,1,2,3}$ as a function of $\xi_1$:
$$
\Omega_{0,1,2,3} = \omega(\xi) + \omega(\xi_1) - \omega(\xi_2) - \omega(\xi + \xi_1 - \xi_2 - \xi_3) = F_{\xi_0,\xi_2}(\xi_1).
$$
With this definition, we have that 
$$
F_{\xi_0,\xi_2}(h(\xi_0,\xi_2)) = 0 \quad \mbox{and} \quad \frac{\dd}{\dd \xi_1} F_{\xi_0,\xi_2}(h(\xi_0,\xi_2)) = \omega'(\xi_1) - \omega'(\xi_3).
$$
Therefore,
$$
\delta(\Omega_{0,1,2,3}) \dd \xi_1 = \frac{1}{|\omega'(\xi_1) - \omega'(\xi_3)|} \delta(\xi_1 - h(\xi_0,\xi_2)) \dd \xi_1.
$$
\end{itemize}

\subsubsection{Convergence of the integral} The most obvious way to define the integral \eqref{collisionformula_h} occurs when it is absolutely converging, or in other words when the integrand is integrable as a function of $\xi_2$ (this corresponds in the terminology of the Boltzmann equation to the case "with cut-off" \cite{Villani}). 
We will distinguish two cases.

\medskip 

\noindent \underline{The $FPUT$ case.} Then it is possible to use the explicit formula \eqref{formula_h} for $h$ to show that $\omega_1'-\omega_3'$ only vanishes when one of the frequencies is zero. This singularity can then be absorbed in the kernel $K$ to show that the integral is well-defined. This was first observed in \cite{LukkarinenSpohn2008}, and we shall come back to that point in Section \ref{section_lwp}.

\medskip
\noindent \underline{The general case.} We will argue that $\omega_1'-\omega_3'$ does vanish in the general case, and as a consequence the integrand in \eqref{collisionformula_h} is not integrable.

Indeed, viewing the problem in the coordinates $(\xi_0,\xi_1,\xi_2)$, the resonant manifold is a smooth $2$-dimensional manifold, and the function $\omega'(\xi_1) - \omega'(\xi_3)$ also vanishes on a smooth $2$-dimensional manifold. Such structures do generically intersect! The validity of this heuristic argument can be checked numerically in the case of nearest-neighbor interaction with $\delta \in (0,\frac 12)$.

This means that generically, the collision operator is of "non cutoff" type, to use the Boltzmann terminology. Note that such a situation was succesfully dealt with for the one-dimensional MMT model \cite{GermainLaZhang}.

\subsection{Monotonic quantities} 
\label{section_monotonic} The classical symmetrization procedure gives that
\begin{align*}
\frac{\dd}{\dd t} \int \varphi(\xi) f(t,x,\xi) \dd x \dd \xi
& = \int \varphi(\xi) \mathcal{C}(f)(\xi) \dd x \dd \xi \\
= \int K_{0,1,2,3}& f_1 f_2 f_3 \left[ \varphi_0 + \varphi_1 - \varphi_2 - \varphi_3 \right] \delta(\Omega_{0,1,2,3}) \delta (\Sigma_{0,1,2,3}) \dd \xi_{0,1,2,3} \dd x.
\end{align*}
This gives immediately the conservation of the \textit{mass} $\mathcal{M}$ and \textit{energy} $\mathcal{E}$
$$
\mathcal{M}[f] = \int f(x,\xi) \dd x \dd \xi \qquad \mbox{and} \qquad \mathcal{E}[f] = \int \omega(\xi) f(x,\xi) \dd x \dd \xi.
$$
While the conservation of the energy $\mathcal{E}$ can be seen as a consequence of the conservation of the Hamiltonian at the microscopic level, the mass $\mathcal{M}$ is not tied to a microscopically conserved quantity. As a consequence, the conservation of mass is not expected, even approximately, for the microscopic system beyond kinetic time scales; we shall come back to this point.

Considering now functions of $f$, we obtain after symmetrizing twice
\begin{equation}
\label{formulaF}
\begin{split}
& \frac{\dd}{\dd t} \int F(f(t,x,\xi)) \dd x \dd \xi =  \int F'(f(t,x,\xi)) \mathcal{C}(f) \dd x \dd \xi \\
& \quad = \int K_{0,1,2,3} f_0 f_1 f_2 f_3 \left[ \frac{1}{f_0} + \frac{1}{f_1} - \frac{1}{f_2} - \frac{1}{f_3} \right] \times \\
& \qquad  \qquad \qquad \qquad \left[ F'(f_0) + F'(f_1) - F'(f_2) - F'(f_3)  \right] \delta(\Omega_{0,1,2,3}) \delta (\Sigma_{0,1,2,3}) \dd \xi_{0,1,2,3} \dd x.
\end{split}
\end{equation}
Choosing $F(f) = \log f$ gives the \textit{entropy}
$$
S[f] = \int \log f(x,\xi) \dd x \dd \xi.
$$
The $H$-theorem states that the entropy is non-decreasing
$$
\frac{\dd}{\dd t} S[f(t)] \geq 0, \qquad S[f] = \int \log f(x,\xi) \dd x \dd \xi;
$$
it follows from the formula for the entropy dissipation given by \eqref{formulaF} specialized to the case $F(f) = \log f$
\begin{equation}
\label{entropydissipation}
\frac{\dd}{\dd t} S[f(t)]
= \int K_{0,1,2,3} f_0 f_1 f_2 f_3 \left| \frac{1}{f_0} + \frac{1}{f_1} - \frac{1}{f_2} - \frac{1}{f_3} \right|^2 \delta(\Omega_{0,1,2,3}) \delta (\Sigma_{0,1,2,3}) \dd \xi_{1,2,3} \dd x \geq 0.
\end{equation}

The form of the entropy would seem surprising to the reader used to the Boltzmann equation if we had not seen the explanation in Section \ref{micro_to_meso_entropy}. 

\subsection{Rayleigh-Jeans (RJ) states} 
\label{subsection_stationary}
For any $\beta,\gamma \in \mathbb{R}$ such that $\beta + \gamma \omega(\xi) \geq 0$ for any $\xi$, the Rayleigh-Jeans (for short RJ) solution
$$
\mathfrak{f}_{\beta,\gamma}(\xi) = \frac{1}{\beta \omega(\xi) + \gamma}
$$
is a stationary solution of \eqref{KWE} since (denoting simply $\mathfrak{f} = \mathfrak{f}_{\beta,\gamma}$)
$$
\frac{1}{\mathfrak{f}_0} + \frac{1}{\mathfrak{f}_1} - \frac{1}{\mathfrak{f}_2} - \frac{1}{\mathfrak{f}_3} = 0 \qquad \mbox{if \;\;$\Omega_{0,1,2,3}= \Sigma_{0,1,2,3} = 0$}.
$$
The condition $\mathfrak{f}_{\beta,\gamma}(\xi) \geq 0$ for all $\xi$ gives the allowed range for $\beta$ and $\gamma$, namely
\begin{equation}
\label{rangebetagamma}
\mbox{either} \;\; \begin{cases} \beta \geq 0 \\ \gamma \geq - \beta m \end{cases} \quad \mbox{or}  \;\; \begin{cases} \beta \leq 0 \\ \gamma \geq - \beta M,  \end{cases} \qquad m = \min_{\mathbb{T}} \omega, \quad M= \max_{\mathbb{T}} \omega.
\end{equation}

We saw in Section \ref{Gibbs_to_RJ} that these stationary solutions can be understood as the limit of the Gibbs invariant measure if $\gamma =0$. This restriction comes from the fact which was already noted above that the conservation of mass $\mathcal{M}$ cannot be related to a microscopic conservation law, as opposed to the energy $\mathcal{E}$.

Given the conservation of mass and energy and the decay of entropy, it is natural to ask whether these solutions are the minimizers of the entropy for given mass and energy. We will prove in Section \ref{section_rigorous_stationary} that
$$
\min_{\substack{\mathcal{M}[f] = \mathcal{M}_0 \\ \mathcal{E}[f] = \mathcal{E}_0}} S[f] = S[\mathfrak{f}_{\beta,\gamma}] \quad \mbox{for $\beta,\gamma$ such that $\mathcal{M}[\mathfrak{f}_{\beta,\gamma}] = \mathcal{M}_0$ and $\mathcal{E}[\mathfrak{f}_{\beta,\gamma}] = \mathcal{E}_0$},
$$
at least for smooth $\omega$. Because of this minimizing property, RJ states are expected to be stable; see Section \ref{section_rigorous_FPU} for a rigorous proof of this fact in the case of the (FPUT) model. 

Finally, denoting
$$
\mathcal{E}(\beta,\gamma) = \mathcal{E}[\mathfrak{f}_{\beta,\gamma}], \quad \mathcal{M}(\beta,\gamma) = \mathcal{M}[\mathfrak{f}_{\beta,\gamma}], \quad S(\beta,\gamma) = S[\mathfrak{f}_{\beta,\gamma}],
$$
we claim that
\begin{equation}
\label{SEMbetagamma}
\left. \frac{\partial S}{\partial \mathcal{E}} \right|_{\mathcal{M} = \operatorname{cst}} = \beta, \quad \left. \frac{\partial S}{\partial \mathcal{M}} \right|_{\mathcal{E} = \operatorname{cst}} = \gamma.
\end{equation}
In classical thermodynamics, the \textit{temperature} $T$ and the \textit{chemical potential} $\mu$ are defined through the relations $\frac{1}{T} =  \frac{\partial S}{\partial \mathcal{E}}$ and $\frac{\mu}{T} = -\frac{\partial S}{\partial \mathcal{M}}$. Comparing with the above, we see that
$$
\beta = \frac{1}{T}, \qquad \gamma = - \frac{\mu}{T}.
$$
Comparing with the allowed range for $\beta$ and $\gamma$ in \eqref{rangebetagamma}, we see that the temperature as well as the chemical potential may take positive and negative values, as observed in \cite{ODPPBR}. It is surprising, but it does not seem that states with negative temperatures behave differently (for instance, as far as their dynamical stability goes) from states with positive temperature.

There remains to prove \eqref{SEMbetagamma}. In the following, derivative with respect to $\beta$ or $\gamma$ is always understood with $\gamma$ or $\beta$ fixed respectively; and the same goes for $\mathcal{E}$ and $\mathcal{M}$. Differentiating the entropy $S$ with respect to $\beta$ and $\gamma$ gives
\begin{equation}
\label{pouledeau1}
\begin{split}
& \frac{\partial S}{\partial \beta} = \frac{\partial}{\partial \beta} \int \log \frac{1}{\beta \omega + \gamma} \dd \xi = - \int \frac{\omega}{\beta \omega + \gamma} \dd \xi = - \mathcal{E}\\
& \frac{\partial S}{\partial \gamma} = \frac{\partial}{\partial \gamma} \int \log \frac{1}{\beta \omega + \gamma} \dd \xi = - \int \frac{1}{\beta \omega + \gamma} \dd \xi = - \mathcal{M}.
\end{split}
\end{equation}
We may also differentiate the identity $\beta \mathcal{E} + \gamma \mathcal{M} = 2\pi$ with respect to $\mathcal{E}$ and $\mathcal{M}$ to obtain
\begin{equation}
\label{pouledeau2}
\frac{\partial \beta}{\partial \mathcal E} \mathcal E + \beta + \frac{\partial \gamma}{\partial \mathcal E} \mathcal M = 0 \quad \mbox{and} \quad \frac{\partial \beta}{\partial \mathcal M} \mathcal E + \frac{\partial \gamma}{\partial \mathcal M} \mathcal M + \gamma = 0.
\end{equation}
Finally, we differentiate $S$ with respect to $\mathcal E$ and $\beta$ and use successively \eqref{pouledeau1} and \eqref{pouledeau2}
\begin{align*}
& \frac{\partial S}{\partial \mathcal E} =  \frac{\partial S}{\partial \beta}  \frac{\partial \beta}{\partial \mathcal E} + \frac{\partial S}{\partial \gamma}  \frac{\partial \gamma}{\partial \mathcal E} 
= - \mathcal{E} \frac{\partial \beta}{\partial \mathcal E} - \mathcal{M} \frac{\partial \gamma}{\partial \mathcal E} = \beta \\
& \frac{\partial S}{\partial \mathcal M} =  \frac{\partial S}{\partial \beta}  \frac{\partial \beta}{\partial \mathcal M} + \frac{\partial S}{\partial \gamma}  \frac{\partial \gamma}{\partial \mathcal M} 
= - \mathcal{E} \frac{\partial \beta}{\partial \mathcal M} - \mathcal{M} \frac{\partial \gamma}{\partial \mathcal M} = \gamma. 
\end{align*}

\subsection{Linearization around Rayleigh-Jeans solutions} The RJ equilibria are important as attractors of the flow of \eqref{KWE} and as the basic states around which the hydrodynamic limit - which will be discussed in Section \ref{sec:Hydro_limit} - can be built. For these reasons, it is crucial to understand the local behavior around RJ states which is given to leading order by the linearization of \eqref{KWE}.

\subsubsection{The formula for the linearized operator}
It is helpful to adopt the following ansatz
$$
f = \mathfrak{f} (1 +g),
$$
which leads to the following equation for $g$
\begin{align*}
& \partial_t (\mathfrak{f} g) = 
\int K_{0,1,2,3} \left[ \mathfrak{f}_1  \mathfrak{f}_2 \mathfrak{f}_3 (1+g_1)(1+g_2)(1+g_3) + \mathfrak{f}_0  \mathfrak{f}_2 \mathfrak{f}_3 (1+g_0)(1+g_2)(1+g_3) \right. \\ 
& \qquad \qquad \qquad \qquad - \left. \mathfrak{f}_0  \mathfrak{f}_1 \mathfrak{f}_2 (1+g_0)(1+g_1)(1+g_2) - \mathfrak{f}_0  \mathfrak{f}_1 \mathfrak{f}_3 (1+g_0)(1+g_1)(1+g_3) \right] \\
& \qquad \qquad \qquad \qquad \qquad\qquad \qquad \qquad \qquad\qquad \qquad \qquad \qquad \delta(\Sigma_{0,1,2,3}) \delta(\Omega_{0,1,2,3}) \dd \xi_{1,2,3}.
\end{align*}
Linearizing around $g=0$ gives
\begin{align*}
\partial_t g & = \frac{1}{\mathfrak{f}} \int K_{0,1,2,3} \left[ \mathfrak{f}_1  \mathfrak{f}_2 \mathfrak{f}_3 (g_1 + g_2 + g_3) +  \mathfrak{f}_0  \mathfrak{f}_2 \mathfrak{f}_3 (g_0 + g_2 + g_3) \right. \\
& \qquad \qquad \qquad \left. - \mathfrak{f}_0  \mathfrak{f}_1 \mathfrak{f}_2 (g_0 + g_1 + g_2) - \mathfrak{f}_0  \mathfrak{f}_1 \mathfrak{f}_3 (g_0 + g_1 + g_3)\right]  \delta(\Sigma_{0,1,2,3}) \delta(\Omega_{0,1,2,3}) \dd \xi_{1,2,3} \\
& = \frac{1}{\mathfrak{f}} \int K_{0,1,2,3} \mathfrak{f}_0 \mathfrak{f}_1  \mathfrak{f}_2 \mathfrak{f}_3
\left[ g_0 \left( \frac{1}{\mathfrak{f}_1} - \frac{1}{\mathfrak{f}_2} - \frac{1}{\mathfrak{f}_3} \right) + g_1 \left( \frac{1}{\mathfrak{f}_0} - \frac{1}{\mathfrak{f}_2} - \frac{1}{\mathfrak{f}_3} \right) \right. \\
& \qquad \qquad  \qquad \left.+g_2 \left( \frac{1}{\mathfrak{f}_0} + \frac{1}{\mathfrak{f}_1} - \frac{1}{\mathfrak{f}_3} \right) + g_3 \left( \frac{1}{\mathfrak{f}_0} + \frac{1}{\mathfrak{f}_1} - \frac{1}{\mathfrak{f}_2} \right) \right]  \delta(\Sigma_{0,1,2,3}) \delta(\Omega_{0,1,2,3}) \dd \xi_{1,2,3} \\
& = - \frac{1}{\mathfrak{f}} \int K_{0,1,2,3} \mathfrak{f}_0 \mathfrak{f}_1  \mathfrak{f}_2 \mathfrak{f}_3 \left[ \frac{g_0}{\mathfrak{f}_0} + \frac{g_1}{\mathfrak{f}_1} - \frac{g_2}{\mathfrak{f}_2} - \frac{g_3}{\mathfrak{f}_3} \right]  \delta(\Sigma_{0,1,2,3}) \delta(\Omega_{0,1,2,3}) \dd \xi_{1,2,3},
\end{align*}
where we used in the last line that $\frac{1}{\mathfrak{f}_0} + \frac{1}{\mathfrak{f}_1} - \frac{1}{\mathfrak{f}_2} - \frac{1}{\mathfrak{f}_3}$ vanishes on the resonant manifold $\Sigma = \Omega = 0$. We found the following formula for the linearized operator:
$$
\boxed{L = - \frac{1}{\mathfrak{f}} \int K_{0,1,2,3} \mathfrak{f}_0 \mathfrak{f}_1  \mathfrak{f}_2 \mathfrak{f}_3 \left[ \frac{g_0}{\mathfrak{f}_0} + \frac{g_1}{\mathfrak{f}_1} - \frac{g_2}{\mathfrak{f}_2} - \frac{g_3}{\mathfrak{f}_3} \right]  \delta(\Sigma_{0,1,2,3}) \delta(\Omega_{0,1,2,3}) \dd \xi_{1,2,3}}.
$$

\subsubsection{First properties of the linearized operator} The two most important properties of the linearized operator are as follows.
\begin{itemize}
\item \underline{Kernel} It follows immediately from the formula above and the definition of the resonant manifold that
$$
\operatorname{Span}(\mathfrak{f},\mathfrak{f}\omega) \subset \operatorname{Ker} L.
$$
It is expected in general that equality holds in this equation, at lest generically in $\omega$, but it is by no means obvious! We shall come back to that problem.

\medskip
\item \underline{Non-positivity} Taking the inner product of $L$ with $g$ and symmetrizing gives the identity
\begin{align*}
\langle L g,g \rangle = - \int K_{0,1,2,3} \mathfrak{f}_0 \mathfrak{f}_1  \mathfrak{f}_2 \mathfrak{f}_3 \left[ \frac{g_0}{\mathfrak{f}_0} + \frac{g_1}{\mathfrak{f}_1} - \frac{g_2}{\mathfrak{f}_2} - \frac{g_3}{\mathfrak{f}_3} \right]^2  \delta(\Sigma_{0,1,2,3}) \delta(\Omega_{0,1,2,3}) \dd \xi_{0,1,2,3} \leq 0.
\end{align*}
\end{itemize}

Besides these two properties, the structure of $L$ depends heavily on the dispersion relation $\omega$ and the kernel $K$. Consider such questions as: is $L$ bounded on $L^2$ or other Lebesgue or Sobolev spaces? Is $L^{-1}$ bounded on these same spaces? These questions are crucial to understand the local and global behavior of \eqref{KWE}, as well as its kinetic limit. The answers are only known in the case of FPU, and will be discussed in the following; in general, this is a difficult problem and the global picture is still eluding us.

\subsection{Narrowly localized functions}
\subsubsection{Smooth functions with small compact support}
For $a \in \mathbb{T}$, we define the set
$$
S_a = \{ (\eta,a,a,a), \, \eta \in \mathbb{T} \} \,\cup\, \{ (a,\eta,a,a), \, \eta \in \mathbb{T} \} \,\cup\, \{ (a,a,\eta,a), \, \eta \in \mathbb{T} \} \,\cup\, \{ (a,a,a,\eta), \, \eta \in \mathbb{T} \}.
$$
and its $r$-neighborhood $S_{a,r}$.

From looking at pictures of the resonant set, or by elementary arguments, we see that
$$
S_a \cap \mathcal{R}_{\operatorname{eff}} = \{ a,a,a,a \} \cap \mathscr{R}_{\operatorname{eff}} = \emptyset
$$
for all but exceptional values of $a \in \mathbb{T}$. By continuity, there holds for $r>0$ sufficiently small
$$
S_{a,r} \cap \mathscr{R}_{\operatorname{eff}} = \emptyset.
$$
Assuming that $f$ is supported on $B(a,r)$, this implies that
$$
f_1 f_2 f_3 + f_0 f_2 f_3 - f_0 f_1 f_2 - f_0 f_1 f_3 = 0 \qquad \mbox{on $\mathscr{R}_{\operatorname{eff}}$}
$$
which implies in turn that
$$
\mathcal{C}(f) = 0.
$$
In other words, $f$ is a stationary solution! We just showed that any function $f$ is a stationary solution as long as it is localized in a sufficiently small ball around $a \in \mathbb{T}$, a few exceptional values of $a$ being excluded.

Thus, the set of stationary solutions of \eqref{KWE} is extremely rich! How can this be compatible with the relaxation towards a statistical equilibrium which is expected from a kinetic equation with an $H$-theorem? The point is that the stationary solutions we found are zero on most of $\mathbb{T}$; this makes the collision operator identically zero, and prevents relaxation. But we do expect relaxation to occur as soon as $\inf_{\mathbb{T}} f >0$.

\subsubsection{Dirac $\delta$ solutions?} Since $\delta(\xi-a)$, for $a \in \mathbb{T}$, can be envisioned as limiting cases of localized solutions, it seems reasonable to define them as stationary solutions. 

As we shall see in Section \ref{section_rigorous_stationary}, Dirac deltas also occur in generalized maximizers of the emtropy for singular dispersion relations, and it seems natural to accept these maximizers as stationary solutions.

However, it is unclear at this stage whether these Dirac delta solutions have any stability and can be given an analytic meaning, beyond declaring them to be solutions by fiat.

\subsection{Boundary conditions}
Besides the torus $\mathbb{T}$, another natural domain where \eqref{KWE} can be set is the line. Going further, it is natural to try and set \eqref{KWE} on the interval $[0,1]$, and the crucial question becomes: how should one choose the boundary conditions?
By analogy with the Boltzmann equation (see for instance \cite{Villani}), the following possibilities are natural.

\medskip

\noindent \underline{Specular reflection.} This means that phonons are simply bouncing at both ends, $x=0,1$
\begin{align*}
& f(t,x=0,\xi) = f(t,x=0,-\xi) \quad \forall t,\xi \\
& f(t,x=1,\xi) = f(t,x=1,-\xi)\quad \forall t,\xi.
\end{align*}
This ensures the conservation of the mass and energy as well as the decay of the entropy.

\medskip
\noindent \underline{Absorption and emission.} By this we mean the following: at both ends of the interval, incoming phonons are absorbed while outgoing phonons are emitted at a fixed rate
\begin{align*}
& f(t,x=0,\xi) = \frac{T_0}{{\omega(\xi)}} \quad \mbox{if $\omega'(\xi)>0$} \\
& f(t,x=1,\xi) = \frac{T_1}{{\omega(\xi)}} \quad \mbox{if $\omega'(\xi) <0$},
\end{align*}
with $T_0,T_1 > 0$. Physically, the system is connected to heat baths at both ends of the interval, with temperatures $T_0$ and $T_1$ respectively. With this boundary condition, the conservation of mass and energy as well as the decay of the entropy are lost, as should be expected for an open system. Let us finally remark that the emission profile we chose corresponds to RJ equilibria with energy equidistribution. We saw that these RJ solutions are naturally singled out, but other choices would certainly be possible.

\medskip
\noindent \underline{Convex combination.} As the name suggests, this means that
\begin{align*}
& f(t,x=0,\xi) = \theta(\xi) f(t,x=0,-\xi) + (1-\theta(\xi)) \frac{T_0}{\sqrt{\omega(\xi)}} \quad \mbox{if $\omega'(\xi)>0$} \\
& f(t,x=1,\xi) = \theta(\xi) f(t,x=0,-\xi) + (1-\theta(\xi)) \frac{T_1}{\sqrt{\omega(\xi)}} \quad \mbox{if $\omega'(\xi) <0$},
\end{align*}
where $T_0,T_1 > 0$ and $0 < \theta(\xi) <1$

\medskip

These choices of boundary conditions are natural in view of the corresponding theory of the Boltzmann equation. The most convincing justification of these boundary conditions would be the derivation from microscopic dynamics, but it is a very delicate question. First because of the very technical mathematical and physical considerations involved; and second since there is no canonical way of setting boundary conditions for the Hamiltonian system (or of coupling it to a heat bath) so that the answer might not be universal, but rather depend on the specifics of the microscopic boundary condition.

There is however one case where a derivation from microscopic dynamics could be carried out: for the linear system on the line coupled to a heat bath at a point \cite{KORS20}. The findings of that paper correspond to the case 'convex combination' above, giving this possibility further credibility.

\section{Rigorous results on stationary solutions}

\label{section_rigorous_stationary}

This section exposes the state of the art of the mathematical theory of stationary solutions of \eqref{KWE}. Following \cite{EGLM25}, we completely characterize the minimizers of the entropy for fixed mass and energy: these stationary solutions are expected to act as global attractors. Next, we turn to the question of collisional invariants, which are intimately related to stationary solutions. These collisional invariants are only known in the case of FPUT; we present the proof due to \cite{LukkarinenSpohn2008}.

\subsection{Entropy maximizers} We rely here on \cite{EGLM25}.
We saw in Section \ref{section_first_look} that the mass $\mathcal{M}$ and energy $\mathcal{E}$ are conserved while the entropy $S$ is increasing. It is therefore natural 
to look for the maximizers of the entropy for fixed mass and entropy, which are expected to act as attractors of the flow.
\underline{We will henceforth assume} the dispersion relation $\omega$ to be continuous and taking its maximal and minimal value on a set of measure zero - both assumptions could be easily relaxed, but they allow for cleaner statements and proofs.

\subsubsection{RJ states with prescribed mass and energy}
Since $\omega \geq 0$ and by the homogeneity of the problem, we will furthermore assume that 
\begin{equation}
\label{a1}
\min_{\mathbb{T}} \omega = a \geq 0, \qquad \max_{\mathbb{T}} \omega = 1.
\end{equation}
It follows from (\ref{a1}) that for any non negative bounded measure $f$ on  $\mathbb{T}$, 
\begin{equation}
\label{admissible}
a\mathcal M(f)\le \mathcal E(f)\leq \mathcal M(f)
\end{equation}
and this is then a necessary condition on any pair $\mathcal M, \mathcal E$ to be the mass and energy of some non negative, bounded measure $\lambda$.

The Euler-Lagrange equation for the constrained maximization of the entropy with fixed mass and energy leads to the RJ equilibria $\mathfrak{f}_{\beta,\gamma}$ which we already encountered
\begin{equation}
\label{RJBE}
\mathfrak{f}_{\beta,\gamma}(\xi) 
= \frac{1}{\beta \omega(\xi) + \gamma}.
\end{equation}
We have $\mathfrak{f}_{\beta,\gamma}(\xi) \geq 0$ if and only if
\begin{equation}
\label{rangemunu}
\mbox{either} \;\; \begin{cases} \gamma \geq 0 \\ \beta \geq - \gamma \end{cases} \qquad \mbox{or}  \;\; \begin{cases} \gamma \leq 0 \\ \beta \geq - \frac \gamma a \end{cases}.
\end{equation}

THe first question we want to answer is to characterize the couples $(\mathcal{M},\mathcal{E})$ such that there exists a RJ state with this mass and energy. A necessary condition is given by \eqref{admissible} above. But it is not sufficient, as we shall now see: let
$$
c_1 = a + 2\pi \left( \int \frac{\dd \xi}{\omega - a} \right)^{-1} > a, \qquad c_2 =  1 - 2\pi \left( \int \frac{\dd \xi}{1- \omega} \right)^{-1} < 1.
$$
Observe that $c_1 = a$ and $c_2 =1$ provided $\omega$ is smooth; but for instance long-tail interactions can result in singular $\omega$ for which $c_1>a$.
To know whether there exists a RJ equilibrium with given mass and energy, it suffices to compare the quotient of the mass and energy to the thresholds $c_1$ and $c_2$. This is the content of the following theorem.
\begin{theorem} \label{Theorem1}
Let $\mathcal{M}_0, \mathcal{E}_0 \in (0,\infty)$ be given. The existence of RJ equilibria $\mathfrak{f}_{\beta,\gamma}$ satisfying 
\begin{equation}
\label{M0E0}
\mathcal{M}(\mathfrak{f}_{\beta,\gamma}) = \mathcal{M}_0, \qquad \mathcal{E}(\mathfrak{f}_{\beta,\gamma}) = \mathcal{E}_0, 
\end{equation}
is characterized as follows:
\begin{itemize}
\item[(i)] If $ \displaystyle
c_1 < \frac{\mathcal{E}_0}{\mathcal{M}_0} < c_2$, then for any $\mathcal{M}_0, \mathcal{E}_0 > 0$, there exists a unique RJ equilibrium satisfying \eqref{M0E0}.
\item[(ii)] If $\displaystyle \frac{\mathcal{E}_0}{\mathcal{M}_0} = c_2 <1$, there exists a unique RJ equilibrium satisfying \eqref{M0E0}. It is furthermore such that $\beta = -\gamma$, $\gamma > 0$.
\item[(iii)] If $\displaystyle \frac{\mathcal{E}_0}{\mathcal{M}_0} = c_1 >a$, there exists a unique RJ equilibrium satisfying \eqref{M0E0}. It is furthermore such that $\gamma = -\beta a$, $\beta > 0$.
\item[(iv)] If $\displaystyle \frac{\mathcal{E}_0}{\mathcal{M}_0} > c_2$ or $\displaystyle \frac{\mathcal{E}_0}{\mathcal{M}_0} < c_1$, there does not exist a RJ equilibrium satisfying \eqref{M0E0}.
\end{itemize}
\end{theorem}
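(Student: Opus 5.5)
The plan is to reduce the two-parameter problem to a one-parameter one using homogeneity, and then to study a single monotone function. The RJ family is homogeneous: $\mathfrak{f}_{\lambda\beta,\lambda\gamma} = \lambda^{-1}\mathfrak{f}_{\beta,\gamma}$. Hence $\mathcal M$ and $\mathcal E$ both scale by $\lambda^{-1}$, and the ratio $\mathcal E/\mathcal M$ depends only on the direction of $(\beta,\gamma)$. Once we find a direction with the prescribed ratio $r=\mathcal E_0/\mathcal M_0$, the unique $\lambda>0$ with $\lambda^{-1}\mathcal M = \mathcal M_0$ fixes the mass. So existence and uniqueness for \eqref{M0E0} is equivalent to existence and uniqueness of an admissible direction with ratio $r$.

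I will parametrize the admissible directions by the point where the pole of $\mathfrak f$ sits. Write $\mathfrak f = \frac{1}{\beta(\omega - s)}$, so that $s=-\gamma/\beta$. In the admissible range \eqref{rangemunu}, $\beta>0$ corresponds to $s\le a$, and $\beta<0$ corresponds to $s\ge 1$. The case $\beta=0$, where $\mathfrak f$ is constant with ratio $\frac{1}{2\pi}\int\omega$, is the point "$s=\infty$" joining the two branches. Along this path the ratio is
$$
R(s) = \frac{\int \frac{\omega}{\omega-s}\,\dd\xi}{\int\frac{1}{\omega-s}\,\dd\xi} = s + \frac{2\pi}{\int\frac{\dd\xi}{\omega-s}}.
$$
This formula is valid on both branches, since the sign of $\beta$ cancels. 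One could equivalently write $R(s)$ as the mean of $\omega$ under the probability measure proportional to $\frac{\dd\xi}{|\omega-s|}$.

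The key step is to show that $R$ is strictly monotone along the path $s\in(-\infty,a]\to\infty\to[1,\infty)$. Differentiating, $R'(s)$ equals the variance of $\omega$ under that probability measure, up to a positive factor and a sign depending on the branch. This variance is strictly positive because $\omega$ is not constant. So on $(-\infty,a]$, $R$ decreases from $\frac{1}{2\pi}\int\omega$ (as $s\to-\infty$) to $R(a)=c_1$. On $[1,\infty)$, $R$ increases as $s$ decreases from $\infty$ towards $1$, going from $\frac{1}{2\pi}\int\omega$ to $R(1)=c_2$. Concretely, $R$ is a continuous strictly monotone bijection from the path onto $[c_1,c_2]$.

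The endpoints need care, and I expect them to be the main obstacle. At $s=a$ (respectively $s=1$), the integral $\int\frac{\dd\xi}{\omega-a}$ may be finite or infinite. If it is infinite, $c_1=a$, the endpoint is not attained, and this is consistent with case (iii) requiring $c_1>a$. If it is finite, the endpoint is attained by the admissible RJ state with $\gamma=-\beta a$, $\beta>0$, whose mass $\int\frac{\dd\xi}{\beta(\omega-a)}$ is finite. This gives (iii), and symmetrically (ii) with $\gamma>0$, $\beta=-\gamma$. Continuity of $R$ up to the endpoints follows from monotone convergence, together with the hypothesis that the extrema are attained on a null set. Finally, (iv) follows because every admissible RJ state has ratio in the range of $R$, which is contained in $[c_1,c_2]$. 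Uniqueness in (i)–(iii) follows from strict monotonicity together with the scaling argument.
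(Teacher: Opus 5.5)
Your proposal is correct and is essentially the paper's proof. The paper writes $(\beta,\gamma)=\rho(\cos\varphi,\sin\varphi)$, eliminates $\rho$, and studies $F(x)=2\pi\bigl(\int\frac{\dd\xi}{\omega+x}\bigr)^{-1}-x$ with $x=\tan\varphi=\gamma/\beta$, which is your $R$ at $s=-x$; it proves monotonicity by Jensen and evaluates the same limits at $x=-a,-1,\pm\infty$, while your explicit inclusion of $\beta=0$ and your monotone-convergence argument at the endpoints tidy up points the paper passes over quickly. One small correction: $R'(s)=1-\frac{2\pi\int(\omega-s)^{-2}\dd\xi}{(\int(\omega-s)^{-1}\dd\xi)^{2}}$ is minus a positive multiple of the variance of $\omega$ under the probability measure proportional to $(\omega-s)^{-2}\dd\xi$, not $|\omega-s|^{-1}\dd\xi$, and it is negative on both branches rather than having a branch-dependent sign, which is exactly what your monotonicity description then uses.
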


\begin{figure}
\label{figureEM}
\begin{tikzpicture}[>=stealth,scale=1.0]

\def\a{0.1}   
\def\b{0.8}   
\def\alph{0.4} 

\draw[->] (-1,0) -- (10,0) node[right] {$\mathcal{M}$};
\draw[->] (0,-1) -- (0,8.5) node[above] {$\mathcal{E}$};

\draw[thick,black] (0,0) -- (8,8) node[anchor=west] {$\mathcal{E} = \mathcal{M}$};
\draw[thick,black] (0,0) -- (9,{9*\a}) node[anchor=west] {$\mathcal{E} = a\mathcal{M}$};

\begin{scope}
    \clip (0,0) -- plot[smooth,domain=0:10,samples=200] (\x,{(\b*\x)-0.1*sin(deg(0*\x))/sqrt(1+\x)})
          -- plot[smooth,domain=10:0,samples=200] (\x,{(\alph*\x)+0.1*sin(deg(0*\x))/sqrt(1+\x)}) -- cycle;
    \fill[blue!6] (0,0) rectangle (10,8.5);
\end{scope}
\begin{scope}
    \clip (0,0) -- plot[smooth,domain=0:10,samples=200] (\x,{(\b*\x)-0.1*sin(deg(0*\x))/sqrt(1+\x)})
          -- plot[smooth,domain=10:0,samples=200] (\x,{(\x)}) -- cycle;
  \fill[gray!3, dashed] (0,0) rectangle (8,8.5);
\end{scope}
\begin{scope}
    \clip (0,0) -- plot[smooth,domain=0:10,samples=200] (\x,{(\alph*\x)+0.1*sin(deg(3*\x))/sqrt(1+\x)})
         -- plot[smooth,domain=10:0,samples=200] (\x,{(\a*\x)}) -- cycle;
    \fill[gray!3, dashed] (0,0) rectangle (9,8.5);
\end{scope}

\draw[thick,black,smooth,domain=0:10,samples=200] 
    plot (\x,{(\b*\x)-0.1*sin(deg(0*\x))/sqrt(1+\x)}) node[anchor=south west] {{\textbf{$\mathcal{E} = c_2 \mathcal{M}$}}};
\draw[thick,black,smooth,domain=0:10,samples=200] 
    plot (\x,{(\alph*\x)+0.1*sin(deg(0*\x))/sqrt(1+\x)}) node[anchor=north] {\qquad$\mathcal{E} = c_1 \mathcal{M}$};

\draw[] (5.5,{5.5*\alph+0.1}) -- (7.5,{7.5*\alph+0.1}) 
node[midway,below,sloped,rotate=0] 
    {\scriptsize $(\mathcal{M}(\mathfrak{f}_{\beta,\gamma}), \mathcal{E}(\mathfrak{f}_{\beta,\gamma}); \beta = -\gamma)$};
    
\draw[] (5.5,{5.5*\b-0.1}) -- (6.5,{6.5*\b-0.1}) node[midway,above,sloped,rotate=0] 
        {\scriptsize $(\mathcal{M}(\mathfrak{f}_{\beta,\gamma}), \mathcal{E}(\mathfrak{f}_{\beta,\gamma}); \gamma = -a \beta)$};
\end{tikzpicture}
\caption{For $(\mathcal{M}_0, \mathcal{E}_0)$ in the blue region, the entropy maximizer is a regular RJ equilibrium $\mathfrak{f}_{\beta,\gamma}$. In the remaining light gray regions, an additional singular measure is needed to maximize the entropy.}
 \label{fig:Entropy_max_picture}
\end{figure}

\begin{proof} 
We will denote $\mathcal{M}(\beta,\gamma) = \mathcal{M}(\mathfrak{f}_{\beta,\gamma})$. It follows from the identity
\begin{equation}
\label{identitymunu}
\beta \mathcal{E}(\beta,\gamma) + \gamma \mathcal{M}(\beta,\gamma) = 2\pi
\end{equation}
that
$$
\mathcal{E}(\beta,\gamma) = \frac{1}{\beta}(2\pi - \gamma \mathcal{M}(\beta,\gamma)).
$$
Parametrizing $\beta$ and $\gamma$ as
$$
\beta = \rho \cos \varphi, \qquad \gamma = \rho \sin \varphi,
$$
there holds
\begin{align*} 
&\mathcal{M}(\beta,\gamma)) = \frac{1}{\rho} \mathcal{M}(\cos \varphi,\sin \varphi) = \frac{1}{\rho} \mathcal{M}(\varphi),
\\ 
&\mathcal{E}(\beta,\gamma)) = \frac{1}{\rho} \mathcal{E}(\cos \varphi,\sin \varphi) = \frac{1}{\rho \cos \varphi}(2\pi - \sin \varphi \mathcal{M}(\varphi)).
\end{align*} 
One can then eliminate $\rho$ and obtain that the equation \eqref{M0E0} is equivalent to
$$
\frac{\mathcal{E}_0}{\mathcal{M}_0} = \frac{2\pi}{\cos \varphi \mathcal{M} (\varphi)} - \tan \varphi,
$$
or in other words
$$
\frac{\mathcal{E}_0}{\mathcal{M}_0} = F(\tan \varphi) \quad \mbox{with} \quad F(x) = 2\pi \left( \int \frac{\dd \xi}{\omega + x} \right)^{-1} - x.
$$
The allowed range for $\beta$ and $\gamma$ in \eqref{rangemunu} translates into the restriction that 
$$
\varphi \in (\varphi^*,\frac{3\pi}{4}) \qquad \mbox{with} \;\;\varphi^* = - \arctan (a).
$$
which means that 
$$
\tan \varphi \in (-\infty,-1) \cup (-a,\infty).
$$
There remains to compute the image of that set by $F$! 
First, we notice that $F$ is strictly increasing since
$$
F'(x) = \frac{\frac{1}{2\pi} \int \frac{\dd \xi}{(\omega+x)^2}}{\left( \frac{1}{2\pi} \int \frac{\dd \xi}{\omega+x} \right)^2} - 1 > 0
$$
by Jensen's inequality.

Thus, there remains to compute the value of $F$ at the points $-\infty, -1, -a, \infty$.
\begin{itemize}
\item $\displaystyle F(-a+)= a + 2\pi \left( \int \frac{1}{\omega -a} \dd \xi \right)^{-1}$. If $\int \frac{1}{\omega -a} < \infty$, the limiting RJ equilibrium (corresponding to $\varphi \to \varphi^*$) has finite mass. It is given by $\varphi = \varphi^*$, or in other words $\gamma = -a \beta$, $\beta > 0$. 
\item $\displaystyle F(-1-) = 1 - 2\pi \left( \int \frac{1}{1 - \omega} \dd \xi \right)^{-1}$. If $\int \frac{1}{1- \omega} < \infty$, the limiting RJ equilibrium (corresponding to $\varphi \to \frac {3\pi} 4$) has finite mass. It is given by $\varphi = \frac{3\pi}4$ or in other words $\beta = -\gamma$, $\gamma>0$.
\item $\displaystyle F(\infty) = \frac{I}{2\pi}$, with $\displaystyle I = \int \omega \dd \xi$. Indeed, we can expand
$$
\frac{1}{\omega + x} = \frac{1}{x} - \frac{\omega}{x^2} + O_{\infty}(\frac{1}{x^3}) \quad \mbox{so that} \quad \int \frac{\dd \xi}{\omega + x} \sim \frac{2\pi}{x} - \frac{I}{x^2} + O_\infty \left( \frac 1 {x^3} \right).
$$
We could have found this expression directly since the limit $\tan \varphi \to \infty$ corresponds to the limit $\varphi \to \frac \pi 2 -$ or equivalently $\beta \to 0+$, in which case $\frac{\mathcal{E}(\mathfrak{f}_{0,\gamma})}{\mathcal{M}(\mathfrak{f}_{0,\gamma})} = \frac{I}{2\pi}$.
\item $\displaystyle F(-\infty) = \frac{I}{2\pi}$ since the expansion above remains valid as $x \to -\infty$. Here again, the limit $\tan \varphi \to - \infty$ corresponds to the limit $\varphi \to\frac \pi 2 +$ or equivalently $\beta \to 0-$, in which case $\frac{\mathcal{E}(\mathfrak{f}_{0,\gamma})}{\mathcal{M}(\mathfrak{f}_{0,\gamma})} = \frac{I}{2\pi}$.
\end{itemize}
Therefore, we find that the image of $(-\infty,-1) \cup (-a,\infty)$ by $F$ is $(c_1,c_2)$, since $F$ is increasing and satisfies $F(\pm \infty)=I/2\pi \in (c_1,c_2).$ This was the desired result. Note that if the integrals $\int \frac{1}{\omega - a} d p  =\int \frac{1}{1-\omega} dp = \infty$, then, $c_1=a, c_2=1$ in which case the image of $F$ is $(a,1)$.    
\end{proof}

\subsubsection{Solving the maximization problem}

As a consequence of Theorem \eqref{Theorem1}, for certain choices of $\omega $, there are values of $\mathcal M$ and $\mathcal E$, satisfying (\ref{admissible}) for which there is no entropy maximizers of the form (\ref{RJBE}).
 As it was observed first in \cite{Einstein} for ideal Bose gases, the solution to this difficulty is to relax the problem at hand by allowing general non negative measures as possible maximizers (cf. for example  \cite{EscMiscValle} for a detailed description). This leads to maximizers with singular parts, similar to the  Bose-Einstein distributions in presence of a condensate, for the Nordheim equation.

In order to set up our minimization problem rigorously for general measures, we consider a general nonnegative measure $\lambda$. By the Radon-Nikodym theorem, $\lambda$ can be uniquely decomposed into an absolutely continuous part with respect to the Lebesgue measure and a singular part:
\begin{equation}
\label{RadonNikodym}
\dd \lambda = f \dd \xi + \dd \lambda_{\operatorname{sing}}
\end{equation}
where $f,\lambda_{\operatorname{sing}} \geq 0$, $f \in L^1(\mathbb{T})$, and  $\lambda_{\operatorname{sing}}$ is singular with respect to the Lebesgue measure. 

Following \cite{DemengelTemam,EscMiscValle}, the definitions of the mass, energy and entropy can be extended to general measures as follows
\begin{align*}
 \displaystyle \mathcal{M}(\lambda) = \int_{\mathbb{T}} \dd \lambda, \qquad 
\mathcal{E}(\lambda) = \int_{\mathbb{T}} \omega \dd \lambda, \qquad 
 \displaystyle  S(\lambda) = \int_{\mathbb{T}} \ln f(\xi) \dd \xi.
\end{align*}
The definitions for the mass and energy are natural; as for the entropy of a general measure, we define it to be equal to the absolutely continuous part of the measure. In heuristic terms, the logarithmic growth of the entropy functional cancels the contribution of the singular part of the measure. 
More precisely, let $a \in \mathbb{T}$ and $(\varphi_n)_{n\in \mathbb{N}}$ be a sequence of nonnegative functions so that $\varphi_n \rightharpoonup \delta_a$ in the sense of distributions. Then $H_{cl}(f + \varphi_n) \to H_{cl}(f)$. 

\begin{theorem} [Maximizers of the entropy]
\label{maintheorem}
Let $\mathcal{M}_0, \mathcal{E}_0 \in (0,\infty)$ be given mass and energy. The maximizers of the entropy $S$ over positive measures subject to the constraints $\mathcal{M}(f) = \mathcal{M}_0$ and $\mathcal{E}(f) = \mathcal{E}_0$ are characterized as follows:
\begin{itemize}
\item[(i)] If $\displaystyle c_1 < \frac{\mathcal{E}_0}{\mathcal{M}_0} < c_2$, the unique maximizer is the unique RJ equilibrium with this mass and energy.
\item[(ii)] If $\displaystyle c_2 \leq \frac{\mathcal{E}_0}{\mathcal{M}_0} < 1$, maximizers are of the type $\mathfrak{f}_{\beta,\gamma} + \lambda_{\operatorname{sing}}$, where $(\beta,\gamma)$ are characterized by 
$$
\begin{cases}
\mathcal{M}(\mathfrak{f}_{\beta,\gamma}) = \frac{\mathcal{M}_0 - \mathcal{E}_0}{1-c_2} \\ \mathcal{E}(\mathfrak{f}_{\beta,\gamma}) = c_2 \frac{\mathcal{M}_0 - \mathcal{E}_0}{1-c_2},
\end{cases}
$$
and $\lambda_{\operatorname{sing}}$ has mass $\frac{1}{1-\beta}(\mathcal{E}_0 - \beta \mathcal{M}_0)$ and is supported on the set where $\omega$ is maximal.

\item[(iii)] If $\displaystyle a < \frac{\mathcal{E}_0}{\mathcal{M}_0} \leq c_1 $, maximizers are of the type $\mathfrak{f}_{\beta,\gamma} + \lambda_{\operatorname{sing}}$ with 
$$
\begin{cases}
\mathcal{M}(\mathfrak{f}_{\beta,\gamma}) = \frac{\mathcal{M}_0 - \mathcal{E}_0}{1-c_1} \\ \mathcal{E}(\mathfrak{f}_{\beta,\gamma}) = c_1 \frac{\mathcal{M}_0 - \mathcal{E}_0}{1-c_1}.
\end{cases}
$$ Also $\lambda_{\operatorname{sing}}$ has mass and energy equal to $\frac{1}{1-c_1}(\mathcal{E}_0 - c_1 \mathcal{M}_0)$ and is supported on the set where $\omega$ is maximal.
\end{itemize}
\end{theorem}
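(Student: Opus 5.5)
The plan is a supporting-hyperplane (Gibbs-type) inequality built on the concavity of the logarithm, combined with Theorem \ref{Theorem1} to produce the right RJ state in each regime. Fix an admissible RJ state $\mathfrak{f} = \mathfrak{f}_{\beta,\gamma}$, so that $\beta\omega+\gamma \geq 0$, and a competitor $\lambda = g\,\dd\xi + \lambda_{\operatorname{sing}}$ with $\mathcal{M}(\lambda)=\mathcal{M}_0$ and $\mathcal{E}(\lambda)=\mathcal{E}_0$. Pointwise,
$$
\ln g \leq \ln \mathfrak{f} + \frac{g-\mathfrak{f}}{\mathfrak{f}} = \ln \mathfrak{f} + (\beta\omega+\gamma)(g-\mathfrak{f}),
$$
with equality iff $g=\mathfrak{f}$. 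Integrating over $\mathbb{T}$ and writing $\int g = \mathcal{M}_0 - \lambda_{\operatorname{sing}}(\mathbb{T})$ and $\int \omega g = \mathcal{E}_0 - \int\omega\,\dd\lambda_{\operatorname{sing}}$ gives the master inequality
$$
S(\lambda) \leq S(\mathfrak{f}) + \beta\big(\mathcal{E}_0-\mathcal{E}(\mathfrak{f})\big) + \gamma\big(\mathcal{M}_0 - \mathcal{M}(\mathfrak{f})\big) - \int (\beta\omega+\gamma)\,\dd\lambda_{\operatorname{sing}} .
$$
I first check that this is legitimate when $S(\lambda)=-\infty$ or $g$ vanishes on a set; this is trivial once the right-hand side is finite. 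Finiteness holds because $\mathfrak{f}$ has finite mass and energy by the choice in Theorem \ref{Theorem1}, and $\ln \mathfrak{f}$ is integrable since $\mathfrak{f}$ is bounded below.

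\textbf{Case (i).} Theorem \ref{Theorem1}(i) provides $\mathfrak{f}$ with exactly mass $\mathcal{M}_0$ and energy $\mathcal{E}_0$, and with $(\beta,\gamma)$ in the interior of the range \eqref{rangemunu}. The two middle terms vanish. In the interior, $\beta\omega+\gamma>0$ on $\mathbb{T}$, so the last term is $\le 0$ with equality iff $\lambda_{\operatorname{sing}}=0$. Together with the equality case of the pointwise inequality ($g=\mathfrak{f}$ a.e.), this gives existence and uniqueness of the maximizer.

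\textbf{Case (ii).} I take $\mathfrak{f}$ on the boundary ray $\beta=-\gamma$, $\gamma>0$. By Theorem \ref{Theorem1}(ii), its ratio is $\mathcal{E}(\mathfrak{f})/\mathcal{M}(\mathfrak{f}) = c_2$, and by scaling ($\rho$) its mass can be prescribed to be $\frac{\mathcal{M}_0-\mathcal{E}_0}{1-c_2}$; this is $\le \mathcal{M}_0$ precisely because $\mathcal{E}_0/\mathcal{M}_0\ge c_2$. Then $\beta\omega+\gamma = \gamma(1-\omega)$, and the middle terms combine to
$$
\gamma\big[(\mathcal{M}_0-\mathcal{E}_0) - (1-c_2)\mathcal{M}(\mathfrak{f})\big] = 0 .
$$
Hence $S(\lambda)\le S(\mathfrak{f}) - \gamma\int(1-\omega)\,\dd\lambda_{\operatorname{sing}} \le S(\mathfrak{f})$. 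Equality holds iff $g=\mathfrak{f}$ and $\lambda_{\operatorname{sing}}$ is supported on $\{\omega=1\}$, the set where $\omega$ is maximal. Conversely, any such measure with the leftover mass $\mathcal{M}_0-\mathcal{M}(\mathfrak{f})$ has the correct energy, since $\omega=1$ on its support, and attains $S(\mathfrak{f})$ because the singular part does not contribute to $S$. This characterizes the maximizers; the mass of $\lambda_{\operatorname{sing}}$ is then read off from the constraints.

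\textbf{Case (iii).} This is identical, using the ray $\gamma=-a\beta$, $\beta>0$, from Theorem \ref{Theorem1}(iii). Here $\beta\omega+\gamma=\beta(\omega-a)$, so the singular part must sit on $\{\omega=a\}$. This is the set where $\omega$ is \emph{minimal}, and I would state it that way (the wording ``maximal'' in (iii) appears to be a slip). The mass prescription for $\mathfrak{f}$ is adjusted so that the combination $\beta(\mathcal{E}_0-\mathcal{E}(\mathfrak{f}))-a\beta(\mathcal{M}_0-\mathcal{M}(\mathfrak{f}))$ vanishes.

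\textbf{Main obstacle.} The delicate step is the boundary cases (ii) and (iii) when $c_2=1$ or $c_1=a$, that is, when the integrals $\int\frac{\dd\xi}{1-\omega}$ or $\int\frac{\dd\xi}{\omega-a}$ diverge. In that situation no boundary RJ state with finite mass exists, and that range is excluded from (ii) and (iii) by the hypotheses. I would check that the stated inequalities never invoke the boundary state in that situation. I would also verify that the prescribed mass of $\mathfrak{f}$ is compatible with $\mathcal{M}_0$ at the endpoints $\mathcal{E}_0/\mathcal{M}_0=c_2$ and $=c_1$, where $\lambda_{\operatorname{sing}}=0$ and the answer reduces to Theorem \ref{Theorem1}(ii) and (iii).
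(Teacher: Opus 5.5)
Your proposal is correct and follows the paper's proof. Both use the inequality $\ln x\le x-1$ applied to $g/\mathfrak{f}_{\beta,\gamma}$, with Theorem~\ref{Theorem1} supplying the interior RJ state in (i) and the boundary-ray state in (ii). The one difference is that you keep $-\int(\beta\omega+\gamma)\,\dd\lambda_{\operatorname{sing}}$ explicit, which makes the equality case immediate (the support of $\lambda_{\operatorname{sing}}$, and uniqueness in (i)) rather than deduced from ``mass equals energy.''

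Your treatment of (iii), which the paper omits, is also right, and it corrects more than the word ``maximal.'' Your vanishing condition forces $\mathcal{M}(\mathfrak{f}_{\beta,\gamma})=\frac{\mathcal{E}_0-a\mathcal{M}_0}{c_1-a}$, and a singular part of mass $\frac{c_1\mathcal{M}_0-\mathcal{E}_0}{c_1-a}$ with energy $a$ times that mass. The stated formulas in (iii) are a copy of (ii): under $\mathcal{E}_0\le c_1\mathcal{M}_0$ they would give $\lambda_{\operatorname{sing}}$ nonpositive mass, so they should be replaced. Similarly, the singular mass in (ii) should read $\frac{1}{1-c_2}(\mathcal{E}_0-c_2\mathcal{M}_0)$.
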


\begin{proof} We start with a measure $\lambda$ satisfying
$
\mathcal{M}(\lambda) = \mathcal{M}_0,$ and $\mathcal{E}(\lambda) = \mathcal{E}_0$, 
which we decompose as in \eqref{RadonNikodym}. Applying the inequality $\ln x \leq x -1$ to $\frac{f}{\mathfrak{f}_{\beta,\gamma}}$ and integrating gives
$$ 
\int \ln f \dd \xi - \int \ln \mathfrak{f}_{\beta,\gamma} \dd \xi \leq \int f (\beta \omega + \gamma) \dd \xi - 2\pi.
$$
Starting from this inequality and using in addition that $\beta \omega + \gamma \geq 0$ and the identity \eqref{identitymunu}, we obtain
\begin{align*}
S(\lambda) = S(f) & \leq S(\mathfrak{f}_{\beta,\gamma}) + \int f (\beta \omega + \gamma) \dd \xi - 2\pi \\
& \leq S(\mathfrak{f}_{\beta,\gamma}) + \int (\beta \omega + \gamma)(f \dd \xi + \dd \lambda_{\operatorname{sing}}) - \beta \mathcal{E}(\mathfrak{f}_{\beta,\gamma}) - \gamma \mathcal{M}(\mathfrak{f}_{\beta,\gamma}).
\end{align*}
By the mass and energy constraints on $\lambda$, this means that
\begin{equation}
\label{inegalitedebase}
S(\lambda) \leq S(\mathfrak{f}_{\beta,\gamma}) + \beta (\mathcal{E}_0-\mathcal{E}(\mathfrak{f}_{\beta,\gamma})) + \gamma (\mathcal{M}_0 -\mathcal{M}(\mathfrak{f}_{\beta,\gamma})).
\end{equation}

\medskip
\noindent $(i)$ 
Choosing $\beta$ and $\gamma$ in the above equation such that $\mathcal{E}(\mathfrak{f}_{\beta,\gamma}) = \mathcal{E}_0$ and $\mathcal{M}(\mathfrak{f}_{\beta,\gamma}) = \mathcal{M}_0$ (which is possible by Theorem \ref{Theorem1} since $c_1 < \frac{\mathcal{E}_0}{\mathcal{M}_0} < c_2$) gives the desired result.

\medskip
\noindent $(ii)$ We choose $\beta$ and $\gamma$ such that
$$
\begin{cases}
\mathcal{E}_0 - \mathcal{E}(\mathfrak{f}_{\beta,\gamma}) = \mathcal{M}_0 - \mathcal{M}(\mathfrak{f}_{\beta,\gamma}) \\
\mathcal{E}(\mathfrak{f}_{\beta,\gamma}) = c_2 \mathcal{M}(\mathfrak{f}_{\beta,\gamma})
\end{cases}
\quad 
\mbox{or equivalently}
\quad
\begin{cases}
\mathcal{M}(\mathfrak{f}_{\beta,\gamma}) = \frac{\mathcal{M}_0 - \mathcal{E}_0}{1-c_2} \\ \mathcal{E}(\mathfrak{f}_{\beta,\gamma}) = c_2 \frac{\mathcal{M}_0 - \mathcal{E}_0}{1-c_2}.
\end{cases}
$$
Such $\beta$ and $\gamma$ exist by Theorem \ref{Theorem1}, which gives furthermore that $\beta = -\gamma$, with $\gamma>0$. With this choice of $\beta$ and $\gamma$, inequality \eqref{inegalitedebase} becomes $S(\lambda) \leq S(\mathfrak{f}_{\beta,\gamma})$.

Examining the equality case in the derivation of inequality \eqref{inegalitedebase}, we see that necessarily $f = \mathfrak{f}_{\beta,\gamma}$. This implies that $\lambda_{\operatorname{sing}}$ is such that
$$
\begin{cases}
\mathcal{M}(\lambda_{\operatorname{sing}}) = \mathcal{M}_0 - \mathcal{M}(\mathfrak{f}_{\beta,\gamma}) = \mathcal{M}_0 - \frac{\mathcal{M}_0 - \mathcal{E}_0}{1-c_2} = \frac{1}{1-c_2}(\mathcal{E}_0 - c_2 \mathcal{M}_0) \\
\mathcal{E}(\lambda_{\operatorname{sing}}) = \mathcal{E}_0 - \mathcal{E}(\mathfrak{f}_{\beta,\gamma}) = \mathcal{E}_0 -  c_2 \frac{\mathcal{M}_0 - \mathcal{E}_0}{1-c_2} = \frac{1}{1-c_2}(\mathcal{E}_0 - c_2 \mathcal{M}_0).
\end{cases}
$$
Since the mass and energy of $\lambda_{\operatorname{sing}}$ are equal, it is necessarily supported on the set where $\omega$ takes the value $1$. Conversely, such a measure $\mu$ achieves the maximum value for the entropy, namely $S(\mathfrak{f}_{\beta,\gamma})$.
\end{proof}

\subsection{Classification of collisional invariants} \label{sec:collisional invariants}
We saw in Section \ref{subsection_stationary} that stationary solutions include RJ solutions and "most" functions concentrated on small balls. Do other examples exist? The entropy dissipation functional \eqref{entropydissipation} provides the key to this question, provided it makes sense. We need to assume that the entropy is well-defined, or that $\log f$ is integrable - this excludes functions whose support is distinct from $\mathbb{T}$. Then it is clear that the dissipation functional is zero if and only if
$$
\frac{1}{f_0} + \frac{1}{f_1} - \frac{1}{f_2} - \frac{1}{f_3} = 0  \qquad \mbox{on $\mathscr{R}_{\operatorname{eff}}$},
$$
or, after setting $g =\frac 1f$,
$$
g_0 + g_1 - g_2 -g_3 = 0 \qquad \mbox{on $\mathscr{R}_{\operatorname{eff}}$}.
$$
In the case where $\mathscr{R}_{\operatorname{eff}}$ can be parameterized as $\xi_1 = h(\xi_0,\xi_2)$ - which is the case for nearest-neighbor interactions, see \eqref{ReffRtriv} - the above equation becomes
\begin{equation}
\label{collisional_invariants_eq}
g(\xi_0) + g(h(\xi_0,\xi_2)) - g(\xi_2) - g(\xi_0 + h(\xi_0,\xi_2) - \xi_2) = 0 \qquad \forall \xi_0, \xi_2 \in \mathbb{T}.
\end{equation}
Obvious solutions are $1$, $\omega$, and their linear combinations, which explains why solutions of \eqref{collisional_invariants_eq} are also known as \textit{collisional invariants}. It is expected that these are the only solutions of \eqref{collisional_invariants_eq} and this could indeed be proed in the case $\omega(\xi) = \left| \sin \left(\frac{\xi}2 \right) \right|$ (which corresponds to the unpinned nearest-neighbors interaction).

\begin{theorem}[Lukkarinen-Spohn \cite{LukkarinenSpohn2008}]\label{theo:FPU collisional_invar}
If $\omega(\xi) = \left| \sin \left(\frac{\xi}2 \right) \right|$, all solutions of \eqref{collisional_invariants_eq} in $L^1(\mathbb{T})$ can be written
$$
g = c_1 + c_2 \omega, \qquad \mbox{where $c_1,c_2 \geq 0$.}
$$
\end{theorem}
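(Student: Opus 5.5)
The plan is to first upgrade an $L^1$ solution of \eqref{collisional_invariants_eq} to a smooth function away from $\xi=0$, and then use the explicit formula \eqref{formula_h} for $h$ to turn the functional equation into a linear ODE whose solution space is exactly $\operatorname{Span}(1,\omega)$. The positivity of the constants is not part of the functional equation. It only enters at the end, when one recalls that $g=1/f$ with $f\ge 0$.

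\emph{Step 1 (continuity by averaging).} Fix a small interval $J\subset\mathbb{T}$ of $\xi_2$ values and a smooth bump $\chi$ supported in $J$. Multiply \eqref{collisional_invariants_eq} by $\chi(\xi_2)$ and integrate in $\xi_2$ to get
\[
g(\xi_0)\int\chi = \int \chi(\xi_2)\left[ g(\xi_2) + g\big(\xi_0+h(\xi_0,\xi_2)-\xi_2\big) - g\big(h(\xi_0,\xi_2)\big)\right]\dd\xi_2 .
\]
On the open set where the maps $\xi_2\mapsto h(\xi_0,\xi_2)$ and $\xi_2\mapsto \xi_0+h-\xi_2$ are local diffeomorphisms, I change variables in the last two terms. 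Their Jacobians $\partial_{\xi_2}h$ and $\partial_{\xi_2}h-1$ can be read off from \eqref{formula_h}, and I need to check they are bounded away from zero for $\xi_0,\xi_2$ away from $0$ and from the trivial resonances. After this change of variables, the right-hand side is an integral of $g\in L^1$ against a kernel depending smoothly on $\xi_0$. Continuity of translations in $L^1$ then shows the right-hand side is continuous in $\xi_0$, so $g$ agrees a.e.\ with a continuous function on $\mathbb{T}\setminus\{0\}$. Iterating the same identity, now differentiating the kernel in $\xi_0$, bootstraps this to $g\in C^\infty(\mathbb{T}\setminus\{0\})$.

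\emph{Step 2 (reduction to an ODE).} With $g$ smooth, I differentiate \eqref{collisional_invariants_eq} with respect to $\xi_2$ and $\xi_0$ and then restrict to a convenient submanifold. The natural choice is the limit where the effective resonance curve meets the trivial one, $\xi_2\to\xi_0$ with $\xi_1=h\to$ a known point; in the figures this happens at specific crossing points. Taylor expanding at such a point, the first-order terms should give a relation of the form
\[
g'(\xi_0)\,\omega'(\xi_1)=g'(\xi_1)\,\omega'(\xi_0)\qquad\text{along a one-parameter family,}
\]
since $\omega$ itself solves the equation. This forces $g'/\omega'$ to be constant on connected pieces, hence $g=c_1+c_2\omega$ on each arc of $\mathbb{T}\setminus\{0\}$. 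If this first-order identity degenerates, I would go to second order to obtain $(g'/\omega')'=0$. A final substitution into \eqref{collisional_invariants_eq}, with $(\xi_0,\xi_2)$ chosen on opposite sides of $0$, shows the constants agree across $0$. Since the singleton $\{0\}$ has measure zero, this gives the identity in $L^1$.

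\emph{Main obstacle.} I expect the hard step to be Step 1. The change of variables degenerates where $\partial_{\xi_2}h$ or $\partial_{\xi_2}h-1$ vanishes, which is exactly where $\omega_1'=\omega_3'$, and near $\xi=0$, where $\omega$ is not smooth. I would first use \eqref{formula_h} to locate these degeneracy sets explicitly; as recalled in Section \ref{section_makingsense}, for FPUT they occur only where a frequency vanishes. I would then choose the localization $J$ so as to avoid them, proving regularity on a cover of $\mathbb{T}\setminus\{0\}$ by such windows.
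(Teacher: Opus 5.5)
Your instinct in Step 2, expanding where the effective resonance meets the trivial ones, is the right one, and the paper does exactly that. But the crossing is not where you think, and the relation you predict cannot come out of it.

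\textbf{Where the crossing is.} For $\omega=|\sin(\xi/2)|$ the two resonant sets meet only at vanishing frequencies. At a crossing with all $\xi_i\neq0$, $\Omega$ is smooth, so its gradient in $(\xi_1,\xi_2)$ would have to vanish. On either trivial branch this forces $\omega'(\xi_1)=\omega'(\xi_0)$, hence $\xi_1=\xi_0$, since $\omega'=\frac12\cos(\xi/2)$ is injective on $(0,2\pi)$. The diagonal point is not on the effective curve, because $h(\xi_0,\xi_0)=0$. Concretely, \eqref{formula_h} gives, for $\xi_2=\xi_0+\epsilon$ with $\epsilon\downarrow0$,
\[
\xi_1=\epsilon\bigl(\tfrac12+\omega'(\xi_0)\bigr)+O(\epsilon^2)\to0^+, \qquad \xi_3=-\epsilon\bigl(\tfrac12-\omega'(\xi_0)\bigr)+O(\epsilon^2)\to0^- .
\]
So your Taylor expansion happens at $\xi=0$. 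That is exactly where Step 1 gives you nothing, and where $g$ is genuinely not differentiable: $\omega$ itself has a kink there.

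\textbf{Why your relation cannot appear.} No identity obtained where all frequencies are nonzero can yield $g'(\xi_0)\omega'(\xi_1)=g'(\xi_1)\omega'(\xi_0)$, nor $(g'/\omega')'=0$. The lifted momentum $\xi\mapsto\xi\in(0,2\pi)$ satisfies \eqref{collisional_invariants_eq} up to an element of $2\pi\mathbb{Z}$ that is locally constant on that set. It therefore satisfies every differentiated identity there, and every limit of such identities. Yet $g'\equiv1$ violates both of your relations. So no argument confined to the single arc $\mathbb{T}\setminus\{0\}$ can exclude the momentum. Excluding it is the real job of the behaviour at $0$, not a matter of ``matching constants'' afterwards.

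\textbf{What is missing.} You need to prove that $g$ extends continuously to $0$ and that the one-sided limits $a=\lim_{\xi\downarrow0}g'(\xi)$ and $b=\lim_{\xi\uparrow2\pi}g'(\xi)$ exist. This does not follow from your interior regularity. As $\xi_0\to0$ one has $h(\xi_0,\xi_2)\to\xi_2$ and $\xi_3\to0$ for every $\xi_2$, so the averaging identity only relates $g$ near $0$ to $g$ near $0$. The paper proves this separately in its Step 2: it splits $g$ into parts even and odd under $\xi\mapsto2\pi-\xi$ and passes to the limit $\xi_0\uparrow2\pi$ in the differentiated identity.

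Once $a$ and $b$ are available, expand \eqref{collisional_invariants_eq} at the crossing. The zeroth-order term gives $g(0^+)=g(2\pi^-)$, and the first-order term gives $g'=\frac{a+b}2+(a-b)\,\omega'$ on $(0,2\pi)$. Integrating and using continuity at $0$ forces $a+b=0$, so $g=g(0)+2a\,\omega$. This is the paper's Step 1, carried out at the symmetric crossing $\xi_0\uparrow2\pi$, $\xi_3\downarrow0$, $\xi_1\to\xi_2$.

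\textbf{Two smaller corrections.} First, your Step 1 is sound in spirit (the paper's Step 3 averages in $\xi_0$ instead), but you misplace the degeneracies. The Jacobian $\partial_{\xi_2}h=(\omega_2'-\omega_3')/(\omega_1'-\omega_3')$ vanishes on the fold $\xi_2=\xi_3$, which occurs at interior points of the effective resonance. The set $\omega_1'=\omega_3'$ is where it blows up, not where it vanishes. Your windows must also avoid the fold, which is harmless.

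Second, the solution set of \eqref{collisional_invariants_eq} is a vector space, so $c_1,c_2\ge0$ cannot follow from it. The condition $g=1/f\ge0$ only gives $c_1\ge0$ and $c_1+c_2\ge0$. The correct conclusion is $c_1,c_2\in\mathbb{R}$.
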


It is conjectured in \cite{Lukkarinen2016,AokiLukkSpohn} that the above statement also holds for $\delta \in (0,\frac{1}{2})$, and it is tempting to think that it is indeed true for generic dispersion relations $\omega$. This is a central question for the further development of the theory, but it seems to be quite difficult

Note that the corresponding problem was solved in higher dimension in \cite{Spohn2006}.

\begin{proof}[Sketch of proof of Theorem \ref{theo:FPU collisional_invar}] In this case where $\delta=1/2$, we have the explicit parametrisation given by the formula of $h$ in \eqref{formula_h}. \\ 
\underline{Step 1}: We are assuming that $f \in \mathcal{C}(\mathbb{T}) \cap \mathcal{C}^1(0,2\pi)$, and that the limits $\lim_{\xi \downarrow 0} f'(\xi) = a \in \mathbb{R}$ and $\lim_{\xi \uparrow 2\pi} f'(\xi) = b  \in \mathbb{R}$, exist.
We are going to relax these assumptions later.

We  are going to exploit this jump of the derivative. For this, we consider the limit as $\xi_0 \uparrow 2 \pi$ and $\xi_2 $ is arbitrary point. In that limit explicit computations yield that as $\xi_0 \uparrow 2\pi$, then $h \searrow \xi_2-2\pi$ and thus $\xi_0-\xi_2+h \searrow 0 $. Moreover 
$\lim_{\xi_0 \uparrow 2\pi} (\partial_{\xi_2}h)(x,z)  = 1$,  $\lim_{\xi_0 \uparrow 2\pi} (\partial_{\xi_0} h)(\xi_0,\xi_2)  =-1-\tan^2 ((2\pi - \xi_2)/4)$.
With these in hand we write 
\begin{align*}
    - f'(\xi_2) \lim_{\xi_0 \uparrow 2\pi} (\partial_{\xi_0} h)(\xi_0,\xi_2) &=
    \lim_{\xi_0 \uparrow 2\pi} \left[ \frac{f(h+2\pi)-f(\xi_2)}{2\pi -\xi_0} \right] \\ & = 
    \lim_{\xi_0 \uparrow 2\pi} 
    \left[ \frac{f(2\pi)-f(\xi_0)}{2\pi -\xi_0} +  \frac{f(\xi_0-\xi_2+h)-f(0)}{2\pi -\xi_0}
    \right] \\ & = b - a \lim_{\xi_0 \uparrow 2\pi}(\partial_{\xi_0} h +1)(\xi_0,\xi_2)
\end{align*}
Rearranging and using the formula for the limit of $\partial_{\xi_0} h$ we get that $$f'(\xi_2) =  \frac{b-a}{1+\tan^2 ((2\pi - \xi_2)/4)} + a. $$
Now miraculously 
$\frac{1}{1+\tan^2 ((2\pi - \xi_2)/4) } = \frac{1}{2} - \omega'(\xi_2)$ and so for $\xi_2 \in (0,2\pi)$, 
$$f'(\xi_2) = \frac{a+b}{2} - (b-a) \omega'(\xi_2).$$
 We now proceed to justify the assumptions made on the regularity of $f$. 
\\


\noindent
\underline{Step 2}: Here we are assuming that $f \in \mathcal{C}^2(0,2\pi)$. We are going to explain how to pick $f(0)$ so that $f$ is continuous, and $\lim_{\xi_0 \downarrow 0} f'(\xi_0) = a \in \mathbb{R}$,  $\lim_{\xi_0 \uparrow 2\pi} f'(\xi_0) = b  \in \mathbb{R}$: 
First observe that if $f$ is antisymmetric (i.e. odd),
the situation is simple since $h(\xi_0,2\pi -\xi_0) =\pi -\xi_0$ and for all $\xi_0 \in (0,\pi)$ and $\xi_2=2\pi -\xi_0$: 
$$ f(\xi_0) + f(\pi - \xi_0) - f(2\pi-\xi_0)- f(\xi_0 + \pi)=0 \ \Rightarrow \ f(\xi_0) = f(\pi+\xi_0) \text{ and } f(\pi)=0.$$
Thus in this case, setting $f(2\pi n)=0$, $n \in \mathbb{Z}$ extends $f$ to $C^1$ at $\xi_0=0$ as the right limit coincides the limit at $\pi$.

If $f$ is symmetric,  we have $f'(\xi_2) = -f'(2\pi -\xi_2)$. We differentiate \eqref{collisional_invariants_eq} first w.r.t. to $\xi_0$ and then w.r.t. $\xi_2$. Adding them up yields: 
$$(1-\partial_{\xi_2} h)f'(\xi_0) - (1+\partial_{\xi_0} h)f'(\xi_2) +(\partial_{\xi_0} h+\partial_{\xi_2} h)f'(h)=0.$$
For $0<\xi_2<\xi_0<2\pi$, and for fixed $\xi_2$, 
we divide by $(1-\partial_{\xi_2} h)$ and rearrange to get 
\begin{align} \label{eq:existlim1}
    f'(\xi_0) = f'(\xi_{2}) + (\partial_{\xi_0} h+\partial_{\xi_2} h)\frac{f'(\xi_2) - f'(h)}{(1-\partial_{\xi_2} h)}.
\end{align}
We plan to take the limit $\xi_0 \to 2\pi$, 
and explicit calculations yield 
$$ \lim_{\xi_0 \uparrow 2\pi}\partial_{\xi_0} h = -1 - \tan^2 ((2\pi - \xi_2)/4), \qquad \lim_{\xi_0 \uparrow 2\pi}\partial_{\xi_2} h = 1.$$
In particular as $\xi_0 \uparrow 2\pi$, $\partial_{\xi_2} h - 1 $ vanishes at a linear rate $|\xi_0-2\pi|$. To be sure that the limit makes sense, we further fix $\xi_2=\pi$, (where $f'(\pi)=0$) and inserting in \eqref{eq:existlim1}: $$ \lim_{\xi_0 \uparrow 2\pi}f'(\xi_0) =\lim_{\xi_0 \uparrow 2\pi} \frac{f'(\pi) - f'(h(\xi_0,\pi))}{1-\partial_{\xi_2} h} = \lim_{\xi_0 \uparrow 2\pi} f''(h)\partial_{\xi_0} h  = -2 f''(\pi) =:b \in \mathbb{R}.$$
By symmetry of $f$ then we also conclude the existence of the limit at the other end $$a:= \lim_{\xi_0 \downarrow 0} f'(\xi_0) = - \lim_{\xi_0 \uparrow 2\pi} f'(\xi_0)  = -b.$$
Finally, since now $f'$ is bounded on $[0,2\pi]$, $\lim_{\xi_0 \to 0}f(\xi_0)$ exists and if we set $f(2\pi n)=\lim_{\xi_0 \to 0}f(\xi_0)$, we are done. 
\\
\noindent
\underline{Step 3}: Find a continuous representative $f$ from $\psi \in L^1(\mathbb{T})$ satisfying the initial identity. We build $f_{1,\varepsilon}$ continuous as $$f_{1,\varepsilon}(\xi_2) : =\varepsilon^{-1}\int_{2\pi - \varepsilon}^{2\pi}  [\psi(x) + \psi(h(\xi_0,\xi_2)) - \psi(\xi_0-\xi_2+h(\xi_0,\xi_2)) ]d\xi_0, $$ for $\xi_2 \in [\varepsilon, \pi + \varepsilon]$ and $\varepsilon>0$ sufficiently small. We are in the case $\xi_0>\xi_2$ so the explicit branch of $h$ is valid,  and from the main identity: $f_{1,\varepsilon} = \psi(\xi_2)$ a.e. in $[\varepsilon, \pi + \varepsilon]$. We then change the variables in each term, leading to 
\begin{equation}\label{eq:f_1epsilon}
\begin{split}
    f_{1,\varepsilon}(\xi_2) = \varepsilon^{-1}\int_{2\pi - \varepsilon}^{2\pi}  \psi(\xi_0)\dd \xi_0  & -
\varepsilon^{-1} \int_{\xi_2-2\pi}^{h(2\pi - \varepsilon,\xi_2)} \psi(y) \frac{d y}{(\partial_{\xi_0} h)(x_1(y,\xi_2),\xi_2)} 
 \\ & \qquad+ \varepsilon^{-1} \int_0^{2\pi - \varepsilon-\xi_2-h(2\pi - \varepsilon,\xi_2)} \psi(y) \frac{dy}{(1+\partial_{\xi_0}h)((x_2(y,\xi_2),\xi_2))}.
 \end{split}
\end{equation}
A direct computation shows that the Jacobians in the last two integrals are uniformly bounded for such $\xi_0,\xi_2, \varepsilon$, and depend continuously on $z$. Thus we may conclude continuity of the map $\xi_2 \mapsto f_{1,\varepsilon}(\xi_2)$ by dominated convergence. \\
Using the parity transformation $[P\psi] (\xi_0) = \psi(2\pi -\xi_0 ) = \sigma \psi$, $\sigma \in \{\pm1\}$ (since $\psi$ is either symmetric or antisymmetric), we can extend $f_{1,\varepsilon}$ to a function $f_\varepsilon$ defined on a larger interval by reflection. This extension is still continuous and still satisfies $f_\varepsilon = \psi$ a.e. on its domain.\\
Finally, $\varepsilon$ was arbitrary so for each $z \in (0,2\pi)$ and letting $\varepsilon < \min(\xi_2, 2\pi -\xi_2)$, we can extend the definitions and find $C^0$ representative $f =f_\varepsilon$ and a.e. equal to $\psi$.
Moreover, since in the collision identity we have continuous compositions of $f$ (at least when $\xi_0 \neq 0, \xi_2 \neq 0, \xi_0 \neq \xi_2$) the a.e. identity holds in a pointwise sense. 
\\
\noindent
\underline{Step 4}: Bootstrap the regularity for $f$ to $C^2((0,2\pi))$. We have that \eqref{eq:f_1epsilon} holds after replacing $\psi$ by $f$ and $f$ is $C^0$. So $f_{1,\varepsilon}$ is $C^1((0,2\pi))$. Repeat to improve it to $C^2$, etc.

Finally let us note that we have assumed $\psi$ is either symmetric or antisymmetric and this can be justified since by linearity of the collision identity, we may decompose $\psi = \psi_s +\psi_a$ and treat each case separately.
\end{proof}

Next, ideally we would like to have an analogous statement for the $1$-dim general $\delta \in (0,1/2)$ case. This question has remained open, as noted also in \cite[below (4.10)]{AokiLukkSpohn}, \cite{Lukkarinen2016},
and appears to be quite challenging. Below we give a short proof for the finite-dimensionality.

\begin{theorem}[General $\delta \in (0,1/2)$ case]
Assume $\omega(\xi, \delta) = \sqrt{1-2\delta \cos(\xi)}$. The space of $C^1$ solutions to \eqref{collisional_invariants_eq} is finite-dimensional. 
\end{theorem}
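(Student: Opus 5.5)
The strategy is to reduce the functional equation \eqref{collisional_invariants_eq} to a linear ODE-type relation for $g'$ along a one-parameter family of resonances. Once that relation is in place, $g$ is determined by finitely many numbers: values and derivatives of $g$ at a few special points.

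First we describe $\mathscr{R}_{\operatorname{eff}}$ near a degenerate configuration. Write the resonance condition as $F(\xi_0,\xi_1,\xi_2)=\omega(\xi_0)+\omega(\xi_1)-\omega(\xi_2)-\omega(\xi_0+\xi_1-\xi_2)=0$. For $\delta\in(0,\frac12)$, $\omega$ is smooth and even, with $\omega'(0)=\omega'(\pi)=0$ and $\omega''(0)\neq0$. We pick a base point on the effective resonant manifold where the implicit function theorem applies, so that locally $\xi_1=h(\xi_0,\xi_2)$ with $h$ smooth (real-analytic, even). This requires $\omega'(\xi_1)\neq\omega'(\xi_3)$. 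From the pictures in Section~\ref{section_resonances}, the effective curves meet the trivial resonances only at isolated degenerate points. The natural choice is to let $\xi_0\to 0$ (or $\pi$), where $h(\xi_0,\xi_2)$ tends to a limit $h_0(\xi_2)$ with $\xi_3\to h_0(\xi_2)-\xi_2$. I would compute the first-order behavior of $h$ in $\xi_0$ there, in the spirit of Step 1 of Theorem~\ref{theo:FPU collisional_invar}. Differentiating \eqref{collisional_invariants_eq} in $\xi_0$ gives
$$
g'(\xi_0)+\partial_{\xi_0}h\,g'(\xi_1)-(1+\partial_{\xi_0}h)\,g'(\xi_3)=0 ,
$$
and differentiating in $\xi_2$ gives
$$
\partial_{\xi_2}h\,g'(\xi_1)-g'(\xi_2)-(\partial_{\xi_2}h-1)\,g'(\xi_3)=0 .
$$
Both identities hold pointwise on the parameter domain, since $g\in C^1$.

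Next we eliminate and obtain a functional recursion. Evaluating at $\xi_0=0$ (with $g'(0)=:a$), or at a fixed base frequency, the two relations express $g'(\xi_2)$ as a fixed linear combination of $g'$ at the points $h_0(\xi_2)$ and $h_0(\xi_2)-\xi_2$, plus constants. This yields a relation of the form $g'(\xi_2)=A(\xi_2)\,g'(\phi(\xi_2))+B(\xi_2)\,a$, with smooth coefficients and a smooth map $\phi$. If along some family the two resonant partners collapse to a fixed point (as happened for FPUT, where $\xi_3\to0$ and $\xi_1\to\xi_2-2\pi$), then $g'(\xi_2)$ becomes an explicit function determined by $g'$ at finitely many fixed points. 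This forces $g'$ into a finite-dimensional affine family, and $g$ into that family plus a constant.

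If no such exact collapse occurs, I would use a contraction or iteration argument. Iterating $\phi$, whose dynamics on $\mathbb{T}$ should converge to a fixed point where $|A|<1$ or where the orbit terminates at $\xi=0,\pi$, shows that $g'$ is determined by its values (and finitely many derivatives) at the attracting fixed points. The solution space therefore injects linearly into a finite-dimensional space of such jets.

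The main obstacle is this second step. It requires precise control of $h$ near the degenerate points, which is no longer available through an explicit formula as in \eqref{formula_h}. It also requires ruling out the case where $\phi$ has neutral dynamics (rotation-like), since then the recursion would not pin $g'$ down. I would first attempt a Taylor expansion of $F$ around $(\xi_0,\xi_1,\xi_2)=(0,\xi_1,\xi_2)$ using $\omega'(0)=0$, $\omega''(0)\neq0$. This should show that the relevant coefficient degenerates only linearly, as in Step 2 of the FPUT proof. From there, uniqueness given finitely many jet values follows by a Grönwall-type argument along the orbits.
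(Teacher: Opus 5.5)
There is a genuine gap, and it is exactly the step you flag as the main obstacle. The mechanism you hope for there does not occur.

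Your two differentiated identities are correct. Eliminating $g'(h)$ at $\xi_0=0$ does give $g'(\xi_2)=A(\xi_2)\,g'(\phi(\xi_2))+B(\xi_2)\,a$, with $\phi(\xi_2)=h(0,\xi_2)-\xi_2=\xi_3$ and $A=(\partial_{\xi_0}h+\partial_{\xi_2}h)/\partial_{\xi_0}h$. However, this $\phi$ always has neutral dynamics:
\begin{itemize}
\item The resonance conditions are invariant under $\xi_2\leftrightarrow\xi_3$, and $\mathscr{R}_{\operatorname{eff}}$ is a graph over $(\xi_0,\xi_2)$. Hence $h(0,\xi_3)=h(0,\xi_2)$, so $\phi\circ\phi=\operatorname{Id}$.
\item Testing the relation on the solution $g=\omega$, for which $a=\omega'(0)=0$, gives $A(\xi)A(\phi(\xi))=1$ wherever $\omega'(\xi)\neq 0$. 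So composing the relation with itself imposes nothing on $g'$.
\item On the slice $\xi_0=0$ the equation therefore only transports $g'$ from a fundamental domain of the involution $\phi$ to its image. This leaves an infinite-dimensional family.
\item There is no FPUT-type collapse to exploit either. For $\delta<\frac12$ we have $\omega(0)>0$, and the effective resonances at $\xi_0=0$ are genuinely nontrivial; for small $\delta$, $h(0,\xi_2)\approx\pi-\delta\sin\xi_2$.
\end{itemize}
If you eliminate $g'(\bar h)$ instead, then (at least for small $\delta$) the map $\xi_2\mapsto h(0,\xi_2)$ does contract. But it contracts toward $\xi_2=\pi$, which is the crossing point $(0,\pi,\pi,0)$ of effective and trivial resonances, where the implicit function theorem fails. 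Values of $g'$ there cannot suffice: the even solution $\omega$ has $\omega'(0)=\omega'(\pi)=0$ but $\omega'\not\equiv0$. Controlling the coefficient products near that singular point is precisely the analysis you have not supplied.

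The missing idea is to use the whole two-parameter family instead of a one-dimensional slice. On a curve the equation is an invertible weighted composition operator (here an involution), which is not compact. The paper proceeds as follows:
\begin{itemize}
\item It eliminates $g'(h)$ to obtain $(\partial_{\xi_0}h+\partial_{\xi_2}h)\,g'(\bar h)=\partial_{\xi_0}h\,g'(\xi_2)+\partial_{\xi_2}h\,g'(\xi_0)$, and divides by $\partial_{\xi_2}h$.
\item For each fixed $\xi_0$, it averages in $\xi_2$ against a normalized weight $w$. This writes the identity as $g'=K(\delta)g'$.
\item It changes variables $y=\bar h(\xi_0,\xi_2)$, which is legitimate because $\partial_{\xi_2}h<1$. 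Both pieces of $K(\delta)$ then become integral operators with smooth kernels, hence Hilbert--Schmidt and compact.
\item By Riesz--Fredholm theory, $\ker(\operatorname{Id}-K(\delta))$ is finite-dimensional.
\end{itemize}
The averaging in $\xi_2$ supplies exactly the compactness that your pointwise iteration along orbits cannot.
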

\begin{proof}
    We differentiate both in $\xi_0$ and in $\xi_2$ variable, we get the identity 
$$f'(\bar{h})(\partial_{\xi_0}h+ \partial_{\xi_2}h ) = (\partial_{\xi_0}h)f'(\xi_2) + (\partial_{\xi_2}h)f'(\xi_0).$$
We also know $\omega'$ satisfies this from the resonant condition,  and so for any linear combination of $f'$ and $\omega'$, say $g'$ we have this identity $$g'(\bar{h})(\partial_{\xi_0}h+ \partial_{\xi_2}h ) = (\partial_{\xi_0}h)g'(\xi_2) + (\partial_{\xi_2}h)g'(\xi_0), \qquad \bar{h} = \xi_0 - \xi_2 + h(\xi_0, \xi_2).$$ 
We aim to show $g'\equiv 0$.
\\
We project onto the $\xi_2$ variable, with the normalised-in-$\xi_2$ weight $$w(\xi_0,\xi_2,\delta):= \frac{|\partial_{\xi_2}h|}{\int_{\mathbb{T}} |\partial_{\xi_2}h|\dd \xi_2}\ \text{ and } \ \left(\frac{w}{\partial_{\xi_2}h}\right) (\xi_0,\xi_2,\delta)\ \text { to be bounded} $$ 
$$ \int_{\mathbb{T}} \left[ \frac{\partial_{\xi_0}h + \partial_{\xi_2}h}{\partial_{\xi_2}h} g'(\bar{h}) - \frac{\partial_{\xi_0}h}{\partial_{\xi_2}h} g'(\xi_2) \right] w(\xi_0,\xi_2,\delta)\dd \xi_2 =  g'(\xi_0). $$
In other words, our problem rewrites in the form 
\begin{equation} \label{eq:Compact-Id}
    \begin{split}
        &\hspace{5cm}(K(\delta) -  \operatorname{Id})g'=0,\  \text{ where }\\
        &K = K(\delta) : L^2 \to L^2,\quad \   g' \mapsto \int_{\mathbb{T}} \left[ \frac{\partial_{\xi_0}h + \partial_{\xi_2}h}{\partial_{\xi_2}h} g'(\bar{h}) - \frac{\partial_{\xi_0}h}{\partial_{\xi_2}h} g'(\xi_2) \right] w(\xi_0,\xi_2,\delta)\dd \xi_2.
    \end{split}
\end{equation}
\noindent
\underline{Now for $\delta<1/2$:} We can show that $K(\delta)$ is a compact operator: Indeed explicit computations show that  $\partial_{\xi_2} h < 1$, we may change the variable in the first integral:   
$$ K(\delta) g' = \int_{\mathbb{T}}- \frac{(\partial_{\xi_0}h + \partial_{\xi_2}h)}{\partial_{\xi_2} h-1} (\xi_0, \xi_2(\xi_0,y)) w (\xi_0, \xi_2(\xi_0,y)) g'(y) d y  - \int_{\mathbb{T}} (\partial_{\xi_0}h)(\xi_0, \xi_2) g'(\xi_2) w(\xi_0,\xi_2, \delta)\dd \xi_2
$$ and identify the integral kernels of the two pieces of $K$ to be  
\begin{equation}
    \begin{split}
&k_1(\xi_0,\xi_2(\xi_0,y),\delta) = - \frac{(\partial_{\xi_0}h + \partial_{\xi_2}h)}{\partial_{\xi_2} h-1} (\xi_0, \xi_2(\xi_0,y)) w (\xi_0,\xi_2(\xi_0,y)) \text{ and } \\  &k_2(\xi_0,\xi_2)= (\partial_{\xi_0}h)(\xi_0,\xi_2) w(\xi_0,\xi_2, \delta).
    \end{split}
\end{equation}
They are smooth, thus the operator is Hilbert-Schmidt, and so compact since each piece is. That shows that the space of solutions to \eqref{eq:Compact-Id} is finite dimensional, since we have a Fredholm type of operator.
\end{proof}

However note that this is not the case for $\delta=0$, where the space of solutions is in fact infinite-dimensional. To see that, we can use the explicit formula for the parametrization $h$ for a general delta, see \cite[Eq. (132)]{Luk2016}. Even though this formula is much more complicated (and thus way less handy) than the one in \eqref{formula_h}, by expanding for small $\delta$, we see that 
$$ h(\xi_0, \xi_2;\delta) = \pi - \xi_0 - \delta(\sin(\xi_0) +\sin(\xi_2)) + \mathcal{O}(\delta^2).$$ In other words, $h$ is independent of $\xi_2$ when $\delta=0$.  
Then the problem of finding all $L^1$ functions $f$ solving   \eqref{collisional_invariants_eq}: 
$$f(\xi_0) + f(h(\xi_0,\xi_2)) - f(\xi_2) - f(\xi_0 + h(\xi_0,\xi_2) - \xi_2) = 0 \qquad \forall \xi_0, \xi_2 \in \mathbb{T},$$ becomes 
$$f(\xi_0) + f(\pi - \xi_0) - f(\xi_2) - f(\pi - \xi_2) = 0, \forall \xi_0, \xi_2 $$
which means $f(\xi_0) + f(\pi - \xi_0)=\text{constant}$ for all $\xi_0 \in \mathbb{T}$. This is an infinite dimensional space. 

So the problem around $\delta$ close to $0$
 and $\delta$ close to $1/2$ is actually very different, and one most likely needs to perturb around the $\delta=1/2$ case in order to determine the collisional invariants.

\section{Rigorous results on the dynamics of the kinetic (FPUT) equation}

\label{section_rigorous_FPU}

In this section, we focus on the kinetic wave equation \eqref{KWE} associated to the (FPUT) model, for short the kinetic (FPUT) equation. In the homogeneous case, which will mostly occupy us, it takes the form
\begin{equation}
\tag{KFPUT} \label{KFPUT}
\partial_t f = \mathcal{C}(f), \qquad f = f(t,\xi) \geq 0, \qquad (x,\xi) \in \mathbb{R} \times \mathbb{T},
\end{equation}
where the collision operator is given by the formula
\begin{align*}
& \mathcal{C}(f)(\xi) = \int \omega_0 \omega_1 \omega_2 \omega_3 \left[ f_1 f_2 f_3 + f_0 f_2 f_3 - f_0 f_1 f_2 - f_0 f_1 f_3 \right] \delta(\Omega_{0,1,2,3}) \delta (\Sigma_{0,1,2,3}) \dd \xi_{1,2,3} \\
& \omega(\xi) = \left|\sin\left( \frac \xi 2 \right) \right|, \quad \Omega_{0,1,2,3} = \omega_0 + \omega_1 - \omega_2 - \omega_3, \qquad \Sigma_{0,1,2,3} = \xi_0 + \xi_1 - \xi_2 - \xi_3.
\end{align*}
Following \cite{LukkarinenSpohn2008,GermainLaMenegaki}, we will summarize here the existing mathematical theory for this equation, centered around the stability of RJ solutions. We will also say a word of the inhomogeneous case following \cite{Xiang}.

\subsection{Main results and organization of the section}
\textit{Section \ref{section_lwp}} establishes the basic local well-posedness result in weighted $L^\infty$ spaces.

\begin{theorem} \label{thmlwp} The collision operator is bounded in $\omega^\alpha L^\infty(\mathbb{T})$ for $\alpha > - \frac{5}{4}$. As a consequence, the equation is locally well-posed in $\mathcal{C} (\omega^\alpha L^\infty)$: for any data $\Omega$ in $\omega^\alpha L^\infty$, there exists a unique solution in $\mathcal{C}([0,T],\omega^\alpha L^\infty)$ with $T\sim \|\omega^{-\alpha} \Omega \|^{-2}_{L^\infty}$.
\end{theorem}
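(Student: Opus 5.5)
The plan is to reduce the statement to a trilinear estimate and then close local well-posedness by a standard Picard iteration. I write $\mathcal{C}(f) = \mathcal{C}(f,f,f)$ as a symmetric trilinear form. Using the parametrization \eqref{collisionformula_h} of the effective resonant manifold through the explicit function $h$ of \eqref{formula_h}, and recalling that trivial resonances contribute nothing, the collision operator becomes a single integral in $\xi_2$:
$$
\mathcal{C}(f)(\xi_0) = \int_{\mathbb{T}} \frac{\omega_0\omega_1\omega_2\omega_3}{|\omega_1'-\omega_3'|}\left[f_1f_2f_3+f_0f_2f_3-f_0f_1f_2-f_0f_1f_3\right] \dd \xi_2,
$$
with $\xi_1 = h(\xi_0,\xi_2)$ and $\xi_3 = \xi_0+\xi_1-\xi_2$.

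For $|f_i|\le \|\omega^{-\alpha}f\|_{L^\infty}\,\omega_i^{\alpha}$, each of the four terms is bounded by $\|\omega^{-\alpha} f\|_{L^\infty}^3$ times a weight. For instance, the first term gives
$$
\int \frac{\omega_0\,\omega_1^{1+\alpha}\omega_2^{1+\alpha}\omega_3^{1+\alpha}}{|\omega_1'-\omega_3'|}\dd\xi_2,
$$
and the others are analogous with one index permuted. The claim is then that each such integral is $\lesssim \omega_0^{\alpha}$ uniformly in $\xi_0$ precisely when $\alpha>-\frac54$.

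\textbf{Key step: the Jacobian.} The main work is an explicit analysis of $|\omega_1'-\omega_3'|$ along the resonant manifold. Following Lukkarinen--Spohn, I would show from \eqref{formula_h} that it vanishes only when one of the frequencies $\xi_i$ tends to $0$. I would then quantify this: there should be a lower bound of the form
$$
|\omega_1'-\omega_3'| \gtrsim \left(\frac{\omega_0\,\omega_2}{\omega_1\,\omega_3}\right)^{\theta}
$$
or a comparable product, for a suitable exponent $\theta$. The cleanest route is probably the variables $t_j=\tan(\xi_j/4)$ used in Section \ref{section_computation}, where $\omega$ and $\omega'$ become rational and $h$ becomes algebraic.

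\textbf{Case analysis.} With this lower bound, I would split $\xi_2$ into regions according to which $\xi_i$ is near $0$: near $\xi_2\to 0$, near $\xi_2\to\xi_0$, near the points where $\xi_1$ or $\xi_3\to 0$, and the case $\xi_0$ small. In each region I would compute the local scaling of the $\omega_i$ as powers of the distance to the singular point. The main obstacle is the regime $\xi_0\to0$. There $\xi_1$ and $\xi_3$ are also forced to be small on part of the manifold, and the Jacobian degenerates. Balancing the gain $\omega_0\omega_1\omega_2\omega_3$ against the loss from $\omega_i^{\alpha}$ with negative $\alpha$ and the Jacobian singularity should produce exactly the threshold $1+\alpha>-\frac14$ through integrability of a power like $|\xi_2|^{4(1+\alpha)-1}$. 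Obtaining the sharp asymptotics of $h$ and $\partial h$ near $\xi_0=0$ from \eqref{formula_h} is the part I expect to be most delicate.

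\textbf{Local well-posedness.} Once $\|\omega^{-\alpha}\mathcal{C}(f,g,k)\|_{L^\infty}\lesssim \prod\|\omega^{-\alpha}\cdot\|_{L^\infty}$ is established, I would write the Duhamel formulation $f(t)=f_0+\int_0^t\mathcal{C}(f)\,\dd s$. The trilinear bound makes the right-hand side a contraction on a ball of radius $2\|\omega^{-\alpha}f_0\|_{L^\infty}$ in $\mathcal{C}([0,T],\omega^\alpha L^\infty)$ provided $T\|\omega^{-\alpha}f_0\|_{L^\infty}^2$ is small, which gives existence, uniqueness, and the stated $T$.
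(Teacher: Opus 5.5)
Your skeleton is the same as the paper's. You reduce $\mathcal{C}$ to a single $\xi_2$-integral over $\mathscr{R}_{\operatorname{eff}}$, bound the Jacobian from below, analyze the edges, and close with a Picard iteration; that last step is fine. The gap is the one estimate that carries all the content, the lower bound on $|\omega_1'-\omega_3'|$. The form you propose, $|\omega_1'-\omega_3'|\gtrsim(\omega_0\omega_2/(\omega_1\omega_3))^\theta$, fails for every $\theta$:
\begin{itemize}
\item Since $|\omega'|\le\frac12$, the left side is at most $1$. On the effective branch, letting $\xi_2\to\xi_0$ with $\xi_0\in(0,2\pi)$ fixed gives $\xi_1=h\to0$ and $\xi_3\to0$, so $\omega_0\omega_2/(\omega_1\omega_3)\to\infty$. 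This rules out $\theta>0$.
\item When $\xi_0\to0^+$ and $\xi_2\to0^-$ with $|\xi_2|\sim\xi_0$, one has $\omega_1\omega_3\sim1$ while $|\omega_1'-\omega_3'|\to0$. This rules out $\theta\le0$.
\end{itemize}
The missing ingredient is the Lukkarinen--Spohn evaluation of the Jacobian. On $\mathscr{R}_{\operatorname{eff}}$, $|\omega_1'-\omega_3'|$ is a constant multiple of $\sqrt{F^+_{0,2}}$, where $F^+_{0,2}=[\cos(\xi_0/2)+\cos(\xi_2/2)]^2+4\sin(\xi_0/2)\sin(\xi_2/2)$. Dropping the square gives $\sqrt{F^+_{0,2}}\ge2\sqrt{\omega_0\omega_2}$. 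This bound involves only $\omega_0$ and $\omega_2$, with no $\omega_1\omega_3$ denominator. Indeed $F^+_{0,2}$ vanishes only at the corners $(\xi_0,\xi_2)=(0,2\pi)$ and $(2\pi,0)$. It turns the weight into $\sqrt{\omega_0\omega_2}\,\omega_1\omega_3\prod_{\ell=1}^3\omega_\ell^{\alpha}\sum_{j=0}^3\omega_j^{-\alpha}$.

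The threshold is not actually derived in your proposal either. The power $|\xi_2|^{4(1+\alpha)-1}$ is integrable iff $\alpha>-1$, so it contradicts the $-\frac54$ you want. In the paper, the case $\alpha\ge0$ is handled by $\omega_1\omega_2\omega_3\lesssim\omega_0$. For $\alpha<0$ the decisive region is $0<\xi_2<\xi_0\ll1$. There the expansion $h=-\frac1{16}\xi_0\xi_2(\xi_0-\xi_2)+\dots$ gives $\omega_1\sim\xi_0\xi_2(\xi_0-\xi_2)$ and $\omega_3\sim\xi_0-\xi_2$. The worst of the four terms then carries $\xi_2^{3/2+2\alpha}$, coming from $\sqrt{\omega_2}\,\omega_2^{\alpha}$ and $\omega_1^{1+\alpha}$, and this is integrable iff $\alpha>-\frac54$. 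Note that $F^+_{0,2}\approx4$ in this region. So the $-\frac54$ comes from applying the crude bound $\sqrt{F^+_{0,2}}\ge2\sqrt{\omega_0\omega_2}$ there, not from a degenerating Jacobian at $\xi_0\to0$ as your heuristic suggests. Any correct lower bound at least as strong as the paper's would suffice, but without one your case analysis has nothing to run on.
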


In \textit{Section \ref{sectionfirstproperties}}, we turn to the linearized problem around RJ equilibria. The traditional tools of kinetic theory yield the following weighted $L^2$-type estimate for the perturbation.

\begin{theorem}
We denote by $L$ the linearized operator around a RJ solution with $\beta,\gamma >0$. There exists a function $a(\xi)$ satisfying  $|a(\xi)| \sim |\sin \left( \frac \xi 2 \right)|^{\frac 53}$ as $\xi \to 0$,  such that  
for all initial data $g_0 \in L^2$ with $$ 
\int_0^{2\pi} a(p) g_0(p) \phi(p) \dd p = 0 \ \text{ for all } \phi \in \operatorname{Ker} L,$$ 
it holds
$$
\int_0^\infty \int_0^{2\pi} a(\xi) |e^{tL} g_0(\xi)|^2 \dd \xi \dd t \lesssim \| g_0 \|_{L^2}^2.
$$
\end{theorem}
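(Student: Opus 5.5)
The plan is to reduce the space-time estimate to a spectral-gap-type coercivity for $L$ in a weighted sense. Write $L = -\frac{1}{\mathfrak f}\mathcal{Q}$, where the quadratic form is
$$
\langle \mathfrak f\, L g, g\rangle = -\int K_{0,1,2,3}\, \mathfrak f_0\mathfrak f_1\mathfrak f_2\mathfrak f_3 \Big[\tfrac{g_0}{\mathfrak f_0}+\tfrac{g_1}{\mathfrak f_1}-\tfrac{g_2}{\mathfrak f_2}-\tfrac{g_3}{\mathfrak f_3}\Big]^2 \delta(\Sigma)\delta(\Omega) =: -D(g) \le 0 .
$$
Working in the Hilbert space with weight $\mathfrak f$, in which $L$ is symmetric and non-positive, the energy identity reads
$$
\frac{d}{dt}\|e^{tL}g_0\|^2_{L^2(\mathfrak f)} = -2D(e^{tL}g_0).
$$
Integrating in time gives $\int_0^\infty D(e^{tL}g_0)\,dt \le \tfrac12\|g_0\|^2_{L^2(\mathfrak f)} \lesssim \|g_0\|_{L^2}^2$, since $\mathfrak f$ is bounded for $\beta,\gamma>0$. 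It therefore suffices to prove a weighted coercivity estimate
$$
D(g)\gtrsim \int a\,|g|^2\,d\xi \quad \text{for } g \perp_a \operatorname{Ker} L,
$$
and to check that this orthogonality is propagated by the flow. By the Lukkarinen--Spohn classification (Theorem \ref{theo:FPU collisional_invar}), $\operatorname{Ker} L=\operatorname{Span}(\mathfrak f,\mathfrak f\omega)$. These are also conserved directions, since mass and energy are conserved by the linearized flow. Hence the natural orthogonality is in the $L^2(\mathfrak f)$ pairing, and $a$ will be chosen so that $a$-orthogonality is compatible with it. The most likely choice is to define $a$ as a weight comparable to the "collision frequency", possibly adjusted by a finite-rank correction.

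To get coercivity, split $D$ in the usual way into a loss/multiplication part and a gain part. Expanding the square, the diagonal terms give $\int \nu(\xi)|g|^2/\mathfrak f$, with collision frequency
$$
\nu(\xi_0)=\int \omega_0\omega_1\omega_2\omega_3\, \mathfrak f_1\mathfrak f_2\mathfrak f_3\, \delta(\Sigma)\delta(\Omega)\, d\xi_{123}.
$$
Using the explicit parametrization \eqref{collisionformula_h} with $h$ from \eqref{formula_h}, together with the Jacobian $|\omega_1'-\omega_3'|^{-1}$, I would compute the behavior of $\nu$ as $\xi_0\to0$. The claim to be verified is $\nu(\xi)\sim|\sin(\xi/2)|^{5/3}$, and then $a:=\nu$ (up to bounded factors). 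The cross terms define an operator $\mathcal K$ with $\nu^{-1/2}\mathcal K\nu^{-1/2}$ compact on $L^2$; I would show this by a Hilbert--Schmidt-type bound on the explicit kernel away from the singular points $\xi_i=0$, plus a smallness estimate near them, in the spirit of the compactness argument used above for general $\delta$. Then $\nu^{-1}\mathcal K$ is a compact perturbation of the identity on $L^2(\nu)$. Its kernel equals $\operatorname{Ker}L$ by Theorem \ref{theo:FPU collisional_invar}, once one checks that elements of the kernel lying in $L^2(\nu)$ are in $L^1$, which holds since $\nu$ only degenerates at a point with integrable power. A Fredholm/contradiction argument then yields $D(g)\ge c\int \nu |g|^2$ on the $L^2(\nu)$-orthogonal complement of $\operatorname{Ker} L$, which is precisely the orthogonality condition in the statement with $a\sim\nu$.

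Finally, the orthogonality must hold along the flow. Since $L$ is self-adjoint and non-positive, the invariant subspace $(\operatorname{Ker}L)^{\perp}$ in $L^2(\mathfrak f)$ is preserved. The $L^2(a)$-orthogonal complement is a different complement of the same two-dimensional space, so I would decompose $g_0 = P g_0 + (g_0-Pg_0)$ with the projection adapted to $L^2(\mathfrak f)$. The kernel part is constant in time and vanishes under the $a$-orthogonality condition; the remainder is controlled by the coercivity estimate, applied after the bounded change of projection. The main obstacle I expect is the precise asymptotics $\nu(\xi)\sim|\sin(\xi/2)|^{5/3}$ together with the compactness of the gain part near $\xi=0$. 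These require careful expansion of $h$ and of $|\omega_1'-\omega_3'|$ as frequencies tend to zero, and I would attack them by splitting the $\xi_2$-integral into the regions $|\xi_2|\lesssim|\xi_0|^{\theta}$ and its complement and optimizing $\theta$; the exponent $5/3$ should emerge from this balance.
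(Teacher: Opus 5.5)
\textbf{Your route matches the paper's, but the final step has a genuine gap, and the statement as written is false.} The shared route is: split $L=-A+K$ with $A$ multiplication by the collision frequency $a\sim|\sin(\xi/2)|^{5/3}$; prove $a^{-1/2}Ka^{-1/2}$ compact; identify the eigenvalue-$1$ eigenspace via the Lukkarinen--Spohn classification; get $\langle -Lg,g\rangle\gtrsim\int a|g|^2$ on the $L^2_a$-orthogonal complement of $\operatorname{Ker}L$ by the spectral theorem; integrate the energy identity in time.

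The gap is in your last paragraph, and it cannot be closed. With $f=\mathfrak f(1+g)$, $L$ is symmetric in flat $L^2$, not in $L^2(\mathfrak f)$. Writing $\Delta_g=\frac{g_0}{\mathfrak f_0}+\frac{g_1}{\mathfrak f_1}-\frac{g_2}{\mathfrak f_2}-\frac{g_3}{\mathfrak f_3}$, symmetrization gives
$$
\langle Lg,h\rangle=-\frac14\int\Big[\prod_{\ell=0}^3\omega_\ell\mathfrak f_\ell\Big]\Delta_g\,\Delta_h\,\delta(\Sigma)\delta(\Omega)\dd\xi_{0,1,2,3},
$$
whereas $\langle\mathfrak f Lg,g\rangle$ is not a square. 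So the flow conserves the flat pairings $\langle e^{tL}g_0,\phi\rangle=\langle g_0,\phi\rangle$, $\phi\in\operatorname{Ker}L$; these are the mass and energy of the perturbation $\mathfrak f g$.

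Your claim that the kernel part ``vanishes under the $a$-orthogonality condition'' is false. The condition $\int a g_0\phi=0$ does not imply $\int g_0\phi=0$; that would require $1\in a\cdot\operatorname{Span}(1,\omega)$, impossible since $a(0)=0$. For such $g_0$, set $c=\langle g_0,\phi\rangle\neq0$. Using $\|e^{tL}g_0\|_{L^2}\le\|g_0\|_{L^2}$ and Cauchy--Schwarz on $\{|\sin(\xi/2)|\ge\epsilon\}$ and its complement,
$$
|c|\le\|a^{\frac12}e^{tL}g_0\|_{L^2}\,\|a^{-\frac12}\phi\|_{L^2(\{|\sin(\xi/2)|\ge\epsilon\})}+\|g_0\|_{L^2}\,\|\phi\|_{L^2(\{|\sin(\xi/2)|<\epsilon\})}.
$$
For $\epsilon$ small, $\int a|e^{tL}g_0|^2\dd\xi$ is then bounded below uniformly in $t$, and the time integral diverges. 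The paper obtains the statement directly from the coercivity proposition, tacitly treating $a$-orthogonality as preserved, so it has the same hole. The hypothesis has to be $\int g_0\mathfrak f=\int g_0\omega\mathfrak f=0$.

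Even under that hypothesis, your ``bounded change of projection'' is not bounded in the norm you need. Write a flat-orthogonal $g$ as $g^\parallel+g^{\perp_a}$. The coefficients of $g^\parallel$ are $-G^{-1}\big(\int g^{\perp_a}\phi_i\big)_i$, with $G$ the flat Gram matrix. Since $\phi_1=\mathfrak f$ is bounded below near $\xi=0$, $\int\phi_1^2/a=\infty$, so these coefficients are not controlled by $\|a^{1/2}g^{\perp_a}\|_{L^2}$.

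In fact coercivity fails on the flat complement. Take $0\neq\psi\in\operatorname{Ker}L$ with $\langle\psi,\omega\mathfrak f\rangle=0$, and $r_n$ of height $\sim n$ on $[0,1/n]$ making $\psi+r_n$ flat-orthogonal to $\operatorname{Ker}L$. Then $\langle -L(\psi+r_n),\psi+r_n\rangle=\langle -Lr_n,r_n\rangle\lesssim\int a|r_n|^2\sim n^{-2/3}$, while $\int a|\psi+r_n|^2\to\int a|\psi|^2>0$.

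What the energy identity and coercivity actually give is $\int_0^\infty\int a|(e^{tL}g_0)^{\perp_a}|^2\dd\xi\dd t\lesssim\|g_0\|_{L^2}^2$, for any $g_0$. This is exactly the input of the paper's lemma on $m(t)$. There, the parallel part of a flat-orthogonal solution is controlled only through $L^\infty$ information on the edges and a time-dependent bulk/edge splitting. That yields polynomial decay, not an $L^2_t$ bound.

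Two smaller points:
\begin{itemize}
\item $L^2(\nu)\not\subset L^1$, since $\nu^{-1}\sim|\xi|^{-5/3}$ is not integrable (e.g.\ $|\xi|^{-1}\in L^2(\nu)\setminus L^1$). Applying the $L^1$ classification to eigenfunctions of $\nu^{-1/2}\mathcal K\nu^{-1/2}$ therefore needs a separate regularity argument; the paper glosses over this too.
\item The $K_1$ kernel has a $1/\sqrt{F^-_{0,1}}$ singularity along an interior curve, not only at $\xi_i=0$. It is not Hilbert--Schmidt there, and has to be handled by norm approximation as in the paper.
\end{itemize}
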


This estimate is insufficient for nonlinear purposes: the decay in time is weak, and we cannot hope to close nonlinear estimates since the equation is ill-posed in $L^2$ type spaces. Furthermore, the lack of a spectral gap leads to the degenerate weight $a(\xi)$. \textit{Section \ref{sectionpointwise}} addresses these shortcomings by proving pointwise decay at a polynomial rate.

\begin{theorem} \label{theo:decay linear semigr FPUT}
Assume that $\gamma>0$ and that $g_0 \in L^\infty$, with a zero projection (in $L^2$) on $\operatorname{Ker} L$. Then for any $\mu ,\nu \in [\frac 16, \frac 12]$ and any $\delta>0$
$$
\| \omega^\mu e^{tL} g \|_{L^\infty} \lesssim_\delta \langle t \rangle^{- \frac{3}{5}(\mu + \nu)+ \delta} \| \omega^{-\nu} g_0 \|_{L^\infty}.
$$
\end{theorem}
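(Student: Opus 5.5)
We will prove Theorem~\ref{theo:decay linear semigr FPUT} by splitting $L$ into a multiplicative (loss) part and an integral (gain) part, $L g = -\nu(\xi) g + \mathcal{K} g$. Here $\nu(\xi) = \frac{1}{\mathfrak f_0}\int K_{0,1,2,3}\mathfrak f_0\mathfrak f_1\mathfrak f_2\mathfrak f_3 \frac{1}{\mathfrak f_0}\,\delta(\Sigma)\delta(\Omega)\,\dd\xi_{1,2,3}$ is the collision frequency and $\mathcal{K}$ collects the terms with $g_1,g_2,g_3$. Using the explicit parametrization \eqref{formula_h} and the formula \eqref{collisionformula_h}, together with $K\sim\omega_0\omega_1\omega_2\omega_3$ and $\mathfrak f = (\beta\omega+\gamma)^{-1}$ with $\gamma>0$ (so $\mathfrak f$ is bounded above and below), we first compute the asymptotics $\nu(\xi)\sim \omega(\xi)^{5/3}$ as $\xi\to0$, and $\nu\gtrsim1$ away from $0$. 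This is consistent with the weight $a(\xi)$ in the previous theorem. We also need weighted bounds on $\mathcal K$: $|\mathcal K g|(\xi)\lesssim \omega(\xi)^{\kappa}\|\omega^{-\nu'} g\|_{L^\infty}$ for suitable exponents. We would also show that $\mathcal K$ gains smallness (or compactness) near $\xi=0$, using the same kernel estimates as in the proof of Theorem~\ref{thmlwp}.

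The decay rate then comes from the loss part alone. We have $|\omega^\mu e^{-t\nu}g_0|\le \omega^{\mu+\nu}e^{-ct\omega^{5/3}}\|\omega^{-\nu}g_0\|_{L^\infty}$. Maximizing over $\omega$ (at $\omega\sim t^{-3/5}$) gives exactly $\langle t\rangle^{-\frac35(\mu+\nu)}$. The plan is to propagate this to $e^{tL}$ by Duhamel, $e^{tL}=e^{-t\nu}+\int_0^t e^{-(t-s)\nu}\mathcal K e^{sL}\,ds$, in a bootstrap norm $\sup_t \langle t\rangle^{\frac35(\mu+\nu)-\delta}\|\omega^\mu e^{tL}g_0\|_{L^\infty}$. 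A single bootstrap cannot close, because $\mathcal K$ is not small. We therefore interpolate with the $L^2$ information of the previous theorem. The zero projection on $\operatorname{Ker}L=\operatorname{Span}(\mathfrak f,\mathfrak f\omega)$ (using Theorem~\ref{theo:FPU collisional_invar} to identify the kernel) together with the time-integrated weighted $L^2$ estimate gives decay of $\int a|e^{sL}g_0|^2$ on average. The smoothing of $\mathcal K$ (from $L^2$-weighted to weighted $L^\infty$, via Cauchy--Schwarz along the one-dimensional resonant parametrization) then converts this into pointwise control of the $\mathcal K e^{sL}g_0$ term. Iterating Duhamel twice, i.e.\ using $\mathcal K e^{-\tau\nu}\mathcal K$, should yield the needed $L^2\to L^\infty$ gain. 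The restriction $\mu,\nu\in[\frac16,\frac12]$ enters exactly in the integrability of the kernel weights in these estimates; the $\delta$ loss comes from logarithmic divergences in the time convolutions $\int_0^t\langle t-s\rangle^{-a}\langle s\rangle^{-b}ds$ at the endpoint exponents.

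The main obstacle is this last step: getting pointwise polynomial decay for the gain term without a spectral gap. We would handle it by a frequency decomposition at scale $\omega\sim \langle t\rangle^{-3/5}$. For $\omega(\xi)$ above this scale, $e^{-(t-s)\nu}$ provides exponential damping over most of the time interval. For $\omega(\xi)$ below it, the prefactor $\omega^\mu$ directly yields the decay. The middle region is controlled by the weighted $L^2$ bound. If the double-Duhamel smoothing fails, the fallback is to use a weighted $L^2$ hypocoercive-type estimate with time weights $\langle t\rangle^{k}$ applied to $\omega^{-s}$-weighted norms, proved by commuting the weight with $L$. We would then interpolate to $L^\infty$ using the bounded kernel of $\mathcal K$.
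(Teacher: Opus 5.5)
Your loss/gain splitting (collision frequency $a(\xi)\sim\omega^{5/3}$, which you call $\nu(\xi)$, plus the gain operators $K_1,K_2$) and the cut at $\omega\sim\langle t\rangle^{-3/5}$ are the right ingredients. However, the step you flag as the main obstacle is where the whole proof lives, and what you propose there does not close.

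\textbf{What the gain term requires.} To bound the Duhamel gain term you need a \emph{pointwise-in-time} decay rate for $\|e^{sL}g_0\|_{L^2}$. For $b<1$, at $\omega\sim\langle t\rangle^{-3/5}$ one has $\omega^{\mu}\int_0^t e^{-(t-s)a(\xi)}\omega^{3/2}\langle s\rangle^{-b}\,\dd s\sim\langle t\rangle^{-\frac35\mu+\frac1{10}-b}$. So you need $b\geq\frac35\nu+\frac1{10}$, that is $\|e^{sL}g_0\|_{L^2}\lesssim\langle s\rangle^{-2/5}$ when $\nu=\frac12$.

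\textbf{Why your proposed inputs fail.}
\begin{itemize}
\item Corollary~\ref{corodissi} gives no rate at all: there is no spectral gap and the weight degenerates.
\item Upgrading it via monotonicity of the Dirichlet form to $\langle -Lg(t),g(t)\rangle\lesssim t^{-1}$ does not suffice. This controls only $a^{1/2}g^{\perp_a}$, not the component of $g$ along $\operatorname{Ker}L$ in the $a$-inner product. Flat orthogonality cannot be used to control that component, since $a^{-1/2}\mathfrak f\sim\omega^{-5/6}\notin L^2$.
\item A weighted Cauchy--Schwarz only gives $|K_2 g(\xi)|\lesssim\omega^{2/3}\|a^{1/2}g\|_{L^2}$, far from the needed $\omega^{3/2}$.
\item In the region $\omega\lesssim\langle t\rangle^{-3/5}$ the prefactor $\omega^\mu$ alone gives only $\langle t\rangle^{-3\mu/5}$. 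The missing factor $\langle t\rangle^{-3\nu/5}$ requires showing that $|g(t,\xi)|\lesssim\omega^\nu$ persists there, which again needs the $L^2$ rate.
\item The kernel of $K_1$ behaves like $(\xi_1'-\xi_1)^{-1/2}$ near the zeros of $F^-$, which is not square integrable. So neither pure Cauchy--Schwarz nor your fallback ``bounded kernel'' interpolation works there.
\end{itemize}

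\textbf{The missing idea: a two-way coupling between bulk $L^2$ decay and edge pointwise bounds.} Let $m,n$ be the $L^2$ masses on the bulk $\mathcal{B}_{t,\alpha}$ (where $\omega\gtrsim\langle t\rangle^{-\alpha}$) and on the edges, and $q$ the edge supremum.
\begin{itemize}
\item Lemma~\ref{lemm:ODEf for m}, for $\alpha<\frac35$: coercivity on the bulk acts at rate $\langle t\rangle^{-5\alpha/3}$. The kernel component is handled by inverting the Gram matrix of $\mathfrak f,\omega\mathfrak f$ on the bulk. All leakage from the edges is paid for by $q(t)\leq\langle t\rangle^{e}$. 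This gives $\frac{d}{dt}(m+n)+c\langle t\rangle^{-5\alpha/3}(m+n)\lesssim\langle t\rangle^{2e-8\alpha/3}$, hence $m(t)\lesssim\langle t\rangle^{2e-\alpha}$.
\item Lemma~\ref{lem: Step 1 of the iteration}: if $\|g(t)\|_{L^2}\leq\langle t\rangle^{b}$, the gain terms are bounded pointwise by $\omega^{3/2-}\|g\|_{L^2}$. Near the $K_1$ singularity (within $\langle t\rangle^{-N}$) one uses the crude bound $\|g(t)\|_{L^\infty}\lesssim\langle t\rangle^{1+}$ of Lemma~\ref{lem:weakLinfty}. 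This gives $|g(t,\xi)|\lesssim|g_0(\xi)|+\omega^{3/2}\langle t\rangle^{b+1+}$ everywhere, and $|g(t,\xi)|\lesssim\omega^{-1/6}\langle t\rangle^{b+}$ in the bulk.
\item Alternating the two, starting from $b=0$, yields successively $q\lesssim\langle t\rangle^{1/10+}$, $\|g\|_{L^2}\lesssim\langle t\rangle^{-1/5+}$, $q\lesssim\langle t\rangle^{-1/10+}$, $\|g\|_{L^2}\lesssim\langle t\rangle^{-2/5+}$, and $q\lesssim\langle t\rangle^{-3\nu/5+}$.
\item The theorem then follows by using $q$ on the edges and the $\omega^{-1/6}$ bound in the bulk.
\end{itemize}

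This is also where the range $[\frac16,\frac12]$ comes from, rather than from integrability of kernel weights. In the bulk $\omega^{3/2}/a\sim\omega^{-1/6}$, and at the edge scale $\omega^{3/2}\langle t\rangle^{3/5}\sim\langle t\rangle^{-3/10}$.
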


This is achieved by understanding how the edges of the frequency domain, where dissipation degenerates, interact with the bulk of the domain. At a more technical level, we resort to an iterative scheme to gain decay increasingly.

Finally, \textit{Section \ref{sectionnonlinear}} deals with the fully nonlinear problem, showing global existence of solutions around non-singular RJ. This is accomplished through a careful definition of an appropriate norm that allows us to control the nonlinearity, relying crucially on the structure of the equation.

\begin{theorem} There exists $\varepsilon_0$ such that the following holds. If
$$
f(t=0) = \mathfrak{f}_{\beta,\gamma} [ 1 + g_0]
$$
where $\int \mathfrak{f}_{\beta,\gamma} g_0 \dd \xi = \int \omega  \mathfrak{f}_{\beta,\gamma} g_0 \dd \xi = 0$ and $\| \omega^{-\frac 12} g_0 \|_{L^\infty} = \varepsilon < \varepsilon_0$,
then there exists a global solution which can be written
$$
f(t) = \mathfrak{f}_{\beta,\gamma} [ 1 + g] \qquad \mbox{where} \quad \| \omega^{\frac{1}{2}} g(t,\xi) \|_{L^\infty_\xi} \lesssim \varepsilon \langle t \rangle^{-\frac 12} \quad \mbox{for all $t \geq 0$}.
$$
\end{theorem}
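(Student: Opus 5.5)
The plan is to argue perturbatively around the RJ state, writing $f = \mathfrak{f}_{\beta,\gamma}(1+g)$. The perturbation then satisfies
$$\partial_t g = Lg + \mathcal{N}_2(g,g) + \mathcal{N}_3(g,g,g),$$
where $\mathcal{N}_2$ and $\mathcal{N}_3$ are the quadratic and cubic parts of $\mathfrak{f}^{-1}\mathcal{C}(\mathfrak{f}(1+g))$. Duhamel's formula gives
$$g(t) = e^{tL}g_0 + \int_0^t e^{(t-s)L}\big[\mathcal{N}_2 + \mathcal{N}_3\big](s)\dd s.$$
Two preliminary points set up the argument.

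\begin{itemize}
\item By conservation of mass and energy, the projection of $g(t)$ onto $\operatorname{Ker} L = \operatorname{Span}(1,\omega)$ vanishes for all times, in the appropriate weighted sense (the orthogonality conditions on $g_0$ are exactly $\int \mathfrak{f} g_0 = \int \omega\mathfrak{f} g_0 = 0$).
\item The nonlinear terms also have zero projection on this kernel, since $\int \mathfrak{f}\varphi\,\mathcal{N}\dd\xi = 0$ for $\varphi \in \{1,\omega\}$ by the symmetrization identity of Section \ref{section_monotonic}.
\end{itemize}

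This allows us to apply Theorem \ref{theo:decay linear semigr FPUT} to every term of the Duhamel formula.

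I would run a bootstrap in the norm $\|g\|_X = \sup_t \langle t\rangle^{1/2}\|\omega^{1/2}g(t)\|_{L^\infty}$, possibly supplemented by an auxiliary norm $\sup_t \|\omega^{-\nu'}g(t)\|_{L^\infty}$ with mild growth in time. For the linear part, take $\mu=\nu=\frac12$ in Theorem \ref{theo:decay linear semigr FPUT}. This gives decay $\langle t\rangle^{-3/5+\delta}$, which is better than $\langle t\rangle^{-1/2}$, so that part is fine.

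For the nonlinear terms, the key estimate is a weighted bilinear (and trilinear) bound in the spirit of Theorem \ref{thmlwp}. The form I aim for is
$$\|\omega^{-\nu}\mathcal{N}_2(g,h)\|_{L^\infty} \lesssim \|\omega^{\mu_1} g\|_{L^\infty}\,\|\omega^{\mu_2} h\|_{L^\infty}.$$
This should hold for a range of exponents allowed by the $\alpha>-\frac54$ boundedness of the collision operator. The kernel $\omega_0\omega_1\omega_2\omega_3$ and the Jacobian $|\omega_1'-\omega_3'|^{-1}$ absorb the weights, and the explicit parametrization \eqref{formula_h} is used to count the singularities near $\xi=0$.

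Plugging this into Duhamel yields, for the quadratic term,
$$\|\omega^{1/2}(\cdots)(t)\|_{L^\infty} \lesssim \int_0^t \langle t-s\rangle^{-\frac35(\frac12+\nu)+\delta}\,\varepsilon^2\langle s\rangle^{-1}\dd s.$$
When $\frac35(\frac12+\nu)\ge\frac12$, this is bounded by $\varepsilon^2\langle t\rangle^{-1/2}$ up to logarithms, and $\nu=\frac13$ suffices. The cubic term is easier.

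The main obstacle is whether the bilinear estimate can actually place $\nu$ large (at least about $\frac13$) on the output while only using $\omega^{1/2}$-weighted, that is $\langle s\rangle^{-1/2}$-decaying, control of the inputs. Two difficulties arise:
\begin{itemize}
\item The positive weight $\omega^{1/2}$ does not control $g$ near $\xi=0$.
\item Terms in which $g_0$ (the outgoing frequency) appears linearly behave like a loss term $-g_0\nu(\xi_0)$ that cannot gain a factor of $\omega$.
\end{itemize}

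To handle the second difficulty, I would first separate out the terms of the form $g\cdot(\text{collision frequency of } g)$ and treat them as a time-dependent perturbation of the damping $\nu(\xi)\sim\omega^{5/3}$ in $L$. Since the perturbation is small, of size $O(\varepsilon)$ times $\nu$, it preserves the semigroup estimates. Only the genuinely gain-type terms, in which none of $g_0$'s factors appears, go through Duhamel. For these I would carry the extra norm $\|\omega^{-1/2}g\|_{L^\infty}\lesssim\varepsilon$, propagated via Theorem \ref{theo:decay linear semigr FPUT} with $\mu=-\frac12$ if it extends, or with slow growth otherwise. Interpolating between this norm and the decaying norm then supplies the missing weight near $\xi=0$. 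Combined with local well-posedness from Theorem \ref{thmlwp} and continuity of the norm, the bootstrap closes for $\varepsilon_0$ small, which gives global existence with the stated bound.
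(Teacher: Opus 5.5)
Your overall architecture matches the paper's: Duhamel's formula, vanishing projection of $g$ and of the nonlinearity on $\operatorname{Ker} L=\operatorname{Span}(\mathfrak{f},\omega\mathfrak{f})$ via the conservation laws, Theorem \ref{theo:decay linear semigr FPUT}, and a bootstrap in time-weighted $\omega^{\mu}L^\infty$ norms. The gap is at the step you yourself call the main obstacle. You never prove the weighted multilinear bound, and the workaround you offer in its place does not hold together.

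\begin{itemize}
\item[(a)] Theorem \ref{theo:decay linear semigr FPUT} is only available for $\mu,\nu\in[\frac16,\frac12]$. Propagating $\|\omega^{-1/2}g\|_{L^\infty}$ by taking $\mu=-\frac12$ is therefore unsupported.
\item[(b)] Moving the loss-type terms into the linear part produces a time-dependent operator that no longer annihilates $\mathfrak{f}$ and $\omega\mathfrak{f}$. Neither Proposition \ref{propositiondissipation} nor the iterative argument behind the decay theorem applies to it as is. Moreover, the remaining gain-type terms no longer have zero projection on $\operatorname{Ker}L$ (only the full nonlinearity does), so the decay theorem cannot be applied to them either.
\item[(c)] Even granting the bilinear bound, $\nu=\frac13$ is borderline. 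The $\delta$-loss in the decay theorem, together with the logarithm, gives only $\langle t\rangle^{-1/2+\delta}\log\langle t\rangle$. This does not close a bootstrap at rate exactly $\langle t\rangle^{-1/2}$.
\end{itemize}

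The missing idea is that the obstacle is not there. On the resonant manifold one has two elementary bounds,
$$\sqrt{F^+_{0,2}}\geq 2\sqrt{\omega_0\omega_2} \qquad\mbox{and}\qquad \omega_1\omega_2\omega_3\lesssim\omega_0 .$$
With $\gamma>0$, these give directly, with $\nu=\frac12$ on the output,
$$\|\omega^{-1/2}\mathcal{Q}(g)\|_{L^\infty}\lesssim\|\omega^{1/2}g\|_{L^\infty}^2,\qquad \|\omega^{-1/2}\mathcal{C}(g)\|_{L^\infty}\lesssim\|\omega^{1/2}g\|_{L^\infty}^3 .$$
Here is how the individual terms behave after inserting the weights.
\begin{itemize}
\item The loss-type term $g_0g_1$ gives the weighted integrand $\omega_0^{-1/2}\omega_1^{1/2}\omega_2^{1/2}\omega_3\lesssim\omega_3^{1/2}$. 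The kernel vanishes at $\xi_0=0$ and more than compensates $|g_0|\lesssim\omega_0^{-1/2}$, contrary to your second concern.
\item The gain term $g_2g_3$ gives $\omega_1\omega_3^{1/2}$.
\item The cubic terms are bounded by $1$ or by $\omega_2^{-1/2}$, which is integrable in $\xi_2$.
\end{itemize}
Your first concern is resolved in the same way: the singular weights $\omega_i^{-1/2}$ are absorbed by the kernel.

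With $\mu=\nu=\frac12$, your single-norm bootstrap then closes without any auxiliary norm, since
$$\int_0^t\langle t-s\rangle^{-\frac35+\delta}\langle s\rangle^{-1}\dd s\lesssim\langle t\rangle^{-\frac35+\delta}\log\langle t\rangle\lesssim\langle t\rangle^{-\frac12}.$$
The paper uses the same two kernel bounds but runs the bootstrap in the two-component norm \eqref{def:B_t norm}, with weights $\omega^{1/6}$ and $\omega^{1/2}$. That yields the stronger rate $\langle t\rangle^{-3/5+\delta}$.
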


Finally, we say a word of the inhomogeneous case, where little is known, even though an important step was accomplished in \cite{Xiang}. There, it was proved that the equation is locally well-posed in the inhomogeneous setting, and that the time of existence can be improved for localized data thanks to the dispersive effect, which implies decay for the solution. Indeed, the time of existence in the theorem below is $\varepsilon^{-4}$, to be compared with $\varepsilon^{-2}$ in Theorem \eqref{thmlwp} (the agreement with the kinetic time scale $T_{\operatorname{kin}} = \varepsilon^{-4}$ being purely coincidental!).

\begin{theorem} We consider the Cauchy problem
$$
\begin{cases}
& \partial_t f + \omega'(\xi) \partial_x f = \mathcal{C}(f) \\
& f(t=0,x,\xi) = F(x,\xi)
\end{cases}
$$
Let $\varepsilon$ denote
$$
\varepsilon = \left\| \omega^{\frac 12} F \right\|_{L^\infty(\mathbb{R} \times \mathbb{T})} + \left\| F \right\|_{L^1_{x,\xi}(\mathbb{R} \times \mathbb{T})} + \left\| \omega^{-\frac 12} F \right\|_{L^2(\mathbb{R},L^\infty ( \mathbb{T}))} + \left\|  F \right\|_{L^1(\mathbb{R},L^2 ( \mathbb{T}))}.
$$
If $\varepsilon$ is sufficiently small, then there exists a unique solution $f(t,x,\xi)$ on $[0,C\varepsilon^{-4}]$ (with $C$ constant) which furthermore satisfies
$$ 
\sup_{t\in [0,T]} \left\| \omega^{\frac 12} f(t) \right\|_{L^\infty(\mathbb{R} \times \mathbb{T})} + \left\| f(t) \right\|_{L^1_{x,\xi}(\mathbb{R} \times \mathbb{T})} + (1+t)^{\frac 12} \left\| \omega^{\frac 12} f \right\|_{L^2(\mathbb{R},L^1 ( \mathbb{T}))} \lesssim \varepsilon.
$$

\end{theorem}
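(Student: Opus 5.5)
The plan is a bootstrap on $[0,T]$ in Duhamel form along the free transport flow. Write $S(t)F(x,\xi) = F(x-\omega'(\xi)t,\xi)$ for the free flow, so that
$$
f(t) = S(t)F + \int_0^t S(t-s)\,\mathcal{C}(f(s))\,\dd s .
$$
Define the norm
$$
\|f\|_{X_T} = \sup_{t\in[0,T]} \Big[ \|\omega^{\frac12} f(t)\|_{L^\infty_{x,\xi}} + \|f(t)\|_{L^1_{x,\xi}} + (1+t)^{\frac12}\|\omega^{\frac12} f(t)\|_{L^2_xL^1_\xi} \Big].
$$
I will show that $\|f\|_{X_T}\le 2C_0\varepsilon$ implies $\|f\|_{X_T}\le C_0\varepsilon + C\varepsilon^3 T^{\frac12}$, and that the same bound holds for differences, giving a contraction. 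The condition $\varepsilon^2T^{\frac12}\ll 1$ is exactly the time scale $T\sim C\varepsilon^{-4}$, which explains the exponent in the statement. Uniqueness then follows from the difference estimate.

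\textbf{Linear estimates.} $S(t)$ is an isometry on $\omega^{\frac12}L^\infty_{x,\xi}$ and on $L^1_{x,\xi}$, so those two components are trivial. The key point is the dispersive bound
$$
\|\omega^{\frac12} S(t)F\|_{L^2_xL^1_\xi}\lesssim (1+t)^{-\frac12}\Big(\|\omega^{-\frac12}F\|_{L^2_xL^\infty_\xi}+\|F\|_{L^1_xL^2_\xi}\Big).
$$
I would prove it by $TT^*$ (or a direct duality argument). Expanding the $L^2_x$ norm of $\int \omega^{\frac12}|F(x-\omega'(\xi)t,\xi)|\,\dd\xi$ gives a double integral in $(\xi,\eta)$. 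The two translates $x-\omega'(\xi)t$ and $x-\omega'(\eta)t$ are separated by $t|\omega'(\xi)-\omega'(\eta)|$. Changing variables $v=\omega'(\xi)$ gains $t^{-1}$ off the diagonal, with Jacobian $1/|\omega''|$.

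The degeneracies have to be handled:
\begin{itemize}
\item $\omega''$ vanishes only at $\xi=0$ (where it jumps) and at $\xi=\pi$, and $\omega$ itself vanishes at $\xi=0$.
\item I plan to split $\mathbb{T}$ into a neighborhood of $\pi$, a neighborhood of $0$, and the bulk.
\item Near $\pi$, I use the $L^1_xL^2_\xi$ norm with a quadratic stationary-phase-type degeneracy, which still gives $t^{-\frac12}$ for $L^2_x$.
\item Near $0$, the weights $\omega^{\frac12}$ and $\omega^{-\frac12}$ in the norms absorb the loss.
\end{itemize}

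\textbf{Nonlinear estimates.} Using the explicit FPUT kernel and the parametrization $h$ from \eqref{formula_h}, as in the proof of Theorem \ref{thmlwp}, I bound the trilinear collision form pointwise in $x$ with two factors measured in $\omega^{\frac12}L^1_\xi$ and one in $\omega^{\frac12}L^\infty_\xi$:
$$
\|\omega^{\frac12}\mathcal{C}(f,g,h)(x)\|_{L^\infty_\xi}\lesssim \|\omega^{\frac12}f(x)\|_{L^\infty_\xi}\|\omega^{\frac12}g(x)\|_{L^1_\xi}\|\omega^{\frac12}h(x)\|_{L^1_\xi}.
$$
The $1/|\omega'_1-\omega'_3|$ singularity is absorbed by the factors $\omega_i$ in $K$. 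Taking $L^\infty_x$ for the first factor and Cauchy--Schwarz in $x$ for the other two, the $L^\infty$ component of the Duhamel term is controlled by
$$
\int_0^t \varepsilon\cdot\big(\varepsilon(1+s)^{-\frac12}\big)^2\,\dd s\lesssim\varepsilon^3\log(1+t).
$$
For the $L^1_{x,\xi}$ and $L^2_xL^1_\xi$ components, I put one factor in $L^1_{x,\xi}$ or $L^\infty$ and the others in $L^2_xL^1_\xi$. The latter component must also carry the $(1+t)^{\frac12}$ weight. Here I use that the $\omega^{\frac12}L^2_xL^1_\xi$ norm of $S(t-s)G$ is not increased by $S$, since $S$ acts as a translation in $x$ at each fixed $\xi$. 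This gives a bound of order $\varepsilon^3 (1+t)^{\frac12}$, which yields the constraint $t\lesssim\varepsilon^{-4}$.

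\textbf{Main obstacle.} I expect the hardest step to be making the trilinear collision bound compatible with the mixed norm $L^2_xL^1_\xi$ while keeping the weights consistent near $\xi=0$. In particular, I need the bound to hold whichever of the three input slots carries the $L^\infty$ factor. If the naive weight bookkeeping fails there, the first thing I would try is to redistribute the powers of $\omega$ using the kernel estimate $K\lesssim\omega_0\omega_1\omega_2\omega_3$ together with the local well-posedness bound of Theorem \ref{thmlwp}, valid for $\alpha>-\frac54$, which leaves room to shift $\omega^{\frac12}$ between factors.
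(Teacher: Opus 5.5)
Your proposal has a genuine gap at the nonlinear step, and it fails for two independent reasons.

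\emph{The trilinear bound is false.} Rename the inputs to avoid a clash with the parametrization $h(\xi_0,\xi_2)$. The bound
\[
\|\omega^{\frac12}\mathcal{C}(u,v,w)\|_{L^\infty_\xi}\lesssim\|\omega^{\frac12}u\|_{L^\infty_\xi}\|\omega^{\frac12}v\|_{L^1_\xi}\|\omega^{\frac12}w\|_{L^1_\xi}
\]
cannot hold. For fixed $\xi_0$, the collision integral runs over the one-dimensional fiber $\xi_2\mapsto(h(\xi_0,\xi_2),\xi_2,\xi_0+h(\xi_0,\xi_2)-\xi_2)$. A product of two merely integrable functions evaluated at two points of this fiber need not be integrable in $\xi_2$.

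Concretely, choose $\xi_0^*,\xi_2^*$ generic and away from $0$, and set $\xi_3^*=\xi_0^*+h(\xi_0^*,\xi_2^*)-\xi_2^*$. Take $u\equiv1$, and let $v,w$ be $N$ times the indicators of intervals of length $1/N$ centered at $\xi_2^*$ and $\xi_3^*$. Then the single monomial $\int\frac{\omega_0\omega_1\omega_2\omega_3}{\sqrt{F^+_{0,2}}}\,u_1v_2w_3\,\dd\xi_2$, which your term-by-term plan must control, is $\gtrsim N$ at $\xi_0^*$. Your right-hand side stays $O(1)$.

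With output in $L^\infty_\xi$, at most one input may carry an integrable norm in $\xi$. Two such inputs are admissible only when the output is in $L^1_\xi$, where $(\xi_0,\xi_2)$ supply two integration variables; your $L^1_{x,\xi}$ component is not affected by this objection.

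\emph{The $x$-bookkeeping is also wrong.} Even granting such a bound, the $L^\infty_{x,\xi}$ part of the Duhamel term is $\sup_{x,\xi}\bigl|\int_0^t\mathcal{C}(f(s))(x-\omega'(\xi)(t-s),\xi)\,\dd s\bigr|$, a pointwise quantity in $x$. Cauchy--Schwarz in $x$ on two $L^2_xL^1_\xi$ factors produces an $L^1_x$ bound, not an $L^\infty_x$ one.

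Since $X_T$ contains no decaying quantity that is pointwise in $x$, this component only gets the homogeneous bound $\int_0^T\|\omega^{\frac12}f(s)\|_{L^\infty}^3\,\dd s\sim\varepsilon^3T$ (Theorem~\ref{thmlwp}). That gives only $T\sim\varepsilon^{-2}$, and your claimed $\varepsilon^3\log(1+t)$ is unfounded.

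\emph{What is missing.} You need a quantity that is $L^\infty$ in $x$ and decays like $(1+s)^{-\frac12}$, or alternatively a bound on integrals along characteristics growing like $t^{\frac12}$. Combine it with a collision estimate that puts two factors in $\omega^{\frac12}L^\infty_\xi$ and only one in an $L^1_\xi$- or $L^2_\xi$-type norm. This component is then $\lesssim\int_0^T\varepsilon^3(1+s)^{-\frac12}\,\dd s\sim\varepsilon^3T^{\frac12}$, so the restriction $T\lesssim\varepsilon^{-4}$ already appears here. For the free flow, the data norm $\|\omega^{-\frac12}F\|_{L^2_xL^\infty_\xi}$ supplies exactly such decay: $\|S(t)F\|_{L^\infty_xL^2_\xi}\le 2t^{-\frac12}\|\omega^{-\frac12}F\|_{L^2_xL^\infty_\xi}$. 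This quantity has to be added to the bootstrap and propagated through Duhamel's formula.

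\emph{Further errors.} $S(t)$ does not preserve $\omega^{\frac12}L^2_xL^1_\xi$. It is an isometry of $L^1_\xi L^2_x$, a strictly stronger norm, and free transport can focus. For $G(x,\xi)=\phi(x+\omega'(\xi)t_0)$ one has $S(t_0)G=\phi(x)$, and the $\omega^{\frac12}L^2_xL^1_\xi$ norm grows by a factor of order $t_0^{\frac12}$.

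Also, one factor in $L^\infty_x$ and two in $L^2_x$ give an $L^1_x$ output, not an $L^2_x$ one. The Duhamel term of this component must therefore go through the dispersive estimate itself. That means bounding $\|\mathcal{C}(f(s))\|_{L^1_xL^2_\xi}$ for $t-s\gtrsim1$, where an $L^1_x$ output is compatible with two $L^2_x$ factors, and using the short-time bound otherwise.

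Finally, your linear analysis misidentifies the degeneracy. For $\omega=|\sin(\xi/2)|$ one has $\omega''=-\omega/4$ on $(0,2\pi)$, so $\omega''$ does not vanish at $\pi$; it is $\omega'$ that does. The only degeneracy, at $0$, is exactly compensated by the weight:
\[
\int_0^{2\pi}\omega(\xi)\,|\phi(y+\omega'(\xi)t)|^2\,\dd\xi=\frac4t\int_{y-t/2}^{y+t/2}|\phi(z)|^2\,\dd z .
\]
By duality this gives $\|\omega^{\frac12}S(t)F\|_{L^2_xL^1_\xi}\le 2t^{-\frac12}\|F\|_{L^1_xL^2_\xi}$ with no splitting of $\mathbb{T}$. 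For $t\lesssim1$, Minkowski's inequality gives $\|\omega^{\frac12}S(t)F\|_{L^2_xL^1_\xi}\le\|\omega^{\frac12}F\|_{L^1_\xi L^2_x}\lesssim\|\omega^{-\frac12}F\|_{L^2_xL^\infty_\xi}$.
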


\subsection{Local well-posedness for \eqref{KFPUT}} 
\label{section_lwp}
In the case $\omega(\xi) = \left| \sin \left( \frac \xi 2 \right) \right|$ which occupies us here, we learned in Section \ref{section_resonances} that $\mathscr{R}_{\operatorname{eff}}$ can be parameterized as 
\begin{equation} \label{eq: param h_FPUT}
\begin{split}
    & \mathscr{R}_{\operatorname{eff}}(\omega) = \{ (\xi_0,h(\xi_0,\xi_2),\xi_2,\xi_0 + h(\xi_0,\xi_2) - \xi_2 ), \,(\xi_0,\xi_2) \in \mathbb{T}^2 \} \\
& h(\xi_0,\xi_2) = \frac{\xi_2 - \xi_0}{2} + 2 \arcsin \left(\tan \left| \frac{\xi_2-\xi_0}{4} \right| \cos \left( \frac{\xi_0+\xi_2} 4 \right) \right).
\end{split}
\end{equation}
A semi-explicit formula for the collision operator was given in\eqref{collisionformula_h}; in the case of \eqref{KFPUT}, the computation in \cite{LukkarinenSpohn2008} gives a completely explicit formula: 
\begin{equation}
\label{formulacollision}
\mathcal{C}(f)(p_0)= \int_0^{2\pi}
\frac{ \omega_0 \omega_1 \omega_2 \omega_3
 }{\sqrt{F^+_{0,2}}}  \left[ f_1 f_2 f_3 + f_0 f_2 f_3 - f_0 f_1 f_2 - f_0 f_1 f_3 \right] \dd \xi_2
\end{equation}
where it is understood that $\xi_3 = \xi_0 + \xi_1 - \xi_2$ and $\xi_1 = h(\xi_0,\xi_2)$ and furthermore
$$
F^+(\xi_0,\xi_2) = \left[ \cos \left( \frac{\xi_0}{2}\right) +\cos \left( \frac{\xi_2}{2}\right) \right]^2 +4\sin\left( \frac{\xi_0}{2}\right)\sin\left( \frac{\xi_2}{2}\right).
$$

In order to prove the local well posedness of \eqref{KFPUT} in $\omega^\alpha L^\infty$ in Theorem \eqref{thmlwp}, it suffices to prove boundedness of $\mathcal{C}$ on $\omega^\alpha L^\infty$. Indeed, this turns \eqref{KFPUT} into a Banach space-valued ODE whose local solution can be constructed through the Banach fixed-point theorem.

\begin{lemma}
The collision operator $\mathcal{C}$ is bounded on $\omega^\alpha L^\infty$ if $\alpha > -\frac{5}{4}$.
\end{lemma}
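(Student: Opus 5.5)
We will work directly from the explicit formula \eqref{formulacollision} for $\mathcal{C}(f)$. Let $f = \omega^\alpha u$ with $\|u\|_{L^\infty}\le 1$. We must show that $|\mathcal{C}(f)(\xi_0)| \lesssim \omega_0^\alpha$ uniformly in $\xi_0$. Each of the four cubic terms is bounded separately by putting absolute values inside. For example, the gain term $f_1f_2f_3$ contributes at most
$$
\int_0^{2\pi} \frac{\omega_0\omega_1\omega_2\omega_3}{\sqrt{F^+_{0,2}}}\, \omega_1^\alpha\omega_2^\alpha\omega_3^\alpha \dd \xi_2 ,
$$
and the loss terms $f_0f_1f_2$, $f_0f_1f_3$ and the term $f_0f_2f_3$ give analogous integrals in which $\omega_0^\alpha$ factors out. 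So the whole proof reduces to bounding integrals of the form $\omega_0^{1+\alpha\epsilon_0}\int \omega_1^{1+\alpha\epsilon_1}\omega_2^{1+\alpha\epsilon_2}\omega_3^{1+\alpha\epsilon_3} (F^+_{0,2})^{-1/2}\dd\xi_2$, where exactly three of the $\epsilon_i\in\{0,1\}$ equal $1$. The target is $\lesssim\omega_0^\alpha$.

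The first step is to understand the geometry of the parametrization \eqref{eq: param h_FPUT}. The integrand can only be large when some of the $\omega_i$ are small (when $\alpha<-1$) or when $F^+_{0,2}$ vanishes. We will record the following facts.
\begin{itemize}
\item[(a)] $F^+_{0,2}$ vanishes only at the corner $\xi_0,\xi_2 \to 0$ or $2\pi$, that is, when both frequencies are near $0$ on the torus. There $\sqrt{F^+}\sim \omega_0+\omega_2$ up to sign configurations; this should follow from expanding $\cos$ and $\sin$ near $\xi=0$.
\item[(b)] On the resonant manifold, the energy identity $\omega_0+\omega_1=\omega_2+\omega_3$ gives $\omega_1\le \omega_2+\omega_3$ and $\omega_0\le\omega_2+\omega_3$.
\item[(c)] Using the explicit $h$, one should check that $\omega_1 \gtrsim \omega_0\,\omega_2\,\omega_3$-type lower bounds hold, or at least that $\omega_1$ and $\omega_3$ are comparable to explicit quantities such as $|\xi_2-\xi_0|$ times smooth factors.
\end{itemize}

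The concrete plan is to split the $\xi_2$ integral into the regions $\omega_2\le \omega_0/2$, $\omega_2\sim\omega_0$, and $\omega_2\ge 2\omega_0$. In each region we derive the sizes of $\omega_1,\omega_3$ and of $F^+_{0,2}$ from the formula for $h$, and then integrate a power of $\omega_2 \sim |\xi_2|$ (or $|2\pi-\xi_2|$).

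The main obstacle is the region near the corner, where $\xi_0$ is small and $\xi_2$ is small or close to $2\pi$. In that region $\sqrt{F^+}$ degenerates, $\xi_1=h$ and $\xi_3$ may become small, and the negative powers $\omega^{\alpha}$ accumulate. Here we will expand $h$ to leading order. For small $\xi_0,\xi_2$ one expects $\omega_1\sim \omega_0\omega_2/(\omega_0+\omega_2)$-type behaviour, with $\omega_3\sim\max(\omega_0,\omega_2)$, and $\sqrt{F^+}\sim 1$ or $\sim\omega_0+\omega_2$ depending on the branch. Inserting these into the worst term (the gain term, which has three negative powers) gives, for instance in the regime $\omega_2\ll\omega_0$, an integral like $\omega_0^{\,\cdot}\int_0^{\omega_0}\omega_2^{1+\alpha}\,\omega_2^{1+\alpha}\,\dd\omega_2$. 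Tracking all exponents, we expect the convergence condition and the matching power $\omega_0^\alpha$ to come out precisely when $\alpha>-5/4$; the threshold arises from an integral of the form $\int_0 s^{3+4\alpha}\dd s$ or a similar corner count. Away from the corner all quantities are bounded above and below except single factors $\omega_i$, and the bound is straightforward, since $1+\alpha>-1/4>-1$ makes each single-factor singularity integrable. Once boundedness is established, local well-posedness follows from Picard iteration, and the trilinear bound gives $T\sim\|\omega^{-\alpha}f_0\|^{-2}$.
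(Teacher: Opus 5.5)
Your reduction (write $f=\omega^\alpha u$, bound the four cubic terms separately in absolute value, aim for $\lesssim\omega_0^\alpha$) is the same as the paper's. The gap is that the step which actually proves the lemma, namely the exponent count as $\xi_0,\xi_2$ approach $0$, is only announced, and the asymptotics you propose to feed into it are wrong.

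\textbf{The denominator $F^+$.} $F^+_{0,2}$ does not degenerate when $\xi_0,\xi_2\to0$ from the same side: for $0<\xi_2<\xi_0\ll1$ one has $F^+\to4$. It vanishes only as $(\xi_0,\xi_2)\to(0,2\pi)$ or $(2\pi,0)$. There $F^+\sim(\omega_0^2-\omega_2^2)^2+4\omega_0\omega_2$, so $\sqrt{F^+}$ can be far smaller than $\omega_0+\omega_2$; it is of order $\sqrt{\omega_0\omega_2}+\omega_0^2$ when $\omega_2\ll\omega_0$. Using $\sqrt{F^+}\sim\omega_0+\omega_2$ as a lower bound is therefore false. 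Your split by the size of $\omega_2$ relative to $\omega_0$ cannot detect this, since $\xi_2$ near $0^+$ and near $2\pi^-$ give the same $\omega_2$ but opposite behaviour. The input you are missing is the elementary global bound $F^+_{0,2}\ge4\sin(\xi_0/2)\sin(\xi_2/2)=4\omega_0\omega_2$ on $[0,2\pi]^2$. The paper uses it to replace $\omega_0\omega_2/\sqrt{F^+}$ by $\frac12\sqrt{\omega_0\omega_2}$ everywhere.

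\textbf{The sizes of $\omega_1,\omega_3$.} On the same-side branch, a Taylor expansion of \eqref{formula_h} gives
\[
h(\xi_0,\xi_2)=-\tfrac{1}{16}\,\xi_0\xi_2(\xi_0-\xi_2)+\dots,\qquad \xi_3=\xi_0-\xi_2+\dots
\]
So $\omega_1\sim\xi_0\xi_2(\xi_0-\xi_2)$ is cubically small, and $\omega_3\sim\xi_0-\xi_2$ vanishes on the diagonal. They do not behave like $\omega_1\sim\omega_0\omega_2/(\omega_0+\omega_2)$ and $\omega_3\sim\max(\omega_0,\omega_2)$. Since $1+\alpha<0$ for $\alpha<-1$, overestimating $\omega_1,\omega_3$ underestimates the integrand, so bounds built on your guesses are not valid upper bounds. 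On the branch $\xi_2\to2\pi$ you likewise need genuine lower bounds on $\omega_1\omega_3$, which there is only of size $\omega_0\omega_2/(\omega_0^2+\omega_2^2)$.

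\textbf{The exponent count.} With the correct sizes and the bound on $F^+$, each of the four terms is controlled by a product of powers of $\xi_0$, $\xi_2$ and $\xi_0-\xi_2$. The most singular power of $\xi_2$ is $\xi_2^{3/2+2\alpha}$, for instance in the gain term $\xi_0^{3/2}\xi_2^{3/2+2\alpha}(\xi_0-\xi_2)^{2+2\alpha}$. Its integral over $\xi_2\in(0,\xi_0)$ is $\lesssim\xi_0^{6+4\alpha}$. This is finite, and uniformly bounded in $\xi_0$, exactly when $\frac32+2\alpha>-1$, i.e. $\alpha>-\frac54$. The diagonal singularity is harmless because $2+2\alpha>-\frac12$. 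Your proposed threshold integral $\int_0 s^{3+4\alpha}\dd s$ converges iff $\alpha>-1$, which confirms the count was not carried out.

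\textbf{The bulk.} Away from the corner it is not true that only single factors $\omega_i$ vanish. For $\xi_0$ in the bulk, $\omega_1$ and $\omega_2$ vanish together (linearly) as $\xi_2\to0$, and $\omega_1,\omega_3$ vanish together as $\xi_2\to\xi_0$. The singularity to check is therefore $|\cdot|^{2+2\alpha}$; it is still integrable, but that is the estimate you must state.

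\textbf{Where $-\frac54$ comes from.} On the same-side branch $\sqrt{F^+}\approx2$, so the bound $F^+\ge4\omega_0\omega_2$ discards a factor $\sqrt{\omega_0\omega_2}$ there. The exponent $\frac32+2\alpha$, and hence the value $-\frac54$, is produced by that estimate. You should reach it by following that route rather than expecting a sharp count to land on it.
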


\begin{proof} We first show this for $\alpha \geq 1$, by rather crude estimates. 

The denominator in the integrand in \eqref{formulacollision} is bounded by 
\begin{equation*}
\sqrt{ F_+(\xi_0,\xi_2)} = \sqrt{\left[ \cos \left( \frac{\xi_0}{2}\right) +\cos \left( \frac{\xi_2}{2}\right) \right]^2 +4\sin\left( \frac{\xi_0}{2}\right)\sin\left( \frac{\xi_2}{2}\right)} \geq 2 \sqrt{\omega_0 \omega_2}
\end{equation*}
implying that 
\begin{equation} 
\begin{split} 
|\mathcal{C}[f](t,\xi_0)| &\leq \int_0^{2\pi} \sqrt{\omega_0 \omega_2} \omega_1 \omega_3 \left| \prod_{\ell=0}^3 f_\ell \left(\frac{1}{f}+\frac{1}{f_1}-\frac{1}{f_2}-\frac{1}{f_3}\right) \right|
d \xi_2.
\end{split}
\end{equation} 

We now resort to the inequalities
\begin{align*}
& \sqrt{\omega_0 \omega_2} \omega_1 \omega_3  \omega_0^{-\alpha} \prod_{\ell=0}^3 \omega_\ell^{\alpha} \left({\omega_0^{-\alpha}}+ {\omega_1^{-\alpha}}+{\omega_2^{-\alpha}}+{\omega_3^{-\alpha}}\right)
\lesssim
\begin{cases}
\omega_2^{\frac{1}{2}+\alpha} & \mbox{if $-1 \leq \alpha \leq 0$} \\
1 & \mbox{if $\alpha \geq 0$}.
\end{cases}
\end{align*}
For the first inequality, when $\alpha \in [-1,0]$, expanding the terms in the bracket we see that $\omega_2^{\frac{1}{2}+\alpha}$ is the worst term and all the rest of the factors are bounded by $1$.

The second inequality, when $\alpha>0$, is a consequence of $\omega_1 \omega_2 \omega_3 \lesssim \omega_0 $ (which can be proved by an elementary computation, see \cite{GermainLaMenegaki}, which in this case implies  $(\omega_1 \omega_2 \omega_3)^\alpha \lesssim \omega_0^\alpha$. Indeed by expanding the terms in the bracket, the whole expression becomes $\omega_0^{\frac{1}{2}}\omega_2^{\frac{1}{2}+\alpha} \omega_1^{1+\alpha}\omega_3^{1+\alpha} \big( {\omega_0^{-\alpha}}+ {\omega_1^{-\alpha}}+{\omega_2^{-\alpha}}+{\omega_3^{-\alpha}}\big)$. 
The worst term then is the first one, which is bounded by $\omega_0^{\frac{1}{2}} \omega_1 \omega_2^{\frac{1}{2}}\omega_3 \lesssim 1$. 
All the others are bounded, as there is no negative exponent. As a consequence,
$$
\| \omega^{-\alpha} \mathcal{C}[f] \|_{ L^\infty} \lesssim \|\omega^{-\alpha} f \|_{L^\infty}^3 \int_0^{2\pi} \max(1,\omega_2^{\frac{1}{2} + \alpha}) \dd p_2 \lesssim \|\omega^{-\alpha} f \|_{L^\infty}^3.
$$
since this integral is finite whenever $\alpha > -\frac{3}{2}$, which is always the case since  $\alpha \geq -1$.

In fact, using the explicit parametrisation $h$ of the resonant manifold, \eqref{eq: param h_FPUT}, 
we can sharpen the threshold for $\alpha$. Namely we will show that 
$$
Q_\alpha (\xi_0,\xi_2) := \sqrt{\omega_0 \omega_2} \omega_1 \omega_3  \prod_{\ell=1}^3 \omega_\ell^{\alpha} \left({\omega_0^{-\alpha}}+ {\omega_1^{-\alpha}}+{\omega_2^{-\alpha}}+{\omega_3^{-\alpha}}\right)
\lesssim \max(1,\omega_2^{\frac{3}{2} + 2\alpha}).$$

We focus on the region where both $\xi_0, \xi_2$ are close to $0$. In particular, due to symmetry, we treat only the case $0<\xi_2<\xi_0 \ll 1$. We Taylor expand $h$, around $(0,0)$ from which we obtain the asymptotics 
\begin{align*}
h(\xi_0,\xi_2) &= -\frac{1}{16} \xi_0\xi_2 (\xi_0-\xi_2) + \mathcal{O}\Big((\xi_0-\xi_2)^3(\xi_0+\xi_2)^2\Big), \\  
\xi_3  &= (\xi_0-\xi_2) - \frac{1}{16} \xi_0\xi_2 (\xi_0-\xi_2) + \mathcal{O}\Big((\xi_0-\xi_2)^3(\xi_0+\xi_2)^2\Big).
\end{align*}
 In this region then, up to leading order,  $\omega_0 \sim \xi_0, \omega_2\sim \xi_2$, $\omega_1\sim \xi_0\xi_2 (\xi_0-\xi_2)$ and $\omega_3 \sim \xi_0-\xi_2$.
 We turn then to our quantity $Q_\alpha$ which is the sum of four terms and we estimate each term separately (for $\alpha<0$): 
 \begin{align*}
     &\omega_0^{\frac{1}{2}-\alpha } \omega_2^{\frac{1}{2}+\alpha} \omega_1^{1+\alpha} \omega_3^{1+\alpha} \sim \xi_0^{\frac{3}{2}}\xi_2^{\frac{3}{2} + 2\alpha} (\xi_0-\xi_2)^{2+2\alpha} 
     \\
     &\omega_0^{\frac{1}{2}} \omega_2^{\frac{1}{2}+\alpha} \omega_3^{1+\alpha}\omega_1 \sim
     \xi_0^{\frac{3}{2}}\xi_2^{\frac{3}{2} + \alpha} (\xi_0-\xi_2)^{2+\alpha} 
     \\
     & \omega_0^{\frac{1}{2}} \omega_2^{\frac{1}{2}} \omega_3^{1+\alpha}\omega_1^{1+\alpha}\sim 
     \xi_0^{\frac{3}{2}+ \alpha}\xi_2^{\frac{3}{2} + \alpha} (\xi_0-\xi_2)^{2+2\alpha} \\
     & \omega_0^{\frac{1}{2}} \omega_2^{\frac{1}{2}+ \alpha} \omega_3 \omega_1^{1+\alpha}\sim 
     \xi_0^{\frac{3}{2}+ \alpha}\xi_2^{\frac{3}{2} + 2 \alpha} (\xi_0-\xi_2)^{2+\alpha}. 
 \end{align*}
 So indeed if $\alpha < 0$, the worst exponent in $Q_\alpha$ is $\xi_2^{\frac{3}{2} + 2 \alpha}$.   While when $\alpha \geq 0$, the quantity is upper bounded by $1$, as written above. 

 When $\xi_0,\xi_2$ are far from the edges, or when one of them is, the same (or simpler) bounds hold.

 Finally,
$$
\| \omega^{-\alpha} \mathcal{C}[f] \|_{ L^\infty} \lesssim \|\omega^{-\alpha} f \|_{L^\infty}^3 \int_0^{2\pi} \max(1,\omega_2^{\frac{3}{2} + 2 \alpha}) \dd \xi_2 \lesssim \|\omega^{-\alpha} f \|_{L^\infty}^3, 
$$
since this integral is finite if and only if $\alpha > -\frac{5}{4}$. 

Note that these calculations indicate that the  threshold $\alpha > -\frac{5}{4}$, is in fact the optimal one for the local well-posedness in $\omega^{\alpha} L^\infty(\mathbb{T})$.
\\
\\
\\


\end{proof}

The above lemma is rather elementary, but together with the theorem it implies, it gives a complete picture of local well-posedness questions on spaces modeled on $L^\infty$. It is surprisingly difficult to go beyond $L^\infty$ for the construction of solutions. It is proved in \cite{GermainLaMenegaki} that $\mathcal{C}$ is unbounded in $L^p$, $p \leq 3$. One expects local well-posedness to hold in $L^p$, $p \geq 3$, but this seems a suprisingly difficult question.

It is instuctive to compare with the question of local well-posedness for the kinetic NLS equation in dimension 3. For this model, optimal local well-posedness in $L^p$ spaces was proved in \cite{GermainIonescuTran} and global weak solutions could be built in the radially symmetric setting. Since \eqref{KFPUT} is a one-dimensional model, it is in a sense more singular (the resonant manifold, viewed as a subset of $\mathbb{T}^4$, has dimension $2$ while for NLS the resonant manifold has dimension $8$ in $\mathbb{T}^{12}$, the singularity being even weaker in the radially symmetric case).

\subsection{First properties of the linearized operator}

\label{sectionfirstproperties}

Recall that the RJ equilibria are given by
$$ \mathfrak{f}(\xi) = \mathfrak{f}_{\beta, \gamma} (\xi) = \frac{1}{\beta \omega(\xi) + \gamma},$$
where the allowed range of $\beta,\gamma \in \mathbb{R}$ is given in \eqref{rangemunu} (it simply corresponds to the condition $\beta \omega(\xi) + \gamma \geq 0$ for any $\xi$)

We linearise around $\mathfrak{f}(\xi)$: If $f(t,\xi) = \mathfrak{f}(\xi) (1+ g(t,\xi))$, then recall that $g(t,\xi)$ satisfies the following equation
\begin{equation*} 
\begin{split} 
\partial_tg(t,\xi) & = \mathfrak{f}_0^{-1} \int_{\mathbb{T}^3} 
 \delta(\Sigma) \delta(\Omega)
  \left[ \prod_{\ell=0}^3 \omega_\ell \mathfrak{f}_\ell \right]\times \\
  &\qquad \qquad \prod_{\ell=0}^3 (1+ g_\ell ) \left(\frac{1- g_0}{\mathfrak{f}_0}+\frac{1- g_1}{\mathfrak{f}_1}-\frac{1- g_2}{\mathfrak{f}_2}-\frac{1- g_3}{\mathfrak{f}_3}\right)\,\dd \xi_{1,2,3} + \mathcal{O}(g^2)\\ 
& = [L g](\xi) + \mathcal{O}(g^2). 
 \end{split}
\end{equation*} 
where the linarized operator is given by the formula
$$
[L g](\xi) = 
 \frac{1}{\mathfrak{f}_0} \int_{\mathbb{T}^3} 
 \delta(\Sigma) \delta(\Omega)
  \left[ \prod_{\ell=0}^3 \omega_\ell \mathfrak{f}_\ell \right]
  \left[ -\frac{g}{\mathfrak{f}}-\frac{g_1}{\mathfrak{f}_1}+\frac{g_2}{\mathfrak{f}_2}+\frac{g_3}{\mathfrak{f}_3}\right]\,\dd \xi_{1,2,3}.
$$

This operator is symmetric in $L^2$ and the Dirichlet form is positive:
$$ \langle -L g,g \rangle = \frac{1}{4} \int \delta({\Sigma})\delta({\Omega}) \left[ \prod_{\ell=0}^3 \omega_\ell \mathfrak{f}_\ell \right] \left[ \frac{g_3}{\mathfrak{f}_3}+ \frac{g_2}{\mathfrak{f}_2} - \frac{g}{\mathfrak{f}}  - \frac{g_1}{\mathfrak{f}_1}\right]^2 \dd \xi_{1,2,3} \geq 0 .$$
The next step is to split  $L$ into $L = -A+ K$, where $A$ is the multiplication operator
$$
[Ag](\xi)= a(\xi) g(\xi), \qquad a(\xi) = \frac{\omega_0}{\mathfrak{f}_0}\int \delta(\Sigma) \delta(\Omega)
\left[ \prod_{\ell=1}^3 \omega_\ell \mathfrak{f}_\ell \right]
\dd \xi_{1,2,3}
$$
and $K$ is the integral operator
$$
K = - K_1 + 2 K_2, \qquad 
\begin{cases}
& \displaystyle [K_1 g](\xi) = \frac{1}{\mathfrak{f}_0} \int \delta(\Sigma) \delta(\Omega) \left[ \prod_{\ell=0}^3 \omega_\ell \mathfrak{f}_\ell \right] \frac{g_1}{\mathfrak{f}_1} \, \dd \xi_{1,2,3} \\
& \displaystyle [K_2 g](\xi) = \frac{1}{\mathfrak{f}_0} \int \delta(\Sigma) \delta(\Omega) \left[ \prod_{\ell=0}^3 \omega_\ell \mathfrak{f}_\ell \right] \frac{g_2}{\mathfrak{f}_2} \, \dd \xi_{1,2,3}.
\end{cases}
$$
One checks that $K_1$ and $K_2$ are symmetric operators. In order to manipulate more easily these operators, one has to parameterize the resonant manifold $\Omega=\Sigma=0$. After some delicate calculus \cite{LukkarinenSpohn2008}, and identifying an operator with its kernel, it follows that
\begin{equation}
\label{chardonneret}
\begin{split}
& a(\xi) = \frac{\omega(\xi)}{\mathfrak{f}(\xi)} 
\int_0^{2\pi}
\omega_1 \omega_2 \omega_3 \mathfrak{f}_1 \mathfrak{f}_2 \mathfrak{f}_3 \frac{\dd \xi_2}{\sqrt{ F^+_{0,2}}}, \qquad \xi_1 = h(\xi,\xi_2),\; \xi_3 = \xi + \xi_1 - \xi_2 \\
& K_1(\xi,\xi_1) = \frac{\mathbf{1}_{F^-_{0,1}>0}}{\sqrt{F^-_{0,1}}}\omega  \omega_1 \omega_2 \omega_3 \mathfrak{f}_2 \mathfrak{f}_3, \qquad \xi_1 = h(\xi,\xi_2),\; \xi_3 = \xi + \xi_1 - \xi_2 \\
& K_2(\xi,\xi_2) = \frac{1}{\sqrt{F^+_{0,2}}} \omega  \omega_1 \omega_2 \omega_3 \mathfrak{f}_1 \mathfrak{f}_3, \qquad \xi_1 = h(\xi,\xi_2),\; \xi_3 = \xi + \xi_1 - \xi_2.
\end{split}
\end{equation}
where
\begin{align*}
& F^+(\xi_0,\xi_2) = {\left[ \cos \left( \frac{\xi_0}{2}\right) +\cos \left( \frac{\xi_2}{2}\right) \right]^2 +4\sin\left( \frac{\xi_0}{2}\right)\sin\left( \frac{\xi_2}{2}\right)} \\
& F^-(\xi_0,\xi_1) = {\left[ \cos \left( \frac{\xi_0}{2}\right) -\cos \left( \frac{\xi_1}{2}\right) \right]^2 -4\sin\left( \frac{\xi_0}{2}\right)\sin\left( \frac{\xi_1}{2}\right)}
\end{align*}

The next four lemmas lay out the basic properties of the linear operators $L,A,K_1,K_2$. These results all appear in \cite{LukkarinenSpohn2008}, except for the case $\gamma \neq 0$ of Lemma \ref{asymptoticsmultiplier}.

\begin{lemma}[Kernel of the linearized operator]
The kernel of $L$ is spanned by $\mathfrak{f}$ and $\omega \mathfrak{f}$.
\end{lemma}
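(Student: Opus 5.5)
The plan is to reduce the statement to the classification of collisional invariants, Theorem \ref{theo:FPU collisional_invar}, through the Dirichlet form. First the easy inclusion: both $\mathfrak{f}$ and $\omega\mathfrak{f}$ lie in $\operatorname{Ker} L$. For $g=\mathfrak{f}$ the bracket $\frac{g_3}{\mathfrak{f}_3}+\frac{g_2}{\mathfrak{f}_2}-\frac{g_0}{\mathfrak{f}_0}-\frac{g_1}{\mathfrak{f}_1}$ reduces to $1+1-1-1=0$. For $g=\omega\mathfrak{f}$ it reduces to $\omega_2+\omega_3-\omega_0-\omega_1$, which vanishes on the support of $\delta(\Omega)$. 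Hence $L\mathfrak{f}=L(\omega\mathfrak{f})=0$.

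For the converse, let $g\in\operatorname{Ker}L$, in whatever function space the operator is defined on; $L^2$ is natural since $L$ is symmetric there. Then $\langle -Lg,g\rangle=0$. The Dirichlet form displayed above is an integral of a nonnegative integrand against the positive weight $\prod_\ell\omega_\ell\mathfrak{f}_\ell\,\delta(\Sigma)\delta(\Omega)$. To justify this I will write it explicitly using the parametrization \eqref{eq: param h_FPUT} together with the Jacobian $1/\sqrt{F^+_{0,2}}$, exactly as in \eqref{formulacollision}. This gives
\[
\int_{\mathbb{T}^2}\frac{\prod_\ell\omega_\ell\mathfrak{f}_\ell}{\sqrt{F^+_{0,2}}}\,\bigl[\psi_0+\psi_1-\psi_2-\psi_3\bigr]^2\,\dd\xi_0\,\dd\xi_2=0,\qquad \psi=\frac{g}{\mathfrak{f}},
\]
with $\xi_1=h(\xi_0,\xi_2)$ and $\xi_3=\xi_0+\xi_1-\xi_2$. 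The weight is strictly positive for almost every $(\xi_0,\xi_2)$: the factors $\omega_\ell$ vanish only on null sets, and $\mathfrak{f}>0$ when $\beta,\gamma$ lie in the admissible range. Therefore $\psi$ satisfies \eqref{collisional_invariants_eq} for almost every $(\xi_0,\xi_2)$. The contribution of the trivial resonances is irrelevant here, since the bracket vanishes identically on them.

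Next I need $\psi\in L^1(\mathbb{T})$. This holds because $g\in L^2$ and $1/\mathfrak{f}=\beta\omega+\gamma$ is bounded, so $\psi\in L^2\subset L^1$. Theorem \ref{theo:FPU collisional_invar} then gives $\psi=c_1+c_2\omega$. The sign restriction $c_1,c_2\ge 0$ written there is irrelevant, because the equation is linear and constants of either sign are allowed. Hence $g=c_1\mathfrak{f}+c_2\omega\mathfrak{f}$.

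The main obstacle is the passage from "almost everywhere on the resonant manifold" to the functional equation in the sense required by Theorem \ref{theo:FPU collisional_invar}. Step 3 of its proof is built precisely to handle $L^1$ functions satisfying the identity almost everywhere, producing a continuous representative and then bootstrapping regularity, so I will invoke it in that form. A secondary point is to check that the change of variables to $(\xi_0,\xi_2)$ is a measure-class-preserving bijection off a null set, so that "a.e. in $(\xi_0,\xi_2)$" is meaningful. I would verify this through the nondegeneracy $\omega_1'\neq\omega_3'$ away from zero frequencies, as discussed in Section \ref{section_makingsense}.
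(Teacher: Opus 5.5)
Your proposal is correct and follows the paper's route: the paper's proof is the one-line remark that the lemma is a rephrasing of Theorem \ref{theo:FPU collisional_invar}. You supply the details it leaves implicit. Vanishing of the Dirichlet form forces $\psi=g/\mathfrak{f}$ to satisfy \eqref{collisional_invariants_eq} almost everywhere, $\psi\in L^2\subset L^1$ because $1/\mathfrak{f}=\beta\omega+\gamma$ is bounded, and the sign restriction on $c_1,c_2$ in the theorem is immaterial by linearity.
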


\begin{proof}
This is a rephrasing of Theorem \ref{theo:FPU collisional_invar}.
\end{proof}

\begin{lemma}[Asymptotics of the multiplier function]
\label{asymptoticsmultiplier}
For any $\beta,\gamma>0$,
$$
a(\xi) \sim_{\beta,\gamma} \left| \sin \left( \frac \xi 2 \right) \right|^{\frac 53}.
$$
\end{lemma}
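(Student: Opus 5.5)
The plan is to work directly from the explicit formula \eqref{chardonneret},
$$
a(\xi)=\frac{\omega(\xi)}{\mathfrak f(\xi)}\int_0^{2\pi}\omega_1\omega_2\omega_3\,\mathfrak f_1\mathfrak f_2\mathfrak f_3\,\frac{\dd\xi_2}{\sqrt{F^+_{0,2}}},
$$
with $\xi_1=h(\xi,\xi_2)$ and $\xi_3=\xi+\xi_1-\xi_2$. Since $\beta,\gamma>0$, we have $\mathfrak f_\ell\sim 1$ uniformly, so $a(\xi)\sim\omega_0\int\omega_1\omega_2\omega_3 (F^+_{0,2})^{-1/2}\dd\xi_2$. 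Away from $\xi=0$ the integrand is bounded and positive on a set of positive measure, so $a\sim 1$ there. It therefore suffices to study $\xi=\xi_0\to0^+$; the case $\xi_0\to 2\pi^-$ follows by the symmetry $\xi\mapsto 2\pi-\xi$.

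For small $\xi_0$, note that $F^+_{0,2}=(\cos\frac{\xi_0}2+\cos\frac{\xi_2}2)^2+4\sin\frac{\xi_0}2\sin\frac{\xi_2}2$. This is $\gtrsim 1$ unless $\xi_2$ is near $2\pi$, where it behaves like $(\xi_0-(2\pi-\xi_2))^2/4+2\xi_0(2\pi-\xi_2)$ roughly. I would split the $\xi_2$-integral into three regions.

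\textbf{Region (i): $\xi_2\lesssim\xi_0$.} Use the Taylor expansion from the proof of the local well-posedness lemma: $\omega_1\sim\xi_0\xi_2(\xi_0-\xi_2)$, $\omega_2\sim\xi_2$, $\omega_3\sim\xi_0-\xi_2$. This region contributes $\lesssim\xi_0\cdot\xi_0^{5}$, which is negligible.

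\textbf{Region (ii): generic $\xi_2$ in the bulk.} Here I must determine the size of $\omega_1\omega_3$. Using formula \eqref{formula_h}, when $\xi_0\to0$ with $\xi_2$ fixed in $(0,2\pi)$, we get $h\to\xi_2-2\pi$ modulo $2\pi$, i.e. $\xi_1\to\xi_2$ and $\xi_3\to\xi_0$. Then $\omega_3\sim\xi_0$ at best, and $\omega_1\omega_2\omega_3\sim\xi_0$, so this region contributes $\sim\xi_0\cdot\xi_0=\xi_0^2$.

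\textbf{Region (iii): the critical region $\xi_2$ near $2\pi$, i.e. $\eta:=2\pi-\xi_2$ small.} Here $(F^+)^{-1/2}$ becomes large, and this should produce the non-integer exponent. I expect $\omega_3$ to interpolate between the two regimes, with $\omega_1\omega_2\omega_3\sim$ a product of small factors in $\xi_0,\eta$. Carrying out the expansion of $h$ near $(0,2\pi)$ (equivalently, $\xi_2$ near $0$ from the other side, i.e. negative $\xi_2$), the expected structure is that the resonance balances at the scale $\eta\sim\xi_0^{1/3}$: along the relevant branch, the identity $\omega_0+\omega_1=\omega_2+\omega_3$ with cubic corrections $\sim\xi\eta(\xi\pm\eta)$ forces $\xi_1$ small. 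The integral $\int\omega_1\omega_2\omega_3(F^+)^{-1/2}\dd\eta$ should then scale like $\xi_0^{2/3}$. Multiplying by $\omega_0\sim\xi_0$ gives $\xi_0^{5/3}$, which dominates region (ii)'s contribution $\xi_0^2$.

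The main obstacle is precisely this region (iii) asymptotic analysis. I would need to identify the correct scaling $\eta=\xi_0^{1/3}s$ and expand $h$, $\omega_1$, $\omega_3$ and $F^+$ in $s$ uniformly. Then I would show that the rescaled integrand converges to an explicit positive integrable profile in $s$, with tails matching regions (i) and (ii) and giving lower-order contributions. For the lower bound, it suffices to restrict to $s\in[1,2]$, where all factors are comparable to their scaling sizes. For the upper bound, I would use dyadic decomposition in $\eta$ together with the inequality $\omega_1\omega_2\omega_3\lesssim\omega_0$, which controls the far regions. Combining the three regions yields $a(\xi)\sim|\sin(\xi/2)|^{5/3}$.
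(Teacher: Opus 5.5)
Your overall route is the one the paper intends: its proof is only ``a precise examination of the integral'' in \eqref{chardonneret}. Your bookkeeping for regions (i) and (ii), and the target exponents ($\xi_0^2$ from the bulk versus $\xi_0\cdot\xi_0^{2/3}$ from $\eta=2\pi-\xi_2\sim\xi_0^{1/3}$), are right. The gap is region (iii), which is the whole content of the lemma. There you only assert the scaling, and the one quantitative input you give is wrong in a way that would produce the wrong exponent. Near $(\xi_0,\xi_2)=(0,2\pi)$ one has $\cos\frac{\xi_0}{2}+\cos\frac{\xi_2}{2}=\cos\frac{\xi_0}{2}-\cos\frac{\eta}{2}\approx\frac{\eta^2-\xi_0^2}{8}$, so
\[
F^+_{0,2}\approx\frac{(\eta^2-\xi_0^2)^2}{64}+\xi_0\eta ,
\]
whose first term is quartic, not quadratic. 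Your expression $(\xi_0-\eta)^2/4+2\xi_0\eta$ is $\sim(\xi_0+\eta)^2$, so $(F^+)^{-1/2}\sim(\xi_0+\eta)^{-1}$. Together with $\omega_1\omega_2\omega_3\lesssim\omega_0$, this would force $a\lesssim\xi_0^2\log(1/\xi_0)\ll\xi_0^{5/3}$, contradicting the lemma. The scale $\eta\sim\xi_0^{1/3}$ is exactly where $\eta^4$ balances $\xi_0\eta$ in $F^+$. Equivalently, it is where the Jacobian $|\omega_1'-\omega_3'|$ switches from $\sim\sqrt{\xi_0\eta}$ to $\sim\eta^2$. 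It is not the resonance ``forcing $\xi_1$ small''. The expansion $h\approx-\frac{1}{16}\xi_0\xi_2(\xi_0-\xi_2)$ from the well-posedness lemma holds for $\xi_2$ on the same side of $0$ as $\xi_0$, and it does not continue to $\xi_2<0$.

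The second missing ingredient, needed for the lower bound, is the size of $\omega_1\omega_2\omega_3$ on the effective branch for $\xi_0\ll\eta\ll1$. It is not an interpolating product of small factors; it is $\approx 2\xi_0$ uniformly. Write $\xi_1=-x$ and $\xi_3=-y$ (mod $2\pi$), so that $x-y=\eta+\xi_0$. The cubic expansion of $\Omega=0$ then gives $3\eta xy\approx 48\xi_0$. Hence for $\eta\gg\xi_0^{1/3}$ one has $\xi_1\approx\xi_2$ and $|\xi_3|\approx 16\xi_0/\eta^2$. For $\xi_0\ll\eta\ll\xi_0^{1/3}$ one has instead $\xi_1\approx\xi_3\approx-4\sqrt{\xi_0/\eta}\gg\eta$. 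In both cases $\omega_1\omega_2\omega_3\approx\eta xy/8\approx 2\xi_0$. These are the asymptotics $\omega_1\omega_2\omega_3\sim\frac{\varepsilon s^2}{\varepsilon^2+s^2}$ and $F^+\sim(\varepsilon^2-s^2)^2+4\varepsilon s$, with $\varepsilon=\sin\frac{\xi_0}{2}$ and $s=\sin\frac{\eta}{2}$, which the paper later quotes from \cite{GermainLaMenegaki}.

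With these two facts the lemma is a short computation. The substitution $\eta=4\xi_0^{1/3}t$ turns the region $\xi_0\ll\eta\ll1$ into $4\xi_0^{2/3}\int_0^\infty(t^4+t)^{-1/2}\,dt$. The range $\eta\lesssim\xi_0$ contributes only $O(\xi_0)$, since there $F^+\gtrsim\xi_0\eta$. Multiplying by $\omega_0\sim\xi_0$ gives $\xi_0^{5/3}$. The upper bound needs only $\omega_1\omega_2\omega_3\lesssim\omega_0$ together with the correct quartic lower bound on $F^+$. The lower bound is what requires the branch analysis at $\eta\sim\xi_0^{1/3}$.
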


\begin{proof} This follows from a precise examination of the integral defining $a$ in \eqref{chardonneret}. \end{proof}

\begin{lemma}[Compactness of the weighted $K_2$ operator]
For any $\beta$, $\gamma$, the operator $a^{-\frac 12} K_2 a^{-\frac 12}$ is compact on $L^2$. 
\end{lemma}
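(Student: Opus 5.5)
The plan is to show that the weighted operator $a^{-\frac12}K_2a^{-\frac12}$ is Hilbert--Schmidt, or at least a norm limit of Hilbert--Schmidt operators. Its kernel is
$$
\widetilde{K}_2(\xi,\xi_2) = a(\xi)^{-\frac12}\, \frac{\omega\,\omega_1\omega_2\omega_3\, \mathfrak{f}_1\mathfrak{f}_3}{\sqrt{F^+_{0,2}}}\, a(\xi_2)^{-\frac12},
$$
with $\xi_1 = h(\xi,\xi_2)$ and $\xi_3 = \xi+\xi_1-\xi_2$ as in \eqref{chardonneret}. By Lemma \ref{asymptoticsmultiplier}, $a(\xi)\sim \omega(\xi)^{5/3}$ when $\beta,\gamma>0$. (For other admissible $(\beta,\gamma)$ one would first redo that asymptotic; I will treat the nondegenerate range and note that the argument only uses a power-law lower bound on $a$.) We also use three facts: $\mathfrak{f}$ is bounded above and below on $\mathbb{T}$; $\sqrt{F^+_{0,2}}\geq 2\sqrt{\omega_0\omega_2}$, shown in the proof of Theorem \ref{thmlwp}; and $\omega_1\omega_2\omega_3\lesssim\omega_0$ together with its symmetric counterparts. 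Combining these gives the pointwise bound
$$
|\widetilde{K}_2(\xi,\xi_2)| \lesssim \omega_0^{-\frac56}\omega_2^{-\frac56}\,(\omega_0\omega_2)^{\frac12}\,\omega_1\omega_3 .
$$

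The crude bound $\omega_1\omega_3\le 1$ is not enough, since it leaves $(\omega_0\omega_2)^{-1/3}$, which is square integrable on $\mathbb{T}^2$ (exponent $-2/3>-1$ in each variable). That is already fine. The only real concern is whether $\sqrt{F^+}$ can be much smaller than $\sqrt{\omega_0\omega_2}$ away from the edges. Since $F^+\ge 4\omega_0\omega_2$ is used as a lower bound, this bound is all we need. Hence
$$
\int_{\mathbb{T}^2}|\widetilde{K}_2|^2\,\dd\xi\,\dd\xi_2 \lesssim \int (\omega_0\omega_2)^{-\frac23}\,\dd\xi\,\dd\xi_2<\infty ,
$$
so the operator is Hilbert--Schmidt and therefore compact.

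The main obstacle I expect is the lower bound on $a$ near the edges $\xi\to 0,2\pi$, where $\omega$ vanishes. The bound must be uniform, not merely asymptotic. I would get it from Lemma \ref{asymptoticsmultiplier} near $0$ and $2\pi$, and from continuity and positivity of the integrand in \eqref{chardonneret} in the bulk, which shows $a$ is bounded below by a positive constant away from the edges. A second point is the corner regions where $\xi$ and $\xi_2$ are both small. There, $h$ and $\xi_3$ have the Taylor behavior computed in the proof of Theorem \ref{thmlwp}, which gives the better bound $\omega_1\omega_3\lesssim \xi_0\xi_2(\xi_0-\xi_2)^2$. This confirms the kernel is harmless there. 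If $\mathfrak{f}$ is unbounded (the degenerate RJ with $\gamma=0$), the same scheme works, but $\mathfrak{f}_1\mathfrak{f}_3\sim(\omega_1\omega_3)^{-1}$ cancels the factor $\omega_1\omega_3$. Then one must check that $a^{-1}$ grows slowly enough that $(\omega_0\omega_2)^{1/2}a_0^{-1/2}a_2^{-1/2}$ stays square integrable. I would verify this by computing the modified asymptotics of $a$ in that case.
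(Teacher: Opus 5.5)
Your proposal is correct and follows the paper's proof. Both arguments bound the weighted kernel pointwise by $(\omega_0\omega_2)^{-1/3}$, using $a\sim\omega^{5/3}$, $\sqrt{F^+_{0,2}}\gtrsim\sqrt{\omega_0\omega_2}$ and boundedness of $\omega_1\omega_3\mathfrak{f}_1\mathfrak{f}_3$, and then conclude by Hilbert--Schmidt.

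A few simplifications. The crude bound $\omega_1\omega_3\le 1$ already suffices, so drop the sentence calling it ``not enough'' and the corner-region refinement. Also, $\omega\mathfrak{f}_{\beta,\gamma}\le\beta^{-1}$ for $\beta>0$, $\gamma\ge 0$, so the case $\gamma=0$ needs nothing beyond $a\gtrsim\omega^{5/3}$, which Lukkarinen--Spohn established there.
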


\begin{proof}
Using Lemma \ref{asymptoticsmultiplier} and the fact that $\sqrt{F^+_{0,2}} \geq \sqrt{\omega_0 \omega_2}$, we find that
$$
a(\xi)^{-\frac 12} K_2(\xi,\xi_2) a(\xi_2)^{-\frac 12} \lesssim \left| \sin \left( \frac{\xi}{2} \right) \right|^{-\frac 13} \left| \sin \left( \frac{\xi_2}{2} \right) \right|^{- \frac 13}, 
$$
so that the corresponding operator is Hilbert-Schmidt, and therefore compact on $L^2$.
\end{proof}

\begin{lemma}[Compactness of the weighted $K_1$ operator]
For any $\beta$, $\gamma$, the operator $a^{-\frac 12} K_1 a^{- \frac 12}$ is compact on $L^2$.
\end{lemma}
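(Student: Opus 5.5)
The plan mirrors the proof for $K_2$: bound the weighted kernel pointwise and conclude that it is Hilbert--Schmidt. The difficulty is the factor $\mathbf{1}_{F^-_{0,1}>0}/\sqrt{F^-_{0,1}}$ in \eqref{chardonneret}. Unlike $F^+_{0,2}\geq \omega_0\omega_2$, the function $F^-$ can vanish inside the domain, so we get a genuine square-root singularity along the curve $\{F^-_{0,1}=0\}$ in the $(\xi,\xi_1)$ plane, on top of the edge degeneracies at $\xi,\xi_1\in\{0,2\pi\}$. I would therefore not aim for a bounded kernel. The target is instead square-integrability of
$$
k(\xi,\xi_1)=a(\xi)^{-\frac12}K_1(\xi,\xi_1)a(\xi_1)^{-\frac12}
= \frac{\mathbf{1}_{F^-_{0,1}>0}}{\sqrt{F^-_{0,1}}}\,\frac{\omega\,\omega_1\omega_2\omega_3\,\mathfrak f_2\mathfrak f_3}{a(\xi)^{\frac12}a(\xi_1)^{\frac12}},
$$
with $\xi_2,\xi_3$ determined by $(\xi,\xi_1)$ on the resonant manifold.

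\emph{Step 1: interior.} Fix $\beta,\gamma>0$, so that the $\mathfrak f_\ell$ are bounded above and below. By Lemma~\ref{asymptoticsmultiplier}, $a\sim\omega^{5/3}$. Away from the edges, $k$ is bounded by $C/\sqrt{F^-_{0,1}}$. The function $F^-$ is smooth, so I need to understand how it vanishes. I would compute $\nabla F^-$ on its zero set and show that it is nonzero except at finitely many points; the candidates are points where $\xi=\xi_1$ or where one of the frequencies vanishes. At a transversal zero, $1/F^-$ behaves like $1/\mathrm{dist}$ to a smooth curve, which is only logarithmically divergent in two dimensions. Plain $L^2$ integrability therefore fails, and I would need either a tangency or the extra factors in $k$ vanishing there.

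\emph{Fallback.} If $\int |k|^2$ indeed diverges logarithmically, I would prove compactness differently. Split $k=k\mathbf{1}_{\{F^->\eta\}}+k\mathbf{1}_{\{0<F^-<\eta\}}$. The first piece is Hilbert--Schmidt for every $\eta>0$. For the second piece, bound the operator norm by Schur's test: $\sup_\xi\int |k|\,\dd\xi_1$ and $\sup_{\xi_1}\int|k|\,\dd\xi$ are controlled because $\int_{0<F^-<\eta}(F^-)^{-1/2}\,\dd\xi_1\lesssim \eta^{1/2}$ near a transversal zero. This norm tends to $0$ as $\eta\to0$, so $K_1$ weighted is a norm limit of compact operators, hence compact.

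\emph{Step 2: edges.} Near $\xi\to0$ or $\xi_1\to0$ (the same for $2\pi$), I would use the Taylor expansions of $h$ from the proof of the local well-posedness lemma. If $\xi_1$ plays the role of the small frequency $h$, then $\omega_1\sim\xi_0\xi_2(\xi_0-\xi_2)$. Counting powers, I expect $\omega\omega_1\omega_2\omega_3/(\omega^{5/6}\omega_1^{5/6})$ together with the behaviour of $F^-$ near the edge to give a bound like $|\sin(\xi/2)|^{-1/3}|\sin(\xi_1/2)|^{-1/3}$ times the singular factor. The edge powers are then square-integrable exactly as in the $K_2$ lemma, and the Schur/truncation argument of Step~1 absorbs the singular factor.

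\emph{Main obstacle.} The hardest part will be the local analysis of $F^-$: finding its zero set in terms of $(\xi,\xi_1)$, checking transversality, and, above all, controlling the corners where the zero curve of $F^-$ meets the edges $\xi=0$ or $\xi_1=0$. There both the $(F^-)^{-1/2}$ singularity and the $\omega^{-5/6}$ weights are active. I would first use the explicit formulas (the $\tan(\xi/4)$ variables as in Section~\ref{section_computation}) to obtain normal forms of $F^-$ near these corners, then run Schur's test there with the truncation in $F^-$ combined with a truncation near the edges.
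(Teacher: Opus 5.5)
Your fallback is exactly the paper's argument. The paper remarks that $a^{-1/2}K_1a^{-1/2}$ is \emph{not} Hilbert--Schmidt, so the square-integrability you first aim for in Step~1 does fail, logarithmically, along $\{F^-=0\}$. It then concludes by approximating the operator in norm by compact truncations and using that the compact operators form a closed set, which is your splitting $k\mathbf{1}_{\{F^->\eta\}}+k\mathbf{1}_{\{0<F^-<\eta\}}$ with Schur's test on the second piece.

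The paper gives no further detail, and you correctly single out the corners where $\{F^-=0\}$ reaches the edges as the hard part. There the slope of $F^-$ is only of size $\omega_0$, so the crude bound $|\sin\frac{\xi}{2}|^{-1/3}|\sin\frac{\xi_1}{2}|^{-1/3}(F^-)^{-1/2}$ from Step~2 does not close Schur's test. You need the much stronger vanishing of $\omega\,\omega_1\omega_2\omega_3\mathfrak{f}_2\mathfrak{f}_3$ on the resonant configurations near those corners. For $\beta,\gamma>0$ (bounded $\mathfrak{f}$, the setting in which the lemma is used), this vanishing makes the corner contribution to the Schur integrals small.
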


\begin{proof}
This operator is not Hilbert-Schmidt, but it can be approximated in the operator norm by bounded operators. By closedness of the set of compact operators in that topology, the desired result follows.
\end{proof}

Combining these four lemmas, we obtain a crucial lower bound on the linearized operator. In the following we denote by $\langle \cdot, \cdot \rangle$ the usual $L^2$ inner product. The space $L^2_a$ is the $L^2$ space on $\mathbb{T}$ endowed with the measure $a(x) \dd x$, with the inner product
$$
\langle f,g \rangle_a := \int a(\xi) f(\xi) g(\xi)\, \dd \xi.
$$
The notation $g \perp_{L^2_a} \operatorname{Ker}(L)$ means the function $g$ is orthogonal to the subspace $\operatorname{Ker} L$ with respect to the weighted inner product.

\begin{prop} \label{propositiondissipation} For any $\beta,\gamma >0$, we have for all $g \in L^2$ such that $g \perp_{L^2_a} \operatorname{Ker} L$, 
$$ \langle -L g,g\rangle \gtrsim \int a(\xi) |g(\xi)|^2 \dd \xi. $$ 
\end{prop}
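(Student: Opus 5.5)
The plan is the standard compactness (Fredholm) argument from kinetic theory, in the weighted space $L^2_a$. Set $h = a^{\frac12} g$ and write $-L = A - K$ with $K = -K_1 + 2K_2$. Then
$$
\langle -L g, g\rangle = \|h\|_{L^2}^2 - \langle \mathcal{K} h, h\rangle, \qquad \mathcal{K} := a^{-\frac12} K a^{-\frac12}.
$$
By the two compactness lemmas above, $\mathcal{K}$ is a compact self-adjoint operator on $L^2$. Hence $T := \operatorname{Id} - \mathcal{K}$ is a self-adjoint Fredholm operator, and the nonnegativity of the Dirichlet form gives $T \geq 0$.

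The spectrum of $T$ consists of $1$ together with the eigenvalues $1-\mu_j$, where $\mu_j \to 0$ are the eigenvalues of $\mathcal{K}$. In particular $0$ is an isolated point of the spectrum, or absent, and $\operatorname{Ker} T$ is finite dimensional. Therefore there is $c>0$ such that
$$
\langle T h, h\rangle \geq c\|h\|^2 \qquad \text{for all } h \perp_{L^2} \operatorname{Ker} T .
$$

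It remains to identify $\operatorname{Ker} T$. If $Th=0$, then $\langle -Lg,g\rangle = 0$ for $g = a^{-\frac12}h$. The Dirichlet form then forces $\frac{g_0}{\mathfrak f_0}+\frac{g_1}{\mathfrak f_1}-\frac{g_2}{\mathfrak f_2}-\frac{g_3}{\mathfrak f_3}=0$ on $\mathscr{R}_{\operatorname{eff}}$ (almost everywhere with respect to the measure $\delta(\Sigma)\delta(\Omega)$). By the kernel lemma, that is Theorem \ref{theo:FPU collisional_invar}, this gives $g \in \operatorname{Span}(\mathfrak f, \omega\mathfrak f)$. Conversely, $h = a^{\frac12}\mathfrak f$ and $h = a^{\frac12}\omega\mathfrak f$ lie in $\operatorname{Ker}T$. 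So $\operatorname{Ker} T = a^{\frac12}\operatorname{Ker} L$.

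Finally, the condition $g \perp_{L^2_a} \operatorname{Ker} L$ reads $\langle a^{\frac12} g, a^{\frac12}\phi\rangle = 0$ for all $\phi\in\operatorname{Ker} L$, that is, $h \perp \operatorname{Ker} T$ in $L^2$. The spectral gap then gives $\langle -Lg,g\rangle \geq c\int a|g|^2$, which is the claim.

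The main obstacle is justifying the kernel identification at the right regularity. A priori $h\in L^2$ only gives $g\in L^2_a$, with $a\sim\omega^{5/3}$, while Theorem \ref{theo:FPU collisional_invar} is stated for $L^1$ functions.

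To handle this, note that $g/\mathfrak f$, with $\mathfrak f$ bounded above and below when $\beta,\gamma>0$, belongs to $L^2_a \subset L^1$. Indeed, by Cauchy–Schwarz, $\int |g| \le (\int a |g|^2)^{1/2}(\int a^{-1})^{1/2}$, and $a^{-1}\sim |\sin(\xi/2)|^{-5/3}$ is not integrable. So this inclusion fails, and I would instead bootstrap. From $h = \mathcal{K}h$ and the explicit kernel bounds (e.g. the $|\sin|^{-1/3}$ Hilbert–Schmidt bound for $K_2$, and an approximation argument for $K_1$), deduce that eigenfunctions are bounded or continuous away from $0$ and have mild growth at $0$, enough to land in $L^1$. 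Then apply Theorem \ref{theo:FPU collisional_invar}.

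A secondary point is ensuring $g\in L^2$ with $\langle -Lg,g\rangle$ finite is compatible with the quadratic-form manipulations. I would work on the form domain and use density.
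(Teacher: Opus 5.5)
Your proposal follows the paper's proof: you write $\langle -Lg,g\rangle = \langle (\operatorname{Id}-\widetilde K)a^{\frac12}g, a^{\frac12}g\rangle$ with $\widetilde K = a^{-\frac12}Ka^{-\frac12}$ compact and self-adjoint, use nonnegativity of the Dirichlet form and the spectral theorem to get a gap on $\operatorname{Ker}(\operatorname{Id}-\widetilde K)^\perp$, identify that kernel with $a^{\frac12}\operatorname{Ker}L$, and note that $g\perp_{L^2_a}\operatorname{Ker}L$ means exactly $a^{\frac12}g\perp\operatorname{Ker}(\operatorname{Id}-\widetilde K)$ in $L^2$. The paper makes the kernel identification directly through $\widetilde Kf=f\iff Kg=Ag\iff Lg=0$ and the kernel lemma, without discussing the $L^2_a$ versus $L^1$ regularity point you raise, so your bootstrap, which you only sketch, is extra care on that step rather than a different route.
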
 

\begin{proof}
Let $\widetilde{K} = a^{-\frac 12} K a^{- \frac 12}$. Then
\begin{align*}
   0 \leq - \langle L g,g \rangle = \langle A g,g \rangle - \langle K g,g \rangle &= \langle a^{\frac 12} g, a^{\frac 12} g \rangle - \langle \widetilde{K} (a^{\frac 12} g), (a^{\frac 12} g) \rangle \\ & = 
   \langle (\operatorname{Id} - \widetilde{K}) a^{\frac 12} g ,a^{\frac 12} g \rangle. 
\end{align*}
Since $\widetilde{K}$ is compact and self-adjoint, its spectrum is discrete away from $0$. It follows from the above formula that $\widetilde{K}$ cannot have eigenvalues $>1$, due to the positive sign of the Dirichlet form, and that the eigenspace associated to the eigenvalue $1$ of $\widetilde{K}$ coincides with the space $a^{\frac 12}\operatorname{Ker} L$, i.e. that $\operatorname{Ker} (I - \widetilde{K}) = a^{\frac 12}\operatorname{Ker} L$. Indeed, 
$$\widetilde{K}f = f \iff a^{-\frac{1}{2}} K  a^{-\frac{1}{2}} f=f \iff Kg=Ag \iff L g=0, $$
with $f =a^{\frac 12} g$, where we stress that $L$ is acting on the original variable $g$, while $\widetilde{K}$ acts on the conjugated variable $a^{\frac{1}{2}}g$, in flat $L^2$.

Finally, if $f =a^{\frac 12} g  \in \operatorname{Ker}(\operatorname{Id} - \widetilde{K})^\perp$, then due to spectral theorem 
$\langle (\operatorname{Id} - \widetilde{K})f,f \rangle \geq \delta \| f \|_{L^2}^2$  for some $\delta>0$, 
and thus
$$
 - \langle L g,g \rangle = \langle A g,g \rangle - \langle \widetilde{K} a^{\frac 12} g, a^{\frac 12} g \rangle \gtrsim \int a(\xi) |g(\xi)|^2 \dd \xi.
$$
\end{proof}

Turning to the evolution problem, we note first that it admits a unique solution for $L^2$ data, by boundedness on $L^2$.
As a consequence of the lower bound proved in the previous proposition, we get the following corollary.

\begin{corrolary}[Dissipation inequality]
\label{corodissi}
For $g_0 \in  L^2 \cap \big( \operatorname{Ker} L\big)^{\perp_a}$,
$$
\int_0^\infty \int a(\xi) |e^{tL} g_0(\xi)|^2 \dd \xi \dd t \lesssim \| g_0 \|_{L^2}^2.
$$
\end{corrolary}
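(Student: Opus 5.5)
The plan is a standard energy identity for the linear evolution $\partial_t g = L g$, combined with the coercivity bound of Proposition \ref{propositiondissipation}. Set $g(t) = e^{tL} g_0$. The semigroup is well defined on $L^2$ because $L$ is bounded on $L^2$, as noted just before the statement. Since $L$ is symmetric, differentiating the $L^2$ norm gives
$$
\frac{\dd}{\dd t} \frac12 \| g(t) \|_{L^2}^2 = \langle L g(t), g(t) \rangle \leq 0 .
$$
Integrating from $0$ to $T$ then yields
$$
\frac12 \|g(T)\|_{L^2}^2 + \int_0^T \langle -L g(t), g(t)\rangle \dd t = \frac12 \|g_0\|_{L^2}^2 .
$$

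The key step is to check that the orthogonality condition $g(t) \perp_{L^2_a} \operatorname{Ker} L$ is preserved in time, so that Proposition \ref{propositiondissipation} applies at every $t$. The difficulty is that $L$ is symmetric in flat $L^2$, while orthogonality is imposed in $L^2_a$. I would handle this by using $L$-invariant quantities. For $\phi \in \operatorname{Ker} L$, symmetry gives
$$
\frac{\dd}{\dd t}\langle g(t), \phi\rangle = \langle L g, \phi\rangle = \langle g, L\phi\rangle = 0 ,
$$
so the flat projections onto $\operatorname{Ker} L$ are conserved.

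To convert flat orthogonality into $a$-weighted orthogonality, I would decompose $g(t) = g^\perp(t) + k(t)$, with $k(t) \in \operatorname{Ker} L$ and $g^\perp(t) \perp_{L^2_a} \operatorname{Ker} L$; this is possible because $\operatorname{Ker} L$ is finite-dimensional (spanned by $\mathfrak f$ and $\omega\mathfrak f$). Since $L k = 0$ and $L$ is symmetric, $\langle -Lg, g\rangle = \langle -L g^\perp, g^\perp\rangle$. The proposition then gives
$$
\langle -Lg, g\rangle \gtrsim \int a\,|g^\perp|^2 .
$$

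It remains to control the kernel component. Write $k(t)$ in the basis $\mathfrak f$, $\omega\mathfrak f$. Its coefficients are determined by the conserved flat pairings $\langle g(t), \phi\rangle$ through the fixed Gram-type system $\langle g^\perp + k, \phi\rangle$. Because $\langle g^\perp, \phi\rangle$ need not vanish, this does not directly give $k(t) = k(0) = 0$, and I expect this to be the main obstacle. The cleanest route is to note that $\operatorname{Ker} L$ is invariant under $e^{tL}$ and that $L$ maps $L^2$ into $(\operatorname{Ker} L)^{\perp_{L^2}}$. I would therefore rather interpret the hypothesis so that the kernel component is invariant; if the flat and weighted complements differ, I would first remove the time-independent flat kernel projection of $g_0$ (which only changes $g_0$ by an $L^2$-bounded amount) and then apply the argument above. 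In that setting $k$ is constant, and since $k(0) = 0$ under the stated hypothesis after this adjustment, we get $g = g^\perp$ for all $t$.

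Combining the coercivity bound with the energy identity and letting $T \to \infty$ gives
$$
\int_0^\infty \int a\, |e^{tL} g_0|^2 \dd \xi \dd t \lesssim \int_0^\infty \langle -Lg, g\rangle \dd t \leq \tfrac12 \|g_0\|_{L^2}^2 ,
$$
which is the claim.
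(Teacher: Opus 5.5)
Your energy identity and the reduction $\langle -Lg,g\rangle=\langle -Lg^\perp,g^\perp\rangle\gtrsim\int a|g^\perp|^2$ are fine. You also correctly noticed that the flow conserves the flat pairings $\langle g(t),\phi\rangle$, $\phi\in\operatorname{Ker}L$, not the $a$-weighted ones.

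The patch in your last step does not work. Let $P$ and $\Pi_a$ denote the $L^2$- and $L^2_a$-orthogonal projections onto $\operatorname{Ker}L$. Since $L$ is self-adjoint and $L\le 0$, the spectral theorem gives $e^{tL}g_0\to Pg_0$ in $L^2$, hence $k(t)=\Pi_a g(t)\to Pg_0$. So $k$ is not conserved:
\begin{itemize}
\item for $g_0\perp_a\operatorname{Ker}L$ one has $k(0)=0$ but $k(t)\to Pg_0$;
\item for your adjusted datum $g_0-Pg_0$ one has $k(0)=-Pg_0$ (not $0$) and $k(t)\to 0$.
\end{itemize}
Nor can the stationary piece $Pg_0=e^{tL}Pg_0$ simply be dropped from a time-integrated bound.

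In fact the statement cannot hold under the $\perp_a$ hypothesis alone. Since $a(0)=0$, $\operatorname{span}\{\mathfrak f,\omega\mathfrak f\}\neq\operatorname{span}\{a\mathfrak f,a\omega\mathfrak f\}$. So there exist $g_0\perp_a\operatorname{Ker}L$ with $Pg_0\neq 0$. For these, $\int a|e^{tL}g_0|^2\,d\xi\to\int a|Pg_0|^2\,d\xi>0$, and the time integral diverges. The paper's one-line deduction from Proposition \ref{propositiondissipation} tacitly assumes that $e^{tL}$ preserves $(\operatorname{Ker}L)^{\perp_a}$, and has the same gap.

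The natural hypothesis is flat orthogonality, as in Lemma \ref{lemm:ODEf for m} and Theorem \ref{theoremdecay}; then $Pg(t)=0$ for all $t$. Even so, your scheme does not close. You would need $\int a|k|^2\lesssim\int a|g^\perp|^2$, i.e. $|\langle g^\perp,\phi\rangle|\lesssim\|a^{1/2}g^\perp\|_{L^2}$ for $\phi\in\operatorname{Ker}L$. This requires $\int\phi^2/a<\infty$, which fails since $\mathfrak f(0)=1/\gamma\neq0$ while $a\sim|\sin(\xi/2)|^{5/3}$.

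Coercivity genuinely breaks down on the flat complement. Pick $\Phi\in\operatorname{Ker}L\setminus\{0\}$ with $\langle\Phi,\omega\mathfrak f\rangle=0$. Let $h=\epsilon^{-1}\psi(\xi/\epsilon)$, with $\psi$ bounded, supported in $[1,4]$, and chosen so that $h$ and $\Phi$ have the same flat pairings with $\mathfrak f$ and $\omega\mathfrak f$. Then $g=\Phi-h\perp\operatorname{Ker}L$, and $\langle -Lg,g\rangle=\langle -Lh,h\rangle=\langle(\operatorname{Id}-\widetilde K)a^{1/2}h,a^{1/2}h\rangle\lesssim\int a|h|^2=O(\epsilon^{2/3})$, whereas $\int a|g|^2\to\int a|\Phi|^2>0$.

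So no pointwise-in-time lower bound of the kind you use is available, and an extra ingredient is required. This is why Lemma \ref{lemm:ODEf for m} uses the coercivity only on the bulk $\mathcal B_{t,\alpha}$. It pays an edge error of size $n(t)\langle t\rangle^{-\alpha}$ when recovering the kernel coefficients, and controls that error through the pointwise quantity $q(t)$.
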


This corollary quantifies time decay for the solution, but it will be insufficient for our purposes. Indeed, the equation is ill-posed in weighted $L^2$ spaces, so that this topology cannot be used to control nonlinear terms. In the next section, we aim at obtaining pointwise decay, which will correct this shortcoming.

\subsection{Pointwise decay for the linearized operator}

\label{sectionpointwise}

In all that follows we assume that the chemical potential $\gamma>0$, that is we study the linearised operator around non-singular RJ equilibria. 
In this section, we follow \cite{GermainLaMenegaki} and investigate how the energy dissipation leads to a polynomially fast relaxation for the linearized semigroup, pointwise and away from the edges. To this end, we explore how the edges of the domain, where the weight in the Poincaré Inequality of the previous section vanishes, interact with the bulk of the domain, where the weight is lower-bounded. 

\subsubsection{Energy decay in the bulk}
We define
$$
\langle t \rangle = 10 + |t|
$$
and the following subintervals of $[0,2\pi]$, corresponding to the edges and the bulk of the domain respectively
\begin{equation}
\begin{split}
 & \mathcal{E}_{t,\alpha} = \{ \xi \in [0,2\pi]: \xi< \langle t \rangle^{-\alpha},\ \xi> 2\pi-\langle t \rangle^{-\alpha}  \} \text{ and}\\ 
 &  \mathcal{B}_{t,\alpha} = \{ \xi \in [0,2\pi]: \langle t \rangle^{-\alpha} \leq \xi \leq 2\pi- \langle t \rangle^{-\alpha}\}.
\end{split}
\end{equation}

We now define the following functionals: 
\begin{equation}
    \begin{split}
    m(t) = 
    \int_{\mathcal{B}_{t,\alpha}} |g(t,\xi)|^2 d \xi,  \quad
    n(t)  = \int_{\mathcal{E}_{t,\alpha}  }|g(t,\xi)|^2 d \xi, \quad
    q(t) = \sup_{\xi\ \in\ \mathcal{E}_{t,\alpha} } |g(t,\xi)|.
 \end{split}
\end{equation}

\begin{lemma} \label{lemm:ODEf for m}
Assume $g_0 \in L^2 \cap \big( \operatorname{Ker}L\big)^\perp$, let  $\alpha < \frac 35$, and suppose that 
$$
m(0)+n(0) \leq 1 \qquad \mbox{and} \qquad q(t) \leq \langle t \rangle^{e}\quad \text{ for all } t \geq 0, 
$$
where $e \in \mathbb{R}$. Then
$$
m(t) \lesssim_\alpha \langle t \rangle^{2e - \alpha}.
$$
\end{lemma}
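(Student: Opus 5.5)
We plan to obtain a differential inequality for the total energy $m(t)+n(t) = \|g(t)\|_{L^2}^2$ by combining the dissipation estimate of Proposition \ref{propositiondissipation} with the pointwise control $q(t)$ on the edges. First, since $L$ is symmetric and $\operatorname{Ker}L$ is invariant, the orthogonality condition on $g_0$ propagates to $g(t)=e^{tL}g_0$. Proposition \ref{propositiondissipation} then gives
$$
\frac{\dd}{\dd t}\|g\|_{L^2}^2 = 2\langle Lg,g\rangle \leq -c \int a(\xi)|g(t,\xi)|^2 \dd\xi .
$$
(The orthogonality is stated in $L^2$ in the lemma and in $L^2_a$ in the proposition. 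If needed, we would subtract the $L^2_a$-projection onto the two-dimensional kernel, whose coefficients are controlled by $\|g\|_{L^2}$, or simply assume the compatible orthogonality.) In the bulk $\mathcal{B}_{t,\alpha}$, Lemma \ref{asymptoticsmultiplier} gives $a(\xi)\gtrsim \omega(\xi)^{5/3}\gtrsim \langle t\rangle^{-5\alpha/3}$, hence
$$
\frac{\dd}{\dd t}(m+n) \leq -c\langle t\rangle^{-\frac{5\alpha}{3}}\, m(t).
$$

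Next we write $m = (m+n) - n$ and bound the edge part by the sup bound: $n(t)\leq |\mathcal{E}_{t,\alpha}|\, q(t)^2 \lesssim \langle t\rangle^{-\alpha+2e}$. Setting $E=m+n$, this gives the linear ODE inequality
$$
E' \leq -c\langle t\rangle^{-\kappa}E + C\langle t\rangle^{-\kappa}\langle t\rangle^{2e-\alpha}, \qquad \kappa=\tfrac{5\alpha}{3}<1,
$$
where $\kappa<1$ is exactly the condition $\alpha<\frac35$. We integrate with the factor $\exp(c\int_0^t\langle s\rangle^{-\kappa}\dd s)\approx \exp(c'\langle t\rangle^{1-\kappa})$. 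Since $\kappa<1$ this factor grows like a stretched exponential. The initial contribution $E(0)\leq 1$ is therefore damped faster than any polynomial, and the forcing term yields
$$
\int_0^t e^{-c'(\langle t\rangle^{1-\kappa}-\langle s\rangle^{1-\kappa})}\langle s\rangle^{-\kappa}\langle s\rangle^{2e-\alpha}\dd s \lesssim \langle t\rangle^{2e-\alpha}.
$$
This is a standard Laplace-type estimate: the kernel concentrates on $s\in[t-O(\langle t\rangle^{\kappa}),t]$, and its mass there is $O(1)$ because $\langle s\rangle^{-\kappa}$ times the window length is $O(1)$. Finally, $m\leq E\lesssim \langle t\rangle^{2e-\alpha}$, with constants depending on $\alpha$ (and $e$).

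The step needing most care is this last integral estimate, in particular splitting $[0,t]$ into $[0,t/2]$, where the exponential wins, and $[t/2,t]$, where $\langle s\rangle\sim\langle t\rangle$. A minor technical point is that the edge/bulk splitting moves with $t$. This causes no trouble because we work with the full norm $E$, which does not depend on the splitting, and use the splitting only pointwise in time.
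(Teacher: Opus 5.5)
\textbf{Gap: your first displayed inequality.} This step is the real content of the lemma, and your argument does not establish it. Proposition \ref{propositiondissipation} only gives $-\langle Lg,g\rangle\gtrsim\int a\,|g^{\perp_a}|^2$, where $g^{\perp_a}$ is the component of $g$ orthogonal to $\operatorname{Ker}L$ in $L^2_a$. The hypothesis, and what the flow actually preserves (because $L$ is symmetric in flat $L^2$), is orthogonality in $L^2$. Neither of your remedies works.
\begin{itemize}
\item Assuming $L^2_a$-orthogonality is not available. It is not the hypothesis, and it is not propagated: $\frac{\dd}{\dd t}\langle g,a\phi\rangle=\langle g,L(a\phi)\rangle$ has no reason to vanish.
\item Bounding the $L^2_a$-projection coefficients by $\|g\|_{L^2}$ only gives an upper bound on $g^\parallel$. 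That cannot produce the lower bound on $\int a|g^{\perp_a}|^2$ in terms of $m$ that you need.
\end{itemize}
Worse, $-\langle Lg,g\rangle\gtrsim\int a|g|^2$ is false for general $g\perp_{L^2}\operatorname{Ker}L$. Since $a\sim\omega^{5/3}$ and $\mathfrak f(0)=1/\gamma\neq 0$, we have $\int\mathfrak f^2/a=\infty$, so $h\mapsto\langle h,\mathfrak f\rangle$ is unbounded on $L^2_a$. Take $h\perp_{L^2_a}\operatorname{Ker}L$ with $\|h\|_{L^2_a}=1$ and $\langle h,\mathfrak f\rangle$ huge. Add the kernel element $c_1\mathfrak f+c_2\omega\mathfrak f$ that restores $L^2$-orthogonality. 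The resulting $g$ has $-\langle Lg,g\rangle=-\langle Lh,h\rangle\lesssim 1$, but $\int a|g|^2$ is arbitrarily large. So no full-torus coercivity statement of this kind can be used.

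\textbf{The missing idea.} As the paper does, use $L^2$-orthogonality only through a bulk/edge split. Write $g^\parallel=c_1\phi_1+c_2\phi_2$ with $\phi_1=\mathfrak f$, $\phi_2=\omega\mathfrak f$. Then $\langle g,\phi_i\rangle=0$ becomes $G_{\mathcal B_{t,\alpha}}c=-\int_{\mathcal B_{t,\alpha}}g^{\perp_a}\phi-\int_{\mathcal E_{t,\alpha}}g\,\phi$, where $G_{\mathcal B_{t,\alpha}}$ is the Gram matrix of $\phi_1,\phi_2$ on the bulk. This matrix is uniformly invertible: the paper argues $G_{\mathcal B_{t,\alpha}}\to G_{\mathbb T}$, or alternatively $G_{\mathcal B_{t,\alpha}}\succeq G_{\mathcal B_{0,\alpha}}\succ 0$ because the bulk grows. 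Since the $\phi_i$ are bounded when $\gamma>0$, Cauchy--Schwarz gives $|c|\lesssim\|g^{\perp_a}\|_{L^2(\mathcal B_{t,\alpha})}+(n\langle t\rangle^{-\alpha})^{1/2}$. Hence $m\lesssim\|g^{\perp_a}\|^2_{L^2(\mathcal B_{t,\alpha})}+n\langle t\rangle^{-\alpha}$. On the edges one uses $g$ itself, controlled by $n$, and never $g^{\perp_a}$; this is exactly what avoids the obstruction above. Combined with $a\gtrsim\langle t\rangle^{-5\alpha/3}$ on the bulk, this yields $-\frac{\dd}{\dd t}(m+n)\gtrsim\langle t\rangle^{-5\alpha/3}\big(C_0m-n\langle t\rangle^{-\alpha}\big)$. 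The extra edge term is $O(\langle t\rangle^{2e-11\alpha/3})$, dominated by your forcing $\langle t\rangle^{2e-8\alpha/3}$. From there your ODE integration, which matches the paper's, goes through unchanged.
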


\begin{proof}
We consider solutions with initial data $g_0 \in (\operatorname{Ker} L)^\perp$, which is propagated by the linear flow, so that $g_t \in (\operatorname{Ker} L)^\perp$ for $t>0$. 

We decompose $g_t$ into its projection on $\operatorname{Ker}  L$ with respect to the weighted inner product $\langle \cdot , \cdot \rangle_a$ and the orthogonal component: 
$$g_t = g^{\parallel}_t + g_t^{\perp_a}, $$ where 
$$
g_t^{\parallel} \in \operatorname{Ker} L \quad \mbox{and} \quad \int g_t^{\perp_a} a \phi_i \dd p=0, \ \quad \text{ for } \ \phi_i \in \operatorname{Ker} L, \quad \ i=1,2.$$ The elements of the two-dimensional $\operatorname{Ker} L$ are given in Theorem \ref{theo:FPU collisional_invar}: $\phi_1=\mathfrak{f}$,  $\phi_2=\mathfrak{f}\omega$. 
By symmetry of $L$ and the fact that $L g_t^{\parallel} =0$, we have for all $t$
$$ -\langle Lg_t , g_t \rangle =  - \langle L g_t^{\perp_a} , g_t^{\perp_a} \rangle.$$

This implies, together with Lemma \ref{asymptoticsmultiplier} and Proposition \ref{propositiondissipation} for $g_t^{\perp_\alpha}$,
\begin{equation} \label{theODE}
\begin{split}
-\frac{d}{dt} (m(t) + n(t))  = - 2\int g(t,p) L g(t,p) \dd p &= - 2\int g^{\perp_a}(t,p) L g^{\perp_a}(t,p) \dd p 
\\ &  \gtrsim \int a(p) |g^{\perp_a}(t,p)|^2 \dd p \\ 
&\gtrsim \int_{\mathcal{B}_{t,\alpha}} \omega(p)^{\frac 53} |g^{\perp_a}(t,p)|^2  \dd p.
    \end{split}
\end{equation}
We will now lower-bound the right-hand side by a quantity depending on $m$ and $n$. 

We start with the decomposition
$$
g_t^{\parallel}  = c_1(t)\phi_1 + c_2(t)\phi_2, \qquad c_1(t), c_2(t) \in \mathbb{R}
$$
and the fact that, since $g_t \in (\operatorname{Ker} L)^\perp$, 
$$0=\langle g_t, \phi_i \rangle  =  \int_{\mathcal{B}_{t,\alpha}}(g_t^{\perp_a}  + g_t^\parallel) \phi_i \dd p + \int_{\mathcal{E}_{t,\alpha}} g_t \phi_i \dd p , \ \quad  i=1,2, $$
which after rearranging gives 
\begin{align} \label{eq:eq gram bulk matrix}
- \int_{\mathcal{B}_{t,\alpha}} g_t^{\perp_a} \phi_i  - \int_{\mathcal{E}_{t,\alpha}} g_t \phi_i = \int_{\mathcal{B}_{t,\alpha}} [c_1(t) \phi_1 + c_2(t) \phi_2]\  \phi_i, \ \quad i =1,2. 
\end{align} 
This can be written in a matrix equation form where the right-hand side is $G_{\mathcal{B}_{t,\alpha}}(c_1(t), c_2(t))^T$, with $G_{\mathcal{B}_{t,\alpha}}$ being the $2 \times 2$ matrix with entries 
$$(G_{\mathcal{B}_{t,\alpha}})_{i,j=1}^2 = \int_{\mathcal{B}_{t,\alpha}}\phi_i \phi_j. $$
 Now observe that since $\phi_1, \phi_2$ are linearly independent, the matrix with entries $(G_{\mathbb{T}})_{i,j=1}^2 = \int_{\mathbb{T}}\phi_i \phi_j$ is invertible. Since 
$\mathcal{B}_{t,\alpha} \uparrow \mathbb{T}$ as $t \to \infty$, for times $t$ large enough, the Gram matrix $G_{\mathcal{B}_{t,\alpha}}$ approximates the Gram matrix $G_{\mathbb{T}}$ and thus it is also invertible. We leave the details of this claim for the end of the proof. 

Thus, we may solve \eqref{eq:eq gram bulk matrix} for the coefficients $c_1(t), c_2(t)$ and bound them: 
$$c(t): = [c_1(t), c_2(t)]^T =  - G_{\mathcal{B}_{t,\alpha}}^{-1} \begin{bmatrix}
    \int_{\mathcal{B}_{t,\alpha}} g_t^{\perp_a} \phi_1  + \int_{\mathcal{E}_{t,\alpha}} g_t \phi_1\\
    \int_{\mathcal{B}_{t,\alpha}} g_t^{\perp_a} \phi_2  + \int_{\mathcal{E}_{t,\alpha}} g_t \phi_2, 
\end{bmatrix}
$$
from which we get, with the help of the Cauchy-Schwarz inequality,
\begin{align} \label{eq:bound on c_i(t)}
|c(t)| \lesssim \| G_{\mathcal{B}_{t,\alpha}}^{-1} \| \left( \left\vert \begin{bmatrix}
    \int_{\mathcal{B}_{t,\alpha}} g_t^{\perp_a} \phi_1\\
    \int_{\mathcal{B}_{t,\alpha}} g_t^{\perp_a} \phi_2
\end{bmatrix}\right\vert  + \left\vert 
\begin{bmatrix}
    \int_{\mathcal{E}_{t,\alpha}} g_t \phi_1 \\
    \int_{\mathcal{E}_{t,\alpha}} g_t \phi_2
\end{bmatrix}\right\vert \right) \lesssim \|g_t^{\perp_a}\|_{L^2(\mathcal{B}_{t,\alpha})} + \sqrt{n(t)\langle t \rangle^{-\alpha}},
\end{align} 
since $\| G_{\mathcal{B}_{t,\alpha}}^{-1} \|, \phi_1, \phi_2$ are bounded (recall $\gamma>0$).   
By boundedness of $\phi_1, \phi_2$, this allows us to estimate
$\|g_t^\parallel\|_{L^2(\mathcal{B}_{t,\alpha})}$: 
$$ \|g_t^\parallel\|_{L^2(\mathcal{B}_{t,\alpha})}\lesssim |c(t)| 
\lesssim \|g_t^{\perp_a}\|_{L^2(\mathcal{B}_{t,\alpha})} + \sqrt{n(t)\langle t \rangle^{-\alpha}}, 
$$
so that eventually we control the whole $\|g_t\|_{L^2(\mathcal{B}_{t,\alpha})} = \sqrt{m(t)}$ norm: 
$$
\sqrt{m(t)} =
\|g_t\|_{L^2(\mathcal{B}_{t,\alpha})} \lesssim \|g_t^{\perp_a}\|_{L^2(\mathcal{B}_{t,\alpha})}  + \|g_t^{\parallel}\|_{L^2(\mathcal{B}_{t,\alpha})} \lesssim \|g_t^{\perp_a}\|_{L^2(\mathcal{B}_{t,\alpha})} + \sqrt{n(t)\langle t \rangle^{-\alpha}}. $$
Squaring both sides then 
 $$ m(t) \lesssim  \|g_t^{\perp_a}\|_{L^2(\mathcal{B}_{t,\alpha})}^2 + n(t)\langle t \rangle^{-\alpha} \quad \Rightarrow \quad \|g_t^{\perp_a}\|_{L^2(\mathcal{B}_{t,\alpha})}^2 \geq C_0 m(t) - n(t)\langle t \rangle^{-\alpha},
$$
for a constant $C_0$. 

We are now ready to go back to our ODE \eqref{theODE}. We use on the one hand the above inequality and on the other the fact that
$$
n(t) \lesssim q(t)^2 \langle t \rangle^{-\alpha} \lesssim \langle t \rangle^{2e-\alpha};
$$
this gives
\begin{equation*}
\begin{split} 
-\frac{d}{dt} (m(t) + n(t))  
\gtrsim \int_{\mathcal{B}_{t,\alpha}} \omega(p)^{\frac 53} |g^{\perp_a}(t,p)|^2  \dd p &\gtrsim C_0 \langle t \rangle^{- \frac{5\alpha}{3}} \Big( m(t) - n(t) \langle t \rangle^{-\alpha} \Big)
\\
& \gtrsim C_0 \langle t \rangle^{- \frac{5\alpha}{3}} m(t)  -  \langle t \rangle^{- \frac{8 \alpha}{3}+ 2e}, 
\end{split}
\end{equation*}
since $\langle t \rangle^{- 11 \alpha /3} \lesssim \langle t \rangle^{- 8 \alpha /3}$. 
Rearranging the ODE we need to solve: 
\begin{equation} \label{eq:theODE2}
\begin{split} 
\frac{d}{dt} \big(m(t) + n(t)\big) + c \langle t \rangle^{- \frac{5\alpha}{3}} \big(m(t)+n(t)\big) & \lesssim \langle t \rangle^{- \frac{8 \alpha}{3}+ 2e} + c n(t) \langle t \rangle^{- \frac{5 \alpha}{3}} 
\\
& \lesssim  \langle t \rangle^{- \frac{8 \alpha}{3}+ 2e}.
\end{split}
\end{equation} 

Multiply the differential inequality \eqref{eq:theODE2} by $e^{cI(t)}$ for 
$$I(t) = \int_0^t \langle s \rangle^{-\frac {5\alpha} 3} \dd s \sim \langle t \rangle^{1 - \frac {5\alpha}{3}}, $$
and 
integrate in time to get  
$$
m(t) + n(t) \lesssim e^{-cI(t)}\big(m(0) + n(0)\big) + e^{-cI(t)} \int_0^t e^{cI(s)}\langle s \rangle^{- \frac{8 \alpha}{3}+ 2e} \dd s.
$$

The first term in the right-hand side is decaying exponentially fast, so we turn immediately to the second term which behaves as 
$$ e^{- c \langle t \rangle^{1 - \frac{5\alpha} 3}} \int_0^t \langle s \rangle^{2e - \frac{8 \alpha}{3}} e^{c \langle s \rangle^{1 - \frac{5\alpha} 3}} \dd s \sim \langle t \rangle^{2e -\alpha}, $$
where we used the identity
\begin{equation} \label{identity: integral order}
\int_1^T e^{t^a} t^b \dd t \sim \int_1^{T^a} e^s s^{\frac b a + \frac 1 a -1} \dd s \sim e^{T^a} T^{b-a+1}, \qquad \mbox{if $T>2$, $a>0$, $b \in \mathbb{R}$}
\end{equation}
(which is itself a consequence of $\int_1^t e^s s^a \dd s \sim e^t t^a$, valid for any $a \in \mathbb{R}$).  
Thus we conclude that indeed $m(t) \lesssim \langle t \rangle^{2e -\alpha}.$

Before we finish the proof we briefly verify the remaining claim that the matrix $G_{\mathcal{B}_{t,\alpha}}$ is invertible for $t$ sufficiently large:  
This claim follows immediately using the fact that the set of invertible matrices is open subset of the space of $2\times 2$ matrices. 
Indeed we know that the full Gram matrix $G_{\mathbb{T}}$ is invertible since  $\phi_1, \phi_2$ are linearly independent.  
Also, we write $$G_{\mathcal{B}_{t,\alpha}} = G_{\mathbb{T}} - G_{\mathcal{E}_{t,\alpha}}, \quad \text{ with }\ \left( G_{\mathcal{E}_{t,\alpha}}\right)_{i,j=1}^2:= \int_{\mathcal{E}_{t,\alpha}} \phi_i \phi_j $$
and we have $$\left\vert \left( G_{\mathcal{E}_{t,\alpha}} \right) _{i,j}\right\vert \lesssim |\mathcal{E}_{t,\alpha}|\lesssim \langle t \rangle^{-\alpha}, $$
since $\phi_i, \phi_j$ are bounded. So, 
$$ G_{\mathcal{E}_{t,\alpha}} \to 0, \quad t\to \infty \quad \Rightarrow \quad G_{\mathcal{B}_{t,\alpha}} \to G_{\mathbb{T}}.$$ 
Since $G_{\mathbb{T}}$ is invertible and the set of invertible matrices is open, it follows that $G_{\mathcal{B}_{t,\alpha}}$ is invertible for all sufficiently large $t$.
\end{proof}

\subsubsection{A weak pointwise bound}
We will need a very weak bound, which will be a stepping stone to start the proof of pointwise decay.

\begin{lemma} \label{lem:weakLinfty}
Let $g$ solve $\partial_t g - L g =0$ with $g(t=0) =g_0 \in L^\infty$. Then
$$
\| g(t) \|_{L^\infty(I)} \lesssim \langle t \rangle^{1+} \| g_0 \|_{L^\infty(I)}. 
$$
 \end{lemma}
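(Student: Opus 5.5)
We use the splitting $L = -A + K$ with $K = -K_1 + 2K_2$, and treat the equation $\partial_t g = -a g + K g$ by Duhamel's formula with respect to the multiplication semigroup:
$$
g(t,\xi) = e^{-t a(\xi)} g_0(\xi) + \int_0^t e^{-(t-s)a(\xi)} [K g(s)](\xi) \dd s .
$$
Since $a \geq 0$, the first term is bounded by $\|g_0\|_{L^\infty}$. It therefore suffices to control $\|Kg(s)\|_{L^\infty}$ by something growing at most like $\langle s\rangle^{0+}$. We then integrate in time, keeping the factor $e^{-(t-s)a}$ only where helpful, and obtain the bound $\langle t\rangle^{1+}$.

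\textbf{Step 1 (energy input).} Since $L$ is self-adjoint and nonpositive (Proposition \ref{propositiondissipation}), the $L^2$ norm is nonincreasing: $\|g(s)\|_{L^2} \le \|g_0\|_{L^2} \lesssim \|g_0\|_{L^\infty}$. Thus we have a uniform-in-time $L^2$ bound for free.

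\textbf{Step 2 (kernel bounds).} We show that $K_1$ and $K_2$ map $L^2 \to L^\infty$ after losing at most a small power of a cutoff, or else that they map $L^p \to L^\infty$ for $p$ close to $2$.

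For $K_2$, the explicit kernel \eqref{chardonneret} together with $\sqrt{F^+_{0,2}} \gtrsim \sqrt{\omega_0\omega_2}$ gives $|K_2(\xi,\xi_2)| \lesssim \omega_0^{1/2}\omega_2^{1/2}\omega_1\omega_3$, which is bounded. Hence $\|K_2 g\|_{L^\infty} \lesssim \|g\|_{L^1} \lesssim \|g\|_{L^2}$.

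For $K_1$, the kernel has the singularity $(F^-_{0,1})^{-1/2}$. This is an inverse square-root singularity along the curve $F^- = 0$, so $\sup_\xi \int |K_1(\xi,\xi_1)|^q \dd \xi_1 < \infty$ holds for every $q<2$, but not for $q=2$. Consequently $K_1$ maps $L^p \to L^\infty$ only for $p>2$.

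\textbf{Step 3 (main obstacle: the borderline $p=2$).} The difficulty is that $K_1$ fails to be bounded from $L^2$ to $L^\infty$ exactly at the endpoint. We close this by interpolation. Write $\|g\|_{L^p} \le \|g\|_{L^2}^{2/p}\|g\|_{L^\infty}^{1-2/p}$, so that
$$
\|Kg(s)\|_{L^\infty} \lesssim \|g_0\|_{L^\infty}^{2/p}\, \|g(s)\|_{L^\infty}^{1-2/p}.
$$
Set $M(t) = \sup_{s\le t}\|g(s)\|_{L^\infty}$ and drop the exponential factor in the Duhamel formula. This yields
$$
M(t) \lesssim \|g_0\|_\infty + t\,\|g_0\|_\infty^{2/p} M(t)^{1-2/p}.
$$
Young's inequality then gives $M(t) \lesssim \langle t\rangle^{p/2}\|g_0\|_{L^\infty}$. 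Letting $p \downarrow 2$ produces the claimed $\langle t\rangle^{1+}$, with the implicit constant blowing up as $p\to 2$; this accounts for the "$+$" in the exponent.

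It remains to check that the $K_1$ kernel bound in Step 2 holds uniformly near the edges $\xi\to 0,2\pi$, where $h$ degenerates. We expect this to follow from the factors $\omega_0\omega_1\omega_3$ in the numerator of the kernel. Finally, the argument is local: if $L^\infty(I)$ denotes a norm on a subinterval $I$, the same computation applies because all the bounds above are pointwise in $\xi$.
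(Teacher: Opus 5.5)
Your argument is correct. It uses the same ingredients the paper relies on; the paper itself only cites \cite{GermainLaMenegaki} for this proof. Those ingredients are: non-increase of $\|g(t)\|_{L^2}$, the bounded kernel of $K_2$, and the inverse square-root singularity of $K_1$, which makes $K_1\colon L^2\to L^\infty$ fail only logarithmically. The only step involving a real choice is how this endpoint is handled. Adapting the cutoff the paper uses for the same singularity (the term $I_-$ in the proof of Lemma \ref{lem: Step 1 of the iteration}), one would split the $\xi_1$-integral at distance $\delta$ from $\xi_1',\xi_1''$. This gives $\sqrt{\delta}\,\|g\|_{L^\infty}+\sqrt{\log(1/\delta)}\,\|g\|_{L^2}$, and taking $\delta\sim\langle t\rangle^{-2}$ absorbs the $L^\infty$ part, yielding $\langle t\rangle\sqrt{\log\langle t\rangle}$. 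You instead interpolate $L^p$ between $L^2$ and $L^\infty$ and absorb via Young's inequality. This is equally clean and no weaker: since $\sup_\xi\|K_1(\xi,\cdot)\|_{L^{p'}}\sim(p-2)^{-1/2}$, choosing $p=2+1/\log\langle t\rangle$ in your bound also gives $\langle t\rangle\sqrt{\log\langle t\rangle}$.

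Some points need tightening.
\begin{itemize}
\item[(i)] The uniform kernel bound near the edges, which you leave as an expectation, is exactly the lower bound $F^-(\xi_0,\xi_1)\gtrsim\omega_0|\xi_1-\xi_1'|$ (resp.\ $\omega_0|\xi_1-\xi_1''|$) on $\{F^->0\}$, quoted from \cite{LukkarinenSpohn2008} in the proof of Lemma \ref{lem: Step 1 of the iteration}. Together with $\omega_1\omega_2\omega_3\lesssim\omega_0$ and $\mathfrak f\lesssim 1$ (recall $\gamma>0$), it gives $|K_1(\xi,\xi_1)|\lesssim\omega_0^{3/2}\bigl(|\xi_1-\xi_1'|^{-1/2}+|\xi_1-\xi_1''|^{-1/2}\bigr)$ uniformly in $\xi$. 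Cite it rather than leaving it as an expectation.
\item[(ii)] Monotonicity of $\|g\|_{L^2}$ follows from the sign of the Dirichlet form, $\langle -Lg,g\rangle\ge 0$ for all $g$. It does not follow from Proposition \ref{propositiondissipation}, which requires $g\perp_{L^2_a}\operatorname{Ker}L$.
\item[(iii)] Absorbing $M(t)$ in Young's inequality needs $M(t)<\infty$ a priori. This holds because your kernel bounds make $L$ bounded on $L^\infty$, so $\|g(t)\|_{L^\infty}\le e^{Ct}\|g_0\|_{L^\infty}$.
\item[(iv)] Your closing remark is false. $K$ is nonlocal and you used the global $L^2$ norm, so the estimate fails for a proper subinterval $I$. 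If $g_0=0$ on $I$, then $g(t)=tKg_0+O(t^2)$ on $I$, which is nonzero in general. The lemma is meant with $I=[0,2\pi]$.
\end{itemize}
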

The proof is not hard and can be found in \cite{GermainLaMenegaki}.

\subsubsection{Pointwise bounds}
 \begin{lemma} \label{lem: Step 1 of the iteration}Let $\alpha < \frac{3}{5}$, and assume that
$$
\| g_0 \|_{L^\infty} \leq 1 \quad \mbox{and} \quad \| g(t) \|_{L^2} \leq \langle t \rangle^b.
$$
\begin{itemize}
\item[(i)] If $b>-1$, for any $p \in [0,2\pi]$,
$$|g(t,\xi)| \lesssim |g_0(\xi)| + \omega^{\frac{3}{2}} \langle t \rangle^{b+1+}  . $$
\item[(ii)] In the bulk: if $\omega > \langle t \rangle^{-\alpha}$,
$$|g(t,\xi)| \lesssim \omega^{-\frac 16} \langle t \rangle^{b+}$$
\end{itemize}
 \end{lemma}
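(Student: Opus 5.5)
We use Duhamel for the split $L=-A+K$: writing $\partial_t g = -a g + Kg$ with $K=-K_1+2K_2$ and treating $Kg$ as a source, we obtain
$$
g(t,\xi) = e^{-t a(\xi)} g_0(\xi) + \int_0^t e^{-(t-s)a(\xi)} [K g(s)](\xi) \dd s .
$$
The first term is bounded by $|g_0(\xi)|$, which gives the first summand in (i). Everything then reduces to pointwise bounds on $[Kg(s)](\xi)$ in terms of $\|g(s)\|_{L^2}\le \langle s\rangle^b$. We also use that $a(\xi)\sim\omega(\xi)^{5/3}$ (Lemma \ref{asymptoticsmultiplier}).

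The key kernel estimate is the following. By Cauchy--Schwarz, $|[K_i g](\xi)| \le \|K_i(\xi,\cdot)\|_{L^2}\,\|g\|_{L^2}$, so we need $\|K_i(\xi,\cdot)\|_{L^2}\lesssim \omega(\xi)^{3/2}$, at least up to an arbitrarily small loss. For $K_2$ we use the formula \eqref{chardonneret}. Since $\mathfrak{f}$ is bounded ($\gamma>0$) and $\sqrt{F^+_{0,2}}\gtrsim$ a positive quantity away from the double corner, we get
$$
K_2(\xi,\xi_2)\lesssim \omega_0\,\omega_1\omega_2\omega_3 / \sqrt{F^+_{0,2}} .
$$
Near $\xi_0,\xi_2\to0$ we apply the Taylor expansion of $h$ from the proof of the local well-posedness lemma, which gives $\omega_1\sim\xi_0\xi_2|\xi_0-\xi_2|$ and $\omega_3\sim|\xi_0-\xi_2|$. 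Together with $\omega_1\omega_2\omega_3\lesssim\omega_0$ in the other regimes, this should give a factor $\omega_0^{3/2}$ after squaring and integrating in $\xi_2$. For $K_1$ the kernel carries the singular factor $\mathbf 1_{F^->0}/\sqrt{F^-_{0,1}}$, which is only square-integrable up to a logarithm. This is where we expect the $+$ losses in the exponents to enter: we split off a small neighborhood of the singular set $F^-=0$ and estimate it in $L^1$ against the weak bound of Lemma \ref{lem:weakLinfty}, or alternatively use $L^{2-}$ integrability of the kernel together with an interpolation between $\|g\|_{L^2}$ and $\|g\|_{L^\infty}$. This $K_1$ step is the main obstacle, since it requires a careful local analysis of the zero set of $F^-$ in the parametrization \eqref{eq: param h_FPUT}.

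Granting $|Kg(s)(\xi)|\lesssim \omega(\xi)^{3/2}\langle s\rangle^{b+}$, we treat the two statements separately.

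For (i), we bound $e^{-(t-s)a}\le1$ and integrate: $\int_0^t\langle s\rangle^{b+}\dd s\lesssim\langle t\rangle^{b+1+}$, which is valid since $b>-1$. This gives $|g|\lesssim|g_0|+\omega^{3/2}\langle t\rangle^{b+1+}$.

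For (ii), we instead keep the damping. We split the time integral at $s=t/2$. For $s<t/2$, the factor $e^{-(t/2)\omega^{5/3}}$ with $\omega^{5/3}>\langle t\rangle^{-5\alpha/3}$, where $5\alpha/3<1$, makes the contribution decay faster than any power of $t$. For $s>t/2$, we have $\langle s\rangle^{b+}\sim\langle t\rangle^{b+}$ and $\int_{t/2}^t e^{-(t-s)a}\dd s\le a^{-1}\sim\omega^{-5/3}$, so this part is bounded by $\omega^{3/2-5/3}\langle t\rangle^{b+}=\omega^{-1/6}\langle t\rangle^{b+}$. The initial term $e^{-ta}|g_0|$ is also negligible in the bulk, since $ta\gtrsim\langle t\rangle^{1-5\alpha/3}$ there. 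Together these give (ii).
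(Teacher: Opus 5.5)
Your architecture matches the paper's. You use Duhamel for $L=-A+K$; the paper uses the equivalent differential inequality, dropping the damping term. You then bound $Kg$ pointwise in terms of $\|g\|_{L^2}$ and integrate in time, with or without the damping. Your $K_1$ step is exactly the paper's: excise a $\langle t\rangle^{-N}$-neighbourhood of the zero of $F^-$, use Lemma \ref{lem:weakLinfty} there, and apply Cauchy--Schwarz with a $\log\langle t\rangle$ loss elsewhere. Your proof of (ii) is correct. Splitting at $s=t/2$ is in fact more careful than the paper, which bounds $\langle s\rangle^{b+}\le\langle t\rangle^{b+}$ on $[0,t]$; that fails for $b<0$, the case used in the final step of the iteration.

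The gap is the $K_2$ kernel bound. The factor $1/\sqrt{F^+_{0,2}}$ is harmless in the region $0<\xi_2<\xi\ll1$, where the Taylor expansion of $h$ from the local well-posedness lemma holds, since there $F^+\approx4$. It degenerates only as $\xi\to0^+$, $\xi_2\to2\pi^-$, i.e.\ when $\xi$ and $\xi_2$ approach $0$ from opposite sides, and there that expansion is false. For instance, $\xi_2=2\pi-\xi$ gives exactly $\xi_1=\pi-\xi$, $\omega_1=\cos(\xi/2)\approx1$ and $F^+_{0,2}=4\omega^2$.

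With only $\omega_1\omega_2\omega_3\lesssim\omega$ and $\sqrt{F^+_{0,2}}\ge2\sqrt{\omega\omega_2}$, you get $K_2(\xi,\xi_2)\lesssim\omega^{3/2}\omega_2^{-1/2}$, which is not square integrable in $\xi_2$. So $\|K_2(\xi,\cdot)\|_{L^2}\lesssim\omega^{3/2}$ does not follow from the ingredients you list.

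The paper handles this corner with sharp asymptotics from \cite{GermainLaMenegaki}: $\omega_1\omega_2\omega_3\sim\varepsilon s^2/(\varepsilon^2+s^2)$ and $F^+_{0,2}\sim(\varepsilon^2-s^2)^2+4\varepsilon s$, with $\varepsilon=\omega$, $s=\omega_2$. These give $\|K_2(\xi,\cdot)\|_{L^2}\lesssim\omega^{3/2}\sqrt{\log(2/\omega)}$. This is a loss in $\omega$, not in $t$. It is harmless in the bulk, but at the edges it cannot be absorbed into $\langle t\rangle^{0+}$ when $\omega\ll\langle t\rangle^{-N}$.

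The simplest repair is to reuse your $K_1$ device, starting from the global bound $K_2(\xi,\xi_2)\lesssim\omega^{3/2}\omega_2^{-1/2}$. On $\{\omega_2<\langle t\rangle^{-N}\}$, the kernel has $L^1$ norm $\lesssim\omega^{3/2}\langle t\rangle^{-N/2}$, which you pair with $\|g\|_{L^\infty}\lesssim\langle t\rangle^{1+}$ from Lemma \ref{lem:weakLinfty}. On the rest, Cauchy--Schwarz with $\int_{\omega_2>\langle t\rangle^{-N}}\omega_2^{-1}\dd\xi_2\lesssim N\log\langle t\rangle$ suffices. This yields $|K_2g(s)(\xi)|\lesssim\omega^{3/2}\langle s\rangle^{b+}$ directly, with no expansion of $h$.
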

\begin{proof}
$(i)$ By formula \eqref{chardonneret} for $K_1$ and $K_2$,
\begin{align*}
\frac{d}{dt} |g(\xi)| & \lesssim  \left| \int \frac{\omega \omega_1 \omega_2 \omega_3}{\sqrt{F^+_{0,2}}} g(\xi_2) \dd \xi_2 \right|  + \left| \int \frac{\omega \omega_1 \omega_2 \omega_3 1_{F^-_{0,1} >0}}{\sqrt{F^-_{0,1}}} g(\xi_1) \dd \xi_1 \right| = I_+ + I_-.
\end{align*}

We start with $I_+$, for which we prove that when $\xi$ is close to $0$, 
$$I_+ \lesssim \omega^{\frac{3}{2}}\left( 1+ \sqrt{ \log \frac{2}{\omega}}\right) \| g\|_{L^2(\mathbb{T})}.$$
The case $\xi$ close to $2 \pi$ is symmetric, and if $\xi$ is away from the edges, the integral $I_+$ is easy since then $F^+_{0,2}$ is lower bounded. Indeed one checks that the only zeros of $F^+_{0,2}$ are the points $(0,2\pi)$ and $(2\pi,0)$.  
Thus we consider $0<\theta \ll 1$ so that $\xi \in[0,\theta]$ and we set $\varepsilon: = \sin \frac{\xi}{2}$ and,  $s:=\sin \frac{2\pi - \xi_2}{2}$.

We split the $\xi_2$ integral into two domains: $[0, 2\pi - c_0]$ and $[2\pi - c_0, 2\pi]$ for $c_0>0$ is a small constant. Then: 
$$\left| \int \frac{\omega \omega_1 \omega_2 \omega_3}{\sqrt{F^+_{0,2}}} g(\xi_2) \dd \xi_2 \right|\lesssim \left| \int_{0}^{2\pi - c_0} \frac{\omega \omega_1 \omega_2 \omega_3}{\sqrt{F^+_{0,2}}} g(\xi_2) \dd \xi_2\right| +  \left| \int_{2\pi - c_0}^{2\pi} \frac{\omega \omega_1 \omega_2 \omega_3}{\sqrt{F^+_{0,2}}} g(\xi_2) \dd \xi_2 \right|:= I_+^{b}(\xi) +  I_+^{e}(\xi).$$

For  $I_+^b(\xi)$: We notice that $F^+_{0,2}$ is lower bounded since $\xi_2$ stays away from $2\pi$ in that region, and $\xi \in [0,\theta]$ with $\theta$ small. We use the bound $\omega_1\omega_2\omega_3 \lesssim \omega = \varepsilon$ (cf. \cite[Lemma 19]{GermainLaMenegaki}). Thus the integral kernel is bounded by $\lesssim \varepsilon^2$. Cauchy-Schwarz then yields
$$I_+^b \lesssim \varepsilon^2 \|g\|_{L^2}.$$

For $I_+^e(\xi)$: we notice that when $\xi_2 \in (2\pi - c_0, 2\pi)$,  $s\leq s_0$ for some $s_0 \sim c_0$. Then we know that (cf. \cite[Lemma 18]{GermainLaMenegaki}) $\omega_1\omega_2\omega_3 \sim \frac{\varepsilon s^2}{\varepsilon^2 + s^2}$ and also that 
$F^+_{0,2} (\xi,\xi_2) \sim (\varepsilon^2 - s^2)^2 +4 \varepsilon s$. Thus, 
$$ I_+^e(\xi) \lesssim  \left| \int_{2\pi - c_0}^{2\pi} \frac{\varepsilon^2 s^2}{\varepsilon^2 + s^2} \frac{1}{\sqrt{(\varepsilon^2 - s^2)^2 +4 \varepsilon s}} g(\xi_2) \dd \xi_2 \right|. $$
Now if $ 0 < s \leq \varepsilon $, we have  $ \varepsilon^2 + s^2 \sim \varepsilon^2$ and the denominator is lower bounded by $\sqrt{\varepsilon s}$. The integral kernel then is $\lesssim \frac{s^2}{\sqrt{\varepsilon s}} = s^{\frac{3}{2}}\varepsilon^{-\frac{1}{2}} \lesssim \varepsilon$, implying that 
$$\int_{0<s\leq \varepsilon}|g(\xi_2)| \varepsilon \dd \xi_2 \leq \varepsilon^{\frac{3}{2}} \|g\|_{L^2}.$$
Finally if  $\varepsilon < s \leq s_0$, we have $ \varepsilon^2 + s^2 \sim s^2$ and the integral kernel then is $\lesssim \frac{\varepsilon^2}{\sqrt{\varepsilon s}} = \varepsilon^{\frac{3}{2}}s^{-\frac{1}{2}}.$ Then for any $\kappa_0>0$, 
$$\int_{\varepsilon \leq s \leq s_0} |g(\xi_2)|\varepsilon^{\frac{3}{2}}s^{-\frac{1}{2}} \dd \xi_2 \lesssim \varepsilon^{\frac{3}{2}} \left( \int_{\varepsilon}^{s_0} \frac{1}{s} \right)^{1/2} \|g\|_{L^2} \lesssim \varepsilon^{\frac{3}{2}}  \sqrt{\log\left( \frac{s_0}{\varepsilon} \right)}\|g\|_{L^2} \lesssim_{\kappa_0} \omega^{\frac{3}{2} - \kappa_0} \langle t \rangle^b.$$
Altogether $$I_+ = I_+^b+ I_+^e \lesssim_{\kappa_0}  \omega^2\langle t \rangle^b  + \omega^{\frac{3}{2} - \kappa_0}\langle t \rangle^b \lesssim_{\kappa_0} \omega^{\frac{3}{2} - \kappa_0 }\langle t \rangle^b .$$


For the singular term $I_-$, we first use the already mentioned inequality $\omega_1 \omega_2 \omega_3 \lesssim \omega$ to obtain that
$$
I_- \leq \left| \int \frac{\omega \omega_1 \omega_2 \omega_3 \mathbf{1}_{F_->0}}{\sqrt{F_-}} g(\xi_1) \dd \xi_1 \right| \lesssim \omega \left| \int \frac{\omega \mathbf{1}_{F^- > 0}}{\sqrt{F^-}} g(\xi_1) \dd \xi_1 \right|.
$$
At this point, we need more refined information on $F^-$. It is shown in \cite{LukkarinenSpohn2008} that $\xi_1 \mapsto F^-(\xi_0,\xi_1)$ has two zeros on $[0,2\pi]$, which will be denoted $\xi_1'$ and $\xi_1''$, and that $F^-$ is negative on $(\xi_1',\xi_1'')$. Furthermore, there holds $F^-(\xi_0,\xi_1) \gtrsim \omega_0 (\xi_1'-\xi)$ on $[0,\xi_1']$, and $F^-(\xi_0,\xi_1) \gtrsim \omega_0 (\xi-\xi_1'')$ on $[\xi_1'',2\pi]$. This enables us to bound
$$
\omega \left| \int_0^{\xi_1'} \frac{\omega \mathbf{1}_{F_->0}}{\sqrt{F_-}} g(\xi_2) \dd \xi_2 \right| \lesssim \omega^{\frac 32} \left| \int_0^{\xi_1'} \frac{1} {\sqrt{\xi_1'-\xi_2}} g(\xi_2) \dd \xi_2 \right| 
$$

Finally, splitting the integral between $|\xi-\xi_1'| < \langle t \rangle^{-N}$ and $|\xi-\xi_1'| \geq \langle t \rangle^{-N}$ for $N$ big enough, and using Lemma \ref{lem:weakLinfty}, we get
$$
I_- \lesssim \omega^{\frac 32} \left[ \log \langle t \rangle \| g \|_{L^2} + \langle t \rangle^{-\frac N 2} \| g \|_{L^\infty} \right] \lesssim \omega^{\frac 32} \langle t \rangle^{b+} +  \omega^{\frac 32} \langle t \rangle^{2-\frac N 2} .
$$

We now combine the estimates on $I_+$ and $I_-$ to integrate the previous ODE in time, which gives 
$$
|g(t,\xi)| \lesssim | g_{0}(\xi) |+ \omega^{\frac 32} \langle t \rangle^{ b + 1+}.
$$

\medskip

\noindent $(ii)$ We proceed as in $(i)$, except that we do not neglect the term $a(\xi) f(t,\xi)$. This gives the differential inequality
$$
\left| \frac{d}{dt} g(t,\xi) + a(\xi) g(t,\xi) \right| \lesssim \omega^{\frac 32} \langle t \rangle^{b+}
$$
or
$$
\left| \frac{d}{dt} \left[ e^{a(\xi) t} g(t,\xi) \right] \right| \lesssim e^{a(\xi) t} \omega^{\frac 32} \langle t \rangle^{b+},
$$
which can be integrated to give
$$
|g(t,\xi)| \lesssim e^{- a(\xi) t} g_0(\xi) + \omega^{\frac 32} e^{-a(\xi) t} \int_0^t e^{a(\xi)s} \langle s \rangle^{b+} \dd s.
$$
If $\omega > \langle t \rangle^{-\alpha}$, the first term on the right-hand side decays faster than any polynomial. Turning to the second term, we bound it in a straightforward way and use that $a(\xi) \sim \omega^{5/3}$ to get
$$
\left| \omega^{\frac 32} e^{-a(\xi) t} \int_0^t e^{a(\xi)s} \langle s \rangle^{b+} \dd s \right| \lesssim \omega^{\frac 32} \langle t \rangle^{b+} \int_0^t e^{a(\xi)s} \dd s \lesssim \omega^{- \frac 16} \langle t \rangle^{b+}.
$$
\end{proof}

\subsubsection{Iterative improvement}

\begin{theorem}[Pointwise decay] \label{theoremdecay}
Assume that $g_0 \in L^\infty$, with a zero projection (in $L^2$) on $\operatorname{Ker} L$. Then for any $\mu ,\nu \in (\frac 16, \frac 12)$ and any $\delta>0$
$$
\| \omega^\mu e^{tL} g \|_{L^\infty} \lesssim_\delta \langle t \rangle^{- \frac{3}{5}(\mu + \nu)+ \delta} \| \omega^{-\nu} g_0(\xi) \|_{L^\infty}.
$$
\end{theorem}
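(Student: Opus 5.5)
The plan is to run a bootstrap that alternates between the bulk energy estimate of Lemma \ref{lemm:ODEf for m} and the pointwise estimates of Lemma \ref{lem: Step 1 of the iteration}. Each cycle improves the exponent of $\langle t\rangle$ in both the $L^2$ bound and the edge supremum $q(t)$, until the exponent saturates at $-\frac35(\mu+\nu)$.

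By linearity I normalize $\|\omega^{-\nu} g_0\|_{L^\infty}=1$. Then $|g_0(\xi)|\le \omega^\nu$, so $m(0)+n(0)\lesssim 1$ and $g_0\in L^\infty$.

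I start from Lemma \ref{lem:weakLinfty}, which gives $q(t)\lesssim\langle t\rangle^{1+}$ and $\|g(t)\|_{L^2}\lesssim \langle t\rangle^{1+}$. Suppose that at some stage we know $\|g(t)\|_{L^2}\lesssim\langle t\rangle^{b}$ and $q(t)\lesssim \langle t\rangle^{e}$, where $q$ is computed on the edge set $\mathcal{E}_{t,\alpha}$ with $\alpha$ slightly below $\frac35$.

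Lemma \ref{lem: Step 1 of the iteration}(i) gives, on the edge where $\omega\lesssim\langle t\rangle^{-\alpha}$,
$$|g(t,\xi)|\lesssim \omega^\nu+\omega^{3/2}\langle t\rangle^{b+1+}\lesssim \langle t\rangle^{-\alpha\nu}+\langle t\rangle^{-\frac32\alpha+b+1+}.$$
This defines a new edge exponent $e'=\max(-\alpha\nu,\, b+1-\frac32\alpha)+$. Lemma \ref{lemm:ODEf for m} then yields $m(t)\lesssim\langle t\rangle^{2e'-\alpha}$. Also $n(t)\lesssim q^2\langle t\rangle^{-\alpha}$, so the new $L^2$ exponent is $b'=e'-\frac\alpha2$.

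The map $b\mapsto b'$ is contracting: $b+1-\frac32\alpha-\frac\alpha2 = b+1-2\alpha$, and $1-2\alpha<0$ when $\alpha$ is close to $\frac35$. Finitely many iterations therefore drive $b$ down to the fixed point where the $\omega^\nu$ term dominates. There $e=-\alpha\nu+$ and $b=-\alpha\nu-\frac\alpha2+$.

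For the final bound I use Lemma \ref{lem: Step 1 of the iteration}(ii). In the bulk $\omega>\langle t\rangle^{-\alpha}$ it gives $|g|\lesssim \omega^{-1/6}\langle t\rangle^{b+}$. Multiplying by $\omega^\mu$ with $\mu\ge\frac16$ gives $\omega^{\mu}|g|\lesssim\langle t\rangle^{b+}$. On the edge, $\omega^\mu|g|\lesssim \langle t\rangle^{-\alpha\mu}q\lesssim\langle t\rangle^{-\alpha(\mu+\nu)+}$.

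The bulk exponent $b$ from this naive scheme is not yet the sharp one. To balance the two regions I will let the cutoff depend on $\xi$ more carefully, splitting the bulk dyadically in $\omega\sim 2^{-k}$. On each dyadic shell I use the local damping $e^{-a(\xi)t}$ with $a\sim\omega^{5/3}$:
\begin{itemize}
\item where $\omega^{5/3}t\gg1$, the $g_0$ term is negligible and the forcing term is controlled with the improved $L^2$ bound;
\item where $\omega\lesssim t^{-3/5}$, one has simply $|g|\lesssim\omega^\nu+\dots$, hence $\omega^\mu|g|\lesssim t^{-\frac35(\mu+\nu)}$.
\end{itemize}
Letting $\alpha\uparrow\frac35$ absorbs the losses into the $\delta$.

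The main obstacle is closing this bookkeeping. The forcing term $\omega^{3/2}\langle t\rangle^{b+1}$ in part (i) must remain subordinate to $\omega^\nu$ on the whole region $\omega\lesssim t^{-3/5}$. This requires the $L^2$ decay to reach roughly $b\le -1+\frac35(\frac32-\nu)$, and this is exactly where the restriction $\nu\le\frac12$ should enter. If one pass does not achieve it, I will iterate with a sequence $\alpha_k\uparrow\frac35$ and track the exponents explicitly.
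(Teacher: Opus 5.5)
Your iteration is the paper's argument, written as a general recursion. The paper starts from $b=0$ (the linear flow is an $L^2$ contraction because $\langle Lg,g\rangle\le 0$) and makes three explicit passes. You start from $b=1+$ via Lemma~\ref{lem:weakLinfty} and iterate $b'=\max\bigl(-\alpha\nu-\tfrac{\alpha}{2},\,b+1-2\alpha\bigr)$. This map is a shift by $1-2\alpha\approx-\tfrac15$ with a floor, not a contraction, but it does reach the floor after finitely many passes. The endpoint is the same as in the paper: $q(t)\lesssim\langle t\rangle^{-\frac35\nu+}$ and $\|g(t)\|_{L^2}\lesssim\langle t\rangle^{-\frac35\nu-\frac{3}{10}+}$.

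What goes wrong is the endgame. You claim the resulting bulk exponent is not sharp and replace the conclusion by a dyadic refinement with an unresolved ``main obstacle'', so as written the proof is not closed. The claim rests on a miscomparison. For $\omega>\langle t\rangle^{-\alpha}$, part (ii) of Lemma~\ref{lem: Step 1 of the iteration}, together with $\omega\le1$ and $\mu\ge\frac16$, gives $\omega^\mu|g|\le\omega^{1/6}|g|\lesssim\langle t\rangle^{-\frac35(\nu+\frac12)+}$. Since $\mu\le\frac12$, we have $-\frac35(\nu+\frac12)\le-\frac35(\mu+\nu)$, so the bulk bound is already at least as strong as required. Combine this with your edge bound $\omega^\mu|g|\lesssim\langle t\rangle^{-\alpha\mu}q(t)\lesssim\langle t\rangle^{-\alpha(\mu+\nu)+}$ and let $\alpha\uparrow\frac35$; this is precisely how the paper concludes.

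The dyadic splitting only re-derives what parts (i) and (ii) of the lemma already provide. Your ``obstacle'' is settled at the fixed point itself, where $b+1-\frac32\alpha=-\alpha\nu+1-2\alpha<-\alpha\nu$.

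Relatedly, the restriction that matters at the end is $\mu\le\frac12$, not $\nu\le\frac12$. In your recursion, $\nu$ only has to keep $-\frac35(\nu+\frac12)>-1$ so that part (i) applies. In the paper, $\nu\le\frac12$ is needed only because its last pass starts from the $\nu$-independent value $b=-\frac25$.
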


\begin{proof}
We apply iteratively lemmas \ref{lemm:ODEf for m} and \ref{lem: Step 1 of the iteration}, having set $\alpha = \frac 35 - \varepsilon_0$, with $0 <\varepsilon_0 \ll 1$. In the following, we denote $t^{a +}$ for $t^{a + C\varepsilon_0}$, for a constant $C$, instead of our usual notation $t^{a+}$ meaning $t^{a + \delta}$ for any $\delta>0$.
\begin{itemize}
\item Applying first Lemma \ref{lem: Step 1 of the iteration} $(i)$ with $b=0$ gives $|g(t,\xi)| \lesssim |g_0(\xi)| + \omega^{\frac 32} \langle t \rangle^{1+} .
$ provided $f_0 \in L^2$.
Thus, assuming $|g_0(\xi)| \leq 1$, we get
$$
q(t) \lesssim 1 + \langle t \rangle^{1 - \frac 32 \cdot \frac 35 +} \lesssim \langle t \rangle^{ \frac 1 {10} +}.
$$

\item Applying Lemma \ref{lemm:ODEf for m} with $e = \frac 1 {10} +$ gives $m(t) \lesssim \langle t \rangle^{\frac 15 - \frac 35 +} = \langle t \rangle^{- \frac 25 +}$, which gives in turn
$$
\| g(t) \|_{L^2} \lesssim m(t)^{\frac 12} + q(t) \langle t \rangle^{-\frac 3 {10} +} \sim \langle t \rangle^{-\frac 15 +}.
$$

\item Applying Lemma \ref{lem: Step 1 of the iteration} $(i)$ with $b = - \frac 15 +$ gives $|g(t,\xi)| \lesssim |g_0(\xi)| + \omega^{\frac 32} \langle t \rangle^{\frac 45+}$. Assuming $|g_0(\xi)| \lesssim \omega^{\frac 16}$, this yields
$$
q(t) \lesssim \langle t \rangle^{- \frac 16 \cdot \frac 35 +} + \langle t \rangle^{\frac 45 - \frac 32 \cdot \frac 35 +} \sim \langle t \rangle^{- \frac 1 {10} +}.
$$

\item Applying Lemma \ref{lemm:ODEf for m} with $e = - \frac 1 {10}$ gives $m(t) \lesssim \langle t \rangle^{-\frac 15 - \frac 35 +} = \langle t \rangle^{-\frac 45 +}$ hence
$$
\| g(t) \|_{L^2} \lesssim \langle t \rangle^{-\frac 25 +} + \langle t \rangle^{-\frac 1 {10} - \frac 3 {10} +} \sim \langle t \rangle^{-\frac 25 +}.
$$

\item Applying Lemma \ref{lem: Step 1 of the iteration} $(i)$ with $b = - \frac 25$, we get $|g(t,\xi)| \lesssim |g_0(\xi)| + \omega^{\frac 32 } \langle t \rangle^{\frac 35+}$. If $|g_0(\xi)| \leq \omega^\nu$ with $\nu \in [\frac 16,\frac 12]$, this gives
\begin{equation}
\label{boundq}
q(t) \lesssim \langle t \rangle^{- \frac 35 \nu +} + \langle t \rangle^{- \frac 3 {10} +} \sim  \langle t \rangle^{- \frac 35 \nu +}.
\end{equation}

\item Applying Lemma \ref{lemm:ODEf for m} with $e =- \frac 35 \nu +$ gives finally
$m(t) \lesssim \langle t \rangle^{- \frac {6 \nu}{5} - \frac 35 +}$ hence
\begin{equation}
\label{boundm}
\| g(t) \|_{L^2} \lesssim \langle t \rangle^{- \frac {3 \nu}{5} - \frac 3{10} +}. 
\end{equation}
\end{itemize}

We can now prove our final pointwise bound, under the assumption made above that $|g_0(\xi)| \leq \omega^\nu$. In the edges, we use \eqref{boundq} to get that
$$
\omega^\mu | g(t,\xi) | \lesssim \langle t \rangle^{-\frac 35 (\mu + \nu) +} \qquad \mbox{if $\omega < \langle t \rangle^{-\alpha}$}.
$$
In the bulk, we use Lemma \ref{lem: Step 1 of the iteration} $(ii)$ and \eqref{boundm} to get that
$$
\omega^{\frac 16} | g(t,\xi) | \lesssim \langle t \rangle^{- \frac {6 \nu + 3}{10}} \qquad \mbox{if $\omega > \langle t \rangle^{-\alpha}$}.
$$
\end{proof}

\subsection{Nonlinear stability} We now come to our main theorem, which gives nonlinear stability of the RJ states. We give the main idea of the proof and refer to \cite{GermainLaMenegaki} for the full details.
\label{sectionnonlinear}
\begin{theorem}
\label{thmnonlinstab}
For any $\beta,\gamma >0$, there exists $\varepsilon_0$ such that the following holds. If
$$
f(t=0) = \mathfrak{f}_{\beta,\gamma} [ 1 + g_0]
$$
where 
$$
\int \mathfrak{f}_{\beta,\gamma} g_0 \dd \xi = \int \omega \mathfrak{f}_{\beta,\gamma} g_0 \dd \xi = 0
$$
and
$$
\| \omega^{-\frac 12} g_0 \|_{L^\infty} = \varepsilon < \varepsilon_0,
$$
then there exists a global solution which can be written
$$
f(t) = \mathfrak{f}_{\beta,\gamma} [ 1 + g] \qquad \mbox{where} \quad \| \omega^{\frac{1}{2}} g(t,\xi) \|_{L^\infty_\xi} \lesssim \varepsilon \langle t \rangle^{-\frac 35 + \frac{1}{1000}} \quad \mbox{for all $t \geq 0$}.
$$
\end{theorem}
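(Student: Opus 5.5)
We will run a bootstrap (continuity) argument in a time-weighted norm built from Theorem \ref{theoremdecay}. First, insert $f=\mathfrak{f}(1+g)$ in \eqref{KFPUT} and write the equation for $g$ as
$$\partial_t g = Lg + N(g), \qquad N(g) = N_2(g,g) + N_3(g,g,g),$$
where $N_2$ and $N_3$ are the quadratic and cubic parts of the collision operator, each divided by $\mathfrak{f}$. Note that $\mathcal{M}$ and $\mathcal{E}$ are conserved and vanish on $\mathfrak{f}g$ initially. Hence $\int \mathfrak{f} g(t)\,\dd\xi = \int \omega\mathfrak{f} g(t)\,\dd\xi=0$ for all $t$, and $N(g(t))$ has zero projection on $\operatorname{Ker}L=\operatorname{Span}(\mathfrak{f},\omega\mathfrak{f})$. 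This is the orthogonality needed to apply Theorem \ref{theoremdecay} in the Duhamel formula
$$g(t)=e^{tL}g_0+\int_0^t e^{(t-s)L}N(g(s))\,\dd s.$$
Local existence and continuation in $\omega^{1/2}L^\infty$ come from Theorem \ref{thmlwp}, since $\alpha=\tfrac12>-\tfrac54$.

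The bootstrap norm is
$$\|g\|_X=\sup_{t}\Big(\langle t\rangle^{\frac35-\kappa}\|\omega^{\frac12}g(t)\|_{L^\infty}+\langle t\rangle^{-\kappa'}\|\omega^{-\frac12}g(t)\|_{L^\infty}\Big),$$
with $\kappa=\frac1{1000}$ and a small loss exponent $\kappa'$. For the linear term, Theorem \ref{theoremdecay} with $\mu=\nu=\frac12$ gives $\|\omega^{1/2}e^{tL}g_0\|_{L^\infty}\lesssim \langle t\rangle^{-3/5+\delta}\varepsilon$. A slowly growing bound on $\omega^{-1/2}g$ follows from Lemma \ref{lem: Step 1 of the iteration}(i) together with the linear flow estimates, since near the edges the data already satisfies $|g_0|\le\varepsilon\omega^{1/2}$. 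For the Duhamel term, the aim is
$$\|\omega^{-\nu}N(g(s))\|_{L^\infty}\lesssim \langle s\rangle^{-\frac65+2\kappa+\kappa'}\|g\|_X^2,$$
with $\nu$ close to $\tfrac12$. Then Theorem \ref{theoremdecay} gives
$$\int_0^t\langle t-s\rangle^{-\frac35(\frac12+\nu)+\delta}\langle s\rangle^{-\frac65+}\,\dd s\lesssim\langle t\rangle^{-\frac35+\kappa},$$
because $\frac65>1$. This closes the bootstrap as $\|g\|_X\lesssim\varepsilon+\|g\|_X^2$.

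The main obstacle is the nonlinear estimate. Each term of $N_2(g,g)$ needs one factor measured in the decaying weight $\omega^{1/2}L^\infty$ and the other in the edge-vanishing weight $\omega^{-1/2}L^\infty$, and the result must still vanish like $\omega^{\nu}$ at the edges. I would redo the weighted kernel bounds from the proof of the local well-posedness lemma, with distinct weights on distinct entries. The tools are the explicit parametrization $h$, the inequality $\omega_1\omega_2\omega_3\lesssim\omega_0$, and the edge asymptotics $\omega_1\sim\xi_0\xi_2(\xi_0-\xi_2)$. Together these should give an integrable bound like $\max(1,\omega_2^{-1+})$ once the gain from the $g$ factors is placed on the smallest frequency. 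The cancellation structure matters here: the bracket $\tfrac1{f_0}+\tfrac1{f_1}-\tfrac1{f_2}-\tfrac1{f_3}$ vanishes for RJ states, so every surviving term carries a factor $g_i/\mathfrak{f}_i$. If the $\omega^{-1/2}$ bound grows too fast, I would instead run the iteration of Theorem \ref{theoremdecay} directly on the nonlinear problem, using the intermediate $L^2$ and bulk bounds $m(t)$ and $q(t)$ from Lemma \ref{lemm:ODEf for m}. The cubic term $N_3$ decays faster and is easier.
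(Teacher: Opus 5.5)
Your architecture (Duhamel formula, orthogonality of the nonlinearity to $\operatorname{Ker}L$ from conservation of mass and energy, Theorem \ref{theoremdecay} with $\mu=\nu=\frac12$ for the linear part, bootstrap) is the paper's. However, the quadratic estimate as you describe it cannot give the rate you need.

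Suppose each term of $N_2(g,g)$ has one factor bounded in $\omega^{1/2}L^\infty$, decaying like $\langle s\rangle^{-3/5+\kappa}$, and the other in $\omega^{-1/2}L^\infty$, growing like $\langle s\rangle^{\kappa'}$. Then $\|\omega^{-\nu}N_2(g(s))\|_{L^\infty}$ decays at best like $\langle s\rangle^{-3/5+\kappa+\kappa'}$, not $\langle s\rangle^{-6/5+2\kappa+\kappa'}$. The exponent you wrote corresponds to two decaying factors and one non-decaying one, i.e.\ to the cubic term. With a source decaying only like $\langle s\rangle^{-3/5+}$, one gets $\int_0^t\langle t-s\rangle^{-3/5+\delta}\langle s\rangle^{-3/5+\kappa+\kappa'}\,\dd s\sim\langle t\rangle^{-1/5+}$. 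So the bootstrap does not close at the rate $\langle t\rangle^{-3/5+\kappa}$, and since the quadratic term is the leading nonlinear term, this is the central issue.

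The missing observation is that the quadratic term needs no edge-vanishing input at all: the collision kernel itself produces the vanishing of the output. Since $\sqrt{F^+_{0,2}}\gtrsim\sqrt{\omega_0\omega_2}$ and $\omega_1\omega_2\omega_3\lesssim\omega_0$, the weighted kernel $\omega_0^{-1/2}\,\omega_0\omega_1\omega_2\omega_3/\sqrt{F^+_{0,2}}$ is $\lesssim\sqrt{\omega_2}\,\omega_1\omega_3$. This absorbs the losses $\omega_i^{-1/2}\omega_j^{-1/2}$ that come from bounding \emph{both} factors in $\omega^{1/2}L^\infty$. The result is the paper's estimate $\|\omega^{-1/2}\mathcal{Q}(g)\|_{L^\infty}\lesssim\|\omega^{1/2}g\|_{L^\infty}^2\lesssim\langle s\rangle^{-6/5+2\kappa}\|g\|_X^2$, and Theorem \ref{theoremdecay} with $\mu=\nu=\frac12$ then closes.

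With this fix your scheme can be completed, and it differs from the paper only in the auxiliary norm used for the cubic term:
\begin{itemize}
\item The paper adds $\langle t\rangle^{2/5-\delta}\|\omega^{1/6}g\|_{L^\infty}$, propagated by Theorem \ref{theoremdecay} with $(\mu,\nu)=(\frac16,\frac12)$, and bounds the cubic term by $\|\omega^{1/6}g\|\,\|\omega^{1/3}g\|\,\|\omega^{1/2}g\|$.
\item Your choice (two factors in $\omega^{1/2}L^\infty$, one in $\omega^{-1/2}L^\infty$) also yields $\langle s\rangle^{-6/5+}$ for the cubic term. But then you must show that the linear flow propagates $\omega^{-1/2}L^\infty$ with at most a $\langle t\rangle^{0+}$ loss, for data orthogonal to $\operatorname{Ker}L$, including for the Duhamel sources.
\end{itemize}
That propagation estimate is not among the stated results. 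It can plausibly be extracted from Lemma \ref{lem: Step 1 of the iteration}(i) near the edges, using the $L^2$ decay $\langle t\rangle^{-3/5+}$ produced by the iteration, together with $\omega^{-1/2}|g|\le\omega^{-1}\|\omega^{1/2}g\|_{L^\infty}$ in the bulk. You should state and prove it rather than assert it.
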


\begin{proof} \underline{The equation on the perturbation.} The full equation satisfied by the perturbation $g$ is
\begin{equation}
\label{equationg}
\partial_t g - L g = \mathcal{Q}(g) + \mathcal{C}(g),
\end{equation}
where
\begin{align*} 
& \mathcal{Q}(g)(t,\xi_0)= \frac{2}{\mathfrak{f}_0} \int \frac{ \omega_0 \omega_1 \omega_2 \omega_3}{\sqrt{F^+_{0,2}}}  \left[ \mathfrak{f}_2 \mathfrak{f}_3 g_2 g_3 - \mathfrak{f}_0 \mathfrak{f}_1 g_0 g_1 \right] \dd \xi_2 \\
&  \mathcal{C}(g)(t,\xi_0)= \frac{1}{\mathfrak{f}_0} \int \frac{ \omega_0 \omega_1 \omega_2 \omega_3 }{\sqrt{F^+_{0,2}}}  \prod_{\ell=0}^3 \mathfrak{f}_\ell g_\ell \left(\frac{1}{\mathfrak{f} g}+\frac{1}{\mathfrak{f}_1 g_1}-\frac{1}{\mathfrak{f}_2 g_2}-\frac{1}{\mathfrak{f}_3 g_3}\right)
 \dd \xi_2.
\end{align*} 

Duhamel's formula gives the equivalent formulation
\begin{equation}
\label{Duhamelform}
g(t) = S_t g_0 + \int_0^t S_{t-s} \left( \mathcal{Q}(g)(s,\xi) +  \mathcal{C}(g)(s,\xi)\right) \dd s.
\end{equation}

\medskip

\noindent \underline{The bootstrap argument} The norm that will be used to analyze this problem is, for $T>0$,
\begin{equation} \label{def:B_t norm}
\| g \|_{\mathcal{B}_T} := 
 \left\| \langle t \rangle^{\frac 2 5 - \delta} \omega^{\frac 1 6} g \right\|_{L^\infty_{t,\xi}([0,T] \times \mathbb{T})} + \left\| \langle t \rangle^{\frac 3 5 - \delta} \omega^{\frac 1 2} g \right\|_{L^\infty_{t,\xi}([0,T] \times \mathbb{T})}.
\end{equation}
We claim that there exists a constant $C_0 >0$ such that: for any $T>0$, if $g$ is a solution in $\mathcal{C}([0,T],\omega^{- \frac 16} L^\infty)$, then
\begin{equation}
\label{bootstrap}
\| g \|_{\mathcal{B}_T} \leq C_0 \left[ \varepsilon + \| g \|_{\mathcal{B}_T}^2 + \| g \|_{\mathcal{B}_T}^3 \right]
\end{equation}
where $\varepsilon := \| \omega^{-\frac 12} g_0 \|_{L^\infty}$ and where $\|\cdot\|_{\mathcal{B}_T}$ is defined in \eqref{def:B_t norm}.

If this inequality holds, then the desired result follows by a continuous induction argument, with the help of a local well-posedness result. We omit these standard steps.

\medskip

\noindent \underline{Proof of the bootstrap inequality \eqref{bootstrap}}
The linear term on the right-hand side of \eqref{Duhamelform} can be bounded by Theorem \ref{theoremdecay}:
$$
\left\| S_t g_0 \right\|_{\mathcal{B}_\infty} \lesssim \| \omega^{-\frac 12} g \|_{L^\infty} < \varepsilon 
$$

To deal with the quadratic and cubic terms, we will use the conservation laws of mass and energy. They imply that 
\begin{align*}
& \int \mathfrak{f}_{\beta,\gamma}(\xi)(1 + g(t,\xi)) \dd \xi = \int \mathfrak{f}_{\beta,\gamma}(\xi) \dd \xi \\
& \int \omega(\xi) \mathfrak{f}_{\beta,\gamma}(\xi)(1 + g(t,\xi)) \dd \xi = \int \omega(\xi) \mathfrak{f}_{\beta,\gamma}(\xi) \dd \xi 
\end{align*}
or in other words
$$
\int \mathfrak{f}_{\beta,\gamma}(\xi) g(t,\xi) \dd \xi = \int \omega(\xi) \mathfrak{f}_{\beta,\gamma}(\xi) g(t,\xi) \dd \xi = 0.
$$

As a consequence, the orthogonal projection (in $L^2$) of $g$ and $Lg$ on $\operatorname{Ker} L$ is zero. For the equation \eqref{Duhamelform} satisfied by $g$, this implies that the projection of the quadratic and cubic terms on $\operatorname{Ker} L$ is also zero. Therefore, we can apply Theorem \ref{theoremdecay} with $\mu = \nu = \frac 12$: if $t \in [0,T]$,
\begin{align*}
& \left\| \omega^{\frac 12} \int_0^t S_{t-s} \left( \mathcal{Q}(g)(s,\xi) +  \mathcal{C}(g)(s,\xi)\right) \dd s\right\|_{L^\infty_p} \\
& \qquad \lesssim \int_0^t  \langle t-s \rangle^{-\frac 35 + \delta} \left\| \omega^{-\frac 12}  \mathcal{Q}(g)(s,\xi)\right\|_{L^\infty} \dd s + \int_0^t  \langle t-s \rangle^{-\frac 35 + \delta} \left\| \omega^{-\frac 12}  \mathcal{C}(g)(s,\xi) \right\|_{L^\infty} \dd s\\
& \qquad = I + II.
\end{align*}
Using that $F_+ \gtrsim \omega_0 \omega_2$ and $\omega_1 \omega_2 \omega_3 \lesssim \omega_0$,
\begin{align*}
I &\lesssim \int_0^t \langle t-s \rangle^{-\frac 35 + \delta}\left\| \int \sqrt{\omega_2} \omega_1 \omega_3 [ |g_2 g_3| + |g_0 g_1|] \dd \xi_2 \right\|_{L^\infty} \dd s  \\
&\lesssim  \int_0^t \langle t-s \rangle^{-\frac 35 + \delta}  \| \omega^{\frac{1}{2}}g \|_{L^\infty}^2 \dd s  \lesssim \| g\|_{\mathcal{B}}^2 \int_0^t \langle t-s \rangle^{-\frac 35 + \delta}  \langle s \rangle^{\left(- \frac 35 + \delta \right)\cdot 2} \dd s \\
& \lesssim \| g \|_{\mathcal{B}}^2 \langle t \rangle^{- \frac 35 + \delta}.
\end{align*}
Similarly,
\begin{align*}
II & \lesssim  \int_0^t \langle t-s \rangle^{-\frac 35 + \delta}\left\| \int \sqrt{\omega_2} \omega_1 \omega_3 [ |g_1 g_2 g_3| +  |g_0 g_2 g_3| + |g_0 g_1 g_2| + |g_0 g_1 g_3| ] \dd \xi_2 \right\|_{L^\infty} \dd s  \\
& \lesssim  \int_0^t \langle t-s \rangle^{-\frac 35 + \delta} \| \omega^{\frac{1}{6}} g \|_{L^\infty} \| \omega^{\frac{1}{3}} g \|_{L^\infty} \| \omega^{\frac{1}{2}} g \|_{L^\infty} \dd s \\
& \lesssim  \| g \|_{\mathcal{B}}^3 \int_0^t \langle t-s \rangle^{-\frac 35 + \delta} \langle s \rangle^{ -\frac 25 - \frac 12 - \frac 35 + 4\delta} \dd s \\
& \lesssim   \| g\|_{\mathcal{B}}^3  \langle t \rangle^{-\frac 35 + \delta}.
\end{align*}

Using now Theorem \ref{theoremdecay} with $\mu = \frac 16$ and $\nu = \frac 12$,
\begin{align*}
& \left\| \omega^{\frac 16} \int_0^t S_{t-s} \left( \mathcal{Q}(g)(s,\xi) +  \mathcal{C}(g)(s,\xi)\right) \dd s\right\|_{L^\infty_p} \\
& \qquad \lesssim \int_0^t  \langle t-s \rangle^{-\frac 25 + \delta} \left\| \omega^{-\frac 12}  \mathcal{Q}(g)(s,\xi)\right\|_{L^\infty} \dd s + \int_0^t  \langle t-s \rangle^{-\frac 25 + \delta} \left\| \omega^{-\frac 12}  \mathcal{C}(g)(s,\xi) \right\|_{L^\infty} \dd s\\
& \qquad = III + IV.
\end{align*}

Taking once again advantage the inequalities $F^+ \gtrsim \omega_0 \omega_2$ and $\omega_1 \omega_2 \omega_3 \lesssim \omega_0$, 
\begin{align*}
III & \lesssim  \int_0^t \langle t-s \rangle^{-\frac 25 + \delta}\left\| \int \sqrt{\omega_2} \omega_1 \omega_3 [ |g_2 g_3| + |g_0 g_1|] \dd \xi_2 \right\|_{L^\infty} \dd s \\
& \lesssim \int_0^t \langle t-s \rangle^{-\frac 25 + \delta} \| \omega^{\frac 12} g \|_{L^\infty_p} \dd s \\
& \lesssim \| g \|_{\mathcal{B}_t}^2 \int_0^t  \langle t-s \rangle^{-\frac 25 + \delta} \langle s \rangle^{\left(\frac 12 + \delta \right) \cdot 2} \\
& \lesssim \| g \|_{\mathcal{B}_t}^2 \langle t \rangle^{- \frac 25 + \delta}.
\end{align*}

Similarly,
\begin{align*}
IV & \lesssim  \int_0^t \langle t-s \rangle^{-\frac 25 + \delta}\left\| \int \sqrt{\omega_2} \omega_1 \omega_3 [ |g_1 g_2 g_3| +  |g_0 g_2 g_3| + |g_0 g_1 g_2| + |g_0 g_1 g_3| ] \dd \xi_2 \right\|_{L^\infty} \dd s  \\
& \lesssim  \int_0^t \langle t-s \rangle^{-\frac 25 + \delta} \| \omega^{\frac{1}{6}} g \|_{L^\infty} \| \omega^{\frac{1}{3}} g \|_{L^\infty} \| \omega^{\frac{1}{2}} g \|_{L^\infty} \dd s \\
& \lesssim  \| g \|_{\mathcal{B}}^3 \int_0^t \langle t-s \rangle^{-\frac 25 + \delta} \langle s \rangle^{ -\frac 25 - \frac 12 - \frac 35 + 4\delta} \dd s \\
& \lesssim   \| g \|_{\mathcal{B}}^3  \langle t \rangle^{-\frac 25 + \delta}.
\end{align*}
This gives the desired estimate.
\end{proof}

\section{Hydrodynamic limits in the non-degenerate case}
\label{sec:Hydro_limit}

In this section, we derive formally the hydrodynamic limit of the inhomogeneous \eqref{KWE} in the nondegenerate case where the linearized operator $L$ of the homogeneous \eqref{KWE} around RJ states has a spectral gap.

We will consider two possible ansatz: either a perturbation of a global RJ equilibrium, or RJ equilibria with slowly varying (in space and time) parameters.

\subsection{A detour through the Boltzmann case}
Before starting with the discussion of the hydrodynamic limit of \eqref{KWE}, it is interesting to draw a comparison with hydrodynamic limits in the case of the Boltzmann equation. Among various excellent references for the hydrodynamic limits of the Boltzmann equations, we will follow the exposition from \cite{SaintRaymond}. Starting with the Boltzmann equation, which we write
\begin{equation*}
     \partial_t f + v \cdot \nabla_x f = \mathcal{Q} (f,f),
\end{equation*}
we first perform non-dimensionalization by introducing the characteristic length, time, and thermal speed $\ell_0, t_0, c$, with $c = \sqrt{5k T_0 / 3m }$ (here $T_0$ is the characteristic temperature and $m$ is the molecular mass of a particle): then the non-dimensionalized variables are 
\begin{equation*}
    \tilde t = \frac{t}{t_0}, \tilde x = \frac{x}{\ell_0}, \tilde v = \frac{v}{c}, 
\end{equation*}
and non-dimensionalized density variable becomes
\begin{equation*}
    \tilde f (\tilde t, \tilde x, \tilde v) = \frac{\ell_0^3 c^3}{N} f(t,x,v) = \frac{c^3}{R_0 } f(t,x,v).
\end{equation*}
Finally, the collision kernel should be non-dimensionalized: the collision kernel introduces a time-scale by $\mathcal{Q} (M_{R_0, 0, T_0} , M_{R_0, 0, T_0} ) = N / (\ell_0^3 \tau)$, and it can be interpreted as the average time scale of particles in the equilibrium density distribution $M_{R_0, 0, T_0}$ with density $R_0$ and temperature $T_0$ can travel freely without collision. Using this, we define mean free path $\lambda = c \tau$. 

Finally, we arrive at the following non-dimensional form of the Boltzmann equation: 
\begin{equation*}
    \text{St} \partial_t f + v \cdot \nabla_x f = \frac{1}{\text{Kn} } \mathcal{Q} (f,f),
\end{equation*}
where the kinetic Strouhal number reads $\text{St} = \frac{\ell_0}{c t_0}$, and the Knudsen number reads $\text{Kn} = \frac{\lambda}{\ell_0}.$

Now other non-dimensional numbers can be represented using $\text{St}$ and $\text{Kn}$. First, we consider the Mach number $\text{Ma} = u_0 / c$, where $u_0$ is the characteristic velocity of the fluid. In the case $u_0 = \ell_0 / t_0$, $\text{St} = \text{Ma}$. On the other hand, if we are interested in the flow which is much slower than the characteristic velocity of the geometry ($\ell_0/t_0$), for example, small fluctuations around some reference flow, then it can happen that $\text{Ma} << \text{St}$. Here, we only consider $\text{St} = \text{Ma}$ case. Also, the Reynolds number $\text{Re}$ is given by von Karman relation: $\text{Re} = \text{Ma}/\text{Kn}$.

Then, the hydrodynamic limit corresponds to the asymptotic limit $\text{Kn} \rightarrow 0$: collision becomes dominant, and therefore the system equilibrates and hydrodynamic variables describe the system. There are several cases:

\begin{enumerate}
    \item $\text{St} = \text{Ma} = O(1), \text{Kn} \rightarrow 0$: compressible Euler limit,
    \item $\text{St} = \text{Ma} \rightarrow 0, \text{Kn} \rightarrow 0, \text{Re}  \rightarrow \infty$: incompressible Euler limit,
    \item $\text{St} = \text{Ma} \rightarrow 0, \text{Kn} \rightarrow 0, \text{Re}  = O(1)$: incompressible Navier-Stokes limit.
\end{enumerate}

\subsection{The general setup} We start by rescaling the inhomogeneous \eqref{KWE} by setting $f(t,x,\xi) = \widetilde{f}(\varepsilon^A t, \varepsilon^B x, \xi)$ and then removing the tildes to obtain
\begin{equation}
\label{eq_f}
\varepsilon^A \partial_t f + \varepsilon^B \omega' \partial_x f = \mathcal{C} (f),
\end{equation}
which can also be written
$$
\varepsilon^{A-B} \partial_t f + \omega' \partial_x f = \varepsilon^{-B} \mathcal{C} (f),
$$
where it becomes apparent that the numbers $\varepsilon^{A-B}$ and $\varepsilon^{B}$ correspond to the Strouhals and Knudsen numbers respectively in the terminology of the Boltzmann equation.

Next, we consider a perturbation  of size $\varepsilon$ of a RJ equilibrium given by the ansatz
$$
f(t,x,\xi) = \mathfrak{f}_{\beta, \gamma} (\xi) (1+\varepsilon g (t,x,\xi) ).
$$
We now introduce some useful notation: since $\mathcal{C}(\mathfrak{f})=0$, the collision operator can be expanded as
$$
\mathcal{C}(\mathfrak{f}(1+\varepsilon g)) = \varepsilon \mathcal{L}_{\beta,\gamma} (g) + \varepsilon^2 \mathcal{Q}_{\beta,\gamma} (g) + \varepsilon^3 \mathcal{C}(\mathfrak{f} g).
$$
Here, we write $\mathcal{Q}_{\beta,\gamma}(g)$ or $\mathcal{Q}_{\beta,\gamma}(g,g)$ for the quadratic term, with the understanding that it is symmetric in its entries when we use the notation $\mathcal{Q}_{\beta,\gamma}(g,g)$. The same applies to $\mathcal{C}$. We also set
$$
L_{\beta,\gamma} g = \mathfrak{f_{\beta,\gamma}}^{-1} \mathcal{L}_{\beta,\gamma} (g), \qquad Q_{\beta,\gamma}(g) = \mathfrak{f}_{\beta,\gamma}^{-1} \mathcal{Q}_{\beta,\gamma} (g), \qquad C(g) = \mathfrak{f}_{\beta,\gamma}^{-1} \mathcal{C}(\mathfrak{f}_{\beta,\gamma} g).
$$

The \textit{non-degeneracy} of $L_{\beta,\gamma}$ corresponds to the two following conditions, which are assumed throughout this whole section. 
\begin{itemize} 
\item[(C1)] The kernel of the linearized operator $L_{\beta,\gamma}$ reduces to $\operatorname{Span}\{ \mathfrak{f}_{\beta,\gamma},\omega \mathfrak{f}_{\beta,\gamma} \}$
\medskip

\item[(C2)] The linearized operator $L_{\beta,\gamma}$ has a bounded inverse on the orthogonal of its kernel.
\end{itemize}

Since $L_{\beta,\gamma}$ is self-adjoint, we let $\Pi_0$ and $\Pi_1$ denote the orthogonal projections (in $L^2$) on $\operatorname{Ker} L_{\beta,\gamma}$ and $(\operatorname{Ker} L_{\beta,\gamma})^\perp$ respectively.

Any function of the form $F(\omega) \omega'$ belongs to $(\operatorname{Ker} L_{\beta,\gamma})^\perp$ since
$$
\int \frac{F(\omega)}{\beta \omega + \gamma} \omega' \dd \xi = \int \frac{F(\omega) \omega}{\beta \omega + \gamma} \omega' \dd \xi = 0.
$$
Thus
\begin{equation}
\label{Pi0Fomega}
\Pi_0 (F(\omega) \omega') = 0.
\end{equation}

\subsection{Perturbation around slowly varying Rayleigh-Jeans profiles}
Here we want to justify the nonlinear diffusion formula (6.4) in Aoki-Lukkarinen-Spohn \cite{AokiLukkSpohn}). We do not claim any rigor; in particular, the stability estimate justifying order-of-magnitude analysis seems difficult.

We will assume the ansatz
\begin{equation*}
    f (t,x,\xi) = \mathfrak{f}_{\beta, \gamma}(\xi) (1 + \varepsilon g (t,x,\xi)), \qquad \beta = \beta(t,x), \gamma = \gamma(t,x).
\end{equation*}
Besides $(\beta,\gamma)$, another natural set of coordinates is
$$
n(\beta,\gamma) = \int \mathfrak{f}_{\beta,\gamma} \dd \xi, \qquad e(\beta,\gamma) = \int \omega \mathfrak{f}_{\beta,\gamma} \dd \xi.
$$
Both sets of coordinates are equivalent; we refer to Section \ref{section_rigorous_stationary} for the allowed values of $(\beta,\gamma)$ and $(e,n)$.

The equation we start from is
$$
\varepsilon^A \partial_t \left[ \mathfrak{f}_{\beta,\gamma} (1+\varepsilon g) \right] + \varepsilon^B \omega' \partial_x \left[ \mathfrak{f}_{\beta,\gamma} (1+\varepsilon g) \right] = \mathcal{C}(\mathfrak{f}_{\beta,\gamma} (1+\varepsilon g))
$$
Using that $\omega \omega' \mathfrak{f}$ and $\omega' \mathfrak{f}$ have zero average and are orthogonal to $\operatorname{Ker} L_{\beta,\gamma}$, the usual mass and energy balance equation becomes
\begin{equation*}
    \varepsilon^A \partial_t \begin{pmatrix}
        e \\ n
    \end{pmatrix} + \varepsilon^{B+1} \partial_x \begin{pmatrix}
\langle \omega \omega' \mathfrak{f}_{\beta, \gamma}  , \Pi_1 g \rangle \\
\langle \omega'\mathfrak{f}_{\beta, \gamma} , \Pi_1 g \rangle
    \end{pmatrix} =  O(\varepsilon^{A+1}). 
\end{equation*}

Now we investigate the propagation equation for the remainder $g$: we have
\begin{equation*}
    \begin{split}
        -\mathfrak{f}_{\beta, \gamma}^2 &( \varepsilon^A \omega \partial_t \beta + \varepsilon^A \partial_t \gamma + \varepsilon^B \omega' (\omega \partial_x \beta + \partial_x \gamma) ) (1+ \varepsilon g ) + \varepsilon \mathfrak{f}_{\beta, \gamma} (\varepsilon^A \partial_t g + \varepsilon^B \omega' \partial_x g ) \\
        &= \mathcal{C} (\mathfrak{f}_{\beta, \gamma} (1+\varepsilon g)) = \varepsilon \mathfrak{f}_{\beta, \gamma} (L_{\beta, \gamma} g + \varepsilon Q_{\beta, \gamma} (g) + \varepsilon^2 C_{\beta, \gamma} (g) ),
    \end{split}
\end{equation*}
where $L_{\beta, \gamma}, Q_{\beta, \gamma}, C_{\beta, \gamma}$ are linearization, quadratic perturbation, and cubic perturbation around $\mathfrak{f}_{\beta, \gamma}$. With the aforementioned properties of $L_{\beta, \gamma}$, we have 
\begin{equation*}
    \Pi_1 g = L_{\beta, \gamma}^{-1} (- \varepsilon^{B-1} \omega' ( \partial_x\beta \omega + \partial_x \gamma ) \mathfrak{f}_{\beta, \gamma}  + O(\varepsilon^{\min(A, B, 1)}),
\end{equation*}
since $ \mathfrak{f}_{\beta, \gamma} (\omega \partial_t \beta + \partial_t \gamma) \in \operatorname{Ker} L_{\beta, \gamma}.$
Now plugging in this to the mass and energy balance equation, we obtain
\begin{equation*}
    \begin{split}
    \varepsilon^A \partial_t \begin{pmatrix}
        e \\ n 
    \end{pmatrix} &- \varepsilon^{2B} \partial_x \begin{pmatrix}
 \langle \omega' \omega \mathfrak{f}_{\beta, \gamma} , L_{\beta, \gamma}^{-1} (\omega' \omega \mathfrak{f}_{\beta, \gamma} ) \partial_x \beta  + L_{\beta, \gamma}^{-1} (\omega'  \mathfrak{f}_{\beta, \gamma} ) \partial_x \gamma \rangle \\
 \langle \omega' \mathfrak{f}_{\beta, \gamma} , L_{\beta, \gamma}^{-1} (\omega' \omega \mathfrak{f}_{\beta, \gamma} ) \partial_x \beta  + L_{\beta, \gamma}^{-1} (\omega'  \mathfrak{f}_{\beta, \gamma} ) \partial_x \gamma \rangle
    \end{pmatrix} \\
    &+ O(\varepsilon^{B +1+ \min (A, B, 1) }  + \varepsilon^{A+1}) = 0,
    \end{split}
\end{equation*}
or
\begin{equation*}
    \begin{split}
    \varepsilon^A \partial_t \begin{pmatrix}
        e \\ n
    \end{pmatrix} &+ \varepsilon^{2B} \partial_x \left ( D \begin{pmatrix}
        \partial_x \beta \\ \partial_x \gamma
    \end{pmatrix}  \right ) = O(\varepsilon^{B+1+\min(A,B,1) } + \varepsilon^{A+1} ), \\
\end{split}
\end{equation*}
where
$$
\boxed{D(\beta,\gamma) = - \begin{pmatrix}
\langle \omega \omega' \mathfrak{f}_{\beta, \gamma}, L_{\beta, \gamma}^{-1} (\omega' \omega \mathfrak{f}_{\beta, \gamma} ) \rangle && \langle \omega \omega' \mathfrak{f}_{\beta, \gamma}, L_{\beta, \gamma}^{-1} (\omega'  \mathfrak{f}_{\beta, \gamma} ) \rangle \\
\langle \omega' \mathfrak{f}_{\beta, \gamma}, L_{\beta, \gamma}^{-1} (\omega' \omega \mathfrak{f}_{\beta, \gamma} ) \rangle  && \langle \omega' \mathfrak{f}_{\beta, \gamma}, L_{\beta, \gamma}^{-1} (\omega'  \mathfrak{f}_{\beta, \gamma} ) \rangle
\end{pmatrix};
}
$$
note that this matrix is positive-definite by the Cauchy-Schwarz inequality.

Therefore, we have the following sub-cases: hydrodynamic equations arise when $\min(A, 2B) < B+1+\min(A,B,1)$, and we have the following subcases:
\begin{enumerate}
   \item $A> 2B$: we have $\partial_x \left ( D \begin{pmatrix}
        \partial_x \beta \\ \partial_x \gamma
    \end{pmatrix}  \right ) = 0 $,
    \item $A < 2B$: we have $\partial_t \begin{pmatrix}
        n \\ e
    \end{pmatrix}=0,$ and
    \item $A = 2B$, we have \begin{equation} \label{nonlindiff} \boxed{\partial_t \begin{pmatrix}
        e \\ n
    \end{pmatrix}  + \partial_x \left ( D(\beta,\gamma) \, \partial_x \begin{pmatrix} \beta \\ \gamma
    \end{pmatrix}  \right ) = 0 .}\end{equation}
\end{enumerate}

This equation has an interesting structure. First, it obviously conserves the total mass and the total energy
$$
\frac{\dd}{\dd t} \int n(t,x) \dd x = 0, \qquad \frac{\dd}{\dd t} \int e(t,x) \dd x = 0.
$$
Second, it enjoys a monotonic quantity, namely the total entropy
$$
\int S(\beta,\gamma) \dd x, \qquad \mbox{where} \qquad S(\beta,\gamma) = \int \log \mathfrak{f}_{\beta,\gamma} \dd x. 
$$
The entropy $S(\beta,\gamma)$ can equivalently be written $S(e,n)$, and we saw in Section \ref{section_first_look} that $\partial_e S(e,n) = \beta$ and $\partial_n S(e,n) = \gamma$. Therefore,
\begin{align*}
\frac{\dd}{\dd t} \int S(n,e) \dd x 
& = \int \left[ \partial_t e \cdot \partial_e S + \partial_t n \cdot \partial_n S \right] \dd x\\
& = \int \left[ \partial_t e \cdot \beta + \partial_t n \cdot \gamma \right] \dd x \\
& = - \int \left \langle \begin{pmatrix} \beta \\ \gamma \end{pmatrix}\,,\, \partial_x \left(  D \begin{pmatrix} \partial_x \beta \\ \partial_x \gamma \end{pmatrix} \right) \right \rangle \dd x \\
& = \int \left \langle \begin{pmatrix} \partial_x \beta \\ \partial_x \gamma \end{pmatrix}\,,\, D \begin{pmatrix} \partial_x \beta \\ \partial_x \gamma \end{pmatrix} \right \rangle \dd x \geq 0.
\end{align*}

Finally, this equation is of diffusion type and locally well-posed, since its leading order symbol is elliptic. Indeed, keeping only top order derivatives, the equation becomes
\begin{equation} 
\label{equationAD}
A \partial_t \begin{pmatrix} \beta \\ \gamma \end{pmatrix} =  D \partial_x^2 \begin{pmatrix} \beta \\ \gamma \end{pmatrix}.
\end{equation}
where
$$
A(\beta,\gamma) = - \frac{\partial (e,n)}{\partial (\beta,\gamma)} = 
\begin{pmatrix}
\int \omega^2 \mathfrak{f}_{\beta,\gamma}^2 \dd \xi & \int \omega \mathfrak{f}_{\beta,\gamma}^2 \dd \xi \\
\int \omega \mathfrak{f}_{\beta,\gamma}^2 \dd \xi & \int \mathfrak{f}_{\beta,\gamma}^2 \dd \xi
\end{pmatrix}
$$
(which is self-adjoint by the Cauchy-Schwarz inequality).
Since $A$ and $D$ are positive self-adjoint matrices, then $A^{-1}D$ is diagonalizable with positive eigenvaluess\footnote{For completeness, we will give a proof of this linear algebra statement: if $M$ and $N$ are two positive symmetric matrices, then $MN$ is diagonalizable with positive eigenvalues. Indeed, let $P = \sqrt M N \sqrt M$; it is a positive symmetric matrix. Furthermore, if $u \neq 0$ is an eigenvector of $P$ associated to the eigenvalue $\lambda$, or in other words $Pu = \lambda u$, then $MN \sqrt{M} u = \lambda \sqrt{M} u$. In other words, $\sqrt M$ maps the eigenvectors of $P$ to eigenvectors of $MN$ with the same eigenvalue, which gives the desired result.}, and \eqref{equationAD} is a locally well-posed diffusion equation.

\medskip

Finally, following \cite{AokiLukkSpohn}, we want to consider the case where $\gamma=0$. There are two reasons for this: first, we already saw that the case $\gamma=0$ was of particular physical relevance; and second, this leads to a simple scalar equation which helps understand the more complicated vector-valued problem above. The derivation is predicated on the reasonable assumption that the case $\gamma=0$ is dynamically stable.

Following the earlier computations, we find that
$$
\left( \int \omega^2 \mathfrak{f}_{\beta,0} \dd \xi \right) \partial_t \beta = - \partial_x \left( \langle \omega \omega' \mathfrak{f}_{\beta,0}, L_{\beta,0}^{-1} ( \omega' \omega  \mathfrak{f}_{\beta,0}) \rangle \partial_x \beta \right),
$$
which can be simplified to give
$$
\boxed{
\partial_t \beta = c_0 \beta^2 \partial_x (\beta \partial_x \beta), \qquad c_0 = - \langle \omega' , L_{1,0}^{-1} \omega' \rangle.
}
$$
This equation enjoys the conserved quantity (energy) $\int \beta^{-1} \dd x$ and the monotone quantity (entropy) $-\int \log \beta \dd x$.

Since $\gamma = 0$, we have the simple relation $e = \beta^{-1}$. As we saw in Section \ref{section_first_look}, the parameter $\beta$ can be interpreted as the inverse temperature, so that $e=T$ in this case. The equation on the temperature becomes
$$
\boxed{\partial_t T = c_0 \partial_x (T^{-3} \partial_x T),}
$$
with energy $\int T \dd x$ and entropy $\int \log T \dd x$.

\subsection{Perturbations of a global RJ equilibrium} 
\subsubsection{Summary of the results}
We start with the ansatz
\begin{equation*}
    f (t,x,\xi) = \mathfrak{f}_{\beta_0, \gamma_0}(\xi) (1 + \varepsilon g (t,x,\xi)), \qquad \beta = \beta_0, \gamma = \gamma_0, 
\end{equation*}
where $\beta_0$ and $\gamma_0$ are fixed and will be omitted from now on.
The equation satisfied by $g$ is
\begin{equation}
\label{eq_g}
\varepsilon^{A+1} \partial_t g + \varepsilon^{B+1} \omega' \partial_x g = \varepsilon L_{\beta,\gamma} g + \varepsilon^2 Q_{\beta,\gamma} (g) + \varepsilon^3 C_{\beta,\gamma}(g).
\end{equation}

and we set
\begin{equation}
\label{defVv}
\Pi_0 g(t,x,\xi) = b(t,x) \omega(\xi) \mathfrak{f}(\xi) + c(t,x) \mathfrak{f}(\xi) = V(t,x), \qquad 
v(t,x) = \begin{pmatrix} b(t,x) \\ c(t,x) \end{pmatrix}.
\end{equation}

Besides $b$ and $c$, another natural set of coordinates is provided by the mass and energy of the perturbation:
$$
e(t,x,\xi) = \int \mathfrak{f}(\xi) \omega(\xi) \Pi_0 g(t,x,\xi) \dd \xi, \qquad n(t,x,\xi) = \int \mathfrak{f}(\xi) \Pi_0 g(t,x,\xi) \dd \xi.
$$
The change of coordinates from $(b,c)$ to $(e,n)$ is given by
$$
\begin{pmatrix} e \\ n \end{pmatrix} = A \begin{pmatrix} b \\ c \end{pmatrix} \qquad \mbox{with} \qquad A = \begin{pmatrix}
\int \omega^2 \mathfrak{f}^2 \dd \xi & \int \omega \mathfrak{f}^2 \dd \xi \\
\int \omega \mathfrak{f}^2 \dd \xi & \int \mathfrak{f}^2 \dd \xi
\end{pmatrix}.
$$


In the following subsection, we will prove formally that, as $\varepsilon \to 0$, the orthogonal component $\Pi_1 g$ becomes negligible with respect to $\Pi_0 g$; and furthermore, the dynamics of $g$ is given by a hydrodynamic equation. The relevant scalings are as follows
\begin{itemize}
\item \underline{Linear diffusion.} if $0 <B < 2$ and $A = 2B$, then the limiting equation is
$$
\boxed{\partial_t \begin{pmatrix} e \\ n \end{pmatrix} = D \partial_x^2 \begin{pmatrix} b \\ c \end{pmatrix}, \qquad  D = - \begin{pmatrix}
            \langle \omega ' \omega \mathfrak{f}, L^{-1} \omega' \omega \mathfrak{f} \rangle & \langle \omega' \omega \mathfrak{f}, L^{-1} \omega' \mathfrak{f} \rangle \\ \langle \omega' \mathfrak{f}, L^{-1} \omega' \omega \mathfrak{f} \rangle & \langle \omega' \mathfrak{f}, L^{-1} \omega' \mathfrak{f} \rangle
        \end{pmatrix}.}
$$
This equation is the linearization around $(b,c)=(0,0)$ of \eqref{nonlindiff}. Thus it inherits its properties: it is a linear diffusion equation, which conserves the total mass and the total energy.

\medskip

\item \underline{Conservation law.} If $B>2$ and $A = B+2$, the limiting equation is
$$
\boxed{\partial_t \begin{pmatrix} e \\ n \end{pmatrix} + \partial_x P\begin{pmatrix} b \\ c \end{pmatrix} = 0, \qquad v \in \mathbb{R}^2}
$$
where 
\begin{equation}
\label{defP}
\boxed{
P \begin{pmatrix} b \\ c \end{pmatrix} = \begin{pmatrix}
\langle L^{-1}(\omega' \omega \mathfrak{f}), 2 Q(V,L^{-1} Q( V))+ C (V)\rangle \\ \langle L^{-1}(\omega' \mathfrak{f}), 2 Q(V,L^{-1} Q( V))+ C (V)\rangle \end{pmatrix} , \qquad V = b \omega \mathfrak{f} + c \mathfrak{f}.}
\end{equation}
Each of the two coordinates of $P\begin{pmatrix} b \\ c \end{pmatrix}$ is a homogeneous polynomial of degree 3 in the 
two variables $\beta,\gamma$.


\medskip

\item \underline{Nonlinear diffusion.} If $B=2$ and $A=4$, the limiting equation is
$$
\boxed{\partial_t v \begin{pmatrix} e \\ n \end{pmatrix} - D \partial_x^2 \begin{pmatrix} b \\ c \end{pmatrix} + \partial_x P \begin{pmatrix} b \\ c \end{pmatrix} = 0.}
$$

\medskip

\item \underline{Degenerate scalings.} All other ranges of $A$ and $B$ lead to degenerate limiting problems which are not evolution equations.
\end{itemize}

\subsubsection{Some useful identities}
We start by recording some identities which will be needed in the derivation.

\begin{itemize}
\item The quadratic term $Q(g)$ and the cubic term $C(g)$ belong to $(\operatorname{Ker} L)^\perp$ for any function $g$:
$$
\Pi_0 Q(g) = \Pi_0 C(g) = 0.
$$
For the cubic term, this is just a rewriting of the conservation of mass and energy
$$
\left \langle C(g) , \begin{pmatrix} \mathfrak{f} \\ \mathfrak{f} \omega \end{pmatrix} \right \rangle = \left \langle \mathfrak{f}^{-1} \mathcal{C}(\mathfrak{f} g) , \begin{pmatrix} \mathfrak{f} \\ \mathfrak{f} \omega \end{pmatrix} \right \rangle = \left \langle \mathcal{C}(\mathfrak{f} g) , \begin{pmatrix} 1 \\ \omega \end{pmatrix} \right\rangle = 0.
$$
For the quadratic term, this follows since
$$
Q(g) = \mathfrak{f}^{-1} \mathcal{C}(\mathfrak{f}(1+g)) - L g - C(g)
$$

\item The last identity (see \cite{bardos1991fluid}) we need is
\begin{equation}
\label{Q0L0}
Q(\Pi_0 g) = - L((\Pi_0 g)^2).
\end{equation}
To see why this is true, we start from 
$$
\mathcal{C} ( \mathfrak{f}') = 0, \quad \mbox{where} \quad \mathfrak{f}' = \frac{1}{(\beta_0 + \beta') \omega + (\gamma_0+\gamma')}.
$$
The expansion of $\mathfrak{f}'$ in $\beta',\gamma' \ll 1$ is
\begin{equation*}
\mathfrak{f'} = \mathfrak{f} \left (1 - \mathfrak{f} (\beta' \omega + \gamma' ) + \mathfrak{f}^2 (\beta' \omega + \gamma')^2 - \cdots  \right )
\end{equation*}
Expanding $\mathcal{C} ( \mathfrak{f}')$ to order 2, this means that
\begin{equation*}
\mathcal{Q}((\beta' \omega + \gamma')\mathfrak{f}) + \mathcal{L} ((\beta'\omega+ \gamma')^2 \mathfrak{f}^2 )= 0,
\end{equation*}
which is the desired identity.
\end{itemize}

\subsubsection{The heuristic derivation} \underline{Step 1: mass and energy balance.}  Multiplying the equation \eqref{eq_f} by $(1, \omega)$ and integrating with respect to $\xi$, we have 
    \begin{equation*}
        \varepsilon^{A} \partial_t \int \left ( \begin{matrix}
            1 \\ \omega 
        \end{matrix}\right ) \mathfrak{f} (1+\varepsilon g) \dd\xi + \varepsilon^{B} \partial_x \int \omega' \left ( \begin{matrix}
            1 \\ \omega 
        \end{matrix}\right ) \mathfrak{f} (1+\varepsilon g) \dd\xi = 0,
    \end{equation*}
    where the right-hand side vanishes since $1, \omega$ are collision invaraints. Next, as $\mathfrak{f}, \omega \mathfrak{f} \in \text{Ker} (L)$ and $\omega' \mathfrak{f}, \omega' \omega \mathfrak{f} \in \text{Ker} (L)^\perp$, we have the following:
    \begin{equation*}
        \varepsilon^{A} \partial_t  \left \langle  \left ( \begin{matrix}
            1 \\ \omega 
        \end{matrix}\right ) \mathfrak{f}, \Pi_0 g \right \rangle + \varepsilon^{B} \partial_x  \left \langle  \left ( \begin{matrix}
            1 \\ \omega 
        \end{matrix}\right ) \omega' \mathfrak{f}, \Pi_1 g \right \rangle = 0.
    \end{equation*}

\medskip

\noindent 
\underline{Step 2: recovering $\Pi_1 g$ through the equation.}
Next, it follows from the equation \eqref{eq_g} for $g$ that
    \begin{equation*}
    \begin{split}
 \Pi_1 g &= \varepsilon^A \partial_t L^{-1} \Pi_1 g + \varepsilon^B \partial_x L^{-1} ( \omega' \Pi_0 g ) + \varepsilon^B \partial_x L^{-1} \Pi_1 ( \omega' \Pi_1 g ) - \varepsilon L^{-1} (Q (g) + \varepsilon C (g) ).
    \end{split}
    \end{equation*}
Recalling that $\Pi_0 g =(b \omega + c) \mathfrak f$ and plugging the formula for $\Pi_1 g$ above in the mass and energy balance equation, we have
\begin{equation*}
\begin{split}
        &\varepsilon^{A} \partial_t   \left \langle  \left ( \begin{matrix}
            1 \\ \omega 
        \end{matrix}\right ) \mathfrak{f}, (b \omega + c) \mathfrak{f}  \right \rangle + \varepsilon^{2B} \partial_x  \left \langle  \left ( \begin{matrix}
            1 \\ \omega 
        \end{matrix}\right ) \omega' \mathfrak{f}, \partial_x L^{-1} (\omega' b \omega + c) \mathfrak{f} ) \right \rangle \\
        &- \varepsilon^{B+1} \left \langle  \left ( \begin{matrix}
            1 \\ \omega 
        \end{matrix}\right ) \omega' \mathfrak{f}, \partial_x L^{-1} (Q(g) + \varepsilon  C (g) ) \right \rangle  \\
        &+ \varepsilon^{A+B}   \partial_x  \left \langle  \left ( \begin{matrix}
            1 \\ \omega 
        \end{matrix}\right ) \omega' \mathfrak{f}, \partial_t L^{-1} \Pi_1 g \right \rangle + \varepsilon^{2B} \partial_x  \left \langle  \left ( \begin{matrix}
            1 \\ \omega 
        \end{matrix}\right ) \omega' \mathfrak{f}, \partial_x L^{-1} \Pi_1 (\omega' \Pi_1 g ) \right \rangle 
          =0.
    \end{split}
    \end{equation*}
    Then we use the following heuristic: from the equation for $\Pi_1 g$, we may write
\begin{equation}
\label{Pi1g}
\Pi_1 g = \varepsilon^{m} h, \qquad m = \min (A, B, 1),
\end{equation}
for a function $h$ of size $O(1)$. Thus we see that 
    \begin{equation}
\label{grosseequation}
        \begin{split}
& \varepsilon^A \partial_t \begin{pmatrix} b \langle \mathfrak{f}, \omega \mathfrak{f} \rangle + c \langle \mathfrak{f}, \mathfrak{f} \rangle \\ 
b \langle \omega \mathfrak{f}, \omega \mathfrak{f} \rangle + c \langle \omega \mathfrak{f}, \mathfrak{f} \rangle \end{pmatrix}
+ \varepsilon^{2B} \partial_x^2 \begin{pmatrix} ( b \langle \omega' \mathfrak{f}, L^{-1} (\omega' \omega \mathfrak{f} ) \rangle + c \langle \omega' \mathfrak{f}, L^{-1} (\omega' \mathfrak{f} ) \rangle \\
 b \langle \omega  \omega' \mathfrak{f}, L^{-1} (\omega' \omega \mathfrak{f} ) \rangle + c \langle \omega \omega' \mathfrak{f}, L^{-1} (\omega' \mathfrak{f} ) \rangle \end{pmatrix} \\
&- \varepsilon^{B+1}\partial_x \begin{pmatrix}   \langle \omega' \mathfrak{f}, L^{-1} Q (g)  \rangle \\ \langle \omega' \omega  \mathfrak{f}, L^{-1} Q (g) \rangle \end{pmatrix} -  \varepsilon^{B+2} \partial_x \begin{pmatrix} \langle \omega' \mathfrak{f},L^{-1} C (g) \rangle   \\  \langle \omega' \omega \mathfrak{f}, L^{-1}  C (g) \rangle \end{pmatrix} \\
&+ O(\varepsilon^{\min( A+B+m, 2B+m)} ) = 0.
\end{split}
\end{equation}

\medskip
\noindent \underline{Step 3: expanding the quadratic and cubic terms}
We can now expand with the help of \eqref{Pi1g} and \eqref{Q0L0}
\begin{align*}
Q(g) = Q(g,g) & = Q(\Pi_0 g) + 2 Q(\Pi_0 g, \Pi_1 g) + Q(\Pi_1 g, \Pi_1 g) \\
& = -L (\Pi_0 g)^2 + 2 Q(\Pi_0 g, \Pi_1 g) + O(\varepsilon^{2m}),
\end{align*}
so that, using now that $\int \omega' F(\omega) \dd \xi = 0$ for a smooth $F$, and denoting 
$$
V = \Pi_0 g = b\omega \mathfrak{f} + c \mathfrak{f},
$$
we have
\begin{equation}
\label{eqforQ}
\left \langle \begin{pmatrix} \omega' \mathfrak{f} \\ \omega' \omega \mathfrak{f} \end{pmatrix} , L^{-1} Q(g) \right \rangle = \varepsilon^m \left \langle \begin{pmatrix} \omega' \mathfrak{f} \\ \omega' \omega \mathfrak{f} \end{pmatrix} , L^{-1} Q(V,h) \right \rangle + O(\varepsilon^{2m}).
\end{equation}
while
\begin{equation}
\label{eqforC}
C(g) = C(V) + O(\varepsilon^m).
\end{equation}
Plugging \eqref{eqforQ} and \eqref{eqforC} into \eqref{grosseequation}, and recalling the definition of the matrix $D$, we find
\begin{equation} \label{asymptoticeqn}
\begin{split}
\varepsilon^A \partial_t \begin{pmatrix} e \\ n \end{pmatrix} &- \varepsilon^{2B} D \,\partial_x^2 \begin{pmatrix} b \\ c \end{pmatrix} \\ 
&- \varepsilon^{B+1+m} \partial_x  \left ( \begin{matrix}  \langle \omega\omega ' \mathfrak{f}, L^{-1} Q (V,h) \rangle \\ \langle \omega '   \mathfrak{f}, L^{-1} Q (V,h)  ) \rangle
\end{matrix} \right ) - \varepsilon^{B+2} \partial_x  \left ( \begin{matrix}  \langle \omega\omega ' \mathfrak{f}, L^{-1}  C (V) ) \rangle \\ \langle \omega '   \mathfrak{f}, L^{-1} C (V)  ) \rangle
\end{matrix} \right )\\
& \qquad \qquad = O(\varepsilon^{\min(A+B+m, 2B+m, {B+1+2m,B+2+m} )} )
\end{split}
\end{equation}

\medskip
\noindent \underline{Step 4: distinguishing between the different regimes.}
Since 
$$
\min (A, 2B, B+1+m, B+2) < \min (A+B+m, 2B+m, B+1+2m,B+2+m),
$$ 
all possible leading order terms are on the left-hand side.
Now we need to distinguish several cases depending on the relative size of the powers of $\varepsilon$ on the left-hand side. We take an extra care for the $O(\varepsilon^{B+1+m})$-size term, as we did not fully represent it with $\beta$ and $\gamma$ yet. Recalling $m = \min (A, B, 1)$, we have the following cases.

\begin{enumerate}
\item $A<B$. In this case, $\min (A, 2B, B+1+m, B+2) = A$, and the leading order equation is $\partial_t A v = 0.$

\medskip

\item $A \ge B$. In this case, $m = \min (B, 1)$. We have several sub-cases.

\medskip

\begin{enumerate}
\item $B \le 1$. In this case, $B+1+m = 2B+1$, and also $2B < B+2$. Therefore, $\min (A, 2B, B+1+m, B+2) = \min (A, 2B). $ Thus, if $A=2B$, we have $\partial_t A v - \partial_x^2 D v = 0,$ if $A>2B$, we have $\partial_x^2 D v = 0$, and if $A < 2B$, we have $\partial_t A v = 0.$

\medskip

\item $B > 1$. In this case, $B+1+m = B+2$. Moreover, the leading order term for $\Pi_1 g$ is $-\varepsilon L^{-1} \mathcal{Q} (V) . $ Plugging this into the second line of \eqref{asymptoticeqn}, we get
                \begin{equation*}
        \begin{split}
\varepsilon^A \partial_t \begin{pmatrix} e \\ n \end{pmatrix} - \varepsilon^{2B} D \, \partial_x^2 \begin{pmatrix} b \\ c \end{pmatrix} + \varepsilon^{B+2} \partial_x P \begin{pmatrix} b \\ c \end{pmatrix} = O(\varepsilon^{\min (A+B+1, 2B+1 )} ),
        \end{split}
    \end{equation*}
where $P$ was defined in \eqref{defP}.

    Thus, we have the following subcases:
    \begin{enumerate}
        \item $B>2$. We have $2B > B+2$. Thus if $A > B+2$, we have $\partial_x P^3 (v) = 0$, if $A = B+2$, $\partial_t A v + \partial_x P^3 (v) = 0,$ and if $A < B+2$, again we have $\partial_t A v = 0.$
        \item $B=2$. In this case, $2B = B+2$, and thus if $A > 2B$, we have $-\partial_x^2 D v + \partial_x P^3 (v) = 0$, if $A = 2B$, we have $\partial_t A v - \partial_x^2 D v + \partial_x P^3 (v) = 0, $ and if $A<2B$, we have $\partial_t A v = 0.$
        \item $B<2$. We have $2B < B+2$, and thus if $A>2B$, we have $-\partial_x^2 D v = 0$, if $A=2B$, we have $\partial_t A v - \partial_x^2 D v = 0,$ and if $A<2B$, we have $\partial_t A v = 0.$
    \end{enumerate}
        \end{enumerate}
    \end{enumerate}

\subsection{Hilbert expansion}

Yet a question remains: can we justify this scaling analysis: for this purpose, we introduce one candidate, Hilbert expansion approach. Once we have enough machinery, in particular, $L^p, p>2$ control of the solution, this can be put rigorously. At this point, we keep our discussion formal.

We first consider the perturbation around slowly varying R-J profiles. We keep $A=2, B=1$, and consider an ansatz with further correctors:
\begin{equation*}
    f = \mathfrak{f}_{\beta, \gamma}  + \varepsilon \mathfrak{f}_{\beta, \gamma} g_1 + \varepsilon^2 \mathfrak{f}_{\beta, \gamma} g_2 + \varepsilon^\gamma \mathfrak{f}_{\beta, \gamma} r,
\end{equation*}
and plug in to the equation
\begin{equation*}
    \varepsilon \partial_t f + \omega' \partial_x f - \frac{1}{\varepsilon} \mathcal{C} (f) = 0
\end{equation*}
to obtain
\begin{equation*}
    \begin{split}
        &\varepsilon^{\gamma+1} \partial_t (\mathfrak{f}_{\beta, \gamma} r ) + \varepsilon^\gamma \omega' \partial_x (\mathfrak{f}_{\beta,\gamma} r) - \varepsilon^{\gamma-1} \mathcal{C} (\mathfrak{f}_{\beta,\gamma}, \mathfrak{f}_{\beta,\gamma}, \mathfrak{f}_{\beta,\gamma}r ) - \varepsilon^\gamma \mathcal{C} ( \mathfrak{f}_{\beta,\gamma}, \mathfrak{f}_{\beta,\gamma}g_1, \mathfrak{f}_{\beta,\gamma}r) \\
        &-\varepsilon^{\gamma+1} (\text{linear in } \mathfrak{f}_{\beta,\gamma} r) - \varepsilon^{2\gamma -1} (\text{ quadratic in } \mathfrak{f}_{\beta,\gamma} r ) - \varepsilon^{3\gamma -1} (\text{ cubic in } \mathfrak{f}_{\beta,\gamma} r ) \\ 
        &- \varepsilon^{-1} \mathcal{C} (\mathfrak{f}_{\beta,\gamma}) + (\omega' \partial_x \mathfrak{f}_{\beta,\gamma} - \mathcal{C} (\mathfrak{f}_{\beta,\gamma}, \mathfrak{f}_{\beta,\gamma}, \mathfrak{f}_{\beta,\gamma} g_1) ) \\
        &+\varepsilon (\partial_t \mathfrak{f}_{\beta,\gamma} + \omega' \partial_x (\mathfrak{f}_{\beta,\gamma} g_1)  - \mathcal{C} (\mathfrak{f}_{\beta,\gamma},\mathfrak{f}_{\beta,\gamma}, \mathfrak{f}_{\beta,\gamma}g_2) - \mathcal{C} (\mathfrak{f}_{\beta,\gamma}, \mathfrak{f}_{\beta,\gamma}g_1, \mathfrak{f}_{\beta,\gamma}g_1) ) \\
        &+\varepsilon^{2} (\partial_t (\mathfrak{f}_{\beta,\gamma} g_1) + \omega' \partial_x (\mathfrak{f}_{\beta,\gamma} g_2) - \mathcal{C} (\mathfrak{f}_{\beta,\gamma}, \mathfrak{f}_{\beta,\gamma}g_1, \mathfrak{f}_{\beta,\gamma}g_2) - \mathcal{C} (\mathfrak{f}_{\beta,\gamma} g_1, \mathfrak{f}_{\beta,\gamma} g_1, \mathfrak{f}_{\beta,\gamma}g_1 ) ) \\
        &+ O(\varepsilon^3).
    \end{split}
\end{equation*}
Here we slightly abused notation, and whenever we write $\mathcal{C}(f, g, h)$, then it actually represents summation over all terms obtained by permutations of the arguments, i.e. $\mathcal{C}(g,f,h), \mathcal{C}(h,g,f)$, and so on.

Now we see that if source terms are too large, the estimate for $r$ will not remain stable. Thus, we determine $g_1, g_2$ so that source terms cancel each other. 
\begin{enumerate}
    \item $O(\varepsilon^{-1}$: $\mathfrak{f}_{\beta,\gamma}^{-1} \mathcal{C} (\mathfrak{f}_{\beta,\gamma}) = 0$ immediately. 
    \item $O(1)$: this amounts to 
    \begin{equation*}
        \frac{\omega' \partial_x \mathfrak{f}_{\beta,\gamma}}{\mathfrak{f}_{\beta,\gamma}} - L_{\beta, \gamma} g_1 = 0,
    \end{equation*}
    which is satisfied by
    \begin{equation*}
        \Pi_1 g_1 = L_{\beta, \gamma} ^{-1} \left ( \frac{ \omega' \partial_x \mathfrak{f}_{\beta,\gamma}}{\mathfrak{f}_{\beta,\gamma}} \right ).
    \end{equation*}
    \item $O(\varepsilon^{1} )$: this amounts to
    \begin{equation*}
        \begin{split}
            &\Pi_0 \left ( \frac{ \partial_t \mathfrak{f}_{\beta,\gamma}}{\mathfrak{f}_{\beta,\gamma}} + \omega' \frac{\partial_x ( \mathfrak{f}_{\beta,\gamma} \Pi_0 g_1)  } {\mathfrak{f}_{\beta,\gamma}} + \omega' \frac{\partial_x ( \mathfrak{f}_{\beta,\gamma} \Pi_1 g_1) }{\mathfrak{f}_{\beta,\gamma}} \right )  = 0, \\
            &\Pi_1 \left ( \frac{ \partial_t \mathfrak{f}_{\beta,\gamma}}{\mathfrak{f}_{\beta,\gamma}} + \omega' \frac{\partial_x ( \mathfrak{f}_{\beta,\gamma} \Pi_0 g_1)  } {\mathfrak{f}_{\beta,\gamma}} + \omega' \frac{\partial_x ( \mathfrak{f}_{\beta,\gamma} \Pi_1 g_1) }{\mathfrak{f}_{\beta,\gamma}}  \right ) - L_{\beta, \gamma} g_2 - Q(g_1) = 0,
        \end{split}
    \end{equation*}
    however, since $\Pi_0 \omega' \frac{\partial_x ( \mathfrak{f}_{\beta,\gamma} \Pi_0 g_1)  } {\mathfrak{f}_{\beta,\gamma}} = 0, \Pi_1 \frac{ \partial_t \mathfrak{f}_{\beta,\gamma}}{\mathfrak{f}_{\beta,\gamma}} = 0, $
    we end up with
    \begin{equation*}
        \begin{split}
            &\Pi_0 \left (\frac{ \partial_t \mathfrak{f}_{\beta,\gamma}}{\mathfrak{f}_{\beta,\gamma}} - \omega' \frac{1}{\mathfrak{f}_{\beta, \gamma}} \partial_x \left ( \mathfrak{f}_{\beta, \gamma} (-L_{\beta, \gamma} )^{-1} \left ( \frac{\omega' \partial_x \mathfrak{f}_{\beta, \gamma}}{\mathfrak{f}_{\beta, \gamma}} \right ) \right ) \right ) = 0, \\
            &\Pi_1 g_2 = L_{\beta, \gamma} ^{-1} Q (g_1) + L_{\beta, \gamma} ^{-1} \Pi_1 \left ( \frac{ \partial_t \mathfrak{f}_{\beta,\gamma}}{\mathfrak{f}_{\beta,\gamma}} + \omega' \frac{\partial_x ( \mathfrak{f}_{\beta,\gamma} \Pi_0 g_1)  } {\mathfrak{f}_{\beta,\gamma}} + \omega' \frac{\partial_x ( \mathfrak{f}_{\beta,\gamma} \Pi_1 g_1) }{\mathfrak{f}_{\beta,\gamma}}  \right )
        \end{split}
    \end{equation*}
    The first equation is exactly the equation we had before.
    \item $O(\varepsilon^2)$: again, by dividing by $\mathfrak{f}_{\beta,\gamma}$ and taking $\Pi_0$, we have an evolution equation for $\Pi_0 g_1$; on the other hand, $\Pi_1$ part cannot be canceled out, as we already have determined $g_1$ and $\Pi_1 g_2$ - we do not have a term like $L g_3$.  
\end{enumerate}
To summarize, the remainder equation becomes
\begin{equation}
\begin{split}
     \frac{1}{\mathfrak{f}_{\beta, \gamma}} \partial_t (\mathfrak{f}_{\beta, \gamma} r) &+ \frac{1}{\varepsilon} \frac{1}{\mathfrak{f}_{\beta, \gamma}} \omega' \partial_x (\mathfrak{f}_{\beta, \gamma} r ) - \frac{1}{\varepsilon^2 }  L_{\beta, \gamma} r  - \frac{1}{\varepsilon} \frac{1}{\mathfrak{f}_{\beta, \gamma}} \mathcal{C} (\mathfrak{f}_{\beta, \gamma}, \mathfrak{f}_{\beta, \gamma}g_1, \mathfrak{f}_{\beta, \gamma}r ) \\
     &+ Lin(\mathfrak{f}_{\beta, \gamma}r) + \varepsilon^{\gamma-2 } Quad(\mathfrak{f}_{\beta, \gamma}r ) + \varepsilon^{2\gamma -2} Cub(\mathfrak{f}_{\beta, \gamma}r) \\
     &+ O(\varepsilon^2) \operatorname{Ker} L_{\beta,\gamma}^\perp  + O(\varepsilon^3) = 0.
\end{split}
\end{equation}
Now since every term originated from collision operators are orthogonal to $\operatorname{Ker} L_{\beta,\gamma}$, we might be able to control these terms using the coercivity of $L$ on $\operatorname{Ker} \operatorname{L}_{\beta,\gamma} ^\perp$ and appropriate higher $L^p$ norm control: on the other hand, the issue with the convection term $\frac{1}{\varepsilon} \frac{1}{\mathfrak{f}_{\beta, \gamma} } \omega' \partial_x (\mathfrak{f}_{\beta, \gamma}  r )$ seems more serious: we  would need a coercivity of the modified operator $\mathfrak{f}_{\beta, \gamma} L_{\beta,\gamma} \mathfrak{f}_{\beta, \gamma}^{-1}$ to control this term. 

\section{Hydrodynamic limits in the degenerate case}
\label{sec:Hydro_limit_degenerate}

This section is dedicated to hydrodynamic limits in the degenerate case where the linearized operator for \eqref{KWE} around RJ is not invertible. This case is considerably more delicate than the non-degenerate case, and indeed we will be content with treating the linearized problem. As we shall see, the interesting consequence of this degeneracy is a fractional (also called anomalous) diffusion equation.

The following exposition follows \cite{mellet2015anomalous}, and we only provide a  rough sketch, refering to this article for a meticulous treatment. We also refer to \cite{MelletMouhotMischler11} and references therein, for the derivation of fractional diffusion equations from kinetic equations such as the \eqref{KWE} studied here. Also the more recent works  \cite{MouhotBouin22, BKM24} for a quantitative abstract method towards fractional diffusion equations from linear kinetic models, where cases with one and several collision invariants respectively, are treated.  

\subsection{The case of FPU-$\beta$, $\mathfrak{f} = \omega^{-1}$} The general method which was followed in the non-degenerate case ceases to work in the degenerate case. We will rely on the Fourier-Laplace transform to go around this difficulty.

We start from \eqref{eq_g} with the usual ansatz $f = \mathfrak{f} (1+\varepsilon g).$ Instead of using the mass and energy balance equation (since the approach requires that the leading order term of $\Pi_1 g$ can be written in terms of $\Pi_0 g$, which is difficult to even say if $L$ is not invertible) one directly starts from the equation for $g$ and takes Fourier-Laplace transform (that is, Fourier in $x$ and Laplace in $t$ variable). We have
\begin{equation*}
    \varepsilon^A \partial_t g + \varepsilon^B \omega' \partial_x g = Lg + \varepsilon R,
\end{equation*}
where $R$ consists of quadratic and cubic nonlinearities. We will be working on the linearized problem and set therefore $R=0$.

Two key properties will be needed: first, as we saw in Section \ref{section_rigorous_FPU}, the linearized operator can be written as
$$
L = - A + K,
$$
where $K$ is a compact operator and $A$ is the multiplication operator by the collision frequency $a$:
$$
[A g](\xi) = a(\xi) g(\xi), \qquad \mbox{with} \qquad a(\xi) \simeq |\xi|^{\frac 5 3} \quad \mbox{as $\xi \to 0$.}
$$
The above $\simeq$ notation means that $\lim_{\xi \to 0} \frac{a(\xi)}{|\xi|^{\frac 5 3}}=1$; this is proved in \cite{LukkarinenSpohn2008} with a multiplicative constant, which can be set to $1$ by rescaling time.

Second, we will admit the bound
$$
\left\| \varepsilon^{A/2} \Pi_1 g \right\|_{L^2 (a\dd \xi dx dt )} = O(1)
$$
uniformly in $\varepsilon$, which is proved in \cite{mellet2015anomalous}.

\medskip

\noindent
\underline{Step 1: The Fourier-Laplace transform.} Setting
\begin{equation*}
    \widehat g (p,k,\xi) := \int_0 ^\infty \int_{\mathbb{R}} e^{-pt}  e^{-ikx} g(t,x,\xi) dxdt,
\end{equation*}
and we treat $a$ as the main part of $L$ and $K$ as a perturbation:
\begin{equation*}
    \varepsilon^{A} (p \widehat g - \widehat g_0 ) + \varepsilon^B ik \omega' \widehat g + a \widehat g = K \widehat g.
\end{equation*}
Therefore, we have
\begin{equation} \label{FLtransform}
    \widehat g (p, k, \xi) = \frac{\varepsilon^A}{\varepsilon^A p + \varepsilon^B ik\omega' + a(\xi)} \widehat g_0 + \frac{1}{\varepsilon^A p + \varepsilon^B ik\omega' + a(\xi)} \int K (\xi, \xi') \widehat g (p, k, \xi')\dd \xi'.
\end{equation}

\medskip

\noindent
\underline{Step 2: modified mass and energy balance.} Let us first convince ourselves that the standard approach fails in the case $\mathfrak{f} = \omega^{-1}$. Since we want an equation for the hydrodynamic part of $g$, we may project $\widehat g$ to $\text{Ker} L = \operatorname{Span} \{ \mathfrak{f}, \omega \mathfrak{f} \}$: we have
\begin{equation*}
\begin{split}
    \int & \left ( \begin{matrix}
        \omega \\ 1 
    \end{matrix} \right ) \mathfrak{f}(\xi) \widehat g (p, k , \xi)\dd \xi = \int  \left ( \begin{matrix}
        \omega \\ 1 
    \end{matrix} \right ) \mathfrak{f}(\xi) \frac{\varepsilon^A}{\varepsilon^A p + \varepsilon^B ik\omega' + a(\xi)} \widehat g_0(\xi)\dd \xi \\
    &+ \int \left ( \begin{matrix}
        \omega \\ 1 
    \end{matrix} \right ) \mathfrak{f}(\xi) \frac{1}{\varepsilon^A p + \varepsilon^B ik\omega' + a(\xi)} \int K (\xi, \xi') \widehat g (p, k, \xi')\dd \xi'\dd \xi 
\end{split}
\end{equation*}
and writing $\widehat g = (\widehat \beta \omega + \widehat \gamma) \mathfrak{f} $, formally we have an equation for
\begin{equation*}
    \begin{split}
        A \widehat {v} = \left ( \begin{matrix}
            \widehat \beta \langle \omega \mathfrak{f}, \omega \mathfrak{f} \rangle + \widehat \gamma \langle \omega \mathfrak{f}, \mathfrak{f} \rangle \\\widehat\beta \langle  \mathfrak{f}, \omega \mathfrak{f} \rangle + \widehat \gamma \langle  \mathfrak{f}, \mathfrak f \rangle 
        \end{matrix} \right ) .
    \end{split}
\end{equation*}
However, this approach fails when $\mathfrak{f} = \omega^{-1}$, as $\langle \omega \mathfrak{f}, \mathfrak{f} \rangle , \langle \mathfrak{f}, \mathfrak{f} \rangle = \infty$.  

Instead, we will project the equation on $a \mathfrak{f}, a \omega \mathfrak{f}$, or in other words integrate \eqref{FLtransform} against $a(\xi) \mathfrak{f}(\xi) \begin{pmatrix} \omega(\xi) \\ 1 \end{pmatrix}$, keeping in mind that, since $L \mathfrak f = L (\omega \mathfrak f) = 0$, 
$$
a(\xi) \mathfrak{f}(\xi) \begin{pmatrix} \omega(\xi) \\ 1 \end{pmatrix} = \int K(\xi,\xi') \mathfrak{f}(\xi') \begin{pmatrix} \omega(\xi') \\ 1 \end{pmatrix} \dd \xi'.
$$
Multiplying \eqref{FLtransform} by $a(\xi) \mathfrak{f}(\xi) \begin{pmatrix} \omega(\xi) \\ 1 \end{pmatrix}$ and using the above identity together and the self-adjointness of $K$, we find

\begin{equation*}
\begin{split}
&\int \mathfrak{f}(\xi) \left ( \begin{matrix} \omega(\xi) \\ 1 \end{matrix} \right ) \left (  \frac{a(\xi)}{\varepsilon^A p + \varepsilon^B ik \omega'(\xi) + a(\xi)} - 1 \right ) K(\widehat g ) (\xi)\dd \xi \\
&= - \int \mathfrak{f}(\xi) \left ( \begin{matrix} \omega(\xi) \\ 1 \end{matrix} \right )  \frac{\varepsilon^A a (\xi) }{\varepsilon^A p + \varepsilon^B i k \omega'(\xi) + a(\xi) } \widehat g_0 (\xi)\dd \xi. 
\end{split}
\end{equation*}

\medskip

\noindent
\underline{Step 3: getting rid of $\Pi_1 g$.}
The next key idea is the following: since 
$$
K (\widehat g) = K (\Pi_0\widehat g) + K (\Pi_1 \widehat g) = a \Pi_0\widehat g + K ( \Pi_1 \widehat g),
$$
and $\Pi_1 \widehat g = O(\varepsilon^{A/2})$, we can obtain an equation for 
$$
\Pi_0\widehat g = \widehat \beta \omega \mathfrak{f} + \widehat \gamma \mathfrak{f}
$$
by ignoring the contribution from $\Pi_1 g$ (in fact, one needs to be more careful since $\Pi_1 \widehat g$ lives in $L^2 (a\dd \xi dx)$, where $a$ is degenerate at $\xi=0$, and therefore $\Pi_1g$ may be singular near $\xi=0$. For the details, see \cite{mellet2015anomalous}.) Moreover, we obtain an additional factor $a$, which mediates the singularity near $\xi=0$ (in particular for $\mathfrak{f}^2 = 1/\omega^2$ term). 
More precisely, we have 
\begin{equation} \label{FL_result}
    \begin{split}
        \varepsilon^{-A} &\left ( \begin{matrix} \widehat \beta \int \omega ^2 \mathfrak{f}^2 a \left ( \frac{a}{\varepsilon^A p + \varepsilon^B i k \omega' + a} - 1\right )\dd \xi + \widehat \gamma \int \omega \mathfrak{f}^2 a \left ( \frac{a}{\varepsilon^A p + \varepsilon^B i k \omega' + a} - 1\right )\dd \xi  \\
        \widehat \beta \int \omega  \mathfrak{f}^2 a \left ( \frac{a}{\varepsilon^A p + \varepsilon^B i k \omega' + a} - 1\right )\dd \xi + \widehat \gamma \int  \mathfrak{f}^2 a \left ( \frac{a}{\varepsilon^A p + \varepsilon^B i k \omega' + a} - 1\right )\dd \xi
        \end{matrix} \right ) \\
        &= -  \left ( \begin{matrix} \widehat \beta_0 \int \omega ^2 \mathfrak{f}^2 a \left ( \frac{a}{\varepsilon^A p + \varepsilon^B i k \omega' + a} \right )\dd \xi + \widehat \gamma_0 \int \omega  \mathfrak{f}^2 a \left ( \frac{a}{\varepsilon^A p + \varepsilon^B i k \omega' + a} \right )\dd \xi  \\
        \widehat \beta_0 \int \omega  \mathfrak{f}^2 a \left ( \frac{a}{\varepsilon^A p + \varepsilon^B i k \omega' + a} \right )\dd \xi + \widehat \gamma_0 \int  \mathfrak{f}^2 a \left ( \frac{a}{\varepsilon^A p + \varepsilon^B i k \omega' + a} \right )\dd \xi
        \end{matrix} \right ).
    \end{split}
\end{equation}

\medskip

\noindent \underline{Step 4: limits of the integrals} The problem boils down to finding the $\varepsilon \to 0$ limit for the six integrals above. For the integrals on the right-hand side, the Lebesgue dominated convergence theorem immediately gives convergence to constants.

For the integrals on the left-hand side, we carefully investigate their asymptotic limits. First, we see that
\begin{equation*}
    \begin{split}
        \varepsilon^{-A} &\int \omega ^2 \mathfrak{f}^2 a \left ( \frac{a}{\varepsilon^A p + \varepsilon^B i k \omega' + a} - 1\right )\dd \xi \\
        &= -p \int \omega^2 \mathfrak{f}^2 \frac{a}{\varepsilon^A p} \frac{ (\varepsilon^A p (\varepsilon^A p + a) + \varepsilon^{2B} k^2 (\omega')^2  ) + i \varepsilon^B k \omega' a}{(\varepsilon^A p + a)^2 + (\varepsilon^B k \omega')^2 }\dd \xi,
\end{split}
\end{equation*}
which can be decomposed as
\begin{align}
& \label{oiecendree1} = -p \int \omega^2 \mathfrak{f}^2 \frac{a(\varepsilon^A p + a)}{(\varepsilon^A p + a)^2 + (\varepsilon^B k \omega')^2}\dd \xi  \\
& \label{oiecendree2} \qquad - \int \omega^2 \mathfrak{f}^2 \frac{\varepsilon^{2B-A}a k^2 (\omega')^2}{(\varepsilon^A p + a)^2 + (\varepsilon^B k \omega')^2}\dd \xi  \\
& \label{oiecendree3} \qquad - \int \omega^2 \mathfrak{f}^2 \frac{i \varepsilon^{B-A} k \omega' a^2}{(\varepsilon^A p + a)^2 + (\varepsilon^B k \omega')^2}\dd \xi.
\end{align}
It is clear that the inner integral of \eqref{oiecendree1} converges to $1$ as $\varepsilon \to 0$, hence this term goes to $-p$. Turning to \eqref{oiecendree3}, it is zero by oddness of $\omega'$ and evenness of $a$. There remains \eqref{oiecendree2}, for which  we recall that $|a| \simeq |\xi|^{5/3}$ and $\omega \simeq |\xi|$ as $\xi \to 0$. Now for $\xi$ away from 0 (say $|\xi| > \delta$), the leading order term reads
\begin{equation*}
    -\int \omega^2 \mathfrak{f}^2 \frac{\varepsilon^{2B- A} k^2 (\omega')^2 }{a}\dd \xi
\end{equation*}
and for $2B - A > 0$, this vanishes as $\varepsilon \to 0$. On the other hand, for $\xi \sim 0$, we have $|\omega'(\xi)| \eqsim 1$, and thus \eqref{oiecendree2} is approximately (assuming that $A-B >0$)
\begin{equation*}
\begin{split}
& -\varepsilon^{-A} \int_{|\xi| \ll 1} \omega^2 \mathfrak{f}^2 \frac{k^2 }{(\varepsilon^{-B}|\xi|^{5/3} )^2 + k^2} |\xi|^{5/3}\dd \xi \\
    & \qquad \qquad = - \varepsilon^{-A} \int_{|\xi| \ll 1} \omega^2 \mathfrak{f}^2 \frac{1}{(\varepsilon^{-B} k^{-1} |\xi|^{5/3})^2 + 1 } (\varepsilon^{-B} k^{-1} |\xi|^{5/3} ) \varepsilon^B k \dd \xi \\
    & \qquad \qquad  \simeq -\varepsilon^{-A} \int_{|w| \ll \varepsilon^{-B}k^{-1}}\frac{w}{w^2+1} \varepsilon^{8B/5} |k|^{8/5} \frac{3}{5} w^{- 2/5} \dd w .
\end{split}
\end{equation*}
(after setting $w=\varepsilon^{-B} |k|^{-1} |\xi|^{\frac 53} \operatorname{sign} \xi$).
Thus, when $A = 8B/5$, we find that
$$
\eqref{oiecendree2} \overset{\varepsilon \to 0}{\longrightarrow} - \kappa_1 |k|^{8/5} \qquad \mbox{with} \qquad \kappa_1 = \frac{6}{5} \int_0 ^\infty \frac{w^{3/5}}{w^2+1} \dd w.
$$

For the other terms, it turns out that the integral becomes more singular: we have
\begin{equation*}
    \begin{split}
        \varepsilon^{-A+3B/5} &\int \omega  \mathfrak{f}^2 a \left ( \frac{a}{\varepsilon^A p + \varepsilon^B i k \omega' + a} - 1\right )\dd \xi \\
        &= -\varepsilon^{3B/5} p \int \omega \mathfrak{f}^2 \frac{a}{\varepsilon^A p} \frac{ (\varepsilon^A p (\varepsilon^A p + a) + \varepsilon^{2B} k^2 (\omega')^2  ) + i \varepsilon^B k \omega' a}{(\varepsilon^A p + a)^2 + (\varepsilon^B k \omega')^2 }\dd \xi,
\end{split}
\end{equation*}
which can be split into
\begin{align}
& \label{oieegyptienne1}= -\varepsilon^{3B/5} p \int \omega \mathfrak{f}^2 \frac{ a (\varepsilon^A p + a)}{(\varepsilon^A p + a)^2 + (\varepsilon^B k \omega')^2 }\dd \xi\\
&\qquad \qquad \label{oieegyptienne2} -\varepsilon^{3B/5} p \int \omega \mathfrak{f}^2 \frac{a}{p} \frac{ \varepsilon^{2B-A} k^2 (\omega')^2  }{(\varepsilon^A p + a)^2 + (\varepsilon^B k \omega')^2 }\dd \xi\\
&\qquad \qquad \label{oieegyptienne3} -\varepsilon^{3B/5} p \int \omega \mathfrak{f}^2 \frac{a}{\varepsilon^A p} \frac{  i \varepsilon^{B-A} k \omega' a}{( p + a)^2 + (\varepsilon^B k \omega')^2 }\dd \xi.
\end{align}
Just like earlier, \eqref{oieegyptienne3} vanishes by oddness of $\omega'$ and evenness and $a$; furthermore, \eqref{oieegyptienne1} is of order $O(\varepsilon^{\frac{3B}{5}})$. There remains 
\begin{equation*}
\begin{split}
\eqref{oieegyptienne2}  &\simeq -\varepsilon^{-A + 3B/5} \int_{|\xi| <<1} \frac{1}{|\xi|} \frac{k^2 }{(\varepsilon^{-B}|\xi|^{5/3} )^2 + k^2} |\xi|^{5/3}\dd \xi \\
&\simeq -\varepsilon^{-A+3B} \int_{|w| << \varepsilon^{-B}k^{-1}}\frac{w}{w^2+1} \varepsilon^{8B/5} |k|^{8/5} \frac{3}{5} w^{- (2+3)/5} \varepsilon^{-3B/5}|k|^{-3/5}  \dd w \\
&= -\varepsilon^{8B/5 - A}  \frac{3}{5} |k|^1 \int_{|w|<<\varepsilon^{-B}k^{-1} } \frac{1}{w^2+1} \dd w.
\end{split}
\end{equation*}.
Thus,
$$
\eqref{oieegyptienne2}  \overset{\varepsilon \to 0}{\longrightarrow} \kappa_2 |k| \qquad \mbox{with} \qquad \kappa_2 = \frac{6}{5} \int_0 ^\infty \frac{1}{w^2+1} \dd w.
$$
Similarly
\begin{equation*}
    \begin{split}
        \varepsilon^{-A+6B/5} &\int   \mathfrak{f}^2 a \left ( \frac{a}{\varepsilon^A p + \varepsilon^B i k \omega' + a} - 1\right )\dd \xi \\
        &\simeq \varepsilon^{8B/5 - A} \frac{3}{5} |k|^{2/5} \int_{|w| < < \varepsilon^{-B}k^{-1}} \frac{1}{w^2+1} w^{-3/5} \dd w
\end{split}
\end{equation*}
so that the limit as $\varepsilon \to 0$ of the above term is
$$
\kappa_3 |k|^{2/5}  \qquad \mbox{with} \qquad \kappa_3 = \frac{6}{5} \int_0 ^\infty \frac{1}{w^2+1} w^{-3/5} \dd w.
$$

\medskip

\noindent \underline{Step 5: conclusion.}
Overall, by renaming $\Gamma = \varepsilon^{-3B/5} \gamma$, multiplying $\varepsilon^{3B/5}$ to the second row of \eqref{FL_result}, and taking $\varepsilon \rightarrow 0$, we obtain
\begin{equation*}
\left (\begin{matrix}
\widehat \beta (-p - \kappa_1 |k|^{8/5} ) + \widehat \Gamma (-\kappa_2 |k|^1 ) \\ \widehat \beta (-\kappa_2 |k|^1) + \widehat \Gamma ( - \kappa_3 |k|^{2/5} )
\end{matrix} \right ) 
= \left (\begin{matrix} \widehat \beta_0 + \widehat \Gamma_0 \cdot 0 \\ \widehat \beta_0 \cdot 0 + \widehat \Gamma_0 \cdot 0 \end{matrix} \right )
\end{equation*}
Thus, we end up with
\begin{equation*}
\boxed{\partial_t \beta + \left( \kappa_1 - \frac{\kappa_2^2}{\kappa_3} \right)(-\Delta)^{8/5} \beta = 0, \qquad\gamma = -\frac{\kappa_2}{\kappa_3 } \varepsilon^{\frac {3B}5} (-\Delta)^{3/5} \beta.}
\end{equation*}
Note that $\kappa_1 - \frac{\kappa_2^2}{\kappa_3} > 0$ as follows by the Cauchy-Schwarz inequality.

\subsection{The case of FPU-$\beta$, $\mathfrak{f} = \frac{1}{\beta_0 \omega + \gamma_0}, \gamma_0 > 0$}

In this case, the argument goes mostly same as before, yet as $\mathfrak{f}(0) > 0$ all integrals become more regular: therefore, under the assumption $B< A < 2B$, we have
\begin{equation*}
    \begin{split}
                \varepsilon^{-A} &\int \omega ^j \mathfrak{f}^2 a \left ( \frac{a}{\varepsilon^A p + \varepsilon^B i k \omega' + a} - 1\right )\dd \xi \\
                &= -p \int \omega^j \mathfrak{f}^2 \frac{a(\varepsilon^A p + a)}{(\varepsilon^A p + a)^2 + (\varepsilon^B k \omega')^2}\dd \xi  \\
        &- \int \omega^j \mathfrak{f}^2 \frac{\varepsilon^{2B-A}a k^2 (\omega')^2}{(\varepsilon^A p + a)^2 + (\varepsilon^B k \omega')^2}\dd \xi  
    \end{split}
\end{equation*}
and the first term converges to $ -p \int \omega^j \mathfrak{f}^2\dd \xi =: - \nu_j p$. For $j \ge 1$, the leading order term for the second term is
\begin{equation*}
    \varepsilon^{2B-A} \int \frac{\omega^j \mathfrak{f}^2}{a} k^2 (\omega')^2\dd \xi,
\end{equation*}
which is integrable for $j \ge 1$, and thus for $2B-A>0$ this term converges to 0. For $j=0$, the same calculation as in $\mathfrak{f} = \omega^{-1}$ case (with $\omega^2 \mathfrak{f}^2 = 1$) we have $- \kappa_1 \gamma_0^{-2} |k|^{8/5}$. Thus, we have
\begin{equation*}
    \partial_t (\nu_2 \beta+\nu_1 \gamma) = 0, \quad \partial_t (\nu_1 \beta + \nu_0 \gamma)+\kappa_ 1 \gamma_0^{-2}  (-\Delta)^{8/5} \gamma = 0.
\end{equation*}
Again, using $\nu_1 ^2 < \nu_0 \nu_2 $ by Cauchy-Schwarz inequality, we end up with a diffusive system
\begin{equation*}
    \boxed{\left ( \nu_0 - \frac{\nu_1^2}{\nu_2} \right )\partial_t \gamma + \kappa_1 \gamma_0^{-2} (-\Delta)^{8/5} \gamma = 0,  \qquad   \partial_t \beta + \frac{\nu_1}{\nu_2} \partial_t \gamma = 0}.
\end{equation*}

\section{The Fermi-Pasta-Ulam-Tsingou paradox}
\label{sec:FPUT_paradox}
In this section, we present the numerical simulation of an oscillator chain performed by Fermi, Pasta, Ulam and Tsingou in 1955 and recount its paradoxical outcome.
This was one of the first numerical experiments, and certainly one of the most influential, since it spurred research in nonlinear dynamics for decades. A number of ideas were put forward to explain the result of the experiment, and we survey some of them.

We will then argue that wave turbulence and the associated kinetic theory are the right tools to understand the result of this experiment, at least if appropriate scaling conditions are met.

\subsection{The numerical experiment of 1955 and its surprising outcome}

In 1955, E. Fermi, who was working at the Los Alamos National Laboratory, had the idea to use automatic computing - which was of course in its infancy - to investigate problems from Physics which seemed out of reach of analytic computation or more physical scaling arguments.

Together with J. Pasta, S. Ulam and M. Tsingou, who was in charge of coding proper, they focused on what is now known as the (FPUT) problem, which was already presented in Section \ref{section_examples} and which we recall here: it is the periodic problem
\begin{equation}
\tag{FPUT} \label{FPUT}
\ddot{q_j} = V'(q_{j+1} - q_j) - V'(q_j - q_{j-1}), \qquad q_j \in \mathbb{R}, \;\; j \in \{1,\dots,N\}
\end{equation}
where
$$
V(x) = \frac{1}{2} x^2 + \frac{\alpha}{3} x^3 + \frac{\beta}{4} x^4,
$$
for some $\alpha, \beta \in \mathbb{R}$. The (FPUT-$\alpha$) problem corresponds to the case $\beta=0$, while the (FPUT-$\beta$) problem corresponds to the case $\alpha=0$. As we saw in Section \ref{section_examples}, the associated kinetic wave equation \eqref{KWE} is only depends on $\alpha$ and $\beta$ through a multiplicative coefficient!

The idea of Fermi was to investigate the question of ergodicity, which had interested him at the beginning of his career: his article \cite{Fermi} was titled "Proof that in general a normal mechanical system is quasi-ergodic".

The Hamiltonian of the system is $\mathscr{H}(\mathbf{p},\mathbf{q}) = \sum \frac{1}{2} p_j^2 + V(q_{j+1}-q_j)$ and is a conserved quantity. Due to the nonlinear interactions among the Fourier modes, which would allow energy exchange among the modes, 
Fermi expected that energy initially inserted in a few modes would eventually be redistributed among all modes, leading to  equipartition of Hamiltonian energy a process known as  \textit{thermalization} in Physics.
Indeed, in the weakly nonlinear regime (low energy regime) which we are considering here, the Hamiltonian energy is essentially equivalent to its quadratic part. The ergodic hypothesis predicts that the trajectory should be equidistributed on microcanonical state given by constant energy, which is essentially equivalent in our case to equidistribution of the energy among Fourier modes.

But to the surprise of Fermi and his collaborators, the system did not evolve irreversibly towards a statistical equilibrium where energy is equidistributed. Rather, as can be seen in Figure \ref{figfpu} borrowed from their paper, it displayed a nearly periodic behavior! 

\begin{figure}
  \includegraphics[width=\linewidth]{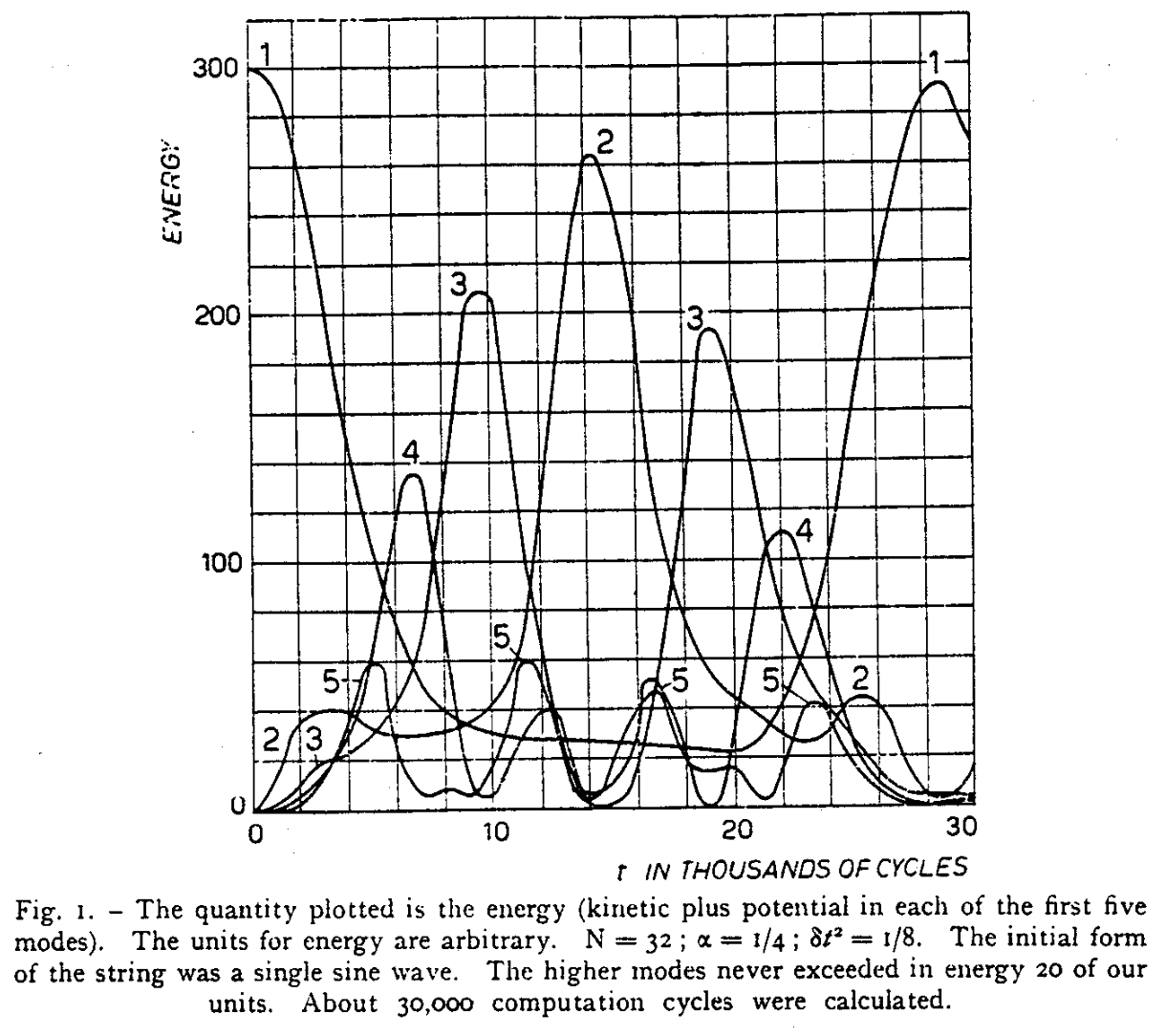}
  \caption{One of the pictures in the orginal paper by Fermi, Pasta, Ulam and Tsingou. The curves labeled 1 to 5 correspond to the energy of the first five Fourier modes as a function of time; the higher modes are not represented. After much oscillation, the system returns very close to its starting point.}
  \label{figfpu}
\end{figure}

What is the explanation for this paradoxical outcome? We will present in the next section the early development of the subject and the main explanations which were put forward. While they only seem partially convincing today, they opened up new and fundamental directions in nonlinear dynamics.

\subsection{Before the weak turbulence viewpoint} 
In this section, we hope to give an idea of the main ideas which were put forward to make sense of the (FPUT) numerical experiment which was described above; some are more rigorous and analytic, and some more heuristic and numerically validated. We will be brief and therefore superficial; the interested reader is referred to the reviews \cite{BermanIzrailev, Ford,Gallavotti, Weissert} for much more.

\medskip

\noindent
\underline{Poincar\'e recurrence.} This is the most natural explanation of the observed recurrence: the classical theorem of Poincar\'e states that, in a measure preserving dynamical system, a trajectory will return arbitrarily close to its starting point for almost all initial data. However, the time scales involved grow extremely fast with the dimension of the system; therefore, this cannot be the proper explanation.

\medskip

\noindent
\underline{KAM Theory} The theory of Kolmogorov-Arnold-Moser \cite{Kolmogorov,Arnold,Moser} was developed between the end of the fifties and the beginning of the seventies. It applies to perturbations of integrable dynamical systems - here, "integrable" can be understood in the naive sense that a formula for the solution can be derived. Since the linearized system is integrable (by viewing the evolution problem on the Fourier side), does the KAM theory apply to (FPUT) for small solutions, in which case the nonlinear problem is a perturbation of the linear problem? The first obstacle is that a nondegeneracy condition at the heart of KAM theory is not satisfied by the (FPUT) equation; this obstacle was recently overcome \cite{HenriciKappeler,Rink} by an ingenious construction. The second obstacle has to do with the proof of the KAM theorem: the constants it produces depend adversely on the dimension of the system considered. For this reason, it is hard to discern whether the observed recurrence can be attributed to KAM theory, or whether something else is at stake.

\medskip

\noindent
\underline{Integrable systems} In the years following the (FPUT) experiment, it was realized that integrable systems lurk very close. First, the Toda lattice \cite{Toda}, which was already mentioned, is an example of unpinned oscillator chain which is completely integrable; by fitting parameters, it can be made to agree with the ($\alpha$-FPUT) system to second order. Second, the KdV equation arises as the long-wave limit of the (FPUT) system; Zabusky and Kruskal \cite{ZabuskyKruskal} observed famously that it displayed the features of an integrable system, after which integrability was indeed proved \cite{GardnerGreeneKruskalMiura}. While at the time these striking discoveries looked like decisive steps, their applicability to the dynamics of the (FPUT) equation now seems questionable. Indeed, beyond the specific regimes where Toda and KdV can be shown to play a role, it is not clear why these systems should say anything about the long time (FPUT) dynamics.

\medskip

\noindent
\underline{Frequency overlap} It was observed by H\'enon-Heiles \cite{HenonHeiles} that certain low-dimensional systems can display very different behavior depending on the level of energy (or the size of the solutions): either periodic trajectories or chaotic behavior, with a rather sharp transition. This idea was then extended by Izrailev anc Chirikov \cite{IzrailevChirikov,Chirikov} to high-dimensional systems. Most importantly, they proposed a criterion to estimate whether a given Hamiltonian equation falls into the category of periodic trajectories or chaotic behavior. Let $\delta \omega$ denote the typical spacing between (temporal) frequencies and $NL$ denote the typical shift in frequency due to the nonlinearity. Then chaotic dynamics correspond to the regime $NL \gg \delta \omega$, and regular trajectories to $\delta \omega \ll NL$, which is a natural guess: if the nonlinearity is not sufficient to shift a temporal frequency to the neighboring one, then averaging will destroy nonlinear interactions between modes, and the system will behave like its linearized version.

What does this condition become in the scaling of Section \ref{Sec: Derivation_WKE}, and how does it compare with the condition of validity of the kinetic limit established in Section \ref{section_from_NOC_to_KWE}? The scaling retained in section \ref{Sec: Derivation_WKE} is such that $a(\xi)$ solves \eqref{eq_epsilon_N}, or equivalently that $b(\xi)$ solves
$$
-i \partial_t b(\underline{\xi}) = \omega(\underline{\xi}) b(\underline{\xi}) + \frac{\varepsilon^2}{N} \sum_{\sigma_i = \pm 1} \sum_{\underline{\xi_1},\underline{\xi_2},\underline{\xi_3}} C_{1,2,3}^\sigma b_1^{\sigma_1}b_2^{\sigma_2}b_3^{\sigma_3} \delta^\sigma_{0,1,2,3}.
$$
There remains to identify the scales $\delta \omega$ and $NL$ in this context. It is clear that the typical spacing of frequencies is $\delta \omega \sim \frac{1}{N}$. Turning to $NL$, we need to estimate the sum in the above right-hand side. On the one hand, there are $N^2$ summands; on the other hand, these summands have random phases and one expects a square root cancellation from the central limit theorem. Therefore, one finds for the nonlinear term on the right-hand side the typical size $NL \sim \varepsilon^2$. The Izrailev-Chirikov criterion for chaotic dynamics becomes
$$
\delta \omega \ll NL \qquad \Longleftrightarrow \qquad N^{-1} \ll \varepsilon^2.
$$
This agrees with condition (C3) of Section \ref{section_from_NOC_to_KWE}!

\medskip \noindent
\underline{Computational results.} 
Before 2000, computations seemed inconclusive and there was no consensus on the eventual thermalization, or the lack of it, for (FPUT) chains. THe literature is extremely abundant, we only mention some papers \cite{Kantz,KantzLiviRuffo,LiviPettiniRuffoSparpaglioneVulpiani} which advocate thermalization  and others \cite{BocchieriScottiBearziLoinger,CasettiCerrutiSolaPettiniCohen} which report the existence of an energy threshold below which thermalization does not occur, in the spirit of the Izrailev-Chirikov criterion.

\subsection{The triumph of the wave turbulence viewpoint}
The first application of weak turbulence and kinetic theory to the (FPUT) problem goes back to 2002, and it seems in retrospect astonishing that it took so long since the connection is so natural. The two papers \cite{BielloKramerLvov,KramerBielloLvov} were published that year and they are the first attempt to set up the kinetic wave equation and compare time scales for thermalization with the predictions of wave turbulence theory.

With the development of computing power, it became clear that thermalization was universally observed for the (FPUT) chain. The validity of the kinetic wave equation is by now well-established in the \textit{thermodynamic limit} where $N$ is sufficiently large depending on $\varepsilon$. The first signature of the wave kinetic equation is the kinetic time scale $T_{\operatorname{kin}} \sim \varepsilon^{-4}$ which is observed as characteristic scale over which thermalization occurs. This was reported in \cite{PistoneChibbaroBustamanteLvovOnorato} and confirmed in \cite{FuZhangZhao,FuZhangZao2} on a variety of examples. A second signature of the wave kinetic equation is the RJ spectrum, which is the attractor of \eqref{KWE} and has been reported in \cite{MendlLuLukkarinen}.

What happens much before the kinetic time scale? Over a long time, the dynamics of the (FPUT) chain seem to be integrable and in particular recurrent; this is not so surprising since the Toda chain can, for instance, be fitted to match the FPUT-$\alpha$ chain to order two. This integrable behavior then slowly gives way to diffusion amongst Fourier modes until the kinetic behavior kicks in and the system thermalizes. This transition is studied in \cite{BenettinChristodoulidiPonno,PonnoChristodoulidiSkokosFlach}.

Finally, what happens much beyond the kinetic time scale? Recall that the RJ states $\frac{1}{\beta \omega + \gamma}$ are the attractors of \eqref{KWE}, but they do not display energy equidistribution if $\gamma \neq 0$. However, we saw that the parameter $\gamma$ is linked to the conservation of mass $\mathcal{M}$, which holds at the kinetic level but not microscopically. It was observed in \cite{MendlLuLukkarinen,HuvennersLukkarinen} that the $RJ$ solution was reached around the kinetic time scale, and that for times much larger, $\gamma$ converges to $0$ until energy equidistribution is satisfied.

Overall, we can propose the following summary, valid in the weakly nonlinear ($\varepsilon$ small) and thermodynamic ($N$ sufficiently large depending on $\varepsilon$) limit
\begin{itemize}
\item Pre-kinetic regime $t \ll T_{\operatorname{kin}}$. The system looks integrable at first, displaying oscillations and recurrence, but it is slowly entering the weakly turbulent regime characterized by spread Fourier support and random phases.
\item Kinetic regime $t \sim T_{\operatorname{kin}}$. The system is in the weakly turbulent regime and its dynamics are governed by \eqref{KWE}. As a result, it converges to a RJ state $\frac{1}{\beta \omega + \gamma}$.
\item Post-kinetic regime $t \gg T_{\operatorname{kin}}$. The system remains in the weakly turbulent regime, but $\gamma$ goes to zero, with the spectrum eventually converging to $\frac{1}{\beta \omega}$.
\end{itemize}

So in the end, Fermi's intuition was right! The system does reach energy equidistribution, and the key mechanism to get there is \eqref{KWE}.

\subsection{Limitations of the kinetic viewpoint} It seems that energy equidistribution is always reached by \eqref{FPUT}, even though the kinetic description through \eqref{KWE} might not hold. Going back to the conditions given in Section \ref{section_from_NOC_to_KWE} for the validity of \eqref{KWE}, all are immediately satisfied for \eqref{FPUT} except the scaling condition
$$
N^{-\frac 12} \ll \varepsilon \ll 1
$$
(here, $N$ is the number of particles and $\varepsilon$ is a measure of the strength of nonlinear interactions whose definition is given in Section \ref{section_weaklyturbulent}). We will discuss what happens if this scaling condition is not satisfied

\medskip
\noindent
\underline{Strong turbulence} corresponds to the situation where $\varepsilon \ll 1$. Then the system cannot be regarded as weakly nonlinear and there is no reason that resonances should play any role. The dynamics are fully nonlinear, in a way close to hydrodynamic turbulence, and there is no governing kinetic equation. The regime of strong turbulence might explain the behavior found in \cite{LvovOnorato} for the equilibration time. In this regime, a description in terms of fluctuating hydrodynamics was proposed by Spohn \cite{Spohn2014}.

\medskip
\noindent
\underline{Finite size effects} correspond to the regime where the system is too small, in more quantitative terms $N \ll \varepsilon^{-2}$. In that case, quasiresonances are too rare: the average number of quasiresonances is zero under a natural equidistribution hypothesis.

To be more specific, going back to the derivation of \eqref{KWE} in Section \ref{Sec: Derivation_WKE}, it is natural to define quasi resonances at frequency $\underline{\xi_0}$ as $(\underline{\xi_1},\underline{\xi_2})$ such that
\begin{equation}
\label{quasiresonances}
\left| \omega(\underline{\xi_0}) + \omega(\underline{\xi_1}) - \omega(\underline{\xi_2}) - \omega(\underline{\xi_0} + \underline{\xi_0} - \underline{\xi_2}) \right| < \frac{1}{T_{\operatorname{kin}}}.
\end{equation}
Here, recall that the frequencies $\underline{\xi_i}$ range in $\frac{2\pi}{N} \mathbb{Z} \mod 2\pi$. Furthermore, the threshold $\frac{1}{T_{\operatorname{kin}}}$ is natural in view of the derivation of \eqref{KWE} given in Section \ref{Sec: Derivation_WKE}; more generally, it is natural to cutoff quasiresonances at a threshold which is the reciprocal $\frac{1}{T}$ of the time over which the system is observed. 

On the one hand, if the frequencies $\xi_1$ and $\xi_2$ are considered to be continuous variables, the set defined by \eqref{quasiresonances} has size $\sim \frac{1}{T_{\operatorname{kin}}}$. On the other hand, the frequencies $\underline{\xi_1},\underline{\xi_2}$ assume $N^2$ different values. Therefore, under a natural equidistribution hypothesis, the number of quasiresonances is $\frac{N^2}{T_{\operatorname{kin}}}$, which is $\ll 1$ if $N \ll \varepsilon^{-2}$.
Then the justification for the sum to integral limit in the derivation of \eqref{KWE} becomes shaky, and what to expect is uncertain...

It was argued in \cite{OnoratoVozellaPromentLvov,PistoneChibbaroBustamanteLvovOnorato} based on convinving numerics that exact 6-wave resonances become the key to the dynamics if $N$ is very small. These exact 6-wave resonances have the form $(\xi_1,\xi_2,\xi_3,\pm \xi_1,\pm \xi_2,\pm \xi_3)$. With this specific ansatz, these resonances are in a way trivial and not generic, and this anszatz invalidates the equidistribution argument which was mentioned above.

Would it be possible to derive a wave kinetic equation based on these exact 6-wave resonances? This seems like an interesting question, close in spirit to  \cite{FaouGermainHani}, where only exact resonances were considered, but with the difference that the random phase approximation is assumed.

\section{Fourier’s law and conductivity}
\label{section_Fourier}

In this section, we aim to connect the wave turbulence approach to thermalization,
developed in the previous sections, with the classical problem of deriving the inhomogeneous Fourier law of heat conduction from microscopic dynamics.
We shall review the main formulations of thermal conductivity used in the literature, explain how they are related, and clarify what is known rigorously and what remains open.

Understanding how macroscopic laws of
energy transport emerge from microscopic dynamics is a central problem in non-equilibrium statistical mechanics.  
Among these, Fourier’s law of heat conduction plays a distinguished role. Fourier’s law is an inhomogeneous macroscopic law which states the following: in a medium (at least when) close to equilibrium subjected to a small temperature gradient, the macroscopic heat flux $J = J(t,x)$
is proportional to the temperature gradient
\begin{align} \label{eq:Fourier law}
   J = - \kappa \nabla T,
\end{align}
where the proportionality constant $\kappa>0$ is the \textit{thermal conductivity}. When Fourier's law is combined with energy conservation, leads to the linear heat equation governing the
evolution of the temperature profile. So in other words establishing Fourier's law is equivalent to proving that macroscopic energy evolves diffusively.

Despite the apparent simplicity of Fourier's law and the overwhelming existing experimental evidence verifying it is true for many materials, 
deriving Fourier’s law from microscopic Hamiltonian dynamics and understanding the microscopic conditions under which it holds remains one of the major open challenges in mathematical physics \cite{BLR00, LLP03, ICMVillani2007}.

\subsection{A stationary microscopic version of Fourier’s law} The following microscopic formulation of Fourier's law does not describe the time evolution of temperature profiles, but rather whether a steady heat current
scales linearly with an imposed temperature gradient. 

\subsubsection{A microscopic definition of Fourier's law} 
Consider a finite microscopic system of length $N$, coupled at its boundaries to heat reservoirs at temperatures $T_1$ and $T_2$. Assuming the existence of a nonequilibrium stationary  state (NESS), for instance described by a phase-space
probability measure (existence of such measure we discuss in a minute), we will define the heat flux average  $\langle J_N\rangle_{\mathrm{NESS}}$. 

At the microscopic level, the heat current $J_N$ is defined through the local conservation of
energy. For oscillator chains with nearest-neighbor interactions, one can associate to each bond
$(i,i+1)$ on the lattice $\mathbb{Z}$, an energy flux $j_{i,i+1}$, obtained from the continuity equation for the local energy. Locally around $i$, this will be like a discrete gradient for the flux: 
\begin{align} \label{eq:local_continuity_eq}
  \tfrac{d}{dt} e_i(t)=j_{i,i+1}(t)-j_{i-1,i}(t).
\end{align}

The explicit expression of $j_{i,i+1}$ depends on the specific model under consideration
(could be deterministic or stochastic dynamics, and with boundary or with bulk noise). 
Now in a stationary non-equilibrium state and in the absence of bulk energy sources (so that there are no terms that inject or remove energy in the bulk\footnote{In works of Basile-Bernandin-Olla where additional bulk noise is usually introduced, it is to ensure that the energy (or also momentum) are the only conserved quantities. We refer to \cite{Bernandin_notes} for an exposition.}), 
its expectation can be directly computed, by taking the expectation in the above relation \eqref{eq:local_continuity_eq}, which is independent of $i$. It is this common value that we denote by $\langle J_N\rangle_{\mathrm{NESS}}$.

The \textit{finite-size conductivity} (microscopic conductivity) is then defined by
\begin{align*} 
\kappa_N (T_1,T_2) = \frac{\langle J_N\rangle_{\mathrm{NESS}}}{(T_1-T_2)/N},
\end{align*} 
and Fourier’s law corresponds to the existence of the thermodynamic limit
$$
\kappa(T) := \lim_{N\to\infty}\,\lim_{T_1,T_2\to T}\,\kappa_N(T_1,T_2),
$$
with $\kappa(T)\in(0,\infty)$. 
Establishing this limit rigorously for realistic microscopic models is a major open problem.

From a mathematical perspective, this difficulty is already visible in one-dimensional oscillator
chains. The harmonic case, i.e. when the particle interactions are assumed to be quadratic potentials corresponding to linear microscopic dynamics, is one of the very few cases where the NESS is known explicitly: it is a Gaussian state. The conductivity can then be computed and it turns out that it diverges linearly with the system size $N$:  $\langle J_N \rangle_{\mathrm{NESS}} = O(1)$ \cite{RLL67}, reflecting ballistic transport and the failure of Fourier’s law. 
For (FPUT-$\beta$) chains introduced in Section \ref{section_examples}, numerical
simulations and heuristic arguments \cite{LLP03, Spohnreview06, LukkarinenSpohn2008} predict anomalous transport with
$$
\kappa_N \sim N^\alpha,\qquad \alpha\in(0,1),
$$ 
connected to the fractional diffusion in the hydrodynamic limit, see \cite{mellet2015anomalous} and references therein.

We are very far from a complete rigorous mathematical understanding. In  general, the conjecture is that some anharmonicity in the microscopic level is crucial (maybe enough for higher dimensional models) in order to derive the linear heat equation \cite{ICMVillani2007}. But, as the (FPUT-$\beta$) example shows, it is not enough!  

\subsubsection{Rigorous results.} Mathematically, a lot of progress has been achieved on microscopic systems of $N$ oscillators in contact at the boundaries with two heat reservoirs at different temperatures. The reservoirs are modelled by stochastic diﬀerential equations where the stochastic noise aﬀects only the velocities of the two boundary oscillators, while all the other variables follow classical Hamiltonian dynamics. 
Particles interact typically with their nearest neighbors. This leads to a system of SDEs, where $2N-2$ equations are deterministic and $2$ are stochastic. In other words, the associated differential operator, obtained by Itô's formula, is highly degenerate, giving an extreme example of a  hypoelliptic operator\footnote{The associated operator is similar to the kinetic Fokker-Planck, arising from Langevin dynamics, cf e.g.  \cite{Greg_book} but it is much more degenerate, since the diffusion only acts at the boundaries. A more relevant but simpler case is the so-called generalised Langevin dynamics, see for example \cite{GregGabrielVaes}, which corresponds to a chain of just $N=2$ oscillators.} as the noise enters through two out of the $2N$ degrees of freedom. 


Under suitable assumptions on the interaction and pinning potential (if present), e.g. smoothness, non-degeneracy, polynomial growth and pinning weaker than interactions, existence and uniqueness of a non-equilibrium steady state (NESS) has been proven for such systems and the system is also known to relax to a  \cite{EPR99a, EPR99b, EH00, RBT02, Car07} exponentially fast. Quantitative estimates on the speed of  convergence in terms of the system size $N$ have been obtained in certain cases by hypocoercivity techniques: in \cite{Men2020} for potentials being perturbations around harmonic (and the perturbations to decay polynomially with $N$) and on the whole spectrum (optimal scalings in $N$) of such operators in harmonic cases \cite{BMY25,BM22}. 

The main mathematical difficulties are the absence of explicit formulas for such steady states in principle (except in the linear case). This makes it extremely hard to compute the scaling of $\langle J_N\rangle_{\mathrm{NESS}}$ in terms of $N$. Moreover, studying the long time behavior of such systems and how fast they approach a NESS (if they do..) is a challenge due to the extremely hypoelliptic operator describing it. There are cases where the operator is not even hypoelliptic if one adds next-to-nearest-neighbor interactions \cite{BMY25}, and also cases where no stationary state is proven to even exist (or no spectral gap) if the pinning potential is stronger than interactions \cite{HM09}. 

These questions are extremely interesting and thus the mathematical community in Statistical Mechanics, PDEs and Stochastic Analysis has mostly focused on these. In terms of Fourier's law, of course having a steady state which is physically relevant (i.e. the system converging towards it fast enough) is essential in order to define the microscopic conductivity as above. However, exhibiting or not exponential convergence towards it, does not imply by itself anomalous or not energy transport. For instance, one can prove exponential convergence for the harmonic chain (even though note that the sharp spectral gap in that case is not independent of the number of particles $N$ and in fact closes as $N \to \infty$ (\cite{BM22})!) 


\subsection{Microscopic linear response and the Green–Kubo formula.}
An alternative approach to a microscopic expression for the thermal conductivity is provided by linear response theory. Rather than imposing boundary noise as above, one studies the response to infinitesimal perturbations, of an infinite homogeneous system on the lattice at equilibrium. 
We briefly recall here the standard argument after \cite{Spohn_book, Lukkarinen2016, AokiLukkSpohn}. 

We assume that energy is the only relevant conserved quantity and that the equilibrium dynamics is ergodic.
We work in infinite volume at equilibrium temperature $T$ (canonical Gibbs state), and denote by 
$\mathbb{E}_{\mu_T}(\cdot)$ the corresponding expectation with respect to the stationary measure. As previously, let $e_i(t)$ denote the energy density at the site $i \in \mathbb{Z}$. 


\noindent \underline{The basis of this argument} is the assumption that, on macroscopic space-time scales, the temperature profile $T(t,x)$ evolves according to a nonlinear diffusion equation with  diffusion coefficient $D(\cdot)$: 
 \begin{align} \label{eq:NL diffusion}
 \partial_t T(t,x) = \partial_x( D(T(t,x) \partial_x T(t,x) ).
 \end{align} 
We consider a perturbation of the global equilibrium of the form $$T(t,x) = T + \theta(t,x),$$ where $T$ is the global equilibrium temperature and $\theta \in L^1(\mathbb{R})$ has finite second moment: $\int_{\mathbb{R}}x^2 |\theta| dx< \infty$ and $\int \theta \dd x \neq 0$.

Macroscopically, this means injecting locally a finite amount of energy; 
microscopically, this corresponds to formally perturbing the equilibrium Gibbs measure by a small localized perturbation of the temperature.


Linear response theory corresponds to linearising the (macroscopic) equation \eqref{eq:NL diffusion} around a homogeneous equilibrium state $T$, which yields, to leading order, the linear heat equation for the perturbation 
$$\partial_t \theta  = D(T)\partial_x^2 \theta.$$

The solution to this linear diffusion equation is explicit, it is given by convolution with the heat kernel, and this enables to recover $D(T)$ as follows. The second moment of the perturbation $\theta(t,x)$ is
\begin{align*}
    M_2(t)&:= \int_{\mathbb{R}} x^2 \theta(t,x) dx = \frac{1}{\sqrt{4 \pi t D(T)}}\int_{\mathbb{R}} x^2 \int_{\mathbb{R}} e^{-\tfrac{(x-y)^2}{4t D(T)} } \theta(0,y)\, dy\, dx 
    \\ &= 
    \int_{\mathbb{R}} \theta(0,y) [ y^2 + 2tD(T) ] dy = M_2(0) +  2tD(T) M_0(0)
\end{align*}
    where $M_0(0)$ denotes the zeroth moment (the mass) of the perturbation at time $0$, since the mass is conserved (we have assumed that $\theta \in L^1$). 
    Therefore, 
$$ 
\lim_{t \to \infty} \frac{ M_2(t)}{t} = 2D(T) M_0(0). 
$$
The above formula makes sense, as we have assumed the moments appearing to be finite. 

\noindent
\underline{Microscopic analogue.} What defines the diffusion coefficient for the microscopic system is a direct discretization of the above identity: 

\begin{equation}\label{eq:diffusiveS}
\begin{split} 
\lim_{t\to\infty}\frac1t \mathcal{M}_2(t) &:= 
\lim_{t\to\infty}\frac1t \sum_{i \in \mathbb{Z}} i^2 \mathrm{Cov}_{\mu_{T}}\big(e_i(t),e_0(0)\big)
\\ &=
2 D(T)\,\sum_{i\in\mathbb Z} \left[ \mathbb{E}_{\mu_T} ( e_i(t) e_0(0)) - \mathbb{E}_{\mu_T}( e_i(t)) \mathbb{E}_{\mu_T}( e_0(0)) \right] \\ & =: 2 D(T)\mathcal{M}_0(t) = 2 D(T)\mathcal{M}_0(0).
\end{split}
\end{equation}
This formula defines the diffusion coefficient $D(T)$.

In the above discretization, note that the analogy is
\begin{align*}
& M_2(t) \sim \mathcal{M}_2(t) = \sum_{i \in \mathbb{Z}} i^2 \mathrm{Cov}_{\mu_{T}}\big(e_i(t),e_0(0)\big) \\
& M_0(t) \sim \mathcal{M}_0(t) = \sum_{i\in\mathbb Z} \left[ \mathbb{E}_{\mu_T} ( e_i(t) e_0(0)) - \mathbb{E}_{\mu_T}( e_i(t)) \mathbb{E}_{\mu_T}( e_0(0)) \right], 
\end{align*}
because the macroscopic perturbation $\theta$ is identified with the $\operatorname{Cov}_{\mu_T}(e_j(t), e_0(0))$. Here the covariance $\mathrm{Cov}_{\mu_{T}}\big(e_i(t),e_0(0)\big)$ is taken with respect to the equilibrium measure $\mu_T$: we take initial data distributed according to $\mu_T$, we let the system evolve until time $t$, and then we compute the covariance of $e_i(t)$ and $e_i(0)$.

The above discretisation is the correct one in the linear response regime, since, as the following computation showsm the covariance is the linear response of the expected energy of site $j$ at time $t$.  

Indeed, the covariance appears as the derivative at zero perturbation of the expectation with respect to the perturbed Gibbs measure: For the (normalised) perturbed measure $$\mu_{T, \tau} = Z_{T,\tau}^{-1} e^{ - T^{-1}\sum_{i \in \mathbb{Z} } e_i + \tau \sum_{i \in \mathbb{Z}}e_i \varphi(i) } =Z_{T,\tau}^{-1} e^{- T^{-1}\sum_{i \in \mathbb{Z} } e_i + \tau e_0 }$$ for the choice $\varphi(i)=\delta_{0}$. We calculate for any observable $F$
\begin{equation} \label{eq:computation deriv_at_0_perturb = cov}
\begin{split}
    \frac{d}{d\tau} \left. \left( 
  \frac{\int F e^{\tau e_0} d \mu_T}{\int e^{\tau e_0} d\mu_T} \right)\right|_{\tau=0}  &= \left. \left[\frac{ \int F e_0 e^{\tau e_0}d\mu_T }{\int e^{\tau e_0} d\mu_T} - \frac{\int F e^{\tau e_0} d \mu_T}{\int e^{\tau e_0} d\mu_T}\frac{\int e_0 e^{\tau e_0}d\mu_T }{\int e^{\tau e_0} d\mu_T} \right] \right|_{\tau=0}\\
  &= \left[ \mathbb{E}_{\mu_{T, \tau}}(Fe_0) - \mathbb{E}_{\mu_{T, \tau}}(F)\mathbb{E}_{\mu_{T, \tau}}(e_0)\right]\vert_{\tau=0} = \operatorname{Cov}_{\mu_T} (F, e_0)
\end{split}
\end{equation}
    which, in the particular case of choosing the observable $F$ to be the energy of the $j$-th particle at time $t$, yields the linear response of the averaged  energy: 
    $$\frac{d}{d\tau} \mathbb{E}_{\mu_{T,\tau}}(e_j(t))\vert_{\tau=0} =\operatorname{Cov}_{\mu_T}(e_j(t), e_0(0)).$$

Upon dividing the eq. \eqref{eq:diffusiveS} by $\mathcal{M}_0(0)$, we have $$
D(T) = \frac{1}{2} \lim_{t \to \infty} \frac{\mathcal{M}_2(t)}{t \mathcal{M}_0(0)}.$$
From Einstein's relation \cite[Ch. 2.5, Eq. (2.72)]{Spohn_book},  the conductivity is defined by $\kappa(T) = D(T) \frac{d}{dT}\mathbb{E}_{\mu_T}(e_0)$, where $\frac{d}{dT}\mathbb{E}_{\mu_T}(e_0)$ is the specific heat,\footnote{Einstein's relation essentially says that the conductivity is given as the speed of the energy spreading, $D(T)$, times the expected energy per unit temperature, i.e. the specific heat.} and explicit computations give \footnote{One should first do the calculations rigorously on the truncated space $\Lambda_L$ with the Gibbs measure $\mu_T^L:= Z(T)^{-1} \operatorname{exp}(- \sum_{i \in \Lambda_L} e(p_i,q_i)) \prod_{i \in \Lambda_L} dp_i dq_i $, with periodic boundary conditions, and then pass to the limit $L \to \infty$, where $\Lambda_L \to \mathbb{Z},$ which would be justified under sufficient decay of correlations-hypothesis for $e_0$. This is how we get the stated formula}: 
$$\frac{d}{dT} \left[ \mathbb{E}_{\mu_T}(e_0) \right] = T^{-2} \mathcal{M}_0(0). $$ 
Altogether, this implies the formula 
$$ \kappa(T) = \frac{1}{2T^2} \lim_{t \to \infty} \frac{\mathcal{M}_2(t)}{t}.$$


\noindent 
\underline{Green-Kubo formula in terms of fluxes}. We obtain the standard Green-Kubo formula by using the local conservation of energy law, and the fact that the Gibbs state is invariant under both spatial translations and the time evolution, which implies $\mathcal{M}_2(t)$ is symmetric in $t$. 
 One can then rewrite the growth of $\mathcal{M}_2(t)$ in terms of current-current correlations.

Formally (and rigorously under summability/mixing assumptions),
$$
\frac{d^2}{dt^2}\mathcal{M}_2(t)=2\sum_{i\in\mathbb Z}\mathrm{Cov}_{\mu_{T}}\big(j_{i,i+1}(t),j_{0,1}(0)\big).
$$
Integrating twice in time yields
$$
\lim_{t\to\infty}\frac{\mathcal{M}_2(t)}{t}
=2\int_0^\infty \sum_{i\in\mathbb Z}\mathrm{Cov}_{\mu_{T}}\big(j_{i,i+1}(s),j_{0,1}(0)\big)\,ds,
$$
provided the time integral converges. Then again by Einstein's relation:
\begin{equation}\label{eq:GK}
\kappa(T)
=\frac{1}{T^2}\int_0^\infty \sum_{i\in\mathbb Z}\mathrm{Cov}_{\mu_{T}}\big(j_{i,i+1}(s),j_{0,1}(0)\big)\,ds.
\end{equation}
Thus to understand conductivity for large times, we need to understand time-decay of current-current correlations.
Integrability of the current-current correlation function $C_{T}(r): =\sum_{i\in\mathbb Z}\mathrm{Cov}_{\mu_{T}}\big(j_{i,i+1}(r),j_{0,1}(0)\big) $ is sufficient for diffusive energy transport, while slow decay of $C_T(\cdot)$ leads to divergence of the Green-Kubo integral, in which case we should expect superdiffusive behaviour.
\\

\noindent
\underline{Conductivity $\kappa_N$ and Green-Kubo}: What is generally agreed upon but not proven is that the stationary conductivity defined from boundary-driven non-equilibrium steady states, $\kappa_N$ in the thermodynamic limit is identified with $\kappa(T)$ from Green-Kubo: $\kappa = \lim_{N\to \infty}\kappa_N$ \cite[Sec. 7]{BLR00} and \cite[Sec. 7.2]{Spohn_book}.

The advantage of the Green-Kubo formulation is that it is given entirely in terms of equilibrium dynamics and so we do not need an explicit description of the NESS, which in general we do not have.
By contrast, estimating $\kappa_N$ directly requires quantitative information on the non-equilibirum steady state and its dependence on $N$. On the other hand computing the long time decay of energy correlations is also highly non-trivial. There are however specific cases where the conductivity has been computed via Green-Kubo (and in certain regimes normal transport has been established) in the presence of bulk noise to ensure conservation of momentum or/and energy, see \cite{BernandinOlla11, BasileBernandinOlla09,BernandinOlla05, Bernandin_notes} and references therein.

\subsection{Conductivity via Wave Turbulence approach} We are now ready to make the connection with the main topic of this review, namely the kinetic wave equation.  
As discussed above, starting from weakly anharmonic chains, one derives, at least formally, a phonon Boltzmann equation for a phase-space density $f(t,x,\xi)$.
 Its spatially homogeneous version (when $f$ is independent of $x$), describes how energy is redistributed in frequency (momentum) space. As we  will se, besides the fact that in this scale we see the relaxation towards the thermodynamic equilibrium RJ, where the energy is equidistributed which was E. Fermi's expectation in the (FPUT) experiment as explained in Section \ref{sec:FPUT_paradox},
wave turbulence/kinetic theory also allows us 
to derive a formula for the conductivity by exploiting the Green-Kubo formula.

For the sake of the presentation let us fix the microscopic system to be (DNLKG) as introduced in Section \ref{subsec:pinnedsystems}, as was done in \cite{AokiLukkSpohn}. We scale the Hamiltonian as follows
\begin{align} \label{eq:Hamilt_delta}
    \mathscr{H}(\textbf{p}, \textbf{q}) & = \sum_{j\in \mathbb{Z}} \left( \frac{p_j^2}{2} +  \frac{1}{2} \left[ \delta (q_{j+1} - q_j)^2 + (1-2\delta) q_j^2  \right] \right) + \frac{\varepsilon}{4} \sum_{j\in \mathbb{Z}} q_j^4 \\ 
& = \mathscr{H}_{\text{quad}}(\textbf{p}, \textbf{q}) + \frac{\varepsilon}{4} \sum_{j\in \mathbb{Z}} q_j^4,
\end{align}  
(with $\delta \in (0,\frac 12)$) and the dispersion relation which is determined by the quadratic part of the Hamiltonian is
$$
\omega(\xi) = \sqrt{1-2 \delta \cos(\xi)}.
$$
The energy conservation law locally yields, if $e_j(t)$ is the summand for each $j$ in the Hamiltonian \eqref{eq:Hamilt_delta}
$$
\frac{d}{dt} e_j(t) = J_{j-1,j} - J_{j,j+1},
$$
from which we can deduce the energy current from $j$ to $j+1$
\begin{align} \label{eq:current explicit}
    J_{j,j+1} = -\frac{\delta}{2}(p_jq_{j+1} - p_{j+1}q_j).
\end{align}

Then the total current is computed by summing over all local currents, i.e. 
$$
J_{\text{total}} = \sum_{i \in \mathbb{Z}} J_{i,i+1}.
$$

We look at the problem through the evolution of the complex valued field $b(\xi)$ that takes into account both $\textbf{p}$ and $\textbf{q}$: 
$$
b(\xi ) = \frac{1}{\sqrt{\omega(\xi)}} [i \widehat{p}(\xi) + \omega(\xi) \widehat{q}(\xi)]
$$
whose time evolution is (as computed explicitly in \eqref{eqb}):
\begin{align*}
&  \partial_t b(\xi) + i\omega(\xi) b(\xi) = \varepsilon \times [\text{Nonlinearities}], 
\end{align*}
We will be working in the weakly nonlinear regime $\varepsilon \to 0$. We see that the energy of the dominant linear (harmonic) part is 
$$ 
\mathscr{H}_{\text{quad}} = \int_{\mathbb{T}} \omega(\xi) \overline{b (\xi)} b(\xi) \dd \xi. $$
We can then compute the current by inserting,  for each $j \in \mathbb{Z}$, 
$p_j = \int e^{i j \xi} \widehat{p}(\xi)\dd \xi$ and the same for $q_j$, in \eqref{eq:current explicit}:  
$$
\sum_j \left\{ \iint_{\mathbb{T}^2} e^{i \xi} \widehat{p}(\xi')\widehat{q}(\xi) e^{ij(\xi+\xi')}\dd \xi\dd \xi' - \iint_{\mathbb{T}^2} e^{i \xi} \widehat{p}(\xi)\widehat{q}(\xi') e^{ij(\xi+\xi')}\dd \xi\dd \xi' \right\}.
$$
Re-arranging and using that $\sum_{j \in \mathbb{Z}} e^{ij(\xi+\xi')} = 2\pi \delta_{\mathbb{T}}(\xi+\xi') $ we get 
$$
J_{\text{total}} = -  \delta \pi \int_{\mathbb{T}}e^{i \xi}[ \widehat{p}(-\xi)\widehat{q}(\xi)  - \widehat{p}(\xi)\widehat{q}(-\xi) ]\dd \xi.  $$
 By reality of $p,q$ and changing the variable in the second term to $-\xi$: 
 \begin{align*}
J_{\text{total}} & = -\pi \delta \int_{\mathbb{T}}[e^{i \xi} \overline{\widehat{p}(\xi)}\widehat{q}(\xi)  - e^{-i\xi}\overline{\widehat{p}(\xi)}\widehat{q}(\xi) ]\dd \xi 
 = -2 i \delta \pi  \int \overline{\widehat{p}(\xi)} \widehat{q}(\xi) \sin(\xi) \dd \xi 
 \\ & = 
 2\pi \int [-i\overline{\widehat{p}(\xi)} \widehat{q}(\xi)] \omega(\xi)\omega'(\xi)\dd \xi =
 2 \pi \int \operatorname{Im}(\overline{\widehat{p}(\xi)} \widehat{q}(\xi)) \omega(\xi)\omega'(\xi)\dd \xi
 \end{align*}
 since the total current is real. Now use that $$\overline{b (\xi) } b(\xi) = 2 \operatorname{Im}(\overline{\widehat{p}(\xi)} \widehat{q}(\xi)) + \omega(\xi) |\widehat{q}(\xi)|^2+
 \frac{|\widehat{p}(\xi)|^2}{\omega(\xi)}. $$ to get 
$$
J_{\text{total}} = \pi \int  \overline{b (\xi) } b(\xi) \omega(\xi)\omega'(\xi)\dd \xi. $$
Note that the term $\omega(\xi) |\widehat{q}(\xi)|^2+
 |\widehat{p}(\xi)|^2/ \omega(\xi)$ in the integrand  vanishes when integrated against $\omega \omega'$ over the torus, since the latter is an odd function and the former an even function. 

 Rewriting now the harmonic Gibbs measure at temperature $T$, in terms of the complex field $b$, we have a centered complex Gaussian which is characterised by the covariance 
 $$\operatorname{Cov}_T (\overline{b(\xi)}, b(\xi') ) = \langle \overline{b(\xi)} b(\xi') \rangle_T = \delta(\xi-\xi') f_{T}(\xi),\quad \text{ with }  f_{T}(\xi) =T/\omega(\xi).$$

\noindent
 \textbf{Perturbing initial data:} As we aim to study conductivity through linear response, as in the previous subsection, we consider the perturbed Gaussian measure: $$\mu_{T, \tau} = Z_{T, \tau}^{-1}\exp \Big(- \frac 1T \mathscr{H}_{\text{quad}}(\textbf{p},\textbf{q})  + \tau J_{\text{total}} \Big)$$ with $\tau$ small, where $Z_{T, \tau}$ is the normalization constant. This state, given the previous calculations on the current with respect to the field $b$, is still Gaussian with covariance 
 $$\operatorname{Cov}_{T,\tau} (\overline{b(\xi)}, b(\xi') ) = \delta(\xi-\xi') f_{T,\tau}(\xi),\quad \text{ with }  f_{T, \tau}(\xi) =\frac{1}{\pi( \beta \omega(\xi) - \tau (\omega'\omega)(\xi)}.
 $$

We assume that the derivation of the Kinetic Wave Equation \eqref{KWE} holds as is explained heuristically in Section \ref{section_heuristic_proof}. Thus formally we set 
$$
\lim_{\varepsilon \to 0}\mathbb{E}(|b(\varepsilon^{-2} t,\xi)|^2) = f_{T, \tau}(t,\xi)
$$
where $f_{T, \tau}(t,\xi)$ solves the spatial-homogeneous version of \eqref{KWE}: 
$$
\begin{cases} 
\partial_tf_{T, \tau}(t,\xi) = \mathcal{C}(f_{T, \tau}) (t,\xi) \\
f(t=0,\xi) = f_{T, \tau}(\xi) =(\beta \omega(\xi) -  \tau \pi (\omega'\omega)(\xi))^{-1}.
\end{cases}
$$
(recall also the setup in \eqref{data_Gaussian} in the heuristic derivation of the KWE).

\noindent
\textbf{Linearization Set up:} We linearise the wave kinetic operator around the RJ state $f_{\beta}(\xi)=1/\beta \omega$, i.e. we look at solutions of the form 
$$f_{T,\tau}(t,\xi) = f_{\beta}(\xi) + f_{\beta}^2(\xi) g_{\beta, \tau}(t,\xi).$$ 
The choice of this perturbation makes the linearized operator $L$, written below, self-adjoint in $L^2(\mathbb{T})$. 

Now for initial data as above we expand for small $\tau$ and we have $f_{T,\tau}(0,\xi) = (\beta \omega)^{-1}+ \tau \pi \frac{\omega' \omega}{(\beta \omega)^2}$. Inserting then our ansatz, we get that in this linear response regime the perturbation at initial time $g_{\beta, \tau}(0,\cdot)$ is  
$$ g_{\beta, \tau}(0,\xi) = \tau \pi  \omega' \omega = \tau \pi \delta \sin(\xi).$$

The perturbation $g_{\beta, \tau}(t,\xi)$ evolves according to the linear phonon Boltzmann semigroup generated by the operator 
\begin{align} \label{eq:lin Oper_L_pert f+f^2g}
L g_{\beta, \tau} = \int_{\mathbb{T}^3}  \delta(\Sigma) \delta(\Omega) \frac{\beta^{-4}}{[\omega_0\omega_1\omega_2\omega_3]^2}\left[g_{\beta, \tau}(\xi_0) + g_{\beta, \tau}(\xi_1) - g_{\beta, \tau}(\xi_2) - g_{\beta, \tau}(\xi_3) \right] d\xi_1 d\xi_2 d\xi_3, 
\end{align}
which is a self-adjoint operator in $L^2(\mathbb{T})$. The linear evolution is given by 
$$
g_{\beta, \tau}(t,\xi) 
= e^{-t f_{\beta}^{-2} L} g_{\beta, \tau}(0,\xi) 
= e^{-t f_{\beta}^{-2} L}
\left( \tau \pi \delta \sin(\xi) \right)
$$ 
and the evolution of the actual perturbation considered is given by 
 $$ \partial_t(f_{T,\tau}(t,\xi)  - f_\beta) =\partial_t (f_\beta^2 g_{\beta,\tau}(t,\xi)) = - L (f_\beta^{-2}(f_{T,\tau}(t,\xi)  - f_\beta))), $$
 meaning that 
 \begin{align} \label{eq:evol_linear perturb}
      f_{T,\tau}(t,\xi) = f_\beta(\xi) + [e^{-t L f_{\beta}^{-2}}(f_\beta^2 g_{\beta,\tau}(0))](\xi) 
 \end{align}

Now since linear response is the derivative at zero perturbation we set 
$$
\partial_\tau g_{\beta, \tau}(0,\xi)\vert_{\tau=0} = \pi \delta \sin(\xi).
$$ 

\noindent
\textbf{Linear Response and Kinetic Limit:}
Next we use that the derivative at zero perturbation of the (normalised) expectation with respect to the perturbed measure $\mu_{T,\tau}$, $\mathbb{E}_{T,\tau}(F)$, for an observable $F$,  gives the equilibrium covariance (recall also the computation in \eqref{eq:computation deriv_at_0_perturb = cov}): 
\begin{align*}
    \partial_\tau \big\vert_{\tau=0} (\mathbb{E}_{T,\tau}( F ) ) &= 
    \big[ \mathbb{E}_{T,\tau}( F(t) J_{\text{total}}(0)  ) -  \mathbb{E}_{T,\tau}( F(t) ) \mathbb{E}_{T,\tau}(J_{\text{total}}(0)) \big]\big\vert_{\tau=0} \\ & =  \mathbb{E}_{T}( F J_{\text{total}})  = \operatorname{Cov}_{T}(F, J_{\text{total}} )
\end{align*}
since $\mathbb{E}_T(J_{\text{total}})=0$. Apply this now for $F=J_{0,1}(t)$: 
\begin{align} \label{eq:linear response for J_01}
\partial_\tau \big\vert_{\tau=0} (\mathbb{E}_{T,\tau}( J_{0,1}(t) ) ) = \operatorname{Cov}_{T}(J_{0,1}(t), J_{\text{total}} ).
\end{align}
Having the Green-Kubo formula in mind, \eqref{eq:GK}, we aim to understand the conductivity in the kinetic limit. We thus compute in the kinetic limit formally the current density, i.e. the expectation of the local current:
$$ \lim_{\varepsilon \to 0}\mathbb{E}_{T, \tau}(J_{0,1}(\varepsilon^{-2}t)) = \pi \int_{\mathbb{T}} f_{T,\tau}(t,\xi)\omega(\xi)\omega'(\xi)\dd\xi.$$

Now if we are allowed to exchange the limit and integral, on the one hand we have (from linear response in \eqref{eq:linear response for J_01}) 
$$ \lim_{\varepsilon \to 0} \partial_\tau \mathbb{E}_{T,\tau}[ J_{0,1}(\varepsilon^{-2}t) ] \big\vert_{\tau=0} = \lim_{\varepsilon \to 0} \mathbb{E}_{T} [ J_{0,1}(\varepsilon^{-2} t) J_{\text{total}}(0) ]
$$
and on the other hand, in the kinetic limit, the left-hand side is 
\begin{align*} 
\lim_{\varepsilon \to 0} \partial_\tau \mathbb{E}_{T,\tau}[ J_{0,1}(\varepsilon^{-2}t) ] \big\vert_{\tau=0} 
& = \partial_\tau \Big[ \pi \delta \int \sin(\xi) f_{T,\tau}(t,\xi) \dd \xi   \Big]\big\vert_{\tau=0}\\ 
& = 
\partial_\tau 
\Big[ \pi \delta \int \sin(\xi)\left\{ f_{\beta}(\xi) + [e^{-tLf_{\beta}^{-2}}f_{\beta}^2 g_{\beta, \tau}(0)](\xi) \right\} \dd \xi \Big]\big\vert_{\tau=0} \\ 
& = \pi^2 \delta^2  \left\langle  \sin , e^{-t L f_{\beta}^{-2}} f_{\beta}^2 \sin \right\rangle_{L^2(\mathbb{T})}, 
\end{align*}
where we used that \eqref{eq:evol_linear perturb} for the evolution of the initial data $f_{T,\tau}(0,\xi)$.  
Now applying the conjugation of semigroup property (after writing $L f_\beta^{-2} =f_\beta f_\beta^{-1} (L f_\beta^{-1} )f_\beta^{-1} $) 
we eventually have 
\begin{align} \label{eq:Kin Conj identity}
\lim_{\varepsilon \to 0}  \operatorname{Cov}_{T}\left( J_{0,1}(\varepsilon^{-2} t), J_{\text{total}}(0) \right) =
\pi^2 \delta^2   \left\langle  \sin(\xi) , f_{\beta} e^{-t f_{\beta}^{-1} L f_{\beta}^{-1}} f_{\beta} \sin(\xi)\right\rangle_{L^2(\mathbb{T})}.
\end{align} 
Integrate over time, and given that we are allowed to exchange the time integral and the kinetic limit on the LHS, we recover the conductivity due to Green-Kubo formula  \eqref{eq:GK}
\footnote{From the definition of $J_{\text{total}}$:  $\operatorname{Cov}_{T}\left( J_{0,1}(t), J_{\text{total}}(0) \right) =\sum_{i \in \mathbb{Z}}\operatorname{Cov}_{T}\left( J_{0,1}(t), J_{i,i+1}(0) \right)$. Using then stationarity of the invariant equilibrium measure: $\operatorname{Cov}_{T}\left( J_{0,1}(t), J_{i,i+1}(0) \right) = \operatorname{Cov}_{T}\left( J_{0,1}(0), J_{i,i+1}(-t) \right)$ which equals to $\operatorname{Cov}_{T}\left( J_{0,1}(0), J_{i,i+1}(t) \right)$ due to time invariance. This recovers precisely the integrand in the Green-Kubo formula.}. 

In particular this gives us an indication of whether the conductivity of our system is finite or diverges in an anomalous manner, by determining the decay rate of the semigroup $e^{-t f_\beta^{-1} L f_\beta^{-1}} \omega'$, see also \cite[Kinetic Conjecture, Eq. (1.18)]{LukkarinenSpohn2008}. 
\\

\noindent 
$\bullet$ Provided that the semigroup exhibits sufficiently fast decay in $t$ so that the time integral $\int_0^\infty \langle \sin  , f_{\beta} e^{-t f_{\beta}^{-1} L f_{\beta}^{-1}} f_{\beta} \sin \rangle_{L^2(\mathbb{T})}  \dd t$ converges, which is the case if the operator has a spectral gap for example, we may rewrite:  
\begin{align}
    \lim_{\varepsilon \to 0} \varepsilon^2 T^2 \kappa(T) = \pi^2 \delta^2 \int_{\mathbb{T}} \sin(\xi) f_\beta^2 L^{-1} f_\beta^2\sin(\xi)
    \dd \xi = \pi^2 \delta^2 \beta^{-4} \int_{\mathbb{T}} \sin(\xi) \omega^{-2} L^{-1} \sin(\xi) \omega^{-2} \dd \xi 
\end{align}
and expect finite conductivity. 
Note that here since $\sin$ is odd and $1,\omega$ are even functions and $1,\omega$ are expected to span the $\operatorname{Ker}(L)$ (see discussion in Section \ref{sec:collisional invariants}), we are already working on the complement of the Kernel, thus $L^{-1} \sin$ is well-defined.

Note that the RHS does not depend on $\beta$, since remember that in the definition of the operator $L$, in \eqref{eq:lin Oper_L_pert f+f^2g}, we included the factor $\beta^4$, but it depends on $\delta$.
\\

\noindent 
$\bullet$ In the unpinned FPUT case ($\delta = 1/2$) as was discussed in \cite{LukkarinenSpohn2008}, the linear decay is of order $t^{-3/5}$, \footnote{See also Theorem \ref{theo:decay linear semigr FPUT} where the same slow time decay is established in weighted $L^\infty$ spaces - even though the initial data required in Theorem \ref{theo:decay linear semigr FPUT} do not cover this setting here.} and so not sufficiently fast.
In particular Lukkarinen \& Spohn by a resolvent expansion they rigorously establish, \cite[Theorem  2.5 \& Corollary 2.6]{LukkarinenSpohn2008} that the Laplace transform of the current-current correlation function $C_T(t) = \operatorname{Cov}_T(J_{0,1}(t),J_{\text{total}(0)})$, which is given by (for $\lambda>0$ and given the heuristically derived formula \eqref{eq:Kin Conj identity})
\begin{align*} 
R(\lambda)  =  \langle \omega',(\widetilde{L} + \lambda)^{-1} \omega' \rangle = \langle\omega',\int_0^\infty e^{-\lambda t} e^{-t\widetilde{L}} \omega' \dd t \rangle  &= \int_0^\infty e^{-\lambda t} \langle\omega', e^{-t\widetilde{L}} \omega' \rangle \dd t  \\ & = c(T) \int_0^\infty e^{-\lambda t} C_T(t) \dd t
\end{align*}
for some constant $c(T)$, 
satisfies 
$$ R(\lambda) = \mathcal{O}(\lambda^{-2/5}), \quad \lambda \ll 1, $$
where $\widetilde{L} := f_\beta^{-1} L f_\beta^{-1}$ \footnote{This is up to constants depending on $\beta$, since in \cite{LukkarinenSpohn2009}, they analyse the operator $\omega L \omega$}. 

From this, by Tauberian theory which translates the small $\lambda$ behaviour of $R(\lambda)$ to large-time asymptotics of $C_T(t)$, they obtain 
$$C_T(t) = \mathcal{O}(t^{-3/5}), \quad t \gg 1 \ \text{ and so } \int_0^t C_T(\tau) \dd \tau = \mathcal{O}(t^{2/5}). $$ 
Thus the Green-Kubo time integral diverges in the weak turbulent/kinetic regime, which given the above heuristic discussion, this indicates anomalous conductivity in that system. In fact the same exponent $2/5$ appears in numerical simulations directly in the microscopic $N$-sized model, i.e. that $\kappa(N) = \mathcal{O}(N^{2/5})$ for $N \to \infty$, \cite{Lep16}. This also suggests that the Phonon Boltzmann approximation predicts well the microscopic conductivity. 

Before we close, let us note that the failure of Fourier's law here, is not due to a complete lack of thermalisation. In the physics literature, anomalous conductivity has been investigated through both microscopic simulations and kinetic descriptions, see for instance \cite{Onoratoetal1, Onoratoetal2, pereverzev2003fermi}. These works suggest that the anomalous scaling of the conductivity can be interpreted as the result of a coexistence of two transport regimes: the energy carried by low wavenumber modes is transported in a ballistic manner (in the sense that this part  contributes as $\kappa_N \sim N$), whereas in higher wavenumbers it is transported in a regular/diffusive manner, as expected in Fourier law. 




\section{Related equations}

\label{section_related}

\subsection{The discrete nonlinear Schr\"odinger equation} In its most basic version, it is the following evolution problem
\begin{equation}
\label{DNLS} \tag{DNLS}
i \dot{u_j} + \Delta_d u_j = \sigma |u_j|^2 u_j, \qquad u_j \in \mathbb{C}, \;\; j \in \mathbb{Z}
\end{equation}
(here, $\sigma = \pm 1$ is a fixed sign and the discrete Laplacian is defined by $\Delta_d u_j = -2 u_{j}+ u_{j-1} + u_{j+1}$). We refer to \cite{HennigTsironis,Kevrekidis} as entry points to the literature on this model. We notice immediately that \eqref{DNLS} 
enjoys two conserved quantities, the Hamiltonian $\mathcal{H}$ and mass $\mathcal{M}$
\begin{equation}
\label{hamiltonianNLS}
\mathscr{H}(\mathbf{u}) = \frac{1}{2} \sum_j |u_{j+1}-u_j|^2 + \frac{\sigma}{4} \sum_j |u_j|^4 \quad \mbox{and} \quad \mathcal{M} = \sum |u_j|^2.
\end{equation}
This should of course be contrasted with \eqref{NOC}, for which only the Hamiltonian is conserved. For \eqref{DNLS}, the conservation of mass and energy at the kinetic level is mirrored by the conservation of corresponding quantities at the microscopic level.

Variants of \eqref{DNLS} include choosing a higher-order nonlinearity, a non-local nonlinearity, or replacing the linear part of the equation $\Delta_d$  by a more general translation-invariant linear operator $L$
$$
L u_j = \sum_{k \in \mathbb{Z}} \alpha_k u_{j-k} \quad \Leftrightarrow \quad \widehat{L u}(\xi) = \widehat{\alpha}(\xi) \widehat{u} (\xi).
$$
(the linear stability condition is always satisfied since $\widehat{\alpha}$ is even and real-valued like $(\alpha_j)$). With a more general $L$, the equation is sometimes refered to as the Discrete Self-Trapping equation \cite{Eilbeck}.

A last variant consists in replacing the nonlinearity by $|u_j|^2(u_{j-1} + u_{j+1})$, in which case the equation is known as Ablowitz-Ladik and integrable \cite{AblowitzLadik}. Just like the Toda lattice was the integrable form of the \eqref{NOC} equation, the Ablowitz-Ladik equation is the integrable version of \eqref{DNLS}.

We now come to the kinetic theory, which can be found in \cite{ODPPBR,LukkarinenPirnesVuoksenmaa}. The kinetic wave equations which can be obtained is of the same kind as in the case of \eqref{NOC}; for instance, for the Hamiltonian \eqref{hamiltonianNLS}, one finds \eqref{KWE} with $K=1$ and $\omega = 2 \sin^2(\frac \xi 2)$. There is an important difference though: the dispersion relation is now $\omega = \widehat{\alpha}$ instead of $\omega = \sqrt{\widehat{\alpha}}$. In particular, it can take negative values!

\subsection{More nonlinear oscillator chains} There is a great variety of models of oscillator chains which seek to model physical effects occurring in different materials. We present a selection of such models, which would all yield a kinetic theory in the weakly turbulent regime, though this does not appear to have been explored.

\medskip

\noindent \underline{Diatomic chain.} To account for a crystal made up of atoms of different weights, the Hamiltonian of \eqref{NOC} is modified as follows
$$
\mathscr{H}(\mathbf{p},\mathbf{q}) = \frac{m_0}{2} \sum_j p_{2j}^2 + \frac{m_1}{2} \sum_j p_{2j+1}^2 + \sum_j V(q_{j+1}-q_j);
$$
see \cite{Askar,Maugin} for a physical discussion and \cite{GalganiGiorgilliMartinoliVanzini,PezziDengLvovLorenzoOnorato} for numerical experiments. This model results in two distinct dispersion relations, called \textit{acoustic} and \textit{optical}, the former vanishing at the origin but not the latter.

\medskip
\noindent \underline{Constant magnetic field.} In the model proposed and analyzed in \cite{SaitoSasada, SaitoSasadaSuda}, the displacement $q_j$ takes values in the plane $\mathbb{R}^2$ and is submitted to the Lorentz force from a constant magnetic field in the vertical direction. The Hamiltonian becomes
$$
\mathscr{H}(\mathbf{p},\mathbf{q}) = \frac 12 \sum_j \left| p_{j} - \frac{1}{2} B R q_j \right| ^2 + \sum_j V(q_{j+1}-q_j)
$$
which leads to the dynamics
$$
\begin{cases} 
\dot q_j = p_j \\
\dot p_j = \Delta_d q_j - B R p_j + V'(q_{j+1} - q_j) - V'(q_{j} - q_{j-1}) .
\end{cases}
$$

\medskip

\noindent \underline{Electrostatic effects.} To model an ionic crystal \cite{Askar}, one can imagine that the particles carry alternatively positive and negative charges. Submitting the chain to a constant electric field gives the Hamiltonian
$$
\mathscr{H}(\mathbf{p},\mathbf{q}) = \frac{1}{2} \sum_j p_{2j}^2 + \sum_j V(q_{j+1}-q_j) + E \left[ \sum_j  q_{2j}^2 - \sum_j q_{2j+1}^2 \right].
$$
For $E=0$, one can also take into account the interaction of the charges with the electrostatic field they generate collectively \cite{Askar}.

\medskip

\noindent \underline{Micropolar chain.} Micropolar materials exhibit additional degrees of freedom besides the displacement of the particles; this is typically the case for molecules which can rotate around different axes. A basic model is to consider dumbbells aligned along an axis, which can rotate (angle $\psi$) and be translated (displacement $q$). The following Hamiltonian is derived in \cite{Askar}
\begin{align*}
\mathscr{H}(\mathbf{p},\mathbf{q},\mathbf{\varphi},\mathbf{\psi}) & = \frac{m}{2} \sum p_j^2 + \frac{1}{2} \sum_j \left[I \varphi_j^2 + mp_j^2\right] + \sum_j \left[ (q_{j+1}-q_j)^2 + (\psi_j - (q_j-q_{j+1}))^2 \right.\\
& \qquad \qquad \qquad \qquad \left. + (\psi_{j+1} - (q_j-q_{j+1}))^2 + (\psi_j + \psi_{j+1} -2 (q_j-q_{j+1}))^2,
\right]
\end{align*}
where $(p,\varphi)$ are the conjugate variables to $(q,\psi)$. See \cite{Maugin} for many more references.

\medskip

\noindent \underline{Ferroelectric and piezoelectric chains.} We refer to \cite{Askar,Maugin} and references therein for these more elaborate models.

\subsection{Quantized equation} The classical Boltzmann equation is a model for gases of classical particles, but it has counterparts for bosons and fermions. These are known as Quantum Boltzmann equations, or, in the bosonic case, as Nordheim-Boltzmann and Uehling–Uhlenbeck, after the authors who wrote it down first \cite{Nordheim,UehlingUhlenbeck}. The theory of these equations for bosons and fermions has been developed by a number of mathematicians, see for instance \cite{Dolbeault,EscobedoVelazquez2,Lu1,Lu2} and the introductory texts \cite{PomeauTran,Villani}.

In the case of crystals, such a quantum kinetic equation was already derived in Peierls' foundational paper \cite{Peierls} of 1929 which marked the beginning of wave turbulence. This same year saw the publication of Nordheim's article \cite{Nordheim} which was already mentioned. This shows how intertwined the classical and quantum cases are from a physical perspective.

Turning to the nonlinear oscillator chain \eqref{NOC}, a kinetic equation similar to \eqref{KWE} can be derived from its quantized version \cite{Spohn2005,LukkarinenSpohn2009,LukkarinenPirnesVuoksenmaa}. It takes the following form
\begin{equation}
\label{QKE} \tag{$QKE_\pm$}
\partial_t f (t,\xi) = \mathcal{C}_{\pm}[f(t)](\xi)
\end{equation}
where
\begin{align*}
\mathcal{C}_{\pm}[f](\xi) & = \int K_{0,1,2,3} \left[\pm \left( f_1 f_2 f_3 + f_0 f_2 f_3 - f_0 f_1 f_2 - f_0 f_1 f_3 \right) + f_2 f_3 - f_0 f_1 \right] \\ 
& \qquad \qquad\qquad \qquad \qquad \qquad \qquad \qquad \qquad \qquad \delta(\Omega_{0,1,2,3}) \delta (\Sigma_{0,1,2,3}) \dd \xi_{1,2,3},
\end{align*}
the bosonic and fermionic cases corresponding to $+$ and $-$ respectively. 

For bosons, the only restriction is $f \geq 0$ as is the case for \eqref{KWE}; for fermions, the unknown $f$ takes values in $[0,1]$. One checks that this is propagated by the equation since, as long as $f_1,f_2,f_3 \in [0,1]$
\begin{align*}
&- \left( f_1 f_2 f_3 + f_0 f_2 f_3 - f_0 f_1 f_2 - f_0 f_1 f_3 \right) + f_2 f_3 - f_0 f_1 \\
& \qquad \qquad \qquad \qquad \qquad \qquad  \qquad \qquad \qquad \qquad  = \begin{cases} 
(1- f_1) f_2 f_3 \geq 0 & \mbox{if $f_0=0$} \\
 f_1 (f_2-1)(1-f_3) \leq 0 & \mbox{if $f_0=1$}.
\end{cases}
\end{align*}

The mass and energy as still conserved with identical definitions to \eqref{KWE}, but the entropy becomes
$$
S_{\pm} = \int \left[ (1 \pm f(\xi)) \log (1 \pm f(\xi)) \mp f(\xi) \log f(\xi) \right] \dd \xi
$$

The stationary (Rayleigh-Jeans) solutions are modified to
$$
\mathfrak{f}_{\beta,\gamma}^{\pm}(\xi) = \frac{1}{e^{-\beta \omega(\xi) + \gamma} \mp 1}
$$

We are not aware of any rigorous result for the quantum kinetic equation \eqref{QKE}, but because of the similarities in the formulas, it is to be expected that the theories for \eqref{KWE} and \eqref{QKE} are closely related. To be more specific, \eqref{KWE} can be viewed as the large field limit of \eqref{QKE}: indeed, $\mathcal{C}_+$ and $S_+$ become equivalent to $\mathcal{C}$ and $S$ as $f \to \infty$ (pointwise), while $\mathfrak{f}_{\beta,\gamma}^{+}(\xi) \sim  \mathfrak{f}_{\beta,\gamma}(\xi)$ as $\beta,\gamma \to 0$.

\subsection{Higher-dimensional case} The obvious generalization of the one-dimensional case is given by unknown functions $(p_j,q_j)$ indexed by $j \in \mathbb{Z}^d$ and a Hamiltonian
$$
\mathscr{H}(\mathbf{p},\mathbf{q}) = \frac{1}{2} \sum_j p_j^2 + F(\mathbf q),
$$
where the potential function $F$ is invariant by translation in $j$: $F((q_j)) = F((q_{j+a}))$ for any $a \in \mathbb{Z}^d$, it is for instance considered in \cite{Prigogine}.

For the model
$$
\mathscr{H}(\mathbf{p},\mathbf{q}) = \frac{1}{2} \sum_j p_j^2 + \frac{1}{2} \sum_j \alpha_k q_{j} q_{j+k} + \frac{\lambda}3 \sum_j q_j^3
$$
set on $\mathbb{Z}^3$, Spohn proceeded in the foundational paper \cite{Spohn06} to develop the kinetic theory and spell out its consequences; this was not done in full rigor but gives a great panorama of the subject.

The rigorous theory remains to be written! An additional difficulty compared to the one-dimensional case lies in the more involved geometry of the resonant set. However, the derivation of the kinetic model might be more accessible, since the one-dimensional case presents some specific obstacles pointed out in \cite{Vassilev,BoyangWu}.

\subsection{More realistic crystals?}
The work of Peierls \cite{Peierls} was the starting point of the kinetic theory of phonons and of the theory of wave turbulence. Is it possible to give a rigorous - or even a precise - treatment of this theory in general?

To answer this question, we must start with the microscopic structure of a crystal. 
A crystal has a periodic structure, which can be described by the basis $(e_j)_{j=1,\dots,d}$ of a lattice in $\mathbb{R}^d$. Each cell of this lattice can be labeled by $\mathbf{j} = (j_1,\dots,j_d) \in \mathbb{Z}$) and atoms within each cell are indexed by $a \in \{1, \dots ,A\}$. The displacement of the $a$-th atom from its rest position is $q^a_{j_1,\dots,j_d}$, and its momentum $p^a_{j_1,\dots,j_d}$. The potential energy of the crystal can then be written in full generality \cite{Srivastava} as a function $F$ of all the displacements $q^a_{\mathbf{j}}$, and its Hamiltonian as
$$
\mathscr{H}(\mathbf{p},\mathbf{q}) = \frac{1}{2} \sum_{\mathbf{j},a} \frac{|p^a_{\mathbf{j}}|^2}{m_a} + F(\mathbf{q}),
$$
where $m_a$ is the mass of the corresponding atom and the potential function $F$ is given by the Physics of the crystal, and must in particular respect its symmetries.

The potential  can be expanded to third order as
\begin{align*}
F(\mathbf{q}) & = F_2 (\mathbf{q}) + F_3(\mathbf{q}) + \{ \mbox{higher order} \} \\
& = \sum_{\substack{a,a',\mathbf{j},\mathbf{j}'}} c_2(\mathbf{j},\mathbf{j}',a,a') q^a_{\mathbf{j}} q^{a'}_{\mathbf{j}'} + \sum_{\substack{a,a',\mathbf{j},\mathbf{j}'}} c_3(\mathbf{j},\mathbf{j}',\mathbf{j}'',a,a',a'') q^a_{\mathbf{j}} q^{a'}_{\mathbf{j}'} + q^{a''}_{\mathbf{j}''} + \{ \mbox{higher order} \} 
\end{align*}

The quadratic terms in the potential energy make up the Born-von Karman model and give the Hamiltonian
$$
\mathscr{H}_{\operatorname{quad}}(p,q) = \frac{1}{2} \sum_{\mathbf{j},a} \frac{|p^a_{\mathbf{j}}|^2}{m_a} + F(\mathbf{q}).
$$
This so-called harmonic Hamiltonian can be simplified through additional assumptions \cite{Callaway,Srivastava,Ziman}; it determines the dispersion relations of the model.

The next step taken in all textbooks \cite{Callaway,Prigogine,Srivastava,Ziman} is to quantize the model; indeed, the classical case receives little attention. Finally, the formula for the kinetic equation can be established following the same procedure as was demonstrated in Section \ref{Sec: Derivation_WKE}.

It it possible and meaningful to give a rigorous treatment of this derivation? The technical complications and the variety of possible configurations seem daunting in full generality. However, it might be possible to treat the case of a 'generic' crystal by leaving aside exceptional configurations.

\section{Open Mathematical Questions on the kinetic limit of \eqref{NOC}}

\label{section_open}

\begin{itemize}
\item A rigorous derivation of \eqref{KWE} from \eqref{NOC} is still missing. This is true for all models of \eqref{NOC}, and in the inhomogeneous as well as homogeneous cases. Reaching the kinetic time scale for such a derivation would be very impressive, and longer time scales even more impressive!

\medskip

\item Such a derivation should be simpler - but still very interesting - if the microscopic model involves random forcing.

\medskip

\item It is natural to ask whether the homogeneous \eqref{KWE} admits global smooth solutions for smooth data, or whether singularity formation is possible. Assuming that a solution is global, it is natural to expect that the RJ states are global attractors, but this has not been proved.

\medskip

\item Turning to the inhomogeneous case, set on the line, the torus, or the interval with appropriate boundary conditions, the question of global regularity and asymptotic behavior is also very important. In the case of a compact domain, the attractor should be a global RJ state.

\medskip

\item A question which is less physically relevant, but mathematically very natural, is that of the lowest regularity for local well-posedness of \eqref{KWE} in the homogeneous as well as inhomogeneous cases.

\medskip

\item For the inhomogeneous \eqref{KWE} set on the interval with non-trivial and non-matching boundary conditions, is there a stationary solution?

\medskip

\item Is it possible to justify mathematically the hydrodynamic limits of Section \ref{sec:Hydro_limit} in the non-degenerate case?

\medskip

\item In the degenerate case, only the linearized case seems to have been investigated, even formally. Is it possible to identify the correct limiting nonlinear problem? Can these hydrodynamic limits be made rigorous?

\end{itemize}


\bibliographystyle{alpha}
\bibliography{references}

\end{document}